\newcites{supp}{Supplementary References}
\tikzstyle{intt}=[draw,text centered,minimum size=2em,text width=2.7cm]
\tikzstyle{intl}=[draw,text centered,minimum size=2em,text width=2.75cm,text height=0.34cm]
\tikzstyle{int}=[draw,minimum size=2.1em,text centered,text width=3.7cm]
\tikzstyle{intg}=[draw,minimum size=3em,text centered,text width=5cm]
\tikzstyle{sum}=[draw,shape=circle,inner sep=2pt,text centered,node distance=3.5cm]
\tikzstyle{summ}=[draw,shape=circle,inner sep=4pt,text centered,node distance=3.cm]
\tikzstyle{inttt}=[draw,text centered,minimum size=2em,text width=2cm]
\tikzstyle{ell}=[draw,ellipse,text centered,text width=1.8cm,text height=0.32cm]
\theoremstyle{plain}
\newtheorem{theorem}{Theorem}[section]
\newtheorem{corollary}[theorem]{Corollary}
\newtheorem{definition}{Definition}
\newtheorem{lemma}[theorem]{Lemma}
\newtheorem{proposition}[theorem]{Proposition}
\newtheorem{example}{Example}[section]
\newtheorem{assumption}{Assumption}
\begin{document}
\title{\LARGE{Spillovers of Program Benefits with Missing Network Links}\bigskip}
\author{Lina Zhang\thanks{Email: \texttt{l.zhang5@uva.nl}. I would like to thank Isaiah Andrews, Sascha Becker, Denzil Fiebig, David Frazier, Jiti Gao, Frank Kleibergen, Tong Li, Francesca Molinari, Didier Nibbering, Bing Peng, Donald Poskitt, Kyungchul (Kevin) Song, Xun Tang, Denni Tommasi, Takuya Ura, Benjamin Wong, Xueyan Zhao, and participants of seminars at University of Glasgow, University of Amsterdam, University of Melbourne, Applied Young Economist Webinar, Renmin University, ESEM, ESAM, CMES, IAAE, and IPDC for helpful comments. All errors are mine.}\\University of Amsterdam and Tinbergen Institute\\}


\maketitle
\setlength{\abovedisplayskip}{11pt}
\setlength{\belowdisplayskip}{11pt}

\vspace{-0.5cm}
\begin{abstract}
\noindent The issue of missing network links in partially observed networks is frequently neglected in empirical studies. This paper addresses this issue when investigating the spillovers of program benefits in the presence of network interactions. Our method is flexible enough to account for non-i.i.d. missing links. It relies on two network measures that can be easily constructed based on the incoming and
outgoing links of the same observed network. 
The treatment and spillover effects can be point identified and consistently estimated if network degrees are bounded for all units. We also demonstrate the bias reduction property of our method if network degrees of some units are unbounded. Monte Carlo experiments and a naturalistic simulation on real-world network data are implemented to verify the finite-sample performance of our method. We also re-examine the spillover effects of home computer use on children's self-empowered learning.\\


\noindent\textbf{JEL Codes}: C14, C21, C25, C26, C51.
\newline
\noindent\textbf{Keywords}: Heterogeneous treatment and spillover effects; ~Partially observed networks; ~Incoming and outgoing links; ~Non-i.i.d. missing; ~ Heterogeneous missing rates.

\end{abstract} 

\newpage
\section{Introduction}\label{intro}
The importance of network interactions in shaping individuals' socio-economic outcomes has led to increasing attention in empirical studies on program evaluations
\citep[e.g.][]{oster2012determinants,banerjee2013diffusion,cai2015social,paluck2016changing,carter2021subsidies}.
However, a first-order practical issue that is often neglected is the presence of missing links in partially observed networks. This issue is pervasive due to various reasons, such as censored peer data, incomplete survey responses, or omitted network links due to missing information. Existing studies show that even a low missing rate can lead to a sizable bias in causal effect estimates \citep{advani2018credibly}. In this paper, we study the identification and estimation of the treatment and spillover effects of a randomized program intervention using a treatment response model that allows for flexible forms of heterogeneity \citep{manski2013identification,leung2020treatment}. We assume that network interactions affect the outcome through
two \emph{network-based random variables} (hereafter referred to as NBRVs): the network \emph{degree}
and the \emph{indirect exposure} to treated network neighbors.
We demonstrate that the identifiable spillover effects that ignore the missing links are mixtures of the true effects, with possibly negative weights and an opposite sign to the true effects. 

To address the missing link problem, we employ the matrix diagonalization method of \citet{hu2008identification}, which requires two observed network measures that are mutually independent conditional on the true network.
In practice, network data are often collected from survey responses. Therefore, unlike traditional measurement error problems where an additional measure for the true variable is rare, we can easily construct two network measures using the incoming and outgoing links of the same observed network.\footnote{Replicated measures
are widely used to deal with measurement errors in the econometrics literature
\citep[see, e.g.][]{LI20021,mahajan2006identification,lewbel2007estimation,hu2017identification,calvi2018women,tommasi2022identifying} and
in the literature studying networks \citep[see, e.g.][]{goldsmith2013social,comola2017missing,chang2020estimation,li2021causal}.} These two observed networks are conditionally independent as long as the reporting errors made by one unit do not depend on those made by others. 
Our method is flexible because it accommodates non-i.i.d. missing links. It allows for arbitrary correlations among missing links of the same unit and heterogeneous missing rates among different units, depending on their true degree values.

Using these two network measures, we demonstrate that point identification of the true effects can be achieved if degrees are bounded for all units. When degrees are unbounded for some units, our method can serve as a bias reduction approach under a restriction on the extent of network sparsity. We propose a two-step semiparametric estimation method and establish its asymptotic properties. To control data correlation under network interactions, we adopt the notion of the \emph{dependency neighborhood} used in
\citet{chandrasekhar2021network} and restrict data dependence to be local. 
To assess the finite-sample performance of our method, we conduct Monte Carlo experiments and also a naturalistic simulation study using school friendship data from \citet{beuermann2015one}. The results of both the Monte Carlo and naturalistic simulations indicate that our method can effectively reduce the estimation bias in realistic samples compared to the naive method that ignores the missing links. Besides, we
re-examine the spillover effects of winning a home laptop lottery on children's self-empowered learning outside the classroom environment, as studied by \citet{beuermann2015one}. 
We find that failing to account for missing network links can lead to an underestimation of the spillover effects of laptop lottery winners on digital skills of others. 

Recent studies have emerged to study causal effects under network interactions using missing or misclassified network data. This paper is closely related to those employing repeated network measures \citep*[e.g.,][]{li2021causal,lewbel2022estimating}. In particular, \citet*{li2021causal} study causal effects under network interactions using an exposure mapping model. They develop a bootstrap estimation method that requires at least two observed network measures, and they assume independent noises in the observed network data and a parametric degree distribution. \cite*{lewbel2022estimating} examine the identification and estimation of peer effects using linear-in-means models. They employ both incoming and outgoing links to address the missing link problem and assume i.i.d. missing given individuals' covariates. Different from these two studies, our method accommodates non-i.i.d. missing links, allowing for arbitrary correlation among missing links of the same unit and heterogeneous missing rates depending not only on covariates but also on the actual degree values. 


This paper is also related to, but different from, other causal effect studies that deal with missing or misclassified network links using methods other than repeated network measures.
Identification and estimation of peer effects through linear-in-means models are achieved, using adjusted 2SLS estimators in local-aggregate models \citep*{liu2013estimation}, assuming the existence of a consistent estimator of network distribution \citep{boucher2020estimating,herstad2023essays}, and utilizing an order-invariance condition on friends' covariates \citep{griffith2022name}. In addition, \citet{chandrasekhar2011econometrics} consider various network-based linear regressions and propose a two-step estimation using a graphical reconstruction process. \citet*{hardy2019estimating} identify causal effects in an exposure mapping model, assuming a parametric degree distribution and random noises in the network data.
A lower bound for the spillover effects is provided by \citet{he2023measuring} under the restriction of nonnegative spillovers.\footnote{There is also a separate literature that studies causal effects in the presence of network interactions when the network itself is entirely unobserved \citep*[see, e.g.,][]{depaula2023identifying,lin2021uncovering,lewbel2021socialun}.}

In addition, there is a growing literature addressing the problem of missing or misclassified network links when studying network formation or network statistics \citep[see,][for example]{butts2003network,balachandran2017propagation,comola2017missing,thirkettle2019identification,chang2020estimation,young2020bayesian,candelaria2020identification}. 
This paper is distinct from these studies because our method does not require modeling the network formation process, and we aim to solve the missing link problem when the target parameter is the treatment and spillover effects.
\footnote{Besides, this paper also relates to the literature concerned with measurement error in discrete random
variables
\citep[see,][among others]{hausman1998misclassification,abrevaya1999semiparametric,li2003modeling,cameron2004modelling,molinari2008partial,chen2009nonparametric}.}

The rest of this paper is organized as follows. Section \ref{section_setup} introduces the model
setup and the causal effects of interest. Section \ref{subsection_bias} characterizes the bias caused by missing network links.
Section \ref{section_multiple_proxy} presents our proposed method and main results. Section
\ref{section_estimation_multiple} outlines the semiparametric estimation and its asymptotic properties. Section
\ref{section_numerical_empirical} presents the results of the Monte Carlo simulation, naturalistic simulation, and empirical analysis of the `One Laptop
per Child' program using real-life network data. Section \ref{section_conclusion} concludes. All proofs are provided in the online appendix.


\section{Model Setup}\label{section_setup}
We denote by $\mathbf{A}^*=\{A^*_{ij}\}$ the true adjacency matrix corresponding to an unweighted random network over the population $\mathcal{P}$. The network links can be either directed or undirected. Let $A^*_{ij}=1$ if unit $i$ and unit $j$ are linked, and $A^*_{ij}=0$ otherwise. As a convention, self-links are ruled out, i.e., $A^*_{ii}=0$. Let $\mathcal{N}^*_i=\{j\in\mathcal{P}:A^*_{ij}=1\}$ be the set of unit $i$'s network neighbors. Consider a treatment response model for the outcome $Y_i$:
\begin{equation}
\begin{aligned}\label{outcome1}
Y_i&=r(D_i,S^*_i,\mathcal{T}^*_i,Z_i,\varepsilon_i),\;\text{ for each }i\in\mathcal{P},
\end{aligned}
\end{equation}
where $r$ is an unknown function, $D_i$ is a binary treatment variable, $Z_i$ is a vector of covariates, and $\varepsilon_i$ is a vector of unobservable error terms. We define $S_i^*=\sum_{j\in\mathcal{N}^*_i}D_j$ as the number of treated network neighbors, and $\mathcal{T}_i^*=|\mathcal{N}^*_i|$, where $|\cdot|$ denotes the cardinality of a set, as the network degree. The treatment response model in \eqref{outcome1} assumes that network interactions affect the outcome through two \emph{network-based random variables} (hereafter referred to as NBRVs): $S_i^*$, which measures the extent of indirect exposure to the treatment, and $\mathcal{T}_i^*$, which quantifies the popularity of each unit $i$.  
The same model is used by \citet{leung2020treatment} and \citet{viviano2024policy} to capture various forms of heterogeneous treatment and spillover effects, and tests for model specifications are developed by \citet*{athey2018exact}.

We can view model \eqref{outcome1} as a potential outcome model, where $(D_i,S^*_i)$ acts as a multivalued treatment, and $\mathcal{T}^*_i$ and $Z_i$ are control variables. For any random variable $B_i$, let $\Omega_B$ denote its support. For any $d\in\{0,1\}$ and $(s,n,z)\in\Omega_{S^*,\mathcal{T}^*,Z}$, let us denote
\begin{align}\label{def_CASF}
m^*(d,s,n,z)&=E\left[r(d,s,n,z,\varepsilon_i)\big|\mathcal{T}^*_i=n,Z_i=z\right].
\end{align}
By definition, $m^*(d,s,n,z)$ captures the mean value of the outcome under a counterfactual treatment $d$ and a counterfactual number of treated network neighbors $s$, given the control variables $(\mathcal{T}^*_i,Z_i)=(n,z)$. Following \citet{leung2020treatment}, we refer to $m^*$ as the conditional average structural function (CASF). Given the CASF, the treatment and spillover effects can be defined as the average response to the counterfactual manipulation of a unit's own treatment status and its indirect exposure to the treatment, respectively.
\begin{definition}[\textbf{Treatment and Spillover Effects}]\label{def2}For any $d\in\{0,1\}$, $(n,z)\in\Omega_{\mathcal{T}^*,Z}$, and $s,s'\in\Omega_{S^*}$, define
\begin{align*}
\text{treatment effect: }&\eta^*_T(s,n,z)=m^*(1,s,n,z)-m^*(0,s,n,z),\\
\text{spillover effect: }&\eta^*_S(d,s,s',n,z)=m^*(d,s,n,z)-m^*(d,s',n,z).
\end{align*}
\end{definition}
Given $(\mathcal{T}^*_i,Z_i)=(n,z)$, the treatment effect $\eta^*_T$ measures the direct treatment effect caused by the variation in a unit's own treatment status, while fixing its exposure to the treated network neighbors. The spillover effect $\eta^*_S$ measures the indirect treatment effect caused by the variation in a unit's exposure to the treated network neighbors, while fixing its own treatment status. The analysis in this paper can be easily extended to study other forms of direct and spillover effects, as long as they are defined as functions of $m^*$.

\subsection{Motivation Examples}
In model \eqref{outcome1}, we assume that network interactions affect the outcome through two
NBRVs, namely, $S_i^*$ and $\mathcal{T}_i^*$, which are commonly used network statistics in empirical studies of program evaluations under network interactions.
We illustrate the usefulness of our model using the examples below, where the parametrization of function $r$ is used only for illustrative purposes.

\begin{example}[\textbf{Diffusion of a Weather Insurance Product}]\label{example_motivation1}\citet{cai2015social} study the influence of social networks on weather insurance adoption in rural China. The authors consider a model for the treatment and spillover effects at the household level of the form
$$Y_{i}=\theta_0+\theta_1D_{i}+\theta_2\frac{S^*_{i}}{\mathcal{T}^*_{i}}+\theta'_3Z_{i}+\theta'_4NetSize_{i}+\varepsilon_{i
},$$
where the binary outcome $Y_{i}$ indicates whether household $i$ decides to purchase the
insurance, the treatment variable $D_{i}$ takes value one if a household is randomly invited to an intensive information session that introduces a new insurance product, $\frac{S^*_{i}}{\mathcal{T}^*_{i}}$ is the fraction of treated network neighbors, $NetSize_{i}$ is a set of dummies indicating network degree values, and covariates in $Z_{i}$ include household characteristics and village fixed effects.
\end{example}

\begin{example}[\textbf{Adoption of Menstrual Cups}]\label{example_motivation2}\citet{oster2012determinants} explore the role of network interactions in technology adoption using data from a randomized allocation of menstrual cups in Nepal. Their model for the binary outcome of menstrual cup adoption can be summarized as follows:
$$Y_{i}=1[\theta_0+\theta_1D_i+\theta_2h(S^*_i,\mathcal{T}^*_{i})+\theta'_3Z_i+\theta'_4NetSize_{i}>\varepsilon_i],$$
where $D_i$ indicates the randomized access to menstrual cup, $h(S^*_i,\mathcal{T}^*_{i})$ represents either the number of treated friends $S^*_{i}$ or the share $\frac{S^*_{i}}{\mathcal{T}^*_{i}}$, $NetSize_{i}$ includes dummies that control for different network degree values, and $Z_i$ is a vector of other attributes and school fixed effects.
\end{example}

\begin{example}[\textbf{Subsidies and African Green Revolution}]\label{example_motivation3}\citet{carter2021subsidies} study the spillovers of a government subsidy on Green Revolution technology adoption. They estimate the following model for the technology adoption or agricultural yields:
\begin{align*}
y_{it}=&\theta_0+\theta_1D_{i}*Dur_t+\theta_2D_i*After_t\\
&~~~+\theta_3SocialTreat_i*Dur_t+\theta_4SocialTreat_i*After_t+\theta_5'Z_{it}+\theta_6'NetSize_i+\varepsilon_{it},
\end{align*}
where the treatment variable $D_{i}$ is one if household $i$ won the program lottery, $Dur_t$ and $After_t$ are time dummies for during and after the subsidy period, $SocialTreat_i=1[S^*_i\geq 2]$ indicates if the household has two or more lottery winner network neighbors, $NetSize_i$ is a set of dummies for different network degree values, and $Z_{it}$ consists of time and locality fixed effects.
\end{example}

%

\subsection{Treatment and Spillover Effects under True Network}
Let us begin by introducing the assumptions under which the treatment and spillover effects are point identified if the true network is correctly observed. For any random variables $B_i$ and $C_i$, denote $p_{B_i}(b)$ as its probability density (or mass) function and $p_{B_i|C_i=c}(b)$ as its conditional version. Let $\perp$ denote statistical independence.
\begin{assumption}\label{unconf1}~
\begin{itemize}
  \item[(a)]\emph{(\textbf{Randomized Treatment})} $D_i$ and $Z_i$ are i.i.d. across $i$, and $D_i\perp(\varepsilon_j,Z_j,\mathcal{N}^*_j)$ for $\forall i,j\in\mathcal{P}$. In addition, $p_{D_i}(1)\in(\epsilon,1-\epsilon)$ for some constant $\epsilon>0$.
  \item[(b)]\emph{(\textbf{Unconfounded Network})} For $\forall i,j\in\mathcal{P}$,  $\varepsilon_i\perp (\mathcal{N}^*_j,Z_j)\big|\mathcal{T}^*_i,Z_i$.
  \item[(c)]\emph{(\textbf{Identical Error Distribution})} For $\forall i,j\in\mathcal{P}$, we have $p_{\varepsilon_i|\mathcal{T}^*_i=n^*,Z_i=z}(e)=p_{\varepsilon_j|\mathcal{T}^*_j=n^*,Z_j=z}(e)$ for any $e\in\Omega_\varepsilon$, $n^*\in\Omega_{\mathcal{T}^*}$, $z\in\Omega_Z$.
\end{itemize}
\end{assumption}
Assumption \ref{unconf1} (a) assumes a randomized treatment allocation and i.i.d. covariates, which are relevant for a wide range of experimental contexts \citep[see][for a review]{athey2017randomized}.\footnote{We can relax the fully randomized treatment to an unconfounded treatment that is i.i.d. conditional on a subvector of individual characteristics $\tilde{Z}_i\subseteq Z_i$. This requires an additional assumption that $\tilde{Z}_i$ does not enter the network formation process of $\mathbf{A}^*$. More detailed explanations can be found in Appendix \ref{appendix_unconfounded}. Extensions to non-i.i.d. covariates do not provide additional insights but may introduce technical complications. Hence, we omit this discussion for simplicity.}
The network unconfoundedness in condition (b) permits dependence between the network and unobservable characteristics through the degree and covariates. For instance, it allows for spillover of unobservables in the form given in Example \ref{example_sp_unob} below. While the unconfounded network rules out certain types of endogenous networks, such as those with unobserved homophily \citep[see, e.g.,][]{johnsson2021estimation} where unobserved factors may enter both the network formation and the outcome model, it is still weaker than a fully exogenous network. 
The identical distribution of the error term in condition (c) ensures that the expressions of the treatment and spillover effects introduced in Definition \ref{def2} are the same for all $i\in\mathcal{P}$. 

\begin{example}[\textbf{Spillover of Unobservables}]\label{example_sp_unob}Suppose that the error term in the outcome equation, $\varepsilon_i$, is a  scalar  function of $(\sum_{j\in\mathcal{N}^*_i}e_j,\mathcal{T}^*_i,e_i)$, where $e_i$ is an unobservable i.i.d. Bernoulli error that is independent of $\mathbf{A}^*$ and $\mathbf{Z}^*=\{Z_i\}_{i\in\mathcal{P}}$, and $\sum_{j\in\mathcal{N}^*_i}e_j$ measures the spillover of unobservables. One example can be $\varepsilon_i=\frac{1}{\mathcal{T}^*_i}\sum_{j\in\mathcal{N}^*_i}e_j+e_i$. In this example, $\sum_{j\in\mathcal{N}^*_i}e_j$ given $\mathcal{T}^*_i$ and $Z_i$ is a sum of a known number of i.i.d. Bernoulli variables, and it follows a binomial distribution that depends on the network only through $\mathcal{T}^*_i$.
\end{example}

The proposition below demonstrates that if the true network is correctly observed, then $m^*$ and the treatment and spillover effects are all point identified.
\begin{proposition}\label{lemma_unconf1}Under Assumption \ref{unconf1}, we have that
$$m^*(d,s,n,z)=E\left[Y_i\big|D_i=d,S^*_i=s,\mathcal{T}^*_i=n,Z_i=z\right].$$
\end{proposition}

\section{Biased Effects under Missing Network Links}\label{subsection_bias}
Existing methods for studying spillover effects often assume no missing links in the observed network data,\footnote{See \citet*{leung2020treatment}, \citet*{vazquez2019identification}, \citet{sanchez2022network}, and \citet*{viviano2024policy} for example.} but such an assumption fails to hold in many empirical applications. Suppose we randomly draw $N$ units from the population $\mathcal{P}$ and collect their network information. Denote the observable adjacency matrix as $\mathbf{A}=\{A_{ij}\}$, where self connections are dropped. For $i=1,...,N$ and $j\in\mathcal{P}$, write the observed link as
\begin{align*}
A_{ij}=U_{ij}A^*_{ij},
\end{align*}
where $U_{ij}\in\{0,1\}$ indicates a missing link. Specifically, $U_{ij}=1$ implies that a true link $A^*_{ij}=1$ is correctly observed, whereas $U_{ij}=0$ implies that a true link $A^*_{ij}=1$ is absent. Define $\mathcal{N}_i=\{j\in\mathcal{P}:A_{ij}=1\}$ as the set of unit $i$'s observed network neighbors. Let $\mathcal{T}_i=|\mathcal{N}_i|$ be the observed network degree and $\mathcal{S}_i=\sum_{j\in\mathcal{N}_i}D_j$ be the number of observed treated network neighbors. Assume that we can observe each sampled unit $i$'s outcome, treatment status, covariates, and treatment assignments of $i$'s observable network neighbors:
\begin{align*}
\left\{Y_i,D_i,Z_i,\mathcal{N}_i,\{D_{j}\}_{j\in \mathcal{N}_i}\right\},\text{ for }i=1,2,...,N.
\end{align*}
%
In the presence of missing links, the observed network is a subset of the true network for all sampled units, i.e., $\mathcal{N}_i\subseteq \mathcal{N}^*_i$. Given the observed NBRVs for each unit, we define the identifiable counterpart for the CASF $m^*$ as below:
\begin{align}\label{identfiable_m}
m(d,s,n,z)=E[Y_i|D_i=d,S_i=s,\mathcal{T}_i=n,Z_i=z],\end{align}
where we replace $S^*_i$ and $\mathcal{T}^*_i$ in the definition of $m^*$ with the observed $S_i$ and $\mathcal{T}_i$. 
If we ignore the presence of missing links and use the identifiable $m$ to compute the treatment and spillover effects of interest, we will obtain
\begin{equation}\label{def_id_treatment_spillover_effects}
\begin{aligned}
\eta_T(s,n,z):=&m(1,s,n,z)-m(0,s,n,z),\\
\eta_S(d,s,s',n,z):=&m(d,s,n,z)-m(d,s',n,z).
\end{aligned} \end{equation}
We employ the following assumptions to establish the bias of $\eta_T$ and $\eta_S$ relative to $\eta^*_T$ and $\eta^*_S$.

\begin{assumption}\label{nondiff}\emph{(\textbf{Nondifferential Missing Links})} For $\forall i,j\in\mathcal{P}$,  $D_i\perp(\varepsilon_j,Z_j,\mathcal{N}^*_j,\mathcal{N}_j)$ and
   $\varepsilon_i\perp\left(\mathcal{N}^*_j,\mathcal{N}_j\right)\big|\mathcal{T}^*_i,Z_i$.
\end{assumption}
Assumption \ref{nondiff} requires the treatment variable to be independent of missing links, which is trivially satisfied by randomized treatment assignments. In addition, it assumes that the missing links do not contain relevant information regarding the potential outcomes, given the actual degree and individual's characteristics. Following the literature \citep[e.g.,][]{bound2001measurement}, missing links that satisfy these conditions can be referred to as ``nondifferential'' missing links.

\begin{assumption}\label{identical_degree}\emph{(\textbf{Identical Degree Distribution})}  For $\forall i,j\in\mathcal{P}$, we have 
\begin{itemize}
\item[(a)]$p_{\mathcal{T}^*_i|Z_i=z}(n^*)=p_{\mathcal{T}^*_j|Z_j=z}(n^*)$ for all $n^*\in\Omega_{\mathcal{T}^*}$ and $z\in\Omega_Z$;
\item[(b)]$p_{\mathcal{T}_i|\mathcal{T}^*_i=n^*,Z_i=z}(n)=p_{\mathcal{T}_j|\mathcal{T}^*_j=n^*,Z_j=z}(n)$ for all $n\in\Omega_{\mathcal{T}}$, $n^*\in\Omega_{\mathcal{T}^*}$, and $z\in\Omega_Z$.
\end{itemize}
\end{assumption}

Assumption \ref{identical_degree} (a) requires the distribution of the true degree to be identical for all units with the same characteristics $Z_i$. It still allows for degree heterogeneity but may rule out the possibility of strategic network interactions, under which the network formation of one unit depends on the existing links of others.\footnote{In Appendix \ref{appendix_simulation_add}, we present Monte Carlo simulation results for our proposed method using network data generated by incorporating strategic interactions. The results demonstrate that our method can reduce the estimation bias compared to the naive estimation which ignores missing network links, even if Assumption \ref{identical_degree} (a) is violated.} In Example \ref{example_identical_degree}, we present a network formation model that takes into account degree heterogeneity and satisfies condition (a). Condition (b) requires the observable degree to be identically distributed across all units who have the same true degree value and covariates. In Lemma \ref{lemma_degree_dis} of Appendix \ref{appendix_example}, we show that at least two types of missing links
are allowed under this condition: (i) missing completely at random, where $U_{ij}$ is i.i.d. across all pairs $(i,j)$ and independent of all other variables, and (ii) missing not at random, where $U_{ij}$ can exhibit arbitrary correlations across $j$ for any given unit $i$, and the missing probability can vary with the true degree and covariates. Assumption \ref{identical_degree} is employed to ensure that the expressions of $m$ and the identifiable effects, $\eta_T$ and $\eta_S$, are the same for all units.

\begin{example}[\textbf{Identical Degree Distribution}]\label{example_identical_degree}
Without loss of generality, suppose there are no covariates $Z_i$. Consider a network formation model $$ A^*_{ij}= 1[\beta_1+\beta_2(\alpha_i+\alpha_j)-d(\rho_i,\rho_j)>\zeta_{ij}],$$ where $\alpha_i$ stands for the unobserved degree heterogeneity, $d(\rho_i,\rho_j)$ is the distance between two units defined using random location variables $\rho_i$ and $\rho_j$, and $\zeta_{ij}$ is a link-specific random shock. Suppose $(\alpha_i,\rho_i)$ is i.i.d. across $i$, $\zeta_{ij}$ is i.i.d. across $(i,j)$, and $\{\alpha_i,\rho_i\}_{i\in\mathcal{P}}$ and $\{\zeta_{ij}\}_{i,j\in\mathcal{P}}$ are mutually independent. Given $(\alpha_i,\rho_i)$, for any fixed $i$, $ A^*_{ij}$ becomes a function of $(\alpha_j,\rho_j,\zeta_{ij})$ and is i.i.d. across $j$. Then, $\mathcal{T}^*_i=\sum_{j\in\mathcal{P}} A^*_{ij}$, conditional on $(\alpha_i,\rho_i)$, is a sum of $(|\mathcal{P}|-1)$ i.i.d. Bernoulli variables and follows a Binomial distribution that only depends on $(\alpha_i,\rho_i)$. Because $(\alpha_i,\rho_i)$ is identically distributed across $i$, the unconditional degree distribution is also the same for all units. Detailed proofs can be found in Lemma \ref{lemma_example_identical_degree_dist} in Appendix \ref{appendix_example}.
\end{example}


\begin{theorem}\label{prop_conditional_mean}Under Assumptions \ref{unconf1}, \ref{nondiff}, and \ref{identical_degree}, $m$ is identical for all units, and it is a mixture of $m^*$, where
  \begin{align*}
m(d,s,n,z)=\sum\limits_{(s^*,n^*)\in\Omega_{S^*,\mathcal{T}^*}} m^*(d,s^*,n^*,z)\times p_{S^*_i,\mathcal{T}^*_i|S_i=s,\mathcal{T}_i=n,Z_i=z}(s^*,n^*).
\end{align*}
\end{theorem}
Theorem \ref{prop_conditional_mean} demonstrates that
$m$ is a mixture of $m^*$ with the weight $p_{S^*_i,\mathcal{T}^*_i|S_i,\mathcal{T}_i,Z_i}$ that quantifies the severity of the missing-link problem.
Given the mixture expression of $m$, we can characterize the bias in $\eta_T$ and $\eta_S$.

\begin{corollary}[\textbf{Biased Effects under Missing Links}]\label{coro_bias_expression}Under Assumptions \ref{unconf1}, \ref{nondiff}, and \ref{identical_degree}, 
\begin{align*}
\eta_T(s,n,z)=&\sum\limits_{(s^*,n^*)\in\Omega_{S^*,\mathcal{T}^*}}  \eta^*_T(s^*,n^*,z)\times
 p_{S^*_i,\mathcal{T}^*_i|S_i=s,\mathcal{T}_i=n,Z_i=z}(s^*,n^*),\\
\eta_S(d,s,s',n,z)=&\sum\limits_{(s^*,n^*)\in\Omega_{S^*,\mathcal{T}^*}}\eta_S^*(d,s^*,s',n^*,z)\nonumber\\
&~~~~~~~~\hspace{1cm}\times\left[
 p_{S^*_i,\mathcal{T}^*_i|S_i=s,\mathcal{T}_i=n,Z_i=z}(s^*,n^*)-p_{S^*_i,\mathcal{T}^*_i|S_i=s',\mathcal{T}_i=n,Z_i=z}(s^*,n^*)\right].
\end{align*}
\end{corollary}
Corollary \ref{coro_bias_expression} reveals that the identifiable treatment effect $\eta_T(s,n,z)$ is a nonnegatively-weighted average of the true treatment effects. Consequently, if the true treatment effects are all positive or all negative, $\eta_T(s,n,z)$ will maintain the same sign. However, we cannot point identify the value of $\eta^*_T$ using $\eta_T$, except in the special case where $\eta^*_T(s,n,z)$ is homogeneous in $(s,n)$. In other words, if there exist functions $m^*_1$ and $m^*_2$ such that $m^*(d,s,n,z)=m^*_1(d,z)+m^*_2(s,n,z)$, then we can point identify the true treatment effect $\eta^*_T$ by $\eta_T$, as
\begin{align*}
\eta^*_T(s,n,z)=\eta_T(s,n,z)=m^*_1(1,z)-m^*_1(0,z),\;\text{ for any }(s,n,z).
\end{align*}
Furthermore, the identifiable spillover effect $\eta_S(d,s,s',n,z)$ is also a weighted average of the true spillover effects, albeit with possibly negative weights that sum up to zero.\footnote{Because $\sum_{(s^*,n^*)\in\Omega_{S^*,\mathcal{T}^*}}[p_{S^*_i,\mathcal{T}^*_i|S_i=s,\mathcal{T}_i=n,Z_i=z}(s^*,n^*)-p_{S^*_i,\mathcal{T}^*_i|S_i=s',\mathcal{T}_i=n,Z_i=z}(s^*,n^*)]=1-1=0$, the weights in $\eta_S(d,s,s',n,z)$ sum to zero, so they cannot be all positive or all negative for every $(s^*,n^*)\in\Omega_{S^*,\mathcal{T}^*}$.} Therefore, $\eta_S(d,s,s',n,z)$ may have the opposite sign of $\eta^*_S(d,s,s',n,z)$, resulting in either an upward or a downward bias. In a special case where network interactions have no impact on $Y_i$, i.e., $m^*(d,s,n,z)=m^*(d,z)$, we can point identify the true spillover effect $\eta^*_S$ using $\eta_S$, since they are both zero:
\begin{align*}
\eta_S^*(d,s,s',n,z)=\eta_S(d,s,s',n,z)=0,\;\text{ for any }(d,s,s',n,z).
\end{align*}



\section{Main Results}\label{section_multiple_proxy}
This section proceeds in three steps. First, we demonstrate that the weight $p_{S^*_i,\mathcal{T}^*_i|S_i,\mathcal{T}_i,Z_i}$ in Theorem \ref{prop_conditional_mean}, which connects $m$ to the target CASF $m^*$, is a product of two distribution functions. Second, we introduce a sparse network assumption and recover the weight by tackling the two distribution functions separately. Finally, we discuss the identification of the CASF $m^*$.

\subsection{Decomposition of the Weights}

We can see that the weight $p_{S^*_i,\mathcal{T}^*_i|S_i,\mathcal{T}_i,Z_i}$ can be rewritten as a product:
\begin{align}\label{equation_joint_decomposition_pre}
&p_{S^*_i,\mathcal{T}^*_i|S_i,\mathcal{T}_i,Z_i}
=p_{S^*_i|\mathcal{T}^*_i,S_i,\mathcal{T}_i,Z_i}\times p_{\mathcal{T}^*_i|S_i,\mathcal{T}_i,Z_i}.
\end{align}
Recall that $S_i=\sum_{j\in \mathcal{N}_i}D_j$ denotes the number of observed treated
network neighbors among all $\mathcal{T}_i=|\mathcal{N}_i|$ observed network neighbors. Due to the randomized treatment assignment, $S_i$ given $(\mathcal{T}_i,Z_i)$  is a sum of a given number of i.i.d.
binary variables. Thus, $S_i$ given $(\mathcal{T}_i,Z_i)$ follows a distribution of $Binomial(p_{D_i}(1),\mathcal{T}_i)$ and is independent of the true degree $\mathcal{T}^*_i$. Therefore, the second term on the right-hand side of \eqref{equation_joint_decomposition_pre} reduces to $$p_{\mathcal{T}^*_i|S_i,\mathcal{T}_i,Z_i}=p_{\mathcal{T}^*_i|\mathcal{T}_i,Z_i}. $$
Then, we can see that the weight $p_{S^*_i,\mathcal{T}^*_i|S_i,\mathcal{T}_i,Z_i}$ is determined by two components: $p_{S^*_i|\mathcal{T}^*_i,S_i,\mathcal{T}_i,Z_i}$, which represents the dependence between the true and observed NBRVs, and $p_{\mathcal{T}^*_i|\mathcal{T}_i,Z_i}$, which captures the missing probabilities in the degree. This result is formally introduced below.

\begin{theorem}\label{decomposition_weight}\emph{(\textbf{Decomposition of Weight})} Under Assumptions \ref{unconf1}, \ref{nondiff}, and \ref{identical_degree},
we have
\begin{align*}
&p_{S^*_i,\mathcal{T}^*_i|S_i=s,\mathcal{T}_i=n,Z_i=z}(s^*,n^*)
=p_{S^*_i|\mathcal{T}^*_i=n^*,S_i=s,\mathcal{T}_i=n,Z_i=z}(s^*)\times p_{\mathcal{T}^*_i|\mathcal{T}_i=n,Z_i=z}(n^*).
\end{align*}
\end{theorem}

Next, we show that the first term in the weight decomposition, $p_{S^*_i|\mathcal{T}^*_i,S_i,\mathcal{T}_i,Z_i}$, is also a binomial distribution and can be point identified using observed data. 
Denote $$\Delta S_i:=S^*_i-S_i\text{ and }\Delta\mathcal{T}_i:=\mathcal{T}^*_i-\mathcal{T}_i.$$ 
The point identification of NBRV dependence relies on two key facts. First, $p_{S^*_i|S_i,\mathcal{T}_i,\mathcal{T}^*_i,Z_i}=p_{\Delta S_i|S_i,\mathcal{T}_i,\Delta\mathcal{T}_i,Z_i}$. Second, in the presence of missing links, $\Delta S_i=\sum_{j\in \mathcal{N}^*_i \bigcap \mathcal{N}^c_i}D_j$, where $\mathcal{N}^c_i$ denotes the complement of the set $\mathcal{N}_i$ and $|\mathcal{N}^*_i \bigcap \mathcal{N}^c_i|=\Delta\mathcal{T}_i$. Thus, because of the randomized treatment allocation, $\Delta S_i$ given $(\Delta\mathcal{T}_i,Z_i)$ is a sum of a given number of i.i.d. binary
variables, which follows a $Binomial(p_{D_i}(1),\Delta\mathcal{T}_i)$ distribution and is independent of $(S_i,\mathcal{T}_i)$. Therefore, these two facts together imply that $$p_{S^*_i|S_i,\mathcal{T}_i,\mathcal{T}^*_i,Z_i}=p_{\Delta S_i|S_i,\mathcal{T}_i,\Delta\mathcal{T}_i,Z_i}=p_{\Delta S_i|\Delta\mathcal{T}_i ,Z_i},$$ 
which remains invariant to $(S^*_i,\mathcal{T}^*_i,S_i,\mathcal{T}_i)$ as long as the two differences, $\Delta S_i$ and $\Delta\mathcal{T}_i$, are fixed. This simplification dramatically reduces the dependence structure between NBRVs. Let $\binom{n}{s}$ be the number of $s$-combinations from $n$ elements.
\begin{theorem}\label{lemma_iden}\emph{(\textbf{Point Identification of NBRV Dependence})}  Under Assumptions \ref{unconf1}, \ref{nondiff}, and \ref{identical_degree},
$$p_{S^*_i|S_i=s,\mathcal{T}_i=n,\mathcal{T}^*_i=n^*,Z_i}(s^*)=\begin{cases}p_{\Delta S_i|\Delta\mathcal{T}_i=\Delta n ,Z_i}(\Delta s),&\mbox{if }s^*\leq n^*,~s\leq n,\text{ and }0\leq \Delta s\leq\Delta n,\\
      0,&\mbox{otherwise},\end{cases}$$ where $\Delta s:=s^*-s$, $\Delta n:=n^*-n$, and
$p_{\Delta S_i|\Delta\mathcal{T}_i=\Delta n,Z_i}(\Delta s)=
\binom{\Delta n}{\Delta s}p_{D_i}(1)^{\Delta s}p_{D_i}(0)^{\Delta n-\Delta s}.
$
\end{theorem}

Since the treatment probability $p_{D_i}$ is identifiable from the data, $p_{S^*_i|\mathcal{T}^*_i,S_i,\mathcal{T}_i,Z_i}$ is point identified based on Theorem \ref{lemma_iden}.
Theorems \ref{prop_conditional_mean}, \ref{decomposition_weight}, and \ref{lemma_iden} together imply that
if we can identify the second term in the product expression of the weights, i.e., the missing probabilities in the degree $p_{\mathcal{T}^*_i|\mathcal{T}_i,Z_i}$, we will be able to utilize the mixture model of $m$ and the weight to recover the target CASF $m^*$.

\subsection{Matrix Diagonalization Method}\label{section_matrix_diagonalization}

In this section, we adopt the matrix diagonalization method \citep{hu2008identification} and the matrix perturbation analysis \citep{stewart1990matrix} to recover the missing probabilities in the degree, $p_{\mathcal{T}^*_i|\mathcal{T}_i,Z_i}$. Our proposed method uses two network measures that can be easily constructed using the incoming and outgoing network links of the same observed network. This method is flexible as it can accommodate arbitrary correlations among missing links of the same unit and heterogeneous missing rates based on the true degree. Without loss of generality, let us denote $\mathcal{N}_i=\{j\in \mathcal{P}:~A_{ij}=1\}$ and $\tilde{\mathcal{N}}_i=\{j\in \mathcal{P}:~A_{ji}=1\}$ as the set of unit $i$'s observed outgoing and incoming network neighbors, respectively. Then, $\mathcal{T}_i=|\mathcal{N}_i|$ and $\tilde{\mathcal{T}}_i=|\tilde{\mathcal{N}}_i|$ denote the observed out-degree and in-degree.\footnote{In some empirical studies, we can only observe the incoming network neighbors among sampled units, i.e., $\tilde{\mathcal{N}}_i=\{j\in \{1,...,N\}:~A_{ji}=1\}$. If this is the case, $\tilde{\mathcal{N}}_i$ can still be defined as $\tilde{\mathcal{N}}_i=\{j\in \mathcal{P}:~A_{ji}=1\}$, as we have $A_{ji}=0$ for all $j\not\in\{1,...,N\}$.} We assume that the support of the true and observed degrees is the set of non-negative integers $\{0,1,2,...\}$.\footnote{Our method can also be applied to cases with no isolated nodes.} 

Below, we introduce assumptions that are required for implementing the matrix diagonalization method. These assumptions are strengthened versions of those in \citet{hu2008identification}, modified to accommodate potentially unbounded network degrees. We start with a sparse network assumption, which defines a truncated degree support. Sparse networks are common in many social science contexts, as human beings have a limited amount of time and energy to maintain their social connections.
\begin{assumption}[\textbf{Sparse Network}]\label{ass_support}There exists a bounded integer $0<K<\infty$, such that
\begin{itemize}
  \item[(i)] $\sum_{k> K}p_{\mathcal{T}^*_i|Z_i=z}(k)\leq\triangle_K$ for $\forall z\in\Omega_Z$;
  \item[(ii)]there exists a $\delta^*=\delta^*(K)>0$ such that $p_{\mathcal{T}^*_i|Z_i=z}(k)>\delta^*$ and $p_{\mathcal{T}_i|Z_i=z}(k)>\delta^*$ for all $k=0,...,K$ and $\forall z\in\Omega_Z$.
\end{itemize}
\end{assumption}
Assumption \ref{ass_support} requires a sparse network such that the probability of having a degree larger than $K$ is bounded by $\triangle_K$. Theoretically, we require $K$ to be a known and bounded value and $\triangle_K$ to be sufficiently small. Thus, even if $\triangle_K$ may decrease to zero as $K$ increases, we do not require $K$ to go to infinity as the sample size increases. Note that it is possible to relax this assumption by allowing $K$ to depend on $Z=z$. However, we omit this dependence to ease the notation. If the degree is uniformly bounded for all units, i.e., $\max_{i\in \mathcal{P}}\{\mathcal{T}^*_i\}= K$, then this assumption holds with $\triangle_K=0$. See \citet{paula2018identifying}, \citet{richards2020application}, and \citet{hu2020binary2} for papers assuming bounded degrees. If only a limited number of units have unbounded degrees, this assumption also holds with $\triangle_K$ close to zero for a carefully chosen $K$. The latter is often satisfied in social network data. For example, \citet{graham2015methods} documented that in a risk-sharing network among households, a small number of households have many
links, whereas the vast majority have fewer than ten links. In this case, we can set $K=10$. A bounded $K$ is crucial for implementing the matrix diagonalization method. In practice, however, while the value of $K$ still needs to be bounded, it can vary with different sample sizes. 


\begin{assumption}[\textbf{Exclusion Restriction}]\label{nondiff3} $\mathcal{T}_i\perp\tilde{\mathcal{T}}_i\big|\mathcal{T}^*_i,Z_i$.
\end{assumption}
Assumption \ref{nondiff3} states that $\tilde{\mathcal{T}}_i$ contains no extra information of $\mathcal{T}_i$ beyond what the actual degree $\mathcal{T}^*_i$ already provides.
Intuitively, the exclusion restriction is satisfied by the incoming and outgoing degrees, if the missing links of one unit, $\mathbf{U}_i=\{U_{ij}\}_{j\in\mathcal{P}}$, are conditionally independent of the missing links of others, $\mathbf{U}_k=\{U_{kj}\}_{j\in\mathcal{P}}$ for all $k\neq i$, given $(\mathbf{A}^*,\mathbf{Z})$ where $\mathbf{Z}=\{Z_i\}_{i\in\mathcal{P}}$. For example, if the missing links are caused by under-reporting or by dropping links to friends with misspelled names, Assumption \ref{nondiff3} is satisfied if the under-reporting or spelling error made by one unit does not depend on those made by others. 
Nevertheless, it does not require the missing links of the same unit to be independent of each other, and it allows the probability of $\mathbf{U}_i$ to vary with the true degree $\mathcal{T}^*_i$ and $Z_i$. Therefore, this assumption can accommodate arbitrary correlation among missing links of the same unit and heterogeneous missing rates across units. 
See Lemma \ref{lemma_example_noniid} in Appendix \ref{appendix_example} for further illustration.\footnote{There are scenarios where Assumption \ref{nondiff3} may not hold. For example, when constructing networks, researchers may only include network links among sampled units or within certain geographic boundaries (e.g., schools, villages, etc.). In the former case, $U_{ik}=U_{jk}=0$ with probability one if unit $k$ is not sampled. In the latter case, $U_{ij}=U_{ji}=0$ with probability one if units $i$ and $j$ are not located within the same boundaries. In these cases, $\mathbf{U}_i$ and $\mathbf{U}_j$ are not independent, even when conditioned on $(\mathbf{A}^*,\mathbf{Z})$, which violates the exclusion restriction.}

Given the network sparsity in Assumption \ref{ass_support}, we can focus on
the truncated degree support, $\{0, ..., K\}$. Recall that our target in this section is the missing probabilities in the degree, $p_{\mathcal{T}^*_i|\mathcal{T}_i,Z_i}$. Denote by $\mathbf{F}_{\mathcal{T}^*|\mathcal{T},Z}$ a $(K+1)\times (K+1)$ matrix that consists of all these probabilities in the truncated degree support:
\begin{align*}
\mathbf{F}_{\mathcal{T}^*|\mathcal{T},Z=z}&=\{p_{\mathcal{T}^*_i|\mathcal{T}_i=l,Z_i=z}(k)\}_{k,l=0,...,K}=
\begin{bmatrix}p_{\mathcal{T}^*_i|\mathcal{T}_i=0,Z_i=z}(0)&\cdots&p_{\mathcal{T}^*_i|\mathcal{T}_i=K,Z_i=z}(0)\\
\vdots&\ddots&\vdots\\
p_{\mathcal{T}^*_i|\mathcal{T}_i=0,Z_i=z}(K)&\cdots&p_{\mathcal{T}^*_i|\mathcal{T}_i=K,Z_i=z}(K)
\end{bmatrix}.
\end{align*}
Define $$\mathbf{F}_{\mathcal{T}|\mathcal{T}^*,Z=z}=\{p_{\mathcal{T}_i|\mathcal{T}^*_i=l,Z_i=z}(k)\}_{k,l=0,...,K}\text{ and }\mathbf{F}_{\tilde{\mathcal{T}}|\mathcal{T}^*,Z=z}=\{p_{\tilde{\mathcal{T}}_i|\mathcal{T}^*_i=l,Z_i=z}(k)\}_{k,l=0,...,K}$$ in the same way as $\mathbf{F}_{\mathcal{T}^*|\mathcal{T},Z=z}$. Let $diag(v)$ be a diagonal matrix with the elements from vector $v$ on the principal diagonal. Define the following $(K+1)\times (K+1)$ matrices:
\begin{equation*}
\begin{aligned}
&\mathbf{F}_{\mathcal{T},\tilde{\mathcal{T}}|Z=z}=\left\{p_{\mathcal{T}_i,\tilde{\mathcal{T}}_i|Z_i=z}(k,l)\right\}_{k,l=0,...,K}\\
&\mathbf{E}_{\mathcal{T},\tilde{\mathcal{T}},Y|Z=z}=\left\{E[\varpi(Y_i)|\mathcal{T}_i=k,\mathcal{\tilde{T}}_i=l,Z_i=z]p_{\mathcal{T}_i,\tilde{\mathcal{T}}_i|Z_i=z}(k,l)\right\}_{k,l=0,...,K}\\
&\mathbf{T}_{\mathcal{T}^*|Z=z}=diag\left(p_{\mathcal{T}^*_i|Z_i=z}(0),\cdots,p_{\mathcal{T}^*_i|Z_i=z}(K)\right)\\
&\mathbf{T}_{\mathcal{T}|Z=z}=diag\left(p_{\mathcal{T}_i|Z_i=z}(0),\cdots,p_{\mathcal{T}_i|Z_i=z}(K)\right)\\
&\mathbf{T}_{Y|\mathcal{T}^*,Z=z}=diag\left(E[\varpi(Y_i)|\mathcal{T}^*_i=0,Z_i=z],\cdots,E[\varpi(Y_i)|\mathcal{T}^*_i=K,Z_i=z]\right)
,\end{aligned}\end{equation*}
where $\varpi:\Omega_Y\mapsto\mathbb{R}$ represents a user-specified function, to which we will impose additional restrictions in subsequent assumptions.\footnote{For example, the user-specified function can be $\varpi(y)=y$ (mean), $\varpi(y)=(y-E[Y_i])^2$ (variance), or $\varpi(y)=1[y\leq y_0]$ (quantile) for some given $y_0$. We omit the dependence of $\mathbf{E}_{\mathcal{T},\tilde{\mathcal{T}},Y|Z=z}$ and $\mathbf{T}_{Y|\mathcal{T}^*,Z=z}$ on $\varpi$ to ease the notation.} Define two $(K+1)\times 1$ vectors
\begin{align*}
\mathbf{F}_{\mathcal{T}^*|Z=z}&=[p_{\mathcal{T}^*_i|Z_i=z}(0),...,p_{\mathcal{T}^*_i|Z_i=z}(K)]',\text{ and }
\mathbf{F}_{\mathcal{T}|Z=z}=[p_{\mathcal{T}_i|Z_i=z}(0),...,p_{\mathcal{T}_i| Z_i=z}(K)]'.
\end{align*}
From Bayes' theorem, we can write the target matrix $\mathbf{F}_{\mathcal{T}^*|\mathcal{T},Z=z}$ as
\begin{align}\label{degree_dist_intermediate}
\mathbf{F}_{\mathcal{T}^*|\mathcal{T},Z=z}=\mathbf{T}_{\mathcal{T}^*|Z=z}\times \mathbf{F}'_{\mathcal{T}|\mathcal{T}^*,Z=z}\times \mathbf{T}^{-1}_{\mathcal{T}|Z=z},
\end{align}
where $\mathbf{T}_{\mathcal{T}|Z=z}$ is identifiable because it is a matrix of distribution functions of observables. In what follows, we explore the identification of $\mathbf{T}_{\mathcal{T}^*|Z=z}$ and $\mathbf{F}_{\mathcal{T}|\mathcal{T}^*,Z=z}$.

Let $\mathbf{\Delta}_{K}$ be a matrix (or vector) with all its entries being $O(\triangle_K)$. Note that $\mathbf{\Delta}_{K}$ may
stand for different matrices (or vectors) at different places. Given Assumption \ref{ass_support} (network sparsity) and Assumption \ref{nondiff3} (exclusion restriction), applying the law of iterated expectation, we can show that
\begin{align}\label{decomp_ynn1}
\mathbf{E}_{\mathcal{T},\tilde{\mathcal{T}},Y|Z=z}=&\mathbf{F}_{\mathcal{T}|\mathcal{T}^*,Z=z}\times \mathbf{T}_{Y|\mathcal{T}^*,Z=z}\times \mathbf{T}_{\mathcal{T}^*|Z=z}\times \mathbf{F}_{\tilde{\mathcal{T}}|\mathcal{T}^*,Z=z}'+\mathbf{\Delta}_{K},\\
\label{decomp_nn1}
\mathbf{F}_{\mathcal{T},\tilde{\mathcal{T}}|Z=z}=&\mathbf{F}_{\mathcal{T}|\mathcal{T}^*,Z=z}\times \mathbf{T}_{\mathcal{T}^*|Z=z}\times \mathbf{F}_{\tilde{\mathcal{T}}|\mathcal{T}^*,Z=z}'+\mathbf{\Delta}_{K},\\
\label{decomp_n1}
\mathbf{F}_{\mathcal{T}|Z=z}=&\mathbf{F}_{\mathcal{T}|\mathcal{T}^*,Z=z}\times \mathbf{F}_{\mathcal{T}^*|Z=z}+\mathbf{\Delta}_{K},
\end{align}
where the three identifiable matrices on the left-hand side of \eqref{decomp_ynn1} to \eqref{decomp_n1} differ from the first terms on the right-hand side by $\mathbf{\Delta}_{K}$. Apparently, the differences arise from the ignorance of degree values
larger than $K$.
Based on the matrix perturbation theory,\footnote{See Lemma \ref{lemma_matrix_perturbation} for the matrix perturbation theory.} if all three matrices in the first term on the right-hand side of \eqref{decomp_nn1} are invertible with bounded inverses, then $\mathbf{F}_{\mathcal{T},\tilde{\mathcal{T}}|Z=z}$ is also invertible, and its inverse satisfies
\begin{align}\label{decomp_nn1_inv}
\mathbf{F}^{-1}_{\mathcal{T},\tilde{\mathcal{T}}|Z=z}=&\left(\mathbf{F}_{\tilde{\mathcal{T}}|\mathcal{T}^*,Z=z}'\right)^{-1}\times \mathbf{T}^{-1}_{\mathcal{T}^*|Z=z}\times \mathbf{F}^{-1}_{\mathcal{T}|\mathcal{T}^*,Z}+\mathbf{\Delta}_{K}.
\end{align}
By post-multiplying $\mathbf{E}_{\mathcal{T},\tilde{\mathcal{T}},Y|Z=z}$ in \eqref{decomp_ynn1} by $\mathbf{F}^{-1}_{\mathcal{T},\tilde{\mathcal{T}}|Z=z}$ in \eqref{decomp_nn1_inv}, we can get
\begin{align}\label{t17}
\mathbf{E}_{\mathcal{T},\tilde{\mathcal{T}},Y|Z=z}\times \mathbf{F}^{-1}_{\mathcal{T},\tilde{\mathcal{T}}|Z=z}=\mathbf{F}_{\mathcal{T}|\mathcal{T}^*,Z=z}\times \mathbf{T}_{Y|\mathcal{T}^*,Z=z}\times \mathbf{F}^{-1}_{\mathcal{T}|\mathcal{T}^*,Z=z}+\mathbf{\Delta}_{K}.
\end{align}
It then follows from \eqref{t17} and the properties of diagonalizable matrix that the normalized eigenvectors (whose entries sum to one) of $\mathbf{E}_{\mathcal{T},\tilde{\mathcal{T}},Y|Z=z}\times \mathbf{F}^{-1}_{\mathcal{T},\tilde{\mathcal{T}}|Z=z}$ approximate the columns of $\mathbf{F}_{\mathcal{T}|\mathcal{T}^*,Z=z}$ with an approximation error of order $\triangle_{K}$. If we can further identify the ordering of these eigenvectors, then $\mathbf{F}_{\mathcal{T}|\mathcal{T}^*,Z=z}$ can be approximated with an error of order $\triangle_{K}$. For the other unknown matrix in \eqref{degree_dist_intermediate}, $\mathbf{T}_{\mathcal{T}^*|Z=z}$, pre-multiplying both sides of \eqref{decomp_n1} by $\mathbf{F}^{-1}_{\mathcal{T}|\mathcal{T}^*,Z=z}$ yields the following equation:
\begin{align} \label{t16}
\mathbf{F}^{-1}_{\mathcal{T}|\mathcal{T}^*,Z=z}\times \mathbf{F}_{\mathcal{T}|Z=z}=\mathbf{F}_{\mathcal{T}^*|Z=z}+\mathbf{\Delta}_{K}.
\end{align}
Given that $\mathbf{F}_{\mathcal{T}|\mathcal{T}^*,Z=z}$ can be approximated using identifiable matrices and $\mathbf{F}_{\mathcal{T}|Z=z}$ is directly identifiable from the data, it follows from \eqref{t16} that $\mathbf{T}_{\mathcal{T}^*|Z=z}$ is also approximated with an error of order $\triangle_{K}$. Consequently, the missing probabilities in the degree, $\mathbf{F}_{\mathcal{T}^*|\mathcal{T},Z=z}$ in \eqref{degree_dist_intermediate}, can be approximated.

Next, in Assumptions \ref{ass_eigen_inv} to \ref{ass_eigen}, we formalize the necessary assumptions for the matrix diagonalization method outlined above. Let $\underline{\sigma}(\mathbf{B})$ denote the smallest singular value of a matrix $\mathbf{B}$.
\begin{assumption}[\textbf{Invertibility}]\label{ass_eigen_inv}For some $\delta=\delta(K)>0$, we have $\underline{\sigma}(\mathbf{F}_{\mathcal{T}|\mathcal{T}^*,Z=z})>\delta$, $\underline{\sigma}(\mathbf{F}_{\tilde{\mathcal{T}}|\mathcal{T}^*,Z=z})>\delta$, and $\underline{\sigma}\left(\mathbf{F}_{\mathcal{T},\tilde{\mathcal{T}}|Z=z}\right)>\delta$ for all $z\in\Omega_Z$.
\end{assumption}
Assumption \ref{ass_eigen_inv} requires that the choice of $K$ ensures the invertibility of $\mathbf{F}_{\mathcal{T}|\mathcal{T}^*,Z=z}$, $\mathbf{F}_{\tilde{\mathcal{T}}|\mathcal{T}^*,Z=z}$, and $\mathbf{F}_{\mathcal{T},\tilde{\mathcal{T}}|Z=z}$, as well as  the boundedness of their inverses. Since $\mathbf{F}_{\mathcal{T},\tilde{\mathcal{T}}|Z=z}$ is identifiable using data on the observed degrees $\mathcal{T}$ and $\tilde{\mathcal{T}}$, this assumption can be partially verified for the chosen value of $K$ by checking the rank and the smallest singular value of $\mathbf{F}_{\mathcal{T},\tilde{\mathcal{T}}|Z=z}$. 
\begin{assumption}[\textbf{Eigen-decomposition}]\label{ass_eigen_unique}For some $e_Y>0$, we have $\sup\limits_{n^*\in\Omega_{\mathcal{T}^*},z\in\Omega_Z}|E[\varpi(Y_i)|\mathcal{T}^*_i=n^*,Z_i=z]|<e_Y$. In addition, for all $z\in\Omega_Z$, if $n\neq n'$, then $$E[\varpi(Y_i)|\mathcal{T}^*_i=n,Z_i=z]\neq E[\varpi(Y_i)|\mathcal{T}^*_i=n',Z_i=z].$$
\end{assumption}
Assumption \ref{ass_eigen_unique} rules out duplicated eigenvalues of the diagonalizable matrix $\mathbf{F}_{\mathcal{T}|\mathcal{T}^*,Z=z}\times \mathbf{T}_{Y|\mathcal{T}^*,Z=z}\times \mathbf{F}^{-1}_{\mathcal{T}|\mathcal{T}^*,Z=z}$, ensuring that its eigenvalues and eigenvectors are differentiable functions of the matrix itself. This smoothness condition further guarantees that the normalized eigenvectors of $\mathbf{E}_{\mathcal{T},\tilde{\mathcal{T}},Y|Z=z}\times \mathbf{F}^{-1}_{\mathcal{T},\tilde{\mathcal{T}}|Z=z}$ approximate the columns of $\mathbf{F}_{\mathcal{T}|\mathcal{T}^*,Z=z}$ with an error of order $\triangle_K$. Note that if network interactions have no impact on $Y_i$, then $m^*(d,s,n,z)=m^*(d,z)$, and $E[\varpi(Y_i)|\mathcal{T}^*_i,Z_i]$ becomes degenerate in $\mathcal{T}^*_i$, causing Assumption \ref{ass_eigen_unique} to fail. Fortunately, in this case, the matrix diagonalization method is not needed, as $m^*$ can be point identified using its identifiable counterpart $m$. This is because $m^*(d,s,n,z)=m(d,s,n,z)=m^*(d,z)$ by Theorem \ref{prop_conditional_mean}. Then, a test for whether $m(d,s,n,z)$ depends on $(s,n)$ can be used to verify Assumption \ref{ass_eigen_unique} and determine the necessity of the matrix diagonalization method.
In Example \ref{example_unique_eigen}, we discuss another possible test for Assumption \ref{ass_eigen_unique} within commonly used network effect models.

\begin{assumption}[\textbf{Order of Eigenvectors}]\label{ass_eigen}
Any one of the following conditions holds for all $n^*\in\{0,...,K\}$ and $z\in\Omega_Z$.
   \begin{itemize}
   \item[(a)]$p_{\mathcal{T}_i|\mathcal{T}^*_i=n^*,Z_i=z}(n^*)>p_{\mathcal{T}_i|\mathcal{T}^*_i=n^*,Z_i=z}(n)$ for any $n\neq n^*$.
  \item[(b)]
      $p_{\mathcal{T}_i|\mathcal{T}^*_i=n^*,Z_i=z}(0)$ is strictly monotone in $n^*$ and the direction is known.
  \item[(c)]
      $E[\varpi(Y_i)|\mathcal{T}^*_i=n^*,Z_i=z]$ is strictly monotone in $n^*$ and the direction is known.
  \end{itemize}
\end{assumption}
Assumption \ref{ass_eigen} is used to identify the order of the columns of $\mathbf{F}_{\mathcal{T}|\mathcal{T}^*,Z=z}$. Note that \emph{any one} of the conditions in Assumption \ref{ass_eigen} is sufficient for identifying the order. Condition (a) implies that if the $l$-th entry of a column of $\mathbf{F}_{\mathcal{T}|\mathcal{T}^*,Z=z}$ is its largest entry, then this column is the $l$-th column of $\mathbf{F}_{\mathcal{T}|\mathcal{T}^*,Z=z}$. Under condition (b) and a decreasing order of  $p_{\mathcal{T}_i|\mathcal{T}^*_i=n^*,Z_i=z}(0)$ in $n^*$, if the first entry of a column of  $\mathbf{F}_{\mathcal{T}|\mathcal{T}^*,Z=z}$ is the $l$-th largest among the first entries of all its columns, then this column is the $l$-th column. Condition (c) imposes an order on the eigenvalues of the diagonalizable matrix $\mathbf{F}_{\mathcal{T}|\mathcal{T}^*,Z=z}\times \mathbf{T}_{Y|\mathcal{T}^*,Z=z}\times \mathbf{F}^{-1}_{\mathcal{T}|\mathcal{T}^*,Z=z}$, which also implies the order of its eigenvectors.

Researchers should carefully choose the appropriate condition in Assumption \ref{ass_eigen} based on the specific context.  Conditions (a) and (b) assume that the observable degree is informative about the true degree.  Lemma \ref{lemma_example_eigen} in Appendix \ref{appendix_example} provides sufficient conditions for both (a) and (b). For instance, condition (a) holds if more than half of the units have no missing links,\footnote{Similar restrictions are widely used in the measurement error literature \citep[e.g.][]{battistin2011misclassified,battistin2014misreported,chen2011nonlinear,huSchennach2008instrumental,lewbel2007estimation,mahajan2006identification}.} and it also holds under weaker conditions.
Condition (b) requires that the probability of having zero observed degree strictly decreases as the true degree increases. Condition (c) imposes a shape restriction on $\varpi(Y_i)$ and is satisfied in commonly used network effect models (see Example \ref{example_unique_eigen}).

\begin{example}\label{example_unique_eigen}Consider a simplified linear-in-means model with no endogenous peer effects and no covariates: $Y_i=\theta_1D_i+\theta_2\frac{S^*}{\mathcal{T}^*_i}+\theta_3\mathcal{T}^*_i+\varepsilon_i$. We also assume no isolated units in the true network for simplicity. Let us set $\varpi(y)=y$. Lemma \ref{lemma_monotone} in Appendix \ref{appendix_example} shows that
\begin{align*}
E[Y_i|\mathcal{T}^*_i=n]=&(\theta_1+\theta_2)p_{D_i}(1)+\theta_3n,\\ E[Y_i|\mathcal{T}_i=n]=&(\theta_1+\theta_2)p_{D_i}(1)+\theta_3\sum_{n^*\in\Omega_{\mathcal{T}^*}}n^*p_{\mathcal{T}^*_i|\mathcal{T}_i=n}(n^*).
\end{align*}
If $\theta_3=0$, then both $E[Y_i|\mathcal{T}^*_i=n]$ and $E[Y_i|\mathcal{T}_i=n]$ remain invariant to $n$, and Assumption \ref{ass_eigen_unique} is violated. 
Thus, Assumption \ref{ass_eigen_unique} can be verified by a test on whether $E[Y_i|\mathcal{T}_i=n]$ depends on $n$. In addition, we can see that $E[Y_i|\mathcal{T}^*_{i}=n]$ is strictly increasing in $n$ if $\theta_3>0$ and strictly decreasing in $n$ if $\theta_3<0$, so that condition (c) in Assumption \ref{ass_eigen} holds as long as $\theta_3\neq0$.
\end{example}

Denote by $\mathbf{F}^a_{\mathcal{T}|\mathcal{T}^*,Z=z}$ a matrix whose columns are the normalized eigenvectors of $\mathbf{E}_{\mathcal{T},\tilde{\mathcal{T}},Y|Z=z}\times \mathbf{F}^{-1}_{\mathcal{T},\tilde{\mathcal{T}}|Z=z}$, in the order implied by Assumption \ref{ass_eigen}. Based on \eqref{t17}, we can show that $\mathbf{F}^a_{\mathcal{T}|\mathcal{T}^*,Z=z}$ differs from $\mathbf{F}_{\mathcal{T}|\mathcal{T}^*,Z=z}$ by a term of order $\triangle_K$.
Based on \eqref{t16}, let
\begin{align*}
\mathbf{F}^a_{\mathcal{T}^*|Z=z}:=&(\mathbf{F}^{a}_{\mathcal{T}|\mathcal{T}^*,Z=z})^{-1}\times \mathbf{F}_{\mathcal{T}|Z=z},~~\text{ and }~~
\mathbf{T}^a_{\mathcal{T}^*|Z=z}:=diag(\mathbf{F}^a_{\mathcal{T}^*|Z=z})
\end{align*}
be the approximation for $\mathbf{F}_{\mathcal{T}^*|Z=z}$ and $\mathbf{T}_{\mathcal{T}^*|Z=z}$, respectively.
Replacing $\mathbf{T}_{\mathcal{T}^*|Z=z}$ and $\mathbf{F}_{\mathcal{T}|\mathcal{T}^*,Z=z}$ in \eqref{degree_dist_intermediate} with their approximations, we can get an approximation for $\mathbf{F}_{\mathcal{T}^*|\mathcal{T},Z=z}$:
\begin{align*}
\mathbf{F}^a_{\mathcal{T}^*|\mathcal{T},Z=z}=\mathbf{T}^a_{\mathcal{T}^*|Z=z}\times (\mathbf{F}^a_{\mathcal{T}|\mathcal{T}^*,Z=z})'\times \mathbf{T}^{-1}_{\mathcal{T}|Z=z}.
\end{align*}
The theorem below shows that the difference between the target matrix, the missing probabilities in the degree $\mathbf{F}_{\mathcal{T}^*|\mathcal{T}, Z=z}$, and its approximation $\mathbf{F}^a_{\mathcal{T}^*|\mathcal{T}, Z=z}$, is bounded by $\triangle_K$. 

\begin{theorem}\label{theorem_id_N}Suppose Assumptions \ref{nondiff} and \ref{identical_degree} (b) hold for both $\mathcal{N}_i$ and $\tilde{\mathcal{N}}_i$. Under Assumptions \ref{unconf1}-\ref{ass_eigen},  we have
$\sup\limits_{z\in\Omega_Z}\left\|\mathbf{F}^a_{\mathcal{T}^*|\mathcal{T}, Z=z}-\mathbf{F}_{\mathcal{T}^*|\mathcal{T}, Z=z}\right\|=O(\triangle_K).$
\end{theorem}

\subsection{Identification of the CASF}\label{section:id_CASF}

In this section, we proceed to the identification of the CASF $m^*$ using $m$ and the weight.
Let us first introduce some notations. Denote $\mathcal{G}_i=(S_i,\mathcal{T}_i)\in\Omega_{\mathcal{G}}$ and $\mathcal{G}^*_i=(S^*_i,\mathcal{T}^*_i)\in\Omega_{\mathcal{G}^*}$. Focusing on the truncated degree support, we rank the possible values of $\mathcal{G}_i$ and $\mathcal{G}^*_i$ according to the lexicographical order of integers. For $\mathfrak{g}_k=(s,n)$, define $\{\mathfrak{g}_0,\mathfrak{g}_1,...,\mathfrak{g}_{K_\mathcal{G}}\}$ as follows:
\begin{equation}\label{lexi_order}
\begin{aligned}
&\mathfrak{g}_0=(0,0),\\
&\mathfrak{g}_1=(0,1),~\mathfrak{g}_2=(1,1),\\
&\mathfrak{g}_3=(0,2),~\mathfrak{g}_4=(1,2),~\mathfrak{g}_5=(2,2),\\
&\vdots\\
&\mathfrak{g}_{\frac{K(K+1)}{2}}=(0,K),\cdots,\mathfrak{g}_{K_\mathcal{G}}=(K,K).
\end{aligned}
\end{equation}
For $d\in\{0,1\}$ and $z\in\Omega_Z$, define two $(K_\mathcal{G}+1)\times 1$ vectors $\mathbf{M}_{d,z}$ and $\mathbf{M}_{d,z}^*$ as
\begin{equation}\begin{aligned}
\mathbf{M}_{d,z}=&\left[m(d,\mathfrak{g}_0,z),~m(d,\mathfrak{g}_1,z),\cdots,m(d,\mathfrak{g}_{K_\mathcal{G}},z)\right]',\\
\label{t8}
\mathbf{M}_{d,z}^*=&\left[m^*(d,\mathfrak{g}_0,z),~m^*(d,\mathfrak{g}_1,z),\cdots,m^*(d,\mathfrak{g}_{K_\mathcal{G}},z)\right]'.
\end{aligned}
\end{equation}
We denote $\mathbf{F}_{\mathcal{G}^*|\mathcal{G},Z=z}$ as a $(K_\mathcal{G}+1)\times (K_\mathcal{G}+1)$ matrix that consists of the weights on the truncated degree support
\begin{align*}
\mathbf{F}_{\mathcal{G}^*|\mathcal{G},Z=z}=&\begin{bmatrix}
p_{\mathcal{G}^*_i|\mathcal{G}_i=\mathfrak{g}_0,Z_i=z}(\mathfrak{g}_0)&\cdots&p_{\mathcal{G}^*_i|\mathcal{G}_i=\mathfrak{g}_{K_{\mathcal{G}}},Z_i=z}(\mathfrak{g}_0)\\
\vdots&\ddots&\vdots\\
p_{\mathcal{G}^*_i|\mathcal{G}_i=\mathfrak{g}_0,Z_i=z}(\mathfrak{g}_{K_{\mathcal{G}}})&\cdots&p_{\mathcal{G}^*_i|\mathcal{G}_i=\mathfrak{g}_{K_{\mathcal{G}}},Z_i=z}(\mathfrak{g}_{K_{\mathcal{G}}})\end{bmatrix}.
\end{align*}
If $m^*$ is bounded, based on Theorem \ref{prop_conditional_mean} and the sparse network assumption, we can show that
\begin{align}\label{def_approx_weights0_no_inverse}
\mathbf{M}_{d,z}
=&~\mathbf{F}'_{\mathcal{G}^*|\mathcal{G},Z=z}\times \mathbf{M}_{d,z}^*+\mathbf{\Delta}_{K}.
\end{align}Given the lexicographical order of $\mathfrak{g}_{k}$ and the presence of missing links, it is easy to see that $p_{\mathcal{G}^*_i|\mathcal{G}_i=\mathfrak{g}_k,Z_i=z}(\mathfrak{g}_l)=0$ for any $k > l$, so that $\mathbf{F}_{\mathcal{G}^*|\mathcal{G},Z=z}$ is a lower triangular matrix. In addition, all its diagonal elements are strictly positive and bounded away from zero under Assumption \ref{ass_eigen_inv}. Therefore, $\mathbf{F}_{\mathcal{G}^*|\mathcal{G},Z=z}$ is invertible with a bounded inverse. Below, we introduce the notation for the approximation of $\mathbf{F}_{\mathcal{G}^*|\mathcal{G},Z=z}$. Recall that by Theorem \ref{decomposition_weight}, we have
\begin{align*}
p_{\mathcal{G}^*_i|\mathcal{G}_i,Z_i=z}=p_{ S^*_i|\mathcal{T}^*_i, \mathcal{T}_i,Z_i=z}\times p_{\mathcal{T}^*_i|\mathcal{T}_i,Z_i=z},
\end{align*}
where $p_{ S^*_i|\mathcal{T}^*_i, \mathcal{T}_i,Z_i=z}$ is a binomial distribution and is point identified as shown in Theorem \ref{lemma_iden}. 
Let $p^a_{\mathcal{T}^*_i|\mathcal{T}_i,Z_i=z}$ stand for the element of the matrix $\mathbf{F}^a_{\mathcal{T}^*|\mathcal{T},Z=z}$ in Theorem \ref{theorem_id_N}. Define
\begin{align*}
p^a_{\mathcal{G}^*_i|\mathcal{G}_i,Z_i=z}=p_{ S^*_i|\mathcal{T}^*_i, \mathcal{T}_i,Z_i=z}\times p^a_{\mathcal{T}^*_i|\mathcal{T}_i,Z_i=z}.
\end{align*}
Stacking all $p^a_{\mathcal{G}^*_i|\mathcal{G}_i,Z_i=z}$ into a matrix, we can get an approximation for $\mathbf{F}_{\mathcal{G}^*|\mathcal{G},Z=z}$:
\begin{align}\label{def_approx_weights}
\mathbf{F}^a_{\mathcal{G}^*|\mathcal{G},Z=z}=\left\{p^a_{\mathcal{G}^*_i|\mathcal{G}_i=\mathfrak{g}_{k},Z_i=z}(\mathfrak{g}_l)\right\}_{k,l=0,...,K_{\mathcal{G}}}.
\end{align}
According to the matrix perturbation theory, for a sufficiently small $\triangle_K$, given that $\mathbf{F}_{\mathcal{G}^*|\mathcal{G},Z=z}$ is invertible with a bounded inverse, its approximation $\mathbf{F}^a_{\mathcal{G}^*|\mathcal{G},Z=z}$ is also invertible with a bounded inverse.
Then, based on \eqref{def_approx_weights0_no_inverse}, the theorem below shows that pre-multiplying $\mathbf{M}_{d,z}$  by the inverse of $\mathbf{F}^{a'}_{\mathcal{G}^*|\mathcal{G},Z=z}$ gives us
an approximation of $\mathbf{M}_{d,z}^*$.

\begin{theorem}\label{theorem_id_CASF}Suppose $m^*(\cdot)$ is uniformly bounded in the support of its argument. If assumptions in Theorem \ref{theorem_id_N} hold, then we have
$$
\sup_{z\in\Omega_Z,d=0,1}\Big\|\left(\mathbf{F}^{a'}_{\mathcal{G}^*|\mathcal{G},Z=z}\right)^{-1}\times \mathbf{M}_{d,z}-\mathbf{M}_{d,z}^*\Big\|=O(\triangle_K).
$$
\end{theorem}
As a result of Theorem \ref{theorem_id_CASF}, if the true degree is uniformly bounded by $K$ for all units, we have $\triangle_K=0$ and the CASF is point identified.
\begin{corollary}[\textbf{Point Identification of CASF with Bounded Degree}]\label{corollary_id_CASF}Under assumptions in Theorem \ref{theorem_id_CASF}, if there exists some integer $0<K<\infty$ such that $\triangle_K=0$ holds, then
$\mathbf{M}_{d,z}^*$ is point identified for all $z\in\Omega_Z$ and $d=0,1$.
\end{corollary}

Some final remarks are in order. First, if no $K$ exists such that $\triangle_K=0$, it is still possible to approximate $m^*$ in the truncated degree support as long as $\triangle_K$ is sufficiently small. In this case, our method can yield less biased estimates of the effects of interest compared to the naive estimation method that ignores missing links.
Second, researchers should carefully choose the value of $K$ to ensure that (i) $\triangle_K$ is sufficiently small, and (ii) the rank condition of $\mathbf{F}_{\mathcal{T},\tilde{\mathcal{T}}|Z=z}$, $\mathbf{F}_{\mathcal{T}|\mathcal{T}^*,Z=z}$ and $\mathbf{F}_{\tilde{\mathcal{T}}|\mathcal{T}^*,Z=z}$ assumed in Assumption \ref{ass_eigen_inv} is satisfied. In finite samples, the approximation accuracy for $m^*$ depends on both the value of $\triangle_K$ and the credibility of the rank condition. Clearly, a larger $K$ results in a smaller $\triangle_K$ but a less credible rank condition. In practice, researchers can implement our proposed method using different values of $K$.

\section{Estimation and Inference}\label{section_estimation_multiple}
In this section, we propose a two-step semiparametric estimation method for the CASF $m^*$ and present its asymptotic properties.
All technical details are left to Appendix \ref{app_section_estimation}.

\subsection{Two-Step Estimation Method}\label{section_estimation}
Our estimation procedure consists of two steps. First, we obtain estimators for the weights by estimating the dependence of NBRVs and the missing probabilities in the degree using a kernel estimation approach. Second, we parameterize $m^*$ and apply a least-square estimation by plugging in the first-step estimators of the weights. Imposing parametric structures on $m^*$ still allows for flexible heterogeneity in the treatment and spillover effects, which can be captured through interactions between variables and their polynomials

\vspace{0.2cm}

\noindent\textbf{Step 1. Kernel Estimation for the Weights}. \hspace{0.1cm} In the first step, we present a kernel estimation method to obtain estimators of the weights. Recall that the estimation for the NBRV dependence requires estimating $p_{D_i}$, and the estimation for the missing probabilities in the degree using the matrix diagonalization method requires estimating $\mathbf{E}_{\mathcal{T},\tilde{\mathcal{T}},Y|Z=z}$, $\mathbf{F}_{\mathcal{T},\tilde{\mathcal{T}}|Z=z}$ and $\mathbf{F}_{\mathcal{T}|Z=z}$ in \eqref{decomp_ynn1}, \eqref{decomp_nn1}, and \eqref{decomp_n1}.\footnote{The estimated eigenvectors in the matrix diagonalization method may contain complex values. As mentioned in \citet{hu2008identification}, since all the latent probabilities and densities are real and positive, we can take the real part of the estimated eigenvectors, and the probability of getting a complex value goes to zero as the sample size increases.}
Let $\gamma=\gamma(z)=[\gamma_1(z),\gamma_2(z),\gamma_3(z),\gamma_4(z)]'$ be a vector that consists of all the elements required to estimate the weights, where
\begin{equation}\label{def_kernel_estimand}
\begin{aligned}
&\gamma_1(z)=\Big[E[\varpi(Y_i)|\mathcal{T}_i=0,\mathcal{\tilde{T}}_i=0,Z_i=z],...,E[\varpi(Y_i)|\mathcal{T}_i=k,\mathcal{\tilde{T}}_i=l,Z_i=z],\\
&\hspace{8cm}...,E[\varpi(Y_i)|\mathcal{T}_i=K,\mathcal{\tilde{T}}_i=K,Z_i=z]\Big],\\
&\gamma_2(z)=\left[p_{\mathcal{\mathcal{T}}_i,\tilde{T}_i,Z_i}(0,0,z),...,p_{\mathcal{T}_i,\mathcal{\tilde{T}}_i,Z_i}(k,l,z),...,p_{\mathcal{T}_i,\mathcal{\tilde{T}}_i,Z_i}(K,K,z)\right],\\
&\gamma_3(z)=\left[p_{\mathcal{T}_i,Z_i}(0,z),...p_{\mathcal{T}_i,Z_i}(K,z)\right],\\
&\gamma_4(z)=\left[p_{Z_i}(z),p_{D_i}(1)\right].
\end{aligned}
\end{equation}
Let $W_i=(W^{c'}_i,W^{d'}_i)'\in\Omega_{W^c}\times\Omega_{W^d}$ and $W_i\subseteq(\mathcal{T}_i,\tilde{\mathcal{T}}_i,D_i,Z_i)$ denote a vector of observable variables that will be used to compute the kernel estimator for $\gamma$, where $W^c_i:=(W^c_{i,1},...,W^c_{i,Q})'$ is a $Q\times1$ vector of continuous variables in $W_i$ (if any), and $W^d_i$ contains discrete variables in $W_i$. 
For example, for $\gamma_2(z)$, $W_i=(\mathcal{\mathcal{T}}_i,\tilde{T}_i,Z_i)$. For $\forall w=(w^{c'},w^{d'})'\in\Omega_{W^c,W^d}$, we define the estimator for $p_{W_i}(w)$ and $E[\varpi(Y_i)|W_i=w]$ as below:
\begin{equation}\label{t_ker}
\begin{aligned}
&\hat{p}_{W_i}(w)
:=\frac{1}{N}\sum_{i=1}^N\hat{p}^{ker}_i(w),\;\text{ and }\;
\hat{E}[\varpi(Y_i)|W_i=w]:=\frac{\frac{1}{N}\sum_{i=1}^{N}\varpi(Y_i)\hat{p}^{ker}_i(w)}{\frac{1}{N}\sum_{i=1}^N\hat{p}^{ker}_i(w)},
\end{aligned}
\end{equation}
where $\hat{p}^{ker}_i(w):=\frac{1}{h^{Q}}\prod_{q=1}^{Q}\kappa\left(\frac{W^c_{i,q}-w^c_q}{h}\right)1\left[W^d_i=w^d\right]$, with a bandwidth $h>0$ and a univariate kernel function $\kappa(\cdot)$.\footnote{
A data-driven method for bandwidth selection is possible but it is not the focus of this paper.} Then, the estimator for $\gamma$, denoted by $\hat{\gamma}_N$, can be obtained by replacing the conditional means and probabilities in \eqref{def_kernel_estimand} with their sample analogs in \eqref{t_ker}. Given $\hat{\gamma}_N$, we can estimate $\mathbf{F}^a_{\mathcal{G}^*|\mathcal{G},Z=z}$ defined in \eqref{def_approx_weights}. Let $vec(B)$ be the vectorization of a matrix $B$. Define
\begin{align}\label{expression_phi}
\phi=&vec\left(\mathbf{F}_{\mathcal{G}^*|\mathcal{G},Z=z}\right),~~
\phi^a=vec\left(\mathbf{F}^a_{\mathcal{G}^*|\mathcal{G},Z=z}\right),~~\text{ and }~~\hat{\phi}_N=vec\left(\hat{\mathbf{F}}^a_{\mathcal{G}^*|\mathcal{G},Z=z}\right),
\end{align}
where each element in $\hat{\mathbf{F}}^a_{\mathcal{G}^*|\mathcal{G},Z=z}$ is obtained by $$\hat{p}^a_{\mathcal{G}^*_i|\mathcal{G}_i,Z_i=z}=\hat{p}_{ S^*_i|\mathcal{T}^*_i, \mathcal{T}_i,Z_i=z}\times  \hat{p}^a_{\mathcal{T}^*_i|\mathcal{T}_i,Z_i=z},$$
with $\hat{p}_{ S^*_i|\mathcal{T}^*_i, \mathcal{T}_i,Z_i=z}=\hat{p}_{\Delta S_i|\Delta\mathcal{T}_i,Z_i=z}\sim Binomial(\hat{p}_{D_i}(1),\Delta\mathcal{T}_i)$ and $\hat{p}^a_{\mathcal{T}^*_i|\mathcal{T}_i,Z_i=z}$ being estimated by applying the matrix diagonalization method using $\hat{\gamma}_N$, as outlined in Section \ref{section_matrix_diagonalization}. We suppress the argument $z$ in $\gamma$, $\phi$, $\hat{\gamma}_N$, and $\hat{\phi}_N$ for notation simplicity, unless otherwise mentioned. Let $\gamma^0$ and $\phi^0$ be the true value for $\gamma$ and $\phi$.

\vspace{0.5cm}

\noindent\textbf{Step 2. Semiparametric Estimation for the CASF}. \hspace{0.1cm} For any parameter $\beta$, denote $d_\beta=dim(\beta)$. In the second step, we parameterize $m^*(\cdot)=m^*(\cdot;\theta)$ to be a known function up to an unknown parameter $\theta\in\Theta\subseteq\mathbb{R}^{d_\theta}$, and we estimate $\theta$ using a plug-in estimator. Recall $\mathcal{G}_i=(S_i,\mathcal{T}_i)\in\Omega_{\mathcal{G}}$ and $\mathcal{G}^*_i=(S^*_i,\mathcal{T}^*_i)\in\Omega_{\mathcal{G}^*}$. Denote $X^*_i=(D_i,\mathcal{G}^*_i,Z_i)'$ and $X_i=(D_i,\mathcal{G}_i,Z_i)'$. Let $x_j=(d,\mathfrak{g}_j,z)$ with $j=0,...,K_{\mathcal{G}}$. Given the parameterization of $m^*$, let us rewrite $\mathbf{M}_{d,z}^*$ in \eqref{t8} as
\begin{align*}
\mathbf{M}_{d,z}^*(\theta)=&\big[m^*(x_0;\theta),~m^*(x_1;\theta),\cdots,m^*(x_{K_\mathcal{G}};\theta)\big]'.
\end{align*}
Proposition \ref{lemma_unconf1} implies the existence of some true value $\theta^0\in \Theta$ such that
\begin{align}\label{estimation_mom0}
E\left[Y_i-m^*(X_i^*;\theta^0)\big|X^*_i\right]=0.
\end{align}
Without loss of generality, we assume that $\theta^0$ is the unique solution to Equation \eqref{estimation_mom0}.\footnote{It rules out the existence of two different pairs, $(\dot{m}^{*},\dot{\theta}^{0})$ and $(\ddot{m}^{*},\ddot{\theta}^0)$, that satisfy $m^*(\cdot)=\dot{m}^{*}(\cdot;\dot{\theta}^{0})=\ddot{m}^{*}(\cdot;\ddot{\theta}^{0})$. It ensures the parametric point-identification of $\theta^0$ if the true NBRVs are observable.} However, this moment condition cannot be used to estimate $\theta^0$ because $X^*_i$ contains unobservable NBRVs. 
Fortunately,  we have $m(X_i;\theta^0)=E\left[Y_i|X_i\right]$ by definition of $m$ in \eqref{identfiable_m}, where, for $x=(d,\mathfrak{g},z)\in \Omega_X$ and the true weights $p^0_{\mathcal{G}^*_i|\mathcal{G}_i=\mathfrak{g},Z_i=z}(\cdot)$, we have that $\theta$ enters $m$ through the mixture model:
\begin{align}\label{m_on_whole_supp}
m(x;\theta)
=\sum_{\mathfrak{g}^*\in\Omega_{\mathcal{G}^*}}m^*(d,\mathfrak{g}^*,z;\theta)\times p^0_{\mathcal{G}^*_i|\mathcal{G}_i=\mathfrak{g},Z_i=z}(\mathfrak{g}^*).
\end{align}
Thus, we can obtain a moment condition based on observed NBRVs
$$E\left[Y_i-m(X_i;\theta^0)\big|X_i\right]=0.$$
Let $\tau_i=1[X_i\in\mathbf{X}]$ be a fixed trimming indicator for the observed NBRVs in the truncated support, where $\mathbf{X}=\{x=(d,\mathfrak{g},z)\in\Omega_X:~\mathfrak{g}\in\{\mathfrak{g}_0,...,\mathfrak{g}_{K_{\mathcal{G}}}\}\}$. Then, we can obtain an unconditional moment condition
\begin{align*}
E\left[\tau_i\left(Y_i-m(X_i;\theta^0)\right)\right]=0.
\end{align*}
Based on the unconditional moment equation, we define the population objective function $\mathcal{L}^0_{\mathbb{P}}(\theta)$ and its minimizer $\theta^0$ as follows:
\begin{align}\label{estimation_uncond_obj}
\theta^0=\arg\min_{\theta\in\Theta}\mathcal{L}^0_{\mathbb{P}}(\theta),~~ \text{ and }~~ \mathcal{L}^0_{\mathbb{P}}(\theta)=E\left\{\tau_i\left[Y_i-m\left(X_i;\theta\right)\right]^2\right\},
\end{align}
where, under the full rank condition on the Hessian matrix of $\mathcal{L}^0_{\mathbb{P}}(\theta)$ introduced later, $\theta^0$ is the unique solution to the minimization problem. Due to the possibility of unbounded degree, in the estimation for $\theta^0$,  we need to further replace  $m(x;\theta)$ in \eqref{estimation_uncond_obj}, defined in \eqref{m_on_whole_supp} on the whole support $\Omega_{\mathcal{G}^*}$, with its approximation $m^a(x;\theta,\phi)$ defined on the truncated support $\{\mathfrak{g}_0,...,\mathfrak{g}_{K_\mathcal{G}}\}$, where
\begin{align*}
&m^a(x;\theta,\phi)
=\sum_{\mathfrak{g}^*\in \{\mathfrak{g}_0,...,\mathfrak{g}_{K_\mathcal{G}}\}}m^*(d,\mathfrak{g}^*,z;\theta)\times p_{\mathcal{G}^*_i|\mathcal{G}_i=\mathfrak{g},Z_i=z}(\mathfrak{g}^*).\nonumber
\end{align*}
Given $m^a(x;\theta,\phi)$, let us define $\theta^a\in\Theta$ to be the pseudo-true value that solves the moment equation $E\left\{\tau_i\left[Y_i-m^a\left(X_i;\theta,\phi^a\right)\right]\right\}=0$ with $\phi^a=vec\left(\mathbf{F}^a_{\mathcal{G}^*|\mathcal{G},Z=z}\right)$. Then, we have
\begin{align*}
\theta^a=&\arg\min_{\theta\in\Theta}\mathcal{L}^a_{\mathbb{P}}(\theta,\phi^a),~~ \text{ and }~~\mathcal{L}^a_{\mathbb{P}}(\theta,\phi^a)=E\left\{\tau_i\left[Y_i-m^a\left(X_i;\theta,\phi^a\right)\right]^2\right\}, \end{align*}
where, under the full rank condition of the Hessian matrix of $\mathcal{L}^a_{\mathbb{P}}(\theta,\phi^a)$ introduced later, $\theta^a$ is the unique solution to the minimization problem. Note that the pseudo-true parameter $\theta^a$ may differ
from the true value $\theta^0$ for any nonzero $\triangle_K$, and its value may depend on $K$. We omit this dependence for notation simplicity.
Then, the plug-in estimator $\hat{\theta}_N$ is defined as the minimizer of the sample objective function:
\begin{align*}
\hat{\theta}_N=&\arg\min_{\theta\in\Theta}\mathcal{L}^a_N(\theta,\hat{\phi}_N),~~ \text{ and }~~\mathcal{L}^a_N(\theta,\hat{\phi}_N)=\frac{1}{N}\sum_{i=1}^N\tau_i\left[Y_i-m^a\left(X_i;\theta,\hat{\phi}_N\right)\right]^2,
\end{align*}
where we replace $\phi^a$ with its estimator $\hat{\phi}_N=vec\left(\hat{\mathbf{F}}^a_{\mathcal{G}^*|\mathcal{G},Z=z}\right)$. 

\subsection{Asymptotic Properties}\label{section_asym_prop}
In this section, we discuss the asymptotic properties of our proposed estimator.
Additional regularity assumptions are provided in Appendix \ref{app_section_estimation_details}. 
For any vector $a\in\mathbb{R}^p$, let $\|a\|$ be its Euclidean norm and $\|a\|_\infty=\max_{1\leq k\leq p}|a_k|$. For a matrix $\mathbf{B}$, let $\|\mathbf{B}\|=[tr(\mathbf{B}'\mathbf{B})]^{1/2}$ be the entry-wise matrix norm.
Let us partition the index set of all sampled units into $q_N$ mutually exclusive clusters. Denote these clusters as $\mathbb{S}_{1},...,\mathbb{S}_{q_N}$, where $\cup_{1\leq k\leq q_N}\mathbb{S}_k=\{1,...,N\}$.
Let $\tilde{W}_i=(Y_i,X_i')'\in\Omega_{\tilde{W}}$ represent a vector of observable variables, including the outcome. For any generic measurable function $b:\Omega_{\tilde{W}}\mapsto\mathbb{R}^{d_b}$, denote the within-cluster correlation as
\begin{align}\label{cov_dep}
\Sigma^b_N=\sum_{k=1}^{q_N}\sum_{i,j\in\mathbb{S}_k}Cov\left(b(\tilde{W}_i),b(\tilde{W}_j)\right).
\end{align}
To control data correlation under network interactions, we introduce a modified \emph{dependency neighborhood} assumption from \citet{chandrasekhar2021network} in Assumption \ref{ass_dependency_neighbor} below, which restricts the data dependence to be local.\footnote{The literature on inference using network data is growing rapidly \citep[see, e.g.,][]{hudgens2008toward,leung2020dependence}. Our assumption on data dependence is similar to those that limit data dependence to be weak or local \citep[e.g.][]{kojevnikov2019limit,leung2019causal}.}  
Let $\bar{r}_N=\max_{1\leq k\leq q_N}|\mathbb{S}_{k}|$ be the size of the largest cluster.
\begin{assumption}\label{ass_dependency_neighbor}$\bar{r}_N=O(1)$ is a bounded value. For any measurable function $b:\Omega_{\tilde{W}}\mapsto\mathbb{R}^{d_b}$,
$$\Big\|\sum_{k=1}^{q_N}\sum_{i\in\mathbb{S}_k,j\not\in\mathbb{S}_k}Cov\left(b(\tilde{W}_i),b(\tilde{W}_j)\right)\Big\|=o(\|\Sigma^b_N\|).$$
\end{assumption}
This assumption requires that all clusters consist of a bounded number of units. Therefore, it implies that $q_N\rightarrow\infty$ as $N\rightarrow\infty$. In addition, it assumes that the correlation between units in different clusters is not necessarily zero but is weaker than the correlation between units within the same cluster. Units in different clusters can be correlated, for example, due to network interactions, spillovers of unobservables, or spatial and other forms of dependence. Note that the bounded cluster size does not require the maximal true degree to be bounded. Network connections across clusters are allowed as long as the network sparsity in Assumption \ref{ass_support} and the local dependence in Assumption \ref{ass_dependency_neighbor} hold.

Next, let us introduce a dependence coefficient analogous to the strong mixing coefficient of a stochastic process. Suppose the $q_N$ clusters can be ordered in a specific manner, based on, for example, social
or geographic proximity, so that units in clusters with distant indices are less likely to be correlated with each other. Without loss of generality, we assume that this order of clusters is given by $\mathbb{S}_{1},...,\mathbb{S}_{q_N}$. It is worth noting that we do not require this order to be known to researchers. Define the dependence coefficient as
\begin{align*}
\alpha_{k}=\sup_{\mathcal{A}\in\mathcal{F}_{1}^{k-2},\mathcal{B}\in\mathcal{F}_{k}^{k}}\left|Pr(\mathcal{A},\mathcal{B})-Pr(\mathcal{A})Pr(\mathcal{B})\right|,
\end{align*}
where $\mathcal{F}_{1}^{k-2}=\sigma(\{\tilde{W}_i,~i\in\bigcup_{1\leq l\leq k-2}\mathbb{S}_l\})$ and $\mathcal{F}_{k}^{k}=\sigma(\{\tilde{W}_i,~i\in\mathbb{S}_k\})$ for $k=1,2,...,q_N$ are two $\sigma$-fields. We use $\{\alpha_{k}\}_{k=1}^{q_N}$ to control the rate at which the dependence among clusters decays, which is crucial to establish the uniform convergence of the first-step estimators.
Below, we impose some restrictions on the dependence coefficient $\alpha_{k}$. With notation abuse, let $Q$ denote the number of continuous variables in $\tilde{W}_i=(Y_i,X_i')'$.
\begin{assumption}[\textbf{Local Dependence}]\label{ass_alpha}For $L_N=[N/(\ln(N)h^{Q+2})]^{Q/2}$, we have the following condition holds
$$\sum_{N=1}^{\infty}\Psi_N<\infty,\text{ where } \Psi_N=L_N\left(\frac{N}{\ln(N)}\right)^{1/5}\sum_{k=1}^{q_N}\alpha^{4/5}_{k}.$$
\end{assumption}
Assumption \ref{ass_alpha} assumes that the clusters are ordered so that units in $\bigcup_{1\leq l\leq k-2}\mathbb{S}_l$ and in $\mathbb{S}_k$ tend toward being independent as the sample size increases, allowing for nonzero but decreasing local dependence across clusters. This assumption is trivially satisfied if all clusters are mutually independent, indicating that units only form networks within each cluster. In such a case, every unit has a bounded degree. This assumption also holds when a limited number of units are correlated with others from nearby clusters, so that $\alpha_{k}$ goes to zero fast enough to ensure that $\Psi_N$ is summable. In this case, the data correlation may be caused by the network interactions of a few `star' units.

Recall that $\gamma=(\gamma_1(z), \gamma_2(z), \gamma_3(z), \gamma_4(z))'$. Let $\gamma_{jl}(z)$ with $j=1,2,3,4$ be the $l$-th element in $\gamma_j(z)$. Because $\gamma$ is a function of $z$, we define $\|\gamma-\gamma^0\|_\infty=\max_{j,l}\sup_{z\in\Omega_Z}|\gamma_{jl}(z)-\gamma^0_{jl}(z)|$. The same norm is defined for $\phi$.
\begin{theorem}[\textbf{Uniform Convergence}]\label{theorem_ker}
Suppose assumptions in Theorem \ref{theorem_id_CASF}, Assumptions \ref{ass_dependency_neighbor}, \ref{ass_alpha}, and Assumption \ref{ass_ker} in Appendix \ref{app_section_estimation_details} hold. If $h\rightarrow0$, $Nh^Q\rightarrow\infty$, and $\ln(N)/(Nh^Q)\rightarrow0$, then
\begin{itemize}
  \item[(a)] $\left\|\hat{\gamma}_N-\gamma^0\right\|_\infty=O_p(\left[\ln(N)/(Nh^Q)\right]^{1/2}+h^2)$;
  \item[(b)] for $\epsilon\rightarrow0$ as $N\rightarrow\infty$, we have 
      \begin{align*}
      \sup\limits_{\|\hat{\gamma}_N-\gamma^0\|_\infty\leq\epsilon}\|\hat{\phi}_N-\phi^a\|_\infty=&O_p(\left\|\hat{\gamma}_N-\gamma^0\right\|_\infty),\\
      \sup\limits_{\|\hat{\gamma}_N-\gamma^0\|_\infty\leq\epsilon}\|\hat{\phi}_N-\phi^0\|_\infty=&O_p(\left\|\hat{\gamma}_N-\gamma^0\right\|_\infty+\triangle_K).
      \end{align*}
\end{itemize}
\end{theorem}

Theorem \ref{theorem_ker} shows that the convergence of the estimated weights $\hat{\phi}_N$ to the true value $\phi^0$ is driven by two factors: the convergence rate of the kernel estimator $\hat{\gamma}_N$, and the approximation error of the matrix diagonalization method measured by $\triangle_K$.


\begin{theorem}[\textbf{Consistency}]\label{theorem_consistency}Suppose  $\frac{\partial^2\mathcal{L}^0_{\mathbb{P}}(\theta)}{\partial\theta\partial\theta'}$ and $\frac{\partial^2\mathcal{L}^a_{\mathbb{P}}(\theta,\phi^a)}{\partial\theta\partial\theta'}$ are both full rank for all $\theta\in\Theta$. Under assumptions in Theorem \ref{theorem_ker} and Assumption \ref{ass_consistency_m} in Appendix \ref{app_section_estimation_details}, we have 
\begin{align*}\|\hat{\theta}_N-\theta^a\|= o_p(1),~~
\|\theta^a-\theta^0\|= O(\triangle_K),\text{ and }\|\hat{\theta}_N-\theta^0\|= O_p(\triangle_K).\end{align*}
\end{theorem}
Theorem \ref{theorem_consistency} demonstrates that $\hat{\theta}_N$ is a consistent estimator for the pseudo-true parameter $\theta^a$, while its asymptotic bias with respect to the true value $\theta^0$ is governed by $$B_K:=\theta^0-\theta^a,~\text{ where }~\|B_K\|=O(\triangle_K).$$ 
We know that if the true degree is bounded by $K$ for all units, then $\|B_K\|=0$, and $\theta^0$ is consistently estimated. Next, we consider the asymptotic normality. Let $g(\tilde{W}_i;\theta,\phi)=\tau_i[Y_i-m^a(X_i;\theta,\phi)]\frac{\partial m^a(X_i;\theta,\phi)}{\partial\theta}$ be the score function of the sample objective function $\mathcal{L}^a_N(\theta,\phi)$. We can show that
$$\frac{1}{\sqrt{N}}\sum_{i=1}^{N}g(\tilde{W}_i;\theta^a,\hat{\phi}_N)= \frac{1}{\sqrt{N}}\sum_{i=1}^{N}\left[g(\tilde{W}_i;\theta^a,\phi^a)+\delta(\tilde{W}_i;\theta^a,\phi^a)\right]+o_p(1),$$
where $\delta(\tilde{W}_i;\theta^a,\phi^a)$ is the correction term to adjust the  estimation error of the first-step kernel estimator.
Denote a $d_\theta\times1$ vector $\tilde{g}_i=g(\tilde{W}_i;\theta^a,\phi^a)+\delta(\tilde{W}_i;\theta^a,\phi^a)$, where $\tilde{g}_i=(\tilde{g}_{i,1},...,\tilde{g}_{i,d_\theta})'$. Following  \eqref{cov_dep}, define the within-cluster correlation for $\tilde{g}_i$  as 
$$\Sigma^{\tilde{g}}_N=\sum_{k=1}^{q_N}\sum_{i,j\in\mathbb{S}_k}Cov(\tilde{g}_i,\tilde{g}_j).$$

\begin{theorem}[\textbf{Asymptotic Normality}]\label{theorem_normality}Suppose assumptions in Theorem \ref{theorem_consistency} and Assumptions \ref{ass_normality} to \ref{ass_stein} in Appendix \ref{app_section_estimation_details} hold. If further assume $\ln(N)/(N^{1/2}h^Q)\rightarrow0$ and $Nh^4\rightarrow0$ as $N\rightarrow\infty$,  then
$$\sqrt{N}(\hat{\theta}_N-\theta^0+B_K)\overset{d}{\rightarrow}\mathbb{N}(0,H^{-1}\Omega H^{-1}),$$
where $H=E\left[\frac{\partial g(\tilde{W}_i;\theta^a,\phi^a)}{\partial\theta'}\right]$, $\Omega=\lim\limits_{N\rightarrow\infty}\Sigma^{\tilde{g}}_N/N$, and $\mathbb{N}$ stands for a normal distribution.
\end{theorem}
Theorem \ref{theorem_normality} implies that the bias term $B_K$ is negligible in the inference for $\theta^0$ if $\sqrt{N}\|B_K\|=O(\sqrt{N}\triangle_K)$ is sufficiently small.
Theoretically, a consistent estimator of $H^{-1}\Omega H^{-1}$ can be obtained by replacing $H$ and $\Omega$ with their sample analogs. 
However, it is difficult to implement because the explicit formula for $\delta(\cdot;\theta,\phi)$, although exists, is complex. In practice, we suggest using the method of numerical differentiation of the
influence function, as discussed in \citet{newey1994kernel}, to estimate the correction term without specifying its analytic expression.\footnote{See \citet{hong2015extremum} for discussions on the choice of numerical step size for the differentiation. 
}

\section{Numerical and Empirical Results}\label{section_numerical_empirical}
\subsection{Monte Carlo Simulation}\label{section_simulation}
In this section, we illustrate the finite-sample behavior of our method via Monte Carlo simulations. We consider two data generating processes (DGPs) for the outcome variable:
\begin{align}\label{DGP_sim}
\text{(\textbf{Model 1}) }~~Y_i= & \theta_1+\theta_2D_i+\theta_3\frac{S^*_i}{\mathcal{T}^*_i}+\theta_4D_i*\frac{S^{*}_i}{\mathcal{T}^*_i}+\theta_5\mathcal{T}^*_i+\varepsilon_i,\\
\label{DGP_sim_2}\text{(\textbf{Model 2}) }~~Y_i= & \theta_1+\theta_2D_i+\theta_3S^*_i+\theta_4S^{*2}_i+\theta_5\mathcal{T}^*_i+\varepsilon_i,
\end{align}
where, in both models, $D_i\overset{i.i.d.}{\sim}Bernoulli(0.3)$ is a randomized treatment and $\varepsilon_i\overset{i.i.d.}{\sim}\mathbb{N}(0,0.25)$ is an idiosyncratic error. We set $\theta=(\theta_1,\theta_2,\theta_3,\theta_4,\theta_5)'=(1,1,0.5,-0.1,1)'$.  We generate data using sample size $N\in\{1000,2000,5000\}$ with replications $M=1000$.

\vspace{0.2cm}

\noindent
\textbf{True Network Data}. \hspace{0.1cm} We simulate the true network data using the model below:
\begin{align}\label{MC_network_DGP}
  A^*_{ij}= & 1[\beta_1+\beta_2(\alpha_i+\alpha_j)-d(\rho_i,\rho_j)+\zeta_{ij}>0],\text{ for all }i,j=1,...,N,
\end{align}
\sloppy
where $\alpha_i\overset{i.i.d.}{\sim}Bernoulli(0.5)$ stands for unobserved degree heterogeneity, $\rho_i=(\rho_{i1},\rho_{i2})\overset{i.i.d.}{\sim} Uniform([0,1]^2)$ is the random location of unit $i$, and $\zeta_{ij}=\zeta_{ji}\overset{i.i.d.\text{ across }(i,j)}{\sim}\mathbb{N}(0,1)$ is a random shock. $\alpha_i$, $\rho_i$, and $\zeta_{ij}$ are mutually independent. Let $d(\rho_i,\rho_j)$ be the distance between two units, where
$d(\rho_i,\rho_j)=0$ if $\|\rho_i-\rho_j\|\leq r$ and $d(\rho_i,\rho_j)=\infty$ otherwise.
We set $r=(r_{deg}/N)^{1/2}$ with $r_{deg}=3$ and $(\beta_1,\beta_2)=(-0.25,0.25)$. In this DGP design, the mean degree value is approximately 4 to 5, and the maximum degree value is about 14 to 15, as $N$ increases from $1000$ to $5000$.\footnote{We also conduct Monte Carlo simulations using an extension of model \eqref{MC_network_DGP} to allow for strategic network interactions, where one unit's link formation depends on the links of others. Due to space limitation, we present these simulation results in Appendix \ref{appendix_simulation_add}. }

\vspace{0.2cm}

\noindent\textbf{Observed Network Data}. \hspace{0.1cm} We generate the observed network data using
$
A_{ij}=U_{ij}A^*_{ij}
$.
We consider three different DGPs for $U_{ij}$. The first DGP considers the case of missing completely at random:
$$\text{(DGP1. random missing) }\quad U_{ij}\sim Bernoulli\left(p_U\right)\text{ are i.i.d. across all $(i,j)$}.$$
In the second DGP, the missing rate is heterogeneous and varies with the true degree value:
\begin{align*}
\text{(DGP2. heterogeneous missing) }\quad U_{ij}\sim Bernoulli &\left(p_{U,i}\right)\text{ are independent across all $(i,j)$,}
\end{align*}
where $p_{U,i}=p_U+0.02*\log(\mathcal{T}^*_i+1)$. In the third DGP, missing indicators of the same unit $i$, $\mathbf{U}_i=(U_{i1},...,U_{iN})$, are correlated, while $\mathbf{U}_i$ and $\mathbf{U}_j$ are independent for all $i\neq j$:
\begin{align*}
\text{(DGP3. dependent missing) }\quad &U_{ij}=1[\Phi(U^*_{ij})>p_U]\text{ and }U^*_{ij}=\sqrt{1-\rho^2}*e_{ij}+\rho*\omega_i,
\end{align*}
where $\Phi(\cdot)$ denotes the standard normal CDF, $e_{ij}\overset{i.i.d.\text{ across }(i,j)}{\sim}\mathbb{N}(0,1)$, $\omega_i\overset{i.i.d.}{\sim} \mathbb{N}(0,1)$, and $e_{ij}$ and $\omega_i$ are mutually independent. In DGP3, the value of $\rho$ determines the correlation among $(U_{i1},...,U_{iN})$, and we set $\rho=0.1$. In DGP1 to DGP3, we consider $p_U\in\{0.1,0.2,0.3\}$.

\vspace{0.2cm}

\noindent\textbf{Estimation Methods}. \hspace{0.1cm} We compare three estimation methods, including
(1) Infeasible OLS -- the infeasible OLS regression that uses the true NBRVs;
(2) Naive OLS -- the feasible OLS regression that uses the observable NBRVs and ignores the missing links;
(3) SPE -- the semiparametric estimation based on the matrix diagonalization method that uses the observed incoming and outgoing degrees. For SPE method, we set $\varpi(y)=y$ and choose the value of $K$ so that the smallest singular value of $\mathbf{F}_{\mathcal{T},\tilde{\mathcal{T}}}$ is larger than 0.001. We order the estimated eigenvectors of $\mathbf{F}_{\mathcal{T}|\mathcal{T}^*}$ according to Assumption \ref{ass_eigen} (c) with $E[\varpi(Y_i)|\mathcal{T}^*_i=n^*]$ strictly increasing in $n^*$.\footnote{For Model 1 in \eqref{DGP_sim}, Assumption \ref{ass_eigen} (c) holds because $\theta_5>0$ (see Example \ref{example_unique_eigen}). For Model 2 in \eqref{DGP_sim_2}, because $\mathcal{S}^*_i|\mathcal{T}^*_i=n^*$ follows a $Binomial(p_{D_i}(1),n^*)$ distribution, we can get $E[\varpi(Y_i)|\mathcal{T}^*_i=n^*]=\theta_1+\theta_2p_{D_i}(1)+\theta_3n^*p_{D_i}(1)+\theta_4[n^*p_{D_i}(1)p_{D_i}(0)+(n^*p_{D_i}(1))^2]+\theta_5n^*$. Given the value of $\theta$, Assumption \ref{ass_eigen} (c) holds as long as $n^*$ is smaller than 62, which is a much larger value than the maximum degree in our DGP.}

\vspace{0.2cm}

\noindent\textbf{Estimation Results}. \hspace{0.1cm} Our target parameter is the spillover effect $\eta^0:=\eta_S(d,s,0,n,z)= m^*(d,s,n,z)-m^*(d,0,n,z)$ at $d=0$, $s=1$, $s'=0$, and $n=4$ (no covariate $z$).\footnote{Simulation results for spillover effect defined at different values of $s$, $s'$, and $n$ display similar patterns. Therefore, we do not report them due to space limitation.}
Tables \ref{tab:MC_model1} and \ref{tab:MC_model2} present the estimation results for Model 1 and Model 2, respectively. Denote $\hat{\eta}_j$ as the estimate of $\eta^0$ in the $j$-th simulation. We report the magnitude of the bias ($bias=\frac{1}{M}\sum_{j=1}^{M}(\hat{\eta}_j-\eta^0)$), the relative bias ($|bias/\eta^0|*100\%$), the standard deviation (sd), and the root mean squared error (rmse).

Some interesting patterns emerge. First, under DGP1 (random missing), the Infeasible OLS estimation is the least biased, with the smallest standard deviation and a relative bias of less than 0.8\% in Model 1 and less than 0.3\% in Model 2. Second, the Naive OLS produces the most biased estimates. When the sample size is relatively large ($N=5000$), its relative bias ranges from 11.4\% to 33\% in Model 1 and from 6.7\% to 21\% in Model 2. Third, the bias of SPE is substantially lower than that of the Naive OLS in both models. Specifically, when the sample size is relatively large ($N=5000$), the relative bias of SPE ranges from 0.5\% to 16.4\% in Model 1 and from 2.6\% to 4.3\% in Model 2. Nonetheless, the standard deviation of the SPE method exceeds that of the Naive OLS in both models, suggesting a  bias-variance trade-off between these two feasible estimation methods. Lastly, similar patterns are observed  in the estimation results under heterogeneous missing (DGP2) and dependent missing (DGP3). We can see that deviations from random missing, as considered in DGP2 and DGP3, lead to a slight increase in both the estimation bias and standard deviation for most cases in the Naive OLS and SPE methods. 


\begin{table}[htbp]
\begin{center}
\caption{(\textbf{Model 1}) Estimated Spillover Effects in Monte Carlo Simulations}\label{tab:MC_model1}
\subcaption{DGP1 (random missing)}
\begin{adjustbox}{max width=\textwidth}
\begin{tabular}{cc|cccc|cccc|cccc}
\hline
\hline
\noalign{\smallskip}
\multicolumn{2}{c}{}& \multicolumn{4}{c}{{\small Infeasible OLS}} &\multicolumn{4}{c}{{\small Naïve OLS}} & \multicolumn{4}{c}{{\small SPE}} \\
\noalign{\smallskip}
$p_{U}$& N&bias&\%&sd&rmse&bias&\%&sd&rmse&bias&\%&sd&rmse\\
\hline
\noalign{\smallskip}
\multirow{3}[0]{*}{0.1}
& 1k & 0.000 & 0.2\% & 0.022 & 0.022 & -0.013 & 10.3\% & 0.035 & 0.037 & 0.000 & 0.1\% & 0.062 & 0.062 \\
& 2k    & -0.001 & 0.4\% & 0.015 & 0.015 & -0.015 & 11.8\% & 0.024 & 0.028 & 0.001 & 1.0\% & 0.047 & 0.047 \\
& 5k    & 0.000 & 0.2\% & 0.010 & 0.010 & -0.014 & 11.4\% & 0.015 & 0.021 & 0.001 & 0.5\% & 0.023 & 0.023 \\
\noalign{\smallskip}
\hline\noalign{\smallskip}
\multirow{3}[0]{*}{0.2}
& 1k & 0.000 & 0.1\% & 0.021 & 0.021 & -0.024 & 19.3\% & 0.041 & 0.048 & -0.010 & 8.4\% & 0.101 & 0.102 \\
& 2k    & 0.000 & 0.1\% & 0.015 & 0.015 & -0.026 & 21.0\% & 0.030 & 0.040 & -0.010 & 8.4\% & 0.064 & 0.065 \\
& 5k   & 0.001 & 0.4\% & 0.009 & 0.009 & -0.027 & 21.6\% & 0.019 & 0.033 & -0.006 & 4.8\% & 0.047 & 0.047 \\
\noalign{\smallskip}
\hline\noalign{\smallskip}
\multirow{3}[0]{*}{0.3}
& 1k & 0.000 & 0.0\% & 0.021 & 0.021 & -0.035 & 28.2\% & 0.046 & 0.058 & -0.016 & 12.6\% & 0.156 & 0.157 \\
& 2k    & 0.000 & 0.1\% & 0.016 & 0.016 & -0.039 & 31.4\% & 0.036 & 0.053 & -0.025 & 19.7\% & 0.098 & 0.101 \\
& 5k    & -0.001 & 0.8\% & 0.009 & 0.009 & -0.041 & 33.0\% & 0.021 & 0.046 & -0.021 & 16.4\% & 0.065 & 0.068 \\
\noalign{\smallskip}
\hline
\hline
    \end{tabular}%
\end{adjustbox}
\smallskip
\subcaption{DGP2 (heterogeneous missing)}
\begin{adjustbox}{max width=\textwidth}
\begin{tabular}{cc|cccc|cccc|cccc}
\hline
\hline
\noalign{\smallskip}
\multicolumn{2}{c}{}& \multicolumn{4}{c}{{\small Infeasible OLS}} &\multicolumn{4}{c}{{\small Naïve OLS}} & \multicolumn{4}{c}{{\small SPE}} \\
\noalign{\smallskip}
$p_{U}$& N&bias&\%&sd&rmse&bias&\%&sd&rmse&bias&\%&sd&rmse\\
\hline
\noalign{\smallskip}
\multirow{3}[0]{*}{0.1} & 1k &0.000 & 0.0\% & 0.021 & 0.021 & -0.017 & 13.7\% & 0.036 & 0.040 & -0.007 & 5.5\% & 0.087 & 0.088 \\
& 2k & 0.000 & 0.1\% & 0.015 & 0.015 & -0.018 & 14.5\% & 0.027 & 0.032 & -0.005 & 3.8\% & 0.051 & 0.052 \\
& 5k & 0.000 & 0.0\% & 0.009 & 0.009 & -0.017 & 13.8\% & 0.017 & 0.024 & 0.000 & 0.4\% & 0.031 & 0.031 \\
\noalign{\smallskip}
\hline\noalign{\smallskip}
\multirow{3}[0]{*}{0.2}& 1k &0.000 & 0.3\% & 0.021 & 0.021 & -0.030 & 23.9\% & 0.045 & 0.054 & -0.011 & 9.1\% & 0.118 & 0.119 \\
& 2k &-0.001 & 0.9\% & 0.015 & 0.015 & -0.030 & 23.6\% & 0.031 & 0.043 & -0.015 & 12.0\% & 0.079 & 0.081 \\
& 5k &0.001 & 0.5\% & 0.009 & 0.009 & -0.031 & 24.5\% & 0.020 & 0.037 & -0.011 & 8.5\% & 0.049 & 0.050 \\
\noalign{\smallskip}
\hline\noalign{\smallskip}
\multirow{3}[0]{*}{0.3}& 1k &-0.001 & 0.5\% & 0.021 & 0.021 & -0.041 & 32.6\% & 0.050 & 0.064 & -0.021 & 16.5\% & 0.159 & 0.160 \\
& 2k &0.000 & 0.2\% & 0.015 & 0.015 & -0.041 & 32.7\% & 0.037 & 0.055 & -0.021 & 16.5\% & 0.117 & 0.119 \\
& 5k &0.000 & 0.4\% & 0.009 & 0.009 & -0.043 & 34.4\% & 0.024 & 0.049 & -0.019 & 15.5\% & 0.070 & 0.073 \\
\noalign{\smallskip}
\hline
\hline
    \end{tabular}%
\end{adjustbox}
\smallskip

\subcaption{DGP3 (dependent missing)}
\begin{adjustbox}{max width=\textwidth}
\begin{tabular}{cc|cccc|cccc|cccc}
\hline
\hline
\noalign{\smallskip}
\multicolumn{2}{c}{}& \multicolumn{4}{c}{{\small Infeasible OLS}} &\multicolumn{4}{c}{{\small Naïve OLS}} & \multicolumn{4}{c}{{\small SPE}} \\
\noalign{\smallskip}
$p_{U}$& N&bias&\%&sd&rmse&bias&\%&sd&rmse&bias&\%&sd&rmse\\
\hline
\noalign{\smallskip}
\multirow{3}[0]{*}{0.1}& 1k & 0.000 & 0.3\% & 0.021 & 0.021 & -0.012 & 9.5\% & 0.034 & 0.036 & 0.002 & 1.9\% & 0.065 & 0.065 \\
& 2k &0.000 & 0.1\% & 0.014 & 0.014 & -0.013 & 10.7\% & 0.025 & 0.028 & 0.001 & 0.6\% & 0.039 & 0.039 \\
& 5k &0.000 & 0.2\% & 0.010 & 0.010 & -0.015 & 11.6\% & 0.016 & 0.021 & 0.000 & 0.1\% & 0.025 & 0.025 \\
\noalign{\smallskip}
\hline\noalign{\smallskip}
\multirow{3}[0]{*}{0.2}& 1k &0.001 & 0.9\% & 0.020 & 0.021 & -0.025 & 19.8\% & 0.043 & 0.050 & -0.010 & 8.0\% & 0.119 & 0.119 \\
& 2k &0.000 & 0.4\% & 0.015 & 0.015 & -0.026 & 20.8\% & 0.031 & 0.040 & -0.010 & 7.7\% & 0.071 & 0.071 \\
& 5k &-0.001 & 0.5\% & 0.009 & 0.009 & -0.029 & 23.0\% & 0.019 & 0.035 & -0.011 & 8.6\% & 0.044 & 0.046 \\
\noalign{\smallskip}
\hline\noalign{\smallskip}
\multirow{3}[0]{*}{0.3}& 1k &0.000 & 0.3\% & 0.021 & 0.021 & -0.037 & 29.6\% & 0.048 & 0.061 & -0.026 & 21.2\% & 0.143 & 0.145 \\
& 2k &0.000 & 0.3\% & 0.015 & 0.015 & -0.040 & 32.1\% & 0.033 & 0.052 & -0.023 & 18.3\% & 0.102 & 0.105 \\
& 5k &0.000 & 0.0\% & 0.010 & 0.010 & -0.040 & 32.2\% & 0.023 & 0.046 & -0.023 & 18.7\% & 0.085 & 0.088 \\
\noalign{\smallskip}
\hline
\hline
    \end{tabular}%
\end{adjustbox}
\end{center}
\footnotesize Note: Panels (a) to (c) display the estimation results under Model 1, when the missing indicator $U_{ij}$ is generated according to DGP1 to DGP3 considered in Section \ref{section_simulation}, respectively. The target spillover effect is $\eta^0=m^*(d,s,n,z)-m^*(d,0,n,z)$ at $d=0$, $s=1$, $s'=0$, $n=4,$ and no covariate $z$. True value of $\eta^0$ is 0.125 in Model 1. The column ``\%'' lists the relative bias to $\eta^0$, and the column ``bias'' lists the magnitude of the bias with respect to $\eta^0$.
\end{table}

\begin{table}[htbp]
\begin{center}
\caption{(\textbf{Model 2}) Estimated Spillover Effects in Monte Carlo Simulations}\label{tab:MC_model2}
\subcaption{DGP1 (random missing)}
\begin{adjustbox}{max width=\textwidth}
\begin{tabular}{cc|cccc|cccc|cccc}
\hline
\hline
\noalign{\smallskip}
\multicolumn{2}{c}{}& \multicolumn{4}{c}{{\small Infeasible OLS}} &\multicolumn{4}{c}{{\small Naïve OLS}} & \multicolumn{4}{c}{{\small SPE}}  \\
\noalign{\smallskip}
$p_{U}$& N&bias&\%&sd&rmse&bias&\%&sd&rmse&bias&\%&sd&rmse\\

\hline
\noalign{\smallskip}
\multirow{3}[0]{*}{0.1}
& 1k & 0.000 & 0.1\% & 0.028 & 0.028 & -0.028 & 6.9\% & 0.051 & 0.058 & -0.010 & 2.4\% & 0.108 & 0.109 \\
& 2k &-0.001 & 0.3\% & 0.019 & 0.019 & -0.026 & 6.6\% & 0.035 & 0.044 & -0.009 & 2.2\% & 0.078 & 0.079 \\
& 5k &0.000 & 0.1\% & 0.012 & 0.012 & -0.027 & 6.7\% & 0.023 & 0.035 & -0.011 & 2.6\% & 0.051 & 0.052 \\
        \noalign{\smallskip}
\hline\noalign{\smallskip}
\multirow{3}[0]{*}{0.2}
& 1k &0.000 & 0.1\% & 0.028 & 0.028 & -0.049 & 12.2\% & 0.067 & 0.083 & 0.026 & 6.5\% & 0.199 & 0.200 \\
& 2k &0.000 & 0.1\% & 0.019 & 0.019 & -0.054 & 13.4\% & 0.049 & 0.073 & 0.020 & 5.0\% & 0.152 & 0.153 \\
& 5k &0.000 & 0.1\% & 0.012 & 0.012 & -0.056 & 14.0\% & 0.032 & 0.064 & 0.010 & 2.5\% & 0.131 & 0.132 \\
        \noalign{\smallskip}
\hline\noalign{\smallskip}
\multirow{3}[0]{*}{0.3}
& 1k &0.000 & 0.0\% & 0.027 & 0.027 & -0.077 & 19.3\% & 0.085 & 0.115 & -0.012 & 2.9\% & 0.219 & 0.219 \\
& 2k &0.001 & 0.1\% & 0.019 & 0.019 & -0.077 & 19.3\% & 0.062 & 0.099 & 0.012 & 3.0\% & 0.167 & 0.167 \\
& 5k &0.000 & 0.0\% & 0.012 & 0.012 & -0.084 & 21.0\% & 0.037 & 0.092 & 0.017 & 4.3\% & 0.158 & 0.159 \\
\noalign{\smallskip}
\hline
\hline
    \end{tabular}%
\end{adjustbox}
\smallskip
\subcaption{DGP2 (heterogeneous missing)}
\begin{adjustbox}{max width=\textwidth}
\begin{tabular}{cc|cccc|cccc|cccc}
\hline
\hline
\noalign{\smallskip}
\multicolumn{2}{c}{}& \multicolumn{4}{c}{{\small Infeasible OLS}} &\multicolumn{4}{c}{{\small Naïve OLS}} & \multicolumn{4}{c}{{\small SPE}}  \\
\noalign{\smallskip}
$p_{U}$& N&bias&\%&sd&rmse&bias&\%&sd&rmse&bias&\%&sd&rmse\\

\hline
\noalign{\smallskip}
\multirow{3}[0]{*}{0.1}
& 1k & 0.001 & 0.2\% & 0.028 & 0.028 & -0.033 & 8.3\% & 0.060 & 0.068 & 0.006 & 1.4\% & 0.142 & 0.142 \\
& 2k &0.000 & 0.0\% & 0.019 & 0.019 & -0.034 & 8.5\% & 0.042 & 0.054 & -0.007 & 1.7\% & 0.108 & 0.108 \\
& 5k &0.000 & 0.0\% & 0.012 & 0.012 & -0.035 & 8.7\% & 0.025 & 0.043 & -0.018 & 4.5\% & 0.073 & 0.075 \\
 \noalign{\smallskip}
\hline\noalign{\smallskip}
\multirow{3}[0]{*}{0.2}
& 1k &0.001 & 0.3\% & 0.027 & 0.027 & -0.062 & 15.5\% & 0.077 & 0.099 & 0.020 & 5.0\% & 0.207 & 0.208 \\
& 2k &0.000 & 0.1\% & 0.019 & 0.019 & -0.066 & 16.4\% & 0.051 & 0.083 & 0.025 & 6.3\% & 0.152 & 0.154 \\
& 5k &0.000 & 0.1\% & 0.012 & 0.012 & -0.064 & 16.0\% & 0.033 & 0.072 & 0.020 & 4.9\% & 0.143 & 0.144 \\
 \noalign{\smallskip}
\hline\noalign{\smallskip}
\multirow{3}[0]{*}{0.3}
& 1k &0.000 & 0.1\% & 0.027 & 0.027 & -0.092 & 22.9\% & 0.094 & 0.131 & -0.013 & 3.2\% & 0.220 & 0.221 \\
& 2k &0.000 & 0.1\% & 0.020 & 0.020 & -0.094 & 23.6\% & 0.066 & 0.115 & -0.013 & 3.2\% & 0.181 & 0.181 \\
& 5k &0.000 & 0.1\% & 0.012 & 0.012 & -0.097 & 24.2\% & 0.041 & 0.105 & 0.004 & 1.0\% & 0.133 & 0.133 \\

\noalign{\smallskip}
\hline
\hline
    \end{tabular}%
\end{adjustbox}
\smallskip
\subcaption{DGP3 (dependent missing)}
\begin{adjustbox}{max width=\textwidth}
\begin{tabular}{cc|cccc|cccc|cccc}
\hline
\hline
\noalign{\smallskip}
\multicolumn{2}{c}{}& \multicolumn{4}{c}{{\small Infeasible OLS}} &\multicolumn{4}{c}{{\small Naïve OLS}} & \multicolumn{4}{c}{{\small SPE}}  \\
\noalign{\smallskip}
$p_{U}$& N&bias&\%&sd&rmse&bias&\%&sd&rmse&bias&\%&sd&rmse\\

\hline
\noalign{\smallskip}
\multirow{3}[0]{*}{0.1}&
 1k & -0.001 & 0.1\% & 0.027 & 0.027 & -0.029 & 7.2\% & 0.052 & 0.059 & -0.009 & 2.2\% & 0.110 & 0.110 \\
& 2k &0.000 & 0.0\% & 0.020 & 0.020 & -0.028 & 7.0\% & 0.037 & 0.046 & -0.011 & 2.7\% & 0.072 & 0.073 \\
& 5k &0.000 & 0.0\% & 0.012 & 0.012 & -0.027 & 6.8\% & 0.023 & 0.035 & -0.011 & 2.7\% & 0.053 & 0.054 \\
\noalign{\smallskip}
\hline\noalign{\smallskip}
\multirow{3}[0]{*}{0.1}&
 1k &0.002 & 0.4\% & 0.028 & 0.028 & -0.050 & 12.6\% & 0.069 & 0.086 & 0.026 & 6.6\% & 0.193 & 0.195 \\
& 2k &-0.001 & 0.1\% & 0.020 & 0.020 & -0.053 & 13.2\% & 0.052 & 0.074 & 0.021 & 5.3\% & 0.151 & 0.153 \\
& 5k &0.000 & 0.0\% & 0.012 & 0.012 & -0.056 & 13.9\% & 0.031 & 0.064 & 0.009 & 2.2\% & 0.120 & 0.121 \\
\noalign{\smallskip}
\hline\noalign{\smallskip}
\multirow{3}[0]{*}{0.1}&
 1k &0.000 & 0.1\% & 0.027 & 0.027 & -0.082 & 20.6\% & 0.083 & 0.117 & -0.012 & 3.1\% & 0.221 & 0.221 \\
& 2k &-0.001 & 0.3\% & 0.021 & 0.021 & -0.086 & 21.4\% & 0.060 & 0.104 & 0.006 & 1.5\% & 0.165 & 0.165 \\
& 5k &0.000 & 0.1\% & 0.012 & 0.012 & -0.083 & 20.9\% & 0.038 & 0.092 & 0.016 & 4.0\% & 0.149 & 0.150 \\
        \noalign{\smallskip}
\hline
\hline
    \end{tabular}%
\end{adjustbox}
\end{center}
\footnotesize Note: Panels (a) to (c) display the estimation results under Model 2, when the missing indicator $U_{ij}$ is generated according to DGP1 to DGP3 considered in Section \ref{section_simulation}, respectively. The target spillover effect is $\eta^0=m^*(d,s,n,z)-m^*(d,0,n,z)$ at $d=0$, $s=1$, $s'=0$, $n=4,$ and no covariate $z$. True value of $\eta^0$ is 0.4 in Model 2. The column ``\%'' lists the relative bias to $\eta^0$, and the column ``bias'' lists the magnitude of the bias with respect to $\eta^0$.
\end{table}

\subsection{Home Computer Use and Self-empowered Learning}\label{section_application}
In this section, we present results of a naturalistic simulation study and an empirical application using the school friendship data from \citet{beuermann2015one}.\footnote{The dataset is available at https://www.aeaweb.org/articles?id=10.1257/app.20130267.} The authors conducted a randomized controlled trial in which ``One Laptop per Child'' (OLPC) laptops were provided to primary school students in Lima, Peru, and they examined the spillover effects of home computer use on children's self-empowered learning. In their study, fourteen treatment schools were randomly selected, and students in each class drew random lotteries to win the laptops. 
The baseline information, including self-reported network
data, was collected in April/May 2011, before the experiment was implemented. The lottery was drawn in June/July
2011, and 1048 laptops were distributed to lottery winners. The follow-up data
were collected in November 2011. 
We use the same sample of $N=2737$ students as \citet{beuermann2015one}, which consists of students in grades 3 to 6 whose parents approved their lottery participation.

When collecting network information, students were asked to list up to 12 friends,
including their closest friends, friends with whom they did homework together, and friends who visited
their homes. Missing links exist for at least two reasons. First, the reported friends were top-coded to 12. Second, when constructing the network, reported friends with names that did not match those of any other students (for example, due to incomplete or misspelled names) were omitted because their treatment status could not be retrieved.
Since the network data were collected before the experiment, it is reasonable to
assume that the true network data and missing links are independent of the lottery results. Additionally, \citet{beuermann2015one} found that once conditioning on the network degree, baseline characteristics were well balanced between students whose friends won laptops and those whose friends did not. Since observed and unobserved characteristics are often mutually dependent, this suggests that the true network and missing links are likely to be uncorrelated with unobserved characteristics.\footnote{The fact that the observed baseline characteristics are well-balanced among students with varying numbers of laptop-winner friends given the network degree implies that $S_i$ given $\mathcal{T}_i$ is uncorrelated with both observed and unobserved characteristics. This is consistent with our intermediate result proved in Lemma \ref{lemma_binomial} in Appendix \ref{appendix_lemma1} under Assumption \ref{unconf1} (unconfounded true network) and Assumption \ref{nondiff} (nondifferential missing links).} It also indicates that the network degree is an important control variable and should be included in the regressions.

We aim to study the impacts of wining the lottery on a standardized test score for laptop digital skills (OLPC Test Score) at the follow-up stage.\footnote{The effects are studied from an intent-to-treat perspective, because 93\% of
students who won the lottery received laptops.} 
We consider the model specification:
\begin{align}\label{model_laptop}
Y_{ik}=&
  \theta_1D_{ik}+\theta_2\frac{S_{ik}}{\mathcal{T}_{ik}}+\theta_3D_{ik}*\frac{S_{ik}}{\mathcal{T}_{ik}}+\theta_4\mathcal{T}_{ik}+\theta_5Cov_{ik}+\mu_{k}+\varepsilon_{ik},
\end{align}
where $Y_{ik}$ is the standardized test score for student $i$ in class $k$, $D_{ik}$ is the indicator of lottery winner, $Cov_{ik}$ includes a constant, age, sex, number of siblings, number of
younger siblings, whether the father lives with the child, whether the father works at home, and
whether the mother works at home, and $\mu_{k}$ represents the class fixed effect. We use the same definitions of NBRVs as in \citet{beuermann2015one}, where $S_{ik}$ is the number of incoming friends who are lottery winners, $\mathcal{T}_{ik}$ is the indegree.
We focus on two types of spillover effects: (i) $\theta_2$ -- the spillover of laptop winners on
nonwinners (WoNW), and (ii) $\theta_2+\theta_3$ -- the spillover of laptop winners on other winners (WoW). Because the test score is standardized, the spillover effects are interpreted as the impacts on the
standard deviations of OLPC test score. 

\subsubsection{Naturalistic Simulation}\label{section_naturalistic}
In the naturalistic simulation, we treat the reported network data as the true network, and the OLS coefficients for Model \eqref{model_laptop} obtained by using the reported network data as the true coefficients. We simulate
$M=1000$ experiments. In each experiment, $N$ error terms $\varepsilon_{ik}$ are randomly drawn from a normal distribution $\mathbb{N}(0,\sigma^2_\varepsilon)$, where $\sigma_\varepsilon$ is the standard deviation of the OLS residuals. Then, we generate $N$ outcome observations according to Model \eqref{model_laptop}, using the original treatment variable, NBRVs, and covariates for the $N$ students, along with the simulated error terms. Missing links are artificially introduced to the observed network $A_{ij}=U_{ij}A^*_{ij}$, where $U_{ij}$ is a binary indicator that is generated following the three DPGs considered in Section \ref{section_simulation}, with $p_U\in\{0.1,0.2,0.3\}$. 

\begin{table}[htbp]
\begin{center}
\caption{Estimated Spillover Effects using Naturalistic Simulation}\label{tab:NS_model2}
\subcaption{DGP1 (random missing)}
\begin{adjustbox}{max width=\textwidth}
\begin{tabular}{ccccccccccccccc}
\hline
\hline
\noalign{\smallskip}
&& \multicolumn{1}{c}{{\small True}} &\multicolumn{4}{c}{{\small Naive OLS}} && \multicolumn{4}{c}{{\small SPE}} \\
\noalign{\smallskip}
\cline{4-7}\cline{9-12}\noalign{\smallskip}
&$p_U$&&bias&\%&sd&rmse&&bias&\%&sd&rmse\\
\hline\noalign{\smallskip}
\multirow{3}[0]{*}{WoNW}&0.1&\multirow{3}[0]{*}{0.140}&-0.020& 14\%&0.089&0.091&&0.013&9.3\%&0.122&0.123 \\
&0.2&&-0.034&24\%&0.083&0.090 &&-0.001&0.9\%&0.120&0.120 \\
&0.3&&-0.046&33\%&0.078&0.091 &&-0.012&8.2\%&0.114&0.114 \\
\noalign{\smallskip}
\hline\noalign{\smallskip}
\multirow{3}[0]{*}{WoW}&0.1&\multirow{3}[0]{*}{0.307}&-0.041&13\%&0.156&0.161 &&0.012&4.0\%&0.224&0.225 \\
&0.2&&-0.075&24\%&0.152&0.170 &&-0.024&7.9\%&0.220&0.221 \\
&0.3&&-0.107&35\%&0.147&0.182 &&-0.042&14\%&0.215&0.220 \\
\noalign{\smallskip}
\hline
\hline
    \end{tabular}%
\end{adjustbox}
\smallskip
\subcaption{DGP2 (heterogeneous missing)}
\begin{adjustbox}{max width=\textwidth}
\begin{tabular}{ccccccccccccccc}
\hline
\hline
\noalign{\smallskip}
&& \multicolumn{1}{c}{{\small True}} &\multicolumn{4}{c}{{\small Naive OLS}} && \multicolumn{4}{c}{{\small SPE}} \\
\noalign{\smallskip}
\cline{4-7}\cline{9-12}\noalign{\smallskip}
&$p_U$&&bias&\%&sd&rmse&&bias&\%&sd&rmse\\
\hline\noalign{\smallskip}
\multirow{3}[0]{*}{WoNW}&0.1&\multirow{3}[0]{*}{0.140}&-0.022&16\% & 0.084&0.087&&0.007&5.3\%&0.116&0.117\\
&0.2&& -0.036&26\%&0.081&0.089&&-0.001&0.5\%&0.117&0.117\\
&0.3&& -0.052&37\%&0.079&0.095&&-0.018&12\%&0.111&0.112\\
\noalign{\smallskip}
\hline\noalign{\smallskip}
\multirow{3}[0]{*}{WoW}&0.1&\multirow{3}[0]{*}{0.307}&-0.054& 18\%& 0.159&0.168&&0.002&0.7\%&0.222&0.222\\
&0.2&& -0.086&28\%&0.153&0.176&&-0.019&6.2\%&0.219&0.220\\
&0.3&& -0.124&40\%&0.143&0.189&&-0.059&19\%&0.210&0.218\\
\noalign{\smallskip}
\hline
\hline
    \end{tabular}%
\end{adjustbox}
\smallskip
\subcaption{DGP3 (dependent missing)}
\begin{adjustbox}{max width=\textwidth}
\begin{tabular}{cccccccccccccc}
\hline
\hline
\noalign{\smallskip}
&& \multicolumn{1}{c}{{\small True}} &\multicolumn{4}{c}{{\small Naive OLS}} && \multicolumn{4}{c}{{\small SPE}}  \\
\noalign{\smallskip}
\cline{4-7}\cline{9-12}\noalign{\smallskip}
&$p_U$&&bias&\%&sd&rmse&&bias&\%&sd&rmse\\
\hline\noalign{\smallskip}
\multirow{3}[0]{*}{WoNW}&0.1&\multirow{3}[0]{*}{0.140}&-0.014&10\%&0.087&0.088&&0.011&7.7\%&0.121&0.122\\
&0.2&& -0.037&26\%&0.079&0.087&&0.007&5.0\%&0.122&0.122\\
&0.3&&-0.051&36\%&0.077&0.092&&-0.013&9.5\%&0.116&0.117 \\
\hline
\multirow{3}[0]{*}{WoW}&0.1&\multirow{3}[0]{*}{0.307}&-0.034&11\%&0.161&0.165&&0.008&2.5\%&0.222&0.222\\
&0.2&&-0.072&23\%& 0.155&0.171&&-0.010&3.4\%&0.222&0.222 \\
&0.3&& -0.099&32\%&0.151&0.181&&-0.038&12\%&0.218&0.221\\
\noalign{\smallskip}
\hline
\hline
    \end{tabular}%
    \end{adjustbox}
\end{center}
\footnotesize Note: Panels (a) to (c) display the estimation results of the naturalistic simulation under Model \eqref{model_laptop}, when the missing indicator $U_{ij}$ is generated according to DGP1 to DGP3 introduced in Section \ref{section_simulation}, respectively. We consider the spillover effects of laptop winners on nonwinners (WoNW, $\theta_2$) and laptop winners on other winners (WoW, $\theta_2+\theta_3$). The true values of WoW and WoNW are given in the column ``True''. The value of the bias (``bias''), the relative bias to the true value (``\%''), the standard deviation (``sd''), and root mean squared error (``rmse'') of the estimated spillover effects are presented.
\end{table}

We apply the two feasible estimation methods: the Naive OLS that ignores the missing links and the SPE that uses both incoming and outgoing links. Similar to the Monte Carlo simulation, for SPE method, we choose the value of $K$ so that the smallest singular value of $\mathbf{F}_{\mathcal{T},\tilde{\mathcal{T}}}$ is larger than 0.001, and we order the estimated eigenvectors of $\mathbf{F}_{\mathcal{T}|\mathcal{T}^*}$ according to Assumption \ref{ass_eigen} (c). Table \ref{tab:NS_model2} displays the values of the bias, relative bias (\%), standard deviation (sd), and root mean squared error (rmse) of the estimated spillover effects of interest. We can see that, on average, Naive OLS underestimates the true spillover effects of WoNW and of WoW by 13\% to 35\% under DGP1 (random missing), 16\% to 40\% under DGP2 (heterogeneous missing), and 10\% to 36\% under DGP3 (dependent missing). The SPE estimates are less biased than those of Naive OLS, with relative bias ranging from 0.9\% to 14\% under DGP1, 0.5\% to 19\% under DGP2, and 2.5\% to 12\% under DGP3. As the missing probability $p_U$ increases, both estimation methods tend to underestimate the true effect in a systematic manner, and the degree of underestimation also increases.

\subsubsection{Empirical Application}\label{section_empirical_homelaptop}
In this section, we analyze the consequences of missing network links in the study of the home computer use on self-empowered learning. We treat the reported network data as the observed network that contains missing links. We compare the estimation results of Naive OLS and SPE. 
For SPE, we choose different values of $K$ ($K\in\{9,10,11\}$) to define the truncated degree support, and we order the estimated eigenvectors of $\mathbf{F}_{\mathcal{T}|\mathcal{T}^*}$ according to Assumption \ref{ass_eigen} (c). We present the estimation results of Model \eqref{model_laptop} in Table \ref{tab:OLPC_model2}, using the sample of $N=$2737 students. The
standard errors of both Naive OLS and SPE methods are clustered at the school level.\footnote{The standard errors of SPE method are calculated using the numerical method proposed in \citet{newey1994kernel}. We follow \citet{hong2015extremum} to choose the step size as $log(N)log(log(N))/N$ in the numerical differentiation.}

\begin{table}[htbp]
  \centering
  \caption{Estimation Results for OLPC Test Score}\label{tab:OLPC_model2}%
\begin{adjustbox}{max width=\textwidth}
\begin{threeparttable}

    \begin{tabular}{lccccc}
  \hline\hline
&Naive OLS& \multicolumn{3}{c}{SPE} \\
\cmidrule{3-5}
&&$K=9$&$K=10$&$K=11$\\
          &(1)&(2)&(3)&(4)\\[0.05cm]
  \midrule
  &\multicolumn{4}{c}{Panel (a): Parameters} \\
  \cmidrule{2-5}
    $D_{ik}$ &0.786&0.791&0.789&0.771 \\
          &(0.069)***& (0.145)***&(0.154)***&(0.226)***\\[0.1cm]
    $\frac{S_{ik}}{\mathcal{T}_{ik}}$ &0.140&0.244&0.153&0.280\\
          & (0.098)&(0.103)** &(0.123)&(0.258)\\[0.1cm]
    $D_{ik}*\frac{S_{ik}}{\mathcal{T}_{ik}}$ &0.167&0.221&0.183&0.246\\
          &(0.228)&(0.589) &(0.559)&(1.025)\\[0.15cm]
    $\mathcal{T}_{ik}$ &0.051 &0.058&0.052&0.063\\
          & (0.009)***&(0.011)***&(0.011)***&(0.020)***\\[0.05cm]
    \midrule
  &\multicolumn{4}{c}{Panel (b): Spillovers}\\
  \cmidrule{2-5}
    WoNW &0.140 &0.244&0.153&0.280\\
          &(0.098)&(0.103)** &(0.123)&(0.258)\\
          &[-0.052, 0.333]&[0.042, 0.446]&[-0.087, 0.394]&[-0.225, 0.785]\\[0.1cm]
    WoW &0.307 &0.465&0.336&0.526\\
          &(0.203)&(0.633)&(0.562)&(1.153)\\
          &[-0.093, 0.708]&[-0.776, 1.706]&[-0.766, 1.437]&[-1.734, 2.786]\\[0.2cm]
    Sample size &2737&2737&2737&2737 \\[0.05cm]
    \hline\hline
    \end{tabular}%
\begin{tablenotes}[para,flushleft]
\smallskip
\footnotesize
\item[]Note: This table presents the estimation results of Model \eqref{model_laptop} under different values of $K$. We assume that missing indicators are independent of covariates for computational simplicity. Standard errors (s.e.) of both Naive OLS and SPE methods are clustered at the school level and reported in the parentheses. The numerical method of \citet{newey1994kernel} is applied to calculate the s.e. of SPE method, and the choice of step size in the numerical differentiation is based on \citet{hong2015extremum}. 95\% confidence intervals for spillover effects of laptop winners on nonwinners (WoNW, $\theta_2$) and laptop winners on other winners (WoW, $\theta_2+\theta_3$) are given in the brackets.
\end{tablenotes}
\end{threeparttable}
\end{adjustbox}
\end{table}%


We find that the estimated spillover effects of WoNW and WoW using Naive OLS are positive but insignificant. Using Naive OLS, we obtain a WoNW spillover of 0.140 standard deviations with a 95\% confidence interval (CI) of $[-0.052, 0.333]$, and a
WoW spillover of 0.307 standard deviations with a 95\% CI of $[-0.093, 0.708]$. Using SPE method, the point estimate of WoNW spillover effect ranges from 0.153 to 0.280 standard deviations, and the point estimate of WoW spillover effect ranges from 0.336 to 0.526 standard deviations, under different values of $K$. In addition, when $K=9$, the SPE estimate of the WoNW spillover effect is significant. The spillover effects for WoNW and WoW obtained by SPE are larger than those obtained by Naive OLS, indicating a possible underestimation of Naive OLS of the true spillover effects. 


\section{Conclusion}\label{section_conclusion}
This paper investigates spillovers of program benefits in the presence of missing network links. We propose to use two network measures, that can be constructed using the incoming and outgoing links, to point identify the treatment and spillover effects in the case of bounded degree. If the degree is unbounded, our method can be used as a bias-reduction approach. We provide a two-step semiparametric estimation method and study its asymptotic properties. Monte Carlo experiments and a naturalistic simulation confirm the effectiveness of our approach in reducing estimation bias compared to the naive estimation that neglects missing links. 

The literature on network effects often emphasizes the potential impacts of higher-order network connections with indirect friends (for example, friends of friends). However, incorporating these higher-order connections in the outcome model will introduce higher order missing links (for example, missing friends of friends), which complicates the dependence among observable and latent network-based random variables. Therefore, extending our method to address the missing link problem in models with higher-order network connections is nontrivial and left to future research.
Furthermore, although our method focuses on missing network links, the network misclassification can be two-sided. Nonetheless, our method can still be used as bias-reduction approach when false positive links exist with a small or declining probability.

{
\setstretch{1}
\bibliographystyle{plainnat}
\bibliography{reference_SP}
}

\newpage
\begin{center}
\LARGE{Online Appendix for \\
{\Large Spillovers of Program Benefits with Missing Network Links}}

\vspace{1cm}

\large Lina Zhang
\end{center}


\begin{appendices}

\setcounter{table}{0}
\renewcommand{\thetable}{\appendixname~\thesection.\arabic{table}}
\counterwithin{table}{section}

\setcounter{figure}{0}
\renewcommand{\thefigure}{\appendixname~\thesection.\arabic{figure}}
\counterwithin{figure}{section}

\setcounter{equation}{0}
\renewcommand{\theequation}{\appendixname~\thesection.\arabic{equation}}
\counterwithin{equation}{section}

In this online appendix, we first present some useful examples and lemmas in Section \ref{appendix_example} and \ref{appendix_lemma1}, respectively. The proofs for the results in the main text are given in Section \ref{appendix_section2} to Section \ref{app_section_estimation}. We discuss the extension of randomized treatment to unconfounded treatment in Section \ref{appendix_unconfounded}, and we present additional results from Monte Carlo simulations under strategic network interactions in Section \ref{appendix_simulation_add}. We use the following notations throughout the appendix.
Let $\mathbf{I}_K$ be a $K\times K$ identity matrix. Let $\|\cdot\|$ denote the Euclidean norm for a vector and the element-wise norm for a matrix. Denote $\|\mathbf{A}\|_\infty=\max_{1\leq i,j\leq p}|A_{ij}|$ for a $p\times p$ matrix $\mathbf{A}=\{A_{ij}\}$ and $\|\mathbf{A}\|_\infty=\sup_{x}\max_{1\leq i,j\leq p}|A_{ij}(x)|$ for any array function $\mathbf{A}=\{A_{ij}(x)\}$. We use $C$ to represent some positive constant, and its value may differ at different uses. $s.o.$ denotes a term of smaller order.

\section{Examples}\label{appendix_example}

\setstretch{1}
\numberwithin{equation}{section}
In this section, we present detailed proofs for statements made in examples in the main text. In addition, we provide an Example \ref{example_independency} for the observed network data and verify some of the assumptions in the main text under Example \ref{example_independency}.

\begin{lemma}\label{lemma_example_identical_degree_dist}Assumption \ref{identical_degree} (a) holds in Example \ref{example_identical_degree}.
\end{lemma}

\begin{proof}[Proof of Lemma \ref{lemma_example_identical_degree_dist}]Denote $p^*_{ij}(a,r)=Pr(A^*_{ij}=1|\alpha_i=a,\rho_i=r)$. Then, we have
\begin{align*}
p^*_{ij}(a,r)=&Pr(\zeta_{ij}<\beta_1+\beta_2(a+\alpha_j)-d(r,\rho_j)|\alpha_i=a,\rho_i=r)\\
=&Pr(\zeta_{ij}<\beta_1+\beta_2(a+\alpha_j)-d(r,\rho_j))\\
=&E[Pr(\zeta_{ij}<\beta_1+\beta_2(a+\alpha_j)-d(r,\rho_j)|\alpha_j,\rho_j)]\\
=&\int Pr(\zeta_{ij}<\beta_1+\beta_2(a+a')-d(r,r'))\times p_{\alpha_j,\rho_j}(a',r')da'dr',
\end{align*}
where the second equality is due to the independence between $(\zeta_{ij},\alpha_j,\rho_j)$ and $(\alpha_i,\rho_i)$,  the third equality is by the law of iterated expectation, and the last equality is due to the independence between $\zeta_{ij}$ and $(\alpha_j,\rho_j)$. Because $\zeta_{ij}$ is i.i.d. across $(i,j)$ and $(\alpha_j,\rho_j)$ is i.i.d. across $j$, we can see from the last line that $p^*_{ij}(a,r)$ is the same for all $(i,j)$. Thus, we can denote $p^*(a,r)=p^*_{ij}(a,r)$. Given $(\alpha_i,\rho_i)$, for any fixed $i$, $ A^*_{ij}$ becomes a function of $(\alpha_j,\rho_j,\zeta_{ij})$ and thus is independent across $j$. Then, the above results together imply that $\mathcal{T}^*_i=\sum_{j\in\mathcal{P}} A^*_{ij}$, conditional on $(\alpha_i,\rho_i)$, is a sum of $(|\mathcal{P}|-1)$ i.i.d. Bernoulli variables where $(|\mathcal{P}|-1)$ is due to no self links. Hence, $\mathcal{T}^*_i|\alpha_i,\rho_i$ follows a distribution of $Binomial(p^*(\alpha_i,\rho_i),|\mathcal{P}|-1)$. Again, because $(\alpha_i,\rho_i)$ is i.i.d. across $i$, the unconditional distribution of $\mathcal{T}^*_i$ is the same for all units.
\end{proof}
\bigskip

\begin{example}\label{example_independency}
Suppose the observable network $\mathbf{A}=\{A_{ij}\}$ is
generated as below:
\begin{equation*}\begin{aligned}
A_{ij}
=U_{ij}A^*_{ij},
\end{aligned}
\end{equation*}
where $U_{ij}$ is a function of $\omega_i$ and some idiosyncratic error $u_{ij}$, and $\omega_i$ introduces autocorrelation into the missing indicators of unit $i$. One example can be $U_{ij}=\omega_iu_{ij}$ with $\omega_i,u_{ij}\in\{0,1\}$. 
Denote $\mathbf{Z}=\{Z_i\}_{i\in\mathcal{P}}$ and $\mathbf{U}_i=\{U_{ij}\}_{j\in\mathcal{P}}$.
Suppose the following assumptions hold.
\begin{itemize}
\item[(a)]$(\omega_i,\mathbf{U}_{i})\perp (\omega_j,\mathbf{U}_{j})|\mathbf{A}^*,\mathbf{Z}$ for all $i\neq j$ and $(\omega_i,\mathbf{U}_{i})\perp(\mathbf{A}^*,\mathbf{Z})|\mathcal{T}^*_i,Z_i$;
\item[(b)]$U_{ij}$ given $(\mathcal{T}^*_i,Z_i,\omega_i)$ is i.i.d. across $j$ for all $i$;
\item[(c)]$Pr(\omega_i=\bar{\omega},\mathbf{U}_{i}=\bar{\mathbf{U}}|\mathcal{T}^*_i=n^*,Z_i=z)=Pr(\omega_j=\bar{\omega},\mathbf{U}_{j}=\bar{\mathbf{U}}|\mathcal{T}^*_j=n^*,Z_j=z)$ for any $i,j$, $\bar{\omega}\in\Omega_{\omega}$, $\bar{\mathbf{U}}\in\Omega_{\mathbf{U}}$, $n^*\in\Omega_{\mathcal{T}^*}$, and $z\in\Omega_Z$.
\end{itemize}
\end{example}

\begin{lemma}\label{lemma_degree_dis}
Assumption \ref{identical_degree} (b) holds in Example \ref{example_independency}. If we set $\omega_i=1$ with probability one for all $i$ and assume $U_{ij}$ is independent of $(\mathcal{T}^*_i,Z_i)$ for all $i$, then the missing is completely at random.
\end{lemma}

\begin{proof}[Proof of Lemma \ref{lemma_degree_dis}]By definition, $\mathcal{T}_i=\sum_{j\in\mathcal{N}^*_i}U_{ij}$.
From the law of iterated expectation, we have
\begin{align*}
  p_{\mathcal{T}_i|\mathcal{T}^*_i=n^*,Z_i}(n)= & p_{\sum_{j\in\mathcal{N}^*_i}U_{ij}|\mathcal{T}^*_i=n^*,Z_i}(n)= E[p_{\sum_{j\in\mathcal{N}^*_i}U_{ij}|\mathcal{T}^*_i=n^*,Z_i,\omega_i}(n)|\mathcal{T}^*_i=n^*,Z_i].
\end{align*}
By condition (a) in Example \ref{example_independency}, $(\omega_i,\mathbf{U}_{i})\perp(\mathbf{A}^*,\mathbf{Z})|\mathcal{T}^*_i,Z_i$ implies that $(\omega_i,\mathbf{U}_{i})\perp \mathcal{N}^*_i|\mathcal{T}^*_i,Z_i$. It further implies that $\mathbf{U}_i$ given $(\mathcal{T}^*_i,Z_i,\omega_i)$ is independent of $\mathcal{N}^*_i$. In addition, by condition (b), $U_{ij}$ given $(\mathcal{T}^*_i,Z_i,\omega_i)$ is i.i.d. across $j$. Therefore, $p_{\sum_{j\in\mathcal{N}^*_i}U_{ij}|\mathcal{T}^*_i=n^*,Z_i,\omega_i}(n)$ follows a binomial distribution that is identical for all $i$. Since by condition (c), $\omega_i$ given $(\mathcal{T}^*_i,Z_i)$ is identically distributed across $i$, we can conclude that the identical distribution of $\mathcal{T}_i$
given $(\mathcal{T}^*_i,Z_i)$ also holds.
\end{proof}

\bigskip

\begin{lemma}[Exclusion Restriction]\label{lemma_example_noniid}In Example \ref{example_independency}, let $\tilde{\mathcal{T}}_i=|\tilde{\mathcal{N}}_i|$ with $\tilde{\mathcal{N}}_i=\{j\in\mathcal{P}:~A_{ji}=1\}$, then we have $\mathcal{T}_i\perp\tilde{\mathcal{T}}_i|\mathcal{T}^*_i,Z_i$.
\end{lemma}

\begin{proof}[Proof of Lemma \ref{lemma_example_noniid}]
Recall that $\mathcal{T}_i=\sum_{j\in\mathcal{P}}U_{ij}A^*_{ij}$ and $\tilde{\mathcal{T}}_i=\sum_{j\neq i}U_{ji}A^*_{ji}$. By the law of iterated expectation,
\begin{align}\label{proof_lemma_example_noniid1}
p_{\mathcal{T}_i|\mathcal{T}^*_i,\tilde{\mathcal{T}}_i,Z_i}(n)=&E[p_{\mathcal{T}_i|\mathcal{T}^*_i,\tilde{\mathcal{T}}_i,Z_i,\omega_i,\mathbf{A}^*,\mathbf{Z}}(n)|\mathcal{T}^*_i,\tilde{\mathcal{T}}_i,Z_i]\nonumber\\
=&E[p_{\mathcal{T}_i|\mathcal{T}^*_i,Z_i,\omega_i,\mathbf{A}^*,\mathbf{Z}}(n)|\mathcal{T}^*_i,\tilde{\mathcal{T}}_i,Z_i]\nonumber\\
=&E[p_{\mathcal{T}_i|\mathcal{T}^*_i,Z_i,\omega_i}(n)|\mathcal{T}^*_i,\tilde{\mathcal{T}}_i,Z_i]\nonumber\\
=&\sum_{w=0,1}p_{\mathcal{T}_i|\mathcal{T}^*_i,Z_i,\omega_i=w}(n)\times p_{\omega_i|\mathcal{T}^*_i,\tilde{\mathcal{T}}_i,Z_i}(w),
\end{align}
where the second line is due that $(\omega_i,\mathbf{U}_{i})\perp(\omega_j,\mathbf{U}_{j})|\mathbf{A}^*,\mathbf{Z}$ from condition (a) in Example \ref{example_independency}, the third line is because of $(\omega_i,\mathbf{U}_{i})\perp(\mathbf{A}^*,\mathbf{Z})|\mathcal{T}^*_i,Z_i$ and the fact that $U_{ij}$ given $(\mathcal{T}^*_i,Z_i,\omega_i)$ is i.i.d. across $j$ in condition (b) so that $\mathcal{T}_i|\mathcal{T}^*_i,Z_i,\omega_i$ follows a binomial distribution which is independent of $(\mathbf{A}^*,\mathbf{Z})$. Moreover, we know that
\begin{align}\label{proof_lemma_example_noniid2}
p_{\omega_i|\mathcal{T}^*_i,\tilde{\mathcal{T}}_i,Z_i}(w)=E[p_{\omega_i|\mathcal{T}^*_i,\tilde{\mathcal{T}}_i,Z_i,\mathbf{A}^*,\mathbf{Z}}(w)|\mathcal{T}^*_i,\tilde{\mathcal{T}}_i,Z_i]
=&E[p_{\omega_i|\mathcal{T}^*_i,Z_i,\mathbf{A}^*,\mathbf{Z}}(w)|\mathcal{T}^*_i,\tilde{\mathcal{T}}_i,Z_i]\nonumber\\
=&E[p_{\omega_i|\mathcal{T}^*_i,Z_i}(w)|\mathcal{T}^*_i,\tilde{\mathcal{T}}_i,Z_i]\nonumber\\
=&p_{\omega_i|\mathcal{T}^*_i,Z_i}(w),
\end{align}
where the second and third equalities are by condition (a) in Example \ref{example_independency}. Then, substituting \eqref{proof_lemma_example_noniid2} into \eqref{proof_lemma_example_noniid1}, we can obtain that $p_{\mathcal{T}_i|\mathcal{T}^*_i,\tilde{\mathcal{T}}_i,Z_i}(n)=p_{\mathcal{T}_i|\mathcal{T}^*_i,Z_i}(n)$ which fulfills the proof.
\end{proof}

\bigskip

\begin{lemma}[Order of Eigenvectors]\label{lemma_example_eigen}Under Example \ref{example_independency}, suppose $A_{ij}=\omega_iu_{ij}A^*_{ij}+(1-\omega_i)A^*_{ij}$, where $U_{ij}=\omega_iu_{ij}+(1-\omega_i)$ and $\omega_i=1$ indicates a unit $i$ has a nonzero number of missing links, and $\omega_i=0$ if unit $i$ has no missing links at all. In addition, if unit $i$ has a nonzero number of missing links, then a missing link to unit $j$ is indicated by a binary variable $u_{ij}$. If either one of the conditions below holds, then Assumption \ref{ass_eigen} (a) is satisfied.
\begin{itemize}
  \item[(a)]$0\leq p_{\omega_i|\mathcal{T}^*_i=n^*,Z_i}(1)<0.5$ for all $n^*=0,...,K$;
  \item[(b)]$p_{\omega_i|\mathcal{T}^*_i=n^*,Z_i}(1)\geq 0.5$ and $\binom{n^*}{n_{m}}
< \left(\frac{p_{\omega_i|\mathcal{T}^*_i=n^*,Z_i}(0)}{p_{\omega_i|\mathcal{T}^*_i=n^*,Z_i}(1)}+p^{n^*}_{u_{ij}|\mathcal{T}^*_i=n^*,Z_i,\omega_i=1}(1)\right)/p^{n_{m}}_{u_{ij}|\mathcal{T}^*_i=n^*,Z_i,\omega_i=1}(1)$ for all $n^*=0,...,K$,
where $n_{m}=\lfloor(n^*+1)p_{u_{ij}|\mathcal{T}^*_i=n^*,Z_i,\omega_i=1}(1)\rceil$ with $\lfloor\cdot\rceil$ the floor function.
\end{itemize}
In addition, Assumption \ref{ass_eigen} (b) holds if $0<p_{u_{ij}|\mathcal{T}^*_i=n^*,Z_i,\omega_i=1}(1)<1$.
\end{lemma}
\begin{proof}[Proof of Lemma \ref{lemma_example_eigen}]
First, we show that Assumption \ref{ass_eigen} (a) holds if condition (a) in this lemma is true.
Recall $\mathcal{T}_i=\sum_{j\in \mathcal{N}^*_i}U_{ij}=\omega_i\sum_{j\in \mathcal{N}^*_i}u_{ij}+(1-\omega_i)\mathcal{T}^*_i$. Let $H_i=\sum_{j\in \mathcal{N}^*_i}u_{ij}$.
We can obtain
\begin{align*}
  p_{\mathcal{T}_i|\mathcal{T}^*_i=n^*,Z_i}(n)=& p_{\mathcal{T}_i|\mathcal{T}^*_i=n^*,Z_i,\omega_i=1}(n)p_{\omega_i|\mathcal{T}^*_i=n^*,Z_i}(1)+p_{\mathcal{T}_i|\mathcal{T}^*_i=n^*,Z_i,\omega_i=0}(n)p_{\omega_i|\mathcal{T}^*_i=n^*,Z_i}(0)\nonumber\\
  =&p_{H_i|\mathcal{T}^*_i=n^*,Z_i,\omega_i=1}(n)p_{\omega_i|\mathcal{T}^*_i=n^*,Z_i}(1)+p_{\mathcal{T}^*_i|\mathcal{T}^*_i=n^*,Z_i,\omega_i=0}(n)p_{\omega_i|\mathcal{T}^*_i=n^*,Z_i}(0).
  \end{align*}
Then, we can get
\begin{align}\label{lemma_example_informative_degree2}
  p_{\mathcal{T}_i|\mathcal{T}^*_i=n^*,Z_i}(n)= & \begin{cases} p_{H_i|\mathcal{T}^*_i=n^*,Z_i,\omega_i=1}(n^*)p_{\omega_i|\mathcal{T}^*_i=n^*,Z_i}(1) +p_{\omega_i|\mathcal{T}^*_i=n^*,Z_i}(0) &\text{ if }n=n^*,\\
  p_{H_i|\mathcal{T}^*_i=n^*,Z_i,\omega_i=1}(n)p_{\omega_i|\mathcal{T}^*_i=n^*,Z_i}(1)&\text{ if }n< n^*,\\
  0&\text{ if }n> n^*.\end{cases}
  \end{align}
Therefore, if $p_{\omega_i|\mathcal{T}^*_i=n^*,Z_i}(1)=0$ then Assumption \ref{ass_eigen} (a) is trivially satisfied. For $p_{\omega_i|\mathcal{T}^*_i=n^*,Z_i}(1)>0$ and $n< n^*$, we have
\begin{align}\label{lemma_example_informative_degree1}
&p_{\mathcal{T}_i|\mathcal{T}^*_i=n^*,Z_i}(n^*)-p_{\mathcal{T}_i|\mathcal{T}^*_i=n^*,Z_i}(n)>0\nonumber\\
\Leftrightarrow~~&[p_{H_i|\mathcal{T}^*_i=n^*,Z_i,\omega_i=1}(n^*)-p_{H_i|\mathcal{T}^*_i=n^*,Z_i,\omega_i=1}(n)]p_{\omega_i|\mathcal{T}^*_i=n^*,Z_i}(1) +p_{\omega_i|\mathcal{T}^*_i=n^*,Z_i}(0)>0\nonumber\\
\Leftrightarrow~~&p_{H_i|\mathcal{T}^*_i=n^*,Z_i,\omega_i=1}(n)-p_{H_i|\mathcal{T}^*_i=n^*,Z_i,\omega_i=1}(n^*)<p_{\omega_i|\mathcal{T}^*_i=n^*,Z_i}(0)/p_{\omega_i|\mathcal{T}^*_i=n^*,Z_i}(1).
\end{align}
When $p_{\omega_i|\mathcal{T}^*_i=n^*,Z_i}(1)<0.5$, we have $p_{\omega_i|\mathcal{T}^*_i=n^*,Z_i}(0)/p_{\omega_i|\mathcal{T}^*_i=n^*,Z_i}(1)>1$ then the last line of \eqref{lemma_example_informative_degree1} holds.

Second, we show Assumption \ref{ass_eigen} (a) holds under condition (b) in this lemma. Denote $p_{u_{ij}|\mathcal{T}^*_i=n^*,Z_i,\omega_i=1}(1)=p_u$ for simplicity. By condition (b) in Example \ref{example_independency} and the construction of $U_{ij}$ in this lemma, we know that $u_{ij}$ given $(\mathcal{T}^*_i,Z_i,\omega_i)$ is i.i.d. across $j$ for all $i$. Thus,
\begin{align}\label{lemma_example_informative_degree4}
p_{H_i|\mathcal{T}^*_i=n^*,Z_i,\omega_i=1}(n)
=&\binom{n^*}{n}p_{u}^{n}(1-p_u)^{n^*-n}
  \end{align}
is the probability mass function of Binomial$(n^*,p_u)$ and its mode is at $n_{m}=\lfloor(n^*+1)p_u\rceil$ with $\lfloor\cdot\rceil$ being the floor function.
Hence, for any $n<n^*$, we can see that
\begin{align}\label{lemma_example_informative_degree3}
p_{H_i|\mathcal{T}^*_i=n^*,Z_i,\omega_i=1}(n)-p_{H_i|\mathcal{T}^*_i=n^*,Z_i,\omega_i=1}(n^*)
\leq& \binom{n^*}{n_{m}}p_u^{n_{m}}(1-p_u)^{n^*-n_{m}}-p_u^{n^*}\nonumber\\
\leq&\binom{n^*}{n_{m}}p_u^{n_{m}}-p_u^{n^*}.\end{align}
According to \eqref{lemma_example_informative_degree1} and \eqref{lemma_example_informative_degree3}, if
$\binom{n^*}{n_{m}}p_u^{n_{m}}-p_u^{n^*}
< p_{\omega_i|\mathcal{T}^*_i=n^*,Z_i}(0)/p_{\omega_i|\mathcal{T}^*_i=n^*,Z_i}(1)$, then $p_{\mathcal{T}_i|\mathcal{T}^*_i=n^*,Z_i}(n^*)-p_{\mathcal{T}_i|\mathcal{T}^*_i=n^*,Z_i}(n)>0$ for $n<n^*$ is satisfied.

Third, we show that Assumption \ref{ass_eigen} (b) holds, if $0<p_{u_{ij}|\mathcal{T}^*_i=n^*,Z_i,\omega_i=1}(1)<1$. We know that $p_{\mathcal{T}_i|\mathcal{T}^*_i=0,Z_i}(0)=1$. Given \eqref{lemma_example_informative_degree2} and \eqref{lemma_example_informative_degree4}, for any $n>0$, $p_{\mathcal{T}_i|\mathcal{T}^*_i=n,Z_i}(0)=(1-p_u)^np_{\omega_i|\mathcal{T}^*_i=n^*,Z_i}(1)<1$ if $0<p_u<1$. In addition, for $n>n'>0$, if $0<p_u<1$, we have
\begin{align*}
p_{\mathcal{T}_i|\mathcal{T}^*_i=n,Z_i}(0)-p_{\mathcal{T}_i|\mathcal{T}^*_i=n',Z_i}(0)
=&[(1-p_u)^n-(1-p_u)^{n'}]p_{\omega_i|\mathcal{T}^*_i=n^*,Z_i}(1)
<0.
\end{align*}
\end{proof}

\bigskip

\begin{lemma}\label{lemma_monotone}We ignore covariate $Z_i$ and assume no isolated units, i.e., $\mathcal{T}^*_{i}\geq 1$ for all $i$. Suppose $Y_{i}=\theta_1D_{i}+\theta_2\frac{S^*_{i}}{\mathcal{T}^*_{i}}+\theta_3\mathcal{T}^*_{i}+\varepsilon_{i}$, where $D_i$ is a randomly generated binary variable and is i.i.d. across $i$, and $E[\varepsilon_i|\mathcal{T}^*_{i}]=0$. Then, we have
\begin{align*}
E[Y_i|\mathcal{T}^*_{i}=n]=&(\theta_1+\theta_2)p_{D_i}(1)+\theta_3n,\\ E[Y_i|\mathcal{T}_i=n]=&(\theta_1+\theta_2)p_{D_i}(1)+\theta_3\sum_{n^*\in\Omega_{\mathcal{T}^*}}n^*p_{\mathcal{T}^*_i|\mathcal{T}_i=n}(n^*).
\end{align*}
\end{lemma}

\begin{proof}[Proof of Lemma \ref{lemma_monotone}]Due to i.i.d. of $D_i$ across $i$, we have that $S^*_i$ given $\mathcal{T}^*_{i}$ is a binomially distributed random variable, and thus $E[S^*_{i}|\mathcal{T}^*_{i}=n]=np_{D_i}(1)$. Therefore, we can see $E[Y_i|\mathcal{T}^*_{i}=n]=\theta_1p_{D_i}(1)+\theta_2p_{D_i}(1)+\theta_3n$. In addition, we have
\begin{align*}
E[Y_i|\mathcal{T}_{i}=n]=&\theta_1p_{D_i}(1)+\theta_2E\left[\frac{S^*_{i}}{\mathcal{T}^*_{i}}|\mathcal{T}_{i}=n\right]+\theta_3E[\mathcal{T}^*_{i}|\mathcal{T}_{i}=n]\nonumber\\
=&\theta_1p_{D_i}(1)+\theta_2E\left\{\frac{1}{\mathcal{T}^*_{i}}E\left[S^*_{i}|\mathcal{T}^*_{i},\mathcal{T}_{i}=n\right]|\mathcal{T}_{i}=n\right\}+\theta_3\sum_{n^*\in\Omega_{\mathcal{T}^*}}n^*p_{\mathcal{T}^*_i|\mathcal{T}_i=n}(n^*)\nonumber\\
=&\theta_1p_{D_i}(1)+\theta_2p_{D_i}(1)+\theta_3\sum_{n^*\in\Omega_{\mathcal{T}^*}}n^*p_{\mathcal{T}^*_i|\mathcal{T}_i=n}(n^*).
\end{align*}
where the last line is due that $E\left[S^*_{i}|\mathcal{T}^*_{i},\mathcal{T}_{i}=n\right]=E\left[S^*_{i}|\mathcal{T}^*_{i}\right]=\mathcal{T}^*_{i}p_{D_i}(1)$ by Lemma \ref{lemma_binomial}.
\end{proof}

\section{Lemmas}\label{appendix_lemma1}

\begin{lemma}\label{lemma_unconf2}Under Assumptions \ref{unconf1} and \ref{nondiff}, $\varepsilon_i\perp(D_i,S^*_i,S_i,\mathcal{T}_i)|\mathcal{T}^*_i,Z_i$.
\end{lemma}
\begin{proof}[Proof of Lemma \ref{lemma_unconf2}]Denote $P^*_i=(\mathcal{N}^*_i,\{D_j\}_{j\in \mathcal{N}^*_i})$ and $P_i=(\mathcal{N}_i,\{D_j\}_{j\in \mathcal{N}_i})$. By the randomness of treatment in Assumptions \ref{unconf1} and \ref{nondiff} and the law of total probability, for any $e\in\Omega_{\varepsilon}$
\begin{align*}
p_{\varepsilon_i|D_i,S^*_i,S_i,\mathcal{T}_i,\mathcal{T}^*_i,Z_i}(e)
=&p_{\varepsilon_i|S^*_i,S_i,\mathcal{T}_i,\mathcal{T}^*_i,Z_i}(e)\nonumber\\
=&E\left[p_{\varepsilon_i|S^*_i,S_i,\mathcal{T}_i,\mathcal{T}^*_i,P^*_i,P_i,Z_i}(e)|S^*_i,S_i,\mathcal{T}_i,\mathcal{T}^*_i,Z_i\right]\nonumber\\
=&E\left[p_{\varepsilon_i|\mathcal{T}^*_i,P^*_i,P_i,Z_i}(e)|S^*_i,S_i,\mathcal{T}_i,\mathcal{T}^*_i,Z_i\right],
\end{align*}
where the last line is due that $S^*_i$, $S_i$, and $\mathcal{T}_i$ are fixed given $(P^*_i,P_i)$. Then, based on Assumption \ref{nondiff} again, since $\varepsilon_i\perp(P^*_i,P_i)|\mathcal{T}^*_i,Z_i$, we can show that
\begin{align*}
p_{\varepsilon_i|D_i,S^*_i,S_i,\mathcal{T}_i,\mathcal{T}^*_i,Z_i}(e)
=&E\left[p_{\varepsilon_i|\mathcal{T}^*_i,Z_i}(e)|S^*_i,S_i,\mathcal{T}_i,\mathcal{T}^*_i,Z_i\right]
=p_{\varepsilon_i|\mathcal{T}^*_i,Z_i}(e).
\end{align*}
\end{proof}

\begin{lemma}\label{lemma_binomial}Under Assumptions \ref{unconf1} and \ref{nondiff}, we have
\begin{itemize}
 \item[(a)]$p_{S^*_i|\mathcal{T}^*_i=n,Z_i}(s)=p_{S_i|\mathcal{T}_i=n,Z_i}(s)=\binom{n}{s}p_{D_i}(1)^sp_{D_i}(0)^{n-s}
$.
\item[(b)]$\mathcal{N}^*_i\perp S^*_i\big|\mathcal{T}^*_i,Z_i\;$ and $\;\mathcal{T}_i\perp S^*_i\big|\mathcal{T}^*_i,Z_i$.
  \item[(c)]$\mathcal{N}_i\perp S_i\big|\mathcal{T}_i,Z_i\;$ and $\;\mathcal{T}^*_i\perp S_i\big|\mathcal{T}_i,Z_i$.

\end{itemize}
\end{lemma}

\begin{proof}[Proof of Lemma \ref{lemma_binomial}]
(a) Recall that $S^*_i=\sum_{j\in\mathcal{N}^*_i}D_j$ and $S_i=\sum_{j\in\mathcal{N}_i}D_j$. Because of the i.i.d. of $D_i$ and Assumption \ref{nondiff} that $D_i\perp(\varepsilon_j,Z_j,\mathcal{N}^*_j,\mathcal{N}_j)$, we know that $D_j$ given $(\mathcal{T}^*_i,Z_i)$ is i.i.d. for any $j\neq i$. Therefore, we know that $S^*_i|\mathcal{T}^*_i,Z_i$ is the summation of a known number of i.i.d. random variables and follows a binomial distribution. Thus, conditional on $\mathcal{T}^*_i,Z_i$, the identity of network neighbors in $\mathcal{N}^*_{i}$ does not enter the distribution of $S^*_i$. The same arguments can be applied to obtain the distribution of $S_i|\mathcal{T}_i,Z_i$.

Given the distribution $p_{S^*_i|\mathcal{T}^*_i=n,Z_i}(s)=p_{S_i|\mathcal{T}_i=n,Z_i}(s)=\binom{n}{s}p_{D_i}(1)^sp_{D_i}(0)^{n-s}
$ proved in (a), we can conclude directly that the independence results in (b) and (c) hold. 
\end{proof}

\bigskip

\begin{lemma}\label{lemma_iden0}Suppose Assumptions \ref{unconf1} and \ref{nondiff} hold. We have $p_{S^*_i|S_i,\mathcal{T}_i,\mathcal{T}^*_i,Z_i}$ is identical for all $i\in\mathcal{P}$. In addition, for $ \forall (s^*,n^*)\in\Omega_{S^*,\mathcal{T}^*}$ and $\forall (s,n)\in\Omega_{S,\mathcal{T}}$, we have
$$
p_{S^*_i|S_i=s,\mathcal{T}_i=n,\mathcal{T}^*_i=n^*,Z_i}(s^*)=
\begin{cases}
\binom{\Delta n}{\Delta s}p_{D_i}(1)^{\Delta s}p_{D_i}(0)^{\Delta n-\Delta s}, & \mbox{if } s\leq s^*,~n\leq n^*,\text{ and }\Delta s\leq\Delta n \\
0, & \mbox{otherwise},
\end{cases}$$
where we denote $\Delta s=s^*-s$ and $\Delta n=n^*-n$.
\end{lemma}

\begin{proof}[Proof of Lemma \ref{lemma_iden0}]Recall that we define $\Delta S_i=S^*_i-S_i$ and $\Delta \mathcal{T}_i=\mathcal{T}^*_i-\mathcal{T}_i$. Because of missing links, we have that $\mathcal{N}^*_i/\mathcal{N}_i$ and $\mathcal{N}_i$ are mutually exclusive. Then, $\Delta S_i=\sum_{j\in \mathcal{N}^*_i/\mathcal{N}_i}D_j$ and $ S_i=\sum_{j\in \mathcal{N}_i}D_j$ are two summations of two non-overlapped groups of binary treatment variables. Due to the i.i.d. of $D_i$ (Assumption \ref{unconf1}) and $D_j\perp(Z_i,\mathcal{N}^*_i,\mathcal{N}_i)$ (Assumption \ref{nondiff}), we have that $\Delta S_i\perp S_i|Z_i,\mathcal{N}^*_i,\mathcal{N}_i$. Therefore,
for $\forall (s,n)\in\Omega_{S,\mathcal{T}}$ and $(s^*,n^*)\in\Omega_{S^*,\mathcal{T}^*}$ such that $n\leq n^*$, we have
\begin{align*}
p_{S_i^*|S_i=s,\mathcal{T}_i=n,\mathcal{T}^*_i=n^*,Z_i}(s^*)
=&p_{\Delta S_i|S_i=s,\Delta\mathcal{T}_i=\Delta n,\mathcal{T}^*_i=n^*,Z_i}(\Delta s)\nonumber\\
=&E[p_{\Delta S_i|S_i=s,\Delta\mathcal{T}_i=\Delta n,Z_i,\mathcal{N}^*_i,\mathcal{N}_i}(\Delta s)|S_i=s,\Delta\mathcal{T}_i=\Delta n,\mathcal{T}^*_i=n^*,Z_i]\nonumber\\
=&E[p_{\Delta S_i|\Delta\mathcal{T}_i=\Delta n,Z_i,\mathcal{N}^*_i,\mathcal{N}_i}(\Delta s)|S_i=s,\Delta\mathcal{T}_i=\Delta n,\mathcal{T}^*_i=n^*,Z_i]\nonumber\\
=&E[p_{\Delta S_i|\Delta\mathcal{T}_i=\Delta n,Z_i}(\Delta s)|S_i=s,\Delta\mathcal{T}_i=\Delta n,\mathcal{T}^*_i=n^*,Z_i]\nonumber\\
=&p_{\Delta S_i|\Delta\mathcal{T}_i=\Delta n,Z_i}(\Delta s)\nonumber\\
=&\binom{\Delta n}{\Delta s}p_{D_i}(1)^{\Delta s}p_{D_i}(0)^{\Delta n-\Delta s},
\end{align*}
where the second line is by the law of iterated expectation, the third line follows from $\Delta S_i\perp S_i|Z_i,\mathcal{N}^*_i,\mathcal{N}_i$, the fourth line is because $\Delta S_i$ given $(\Delta\mathcal{T}_i,Z_i)$ follows a binomial distribution (due to the i.i.d. $D_i$ and Assumption \ref{nondiff}) and thus it is independent of $(\mathcal{N}^*_i,\mathcal{N}_i)$.
\end{proof}

\bigskip

\begin{lemma}\label{lemma_identical_m}
Under Assumptions \ref{unconf1} and \ref{nondiff}, $p_{\varepsilon_i, D_i,S_i,\mathcal{T}_i,S^*_i|\mathcal{T}^*_i,Z_i}$ is identical for all $i\in\mathcal{P}$.
\end{lemma}

\begin{proof}[Proof of Lemma \ref{lemma_identical_m}]
We can see that
\begin{align}
p_{\varepsilon_i, D_i,S_i,\mathcal{T}_i,S^*_i|\mathcal{T}^*_i,Z_i}=&p_{D_i}\times p_{\varepsilon_i, |S_i,\mathcal{T}_i,S^*_i,\mathcal{T}^*_i,Z_i}\times p_{S_i,\mathcal{T}_i,S^*_i|\mathcal{T}^*_i,Z_i}\nonumber\\
=&p_{D_i}\times p_{\varepsilon_i, |\mathcal{T}^*_i,Z_i}\times p_{S^*_i|S_i,\mathcal{T}_i,\mathcal{T}^*_i,Z_i}\times p_{S_i|\mathcal{T}_i,\mathcal{T}^*_i,Z_i}\times f_{\mathcal{T}_i|\mathcal{T}^*_i,Z_i}\nonumber\\
=&p_{D_i}\times p_{\varepsilon_i, |\mathcal{T}^*_i,Z_i}\times p_{S^*_i|S_i,\mathcal{T}_i,\mathcal{T}^*_i,Z_i}\times p_{S_i|\mathcal{T}_i,Z_i}\times p_{\mathcal{T}_i|\mathcal{T}^*_i,Z_i},\end{align}
where the second line is because that $\varepsilon_i\perp (S_i,\mathcal{T}_i,S^*_i)|\mathcal{T}^*_i,Z_i$ by Lemma \ref{lemma_unconf2}, and the last line is due to $S_i\perp \mathcal{T}^*_i|\mathcal{T}_i,Z_i$ by Lemma \ref{lemma_binomial}.  Since $p_{D_i}$, $p_{\varepsilon_i, |\mathcal{T}^*_i,Z_i}$, and $p_{\mathcal{T}_i|\mathcal{T}^*_i,Z_i}$ are identical for all $i$ based on Assumptions \ref{unconf1} and \ref{nondiff}, $p_{S^*_i|S_i,\mathcal{T}_i,\mathcal{T}^*_i,Z_i}$ is identical for all $i$ by Lemma \ref{lemma_iden0}, and $p_{S_i|\mathcal{T}_i,Z_i}$ is identical for all $i$ by Lemma \ref{lemma_binomial}, we know that
$p_{\varepsilon_i, D_i,S_i,\mathcal{T}_i,S^*_i|\mathcal{T}^*_i,Z_i}$
is identical for all $i$.
\end{proof}

\bigskip

\begin{lemma}\label{lemma_unconf3}
Under Assumption \ref{unconf1}, if both $\mathcal{N}_i$ and $\tilde{\mathcal{N}}_i$ satisfy Assumption \ref{nondiff}, then we have $Y_i\perp(\mathcal{T}_i,\tilde{\mathcal{T}}_i)\big|\mathcal{T}^*_i,Z_i$.
\end{lemma}

\begin{proof}[Proof of Lemma \ref{lemma_unconf3}]Since $Y_i=r\left(D_i,S^*_i,\mathcal{T}^*_i,Z_i,\varepsilon_i\right)$ and treatment is randomly assigned, it is sufficient to show that $(S^*_i,\varepsilon_i)\perp(\mathcal{T}_i,\tilde{\mathcal{T}}_i)\big|\mathcal{T}^*_i,Z_i$. First, similar arguments used in the proof of Lemma \ref{lemma_binomial} can be applied to show that $S^*_i\perp(\mathcal{T}_i,\tilde{\mathcal{T}}_i)\big|\mathcal{T}^*_i,Z_i$. Thus,
\begin{align*}
p_{S^*_i,\varepsilon_i|\mathcal{T}_i,\tilde{\mathcal{T}}_i,\mathcal{T}^*_i,Z_i}
=&p_{\varepsilon_i|S^*_i,\mathcal{T}_i,\tilde{\mathcal{T}}_i,\mathcal{T}^*_i,Z_i}\times p_{S^*_i|\mathcal{T}_i,\tilde{\mathcal{T}}_i,\mathcal{T}^*_i,Z_i}
=p_{\varepsilon_i|S^*_i,\mathcal{T}_i,\tilde{\mathcal{T}}_i,\mathcal{T}^*_i,Z_i}\times p_{S^*_i|\mathcal{T}^*_i,Z_i}.
\end{align*}
In addition, a similar proof to that of Lemma \ref{lemma_unconf2} imply $\varepsilon_i\perp (S^*_i,\mathcal{T}_i,\tilde{\mathcal{T}}_i)|\mathcal{T}^*_i,Z_i$, we have $p_{\varepsilon_i|S^*_i,\mathcal{T}_i,\tilde{\mathcal{T}}_i,\mathcal{T}^*_i,Z_i}=p_{\varepsilon_i|S^*_i,\mathcal{T}^*_i,Z_i}$, leading to
$p_{S^*_i,\varepsilon_i|\mathcal{T}_i,\tilde{\mathcal{T}}_i,\mathcal{T}^*_i,Z_i}
=p_{\varepsilon_i|S^*_i,\mathcal{T}^*_i,Z_i}\times p_{S^*_i|\mathcal{T}^*_i,Z_i}
=p_{S^*_i,\varepsilon_i|\mathcal{T}^*_i,Z_i},
$
which indicates the desired result.
\end{proof}

\bigskip
\begin{lemma}\label{lemma_matrix_perturbation}\textbf{\citepsupp[][Matrix perturbation theory]{stewart1990matrix}}
Let a matrix $\mathbf{B}\in\mathbb{C}^{n\times n}$ be nonsingular and let $\tilde{\mathbf{B}}=\mathbf{B}+\mathbf{U}$ be a perturbation of $\mathbf{B}$. For a consistent matrix norm $\|\cdot\|$, if $\|\mathbf{B}^{-1}\mathbf{U}\|<1$ then $\tilde{\mathbf{B}}$ is nonsingular and $\|\tilde{\mathbf{B}}^{-1}\|\leq \frac{\|\mathbf{B}^{-1}\|}{1-\|\mathbf{B}^{-1}\mathbf{U}\|}$. In addition, if $\tilde{\mathbf{B}}$ is nonsingular, then $\|\tilde{\mathbf{B}}^{-1}-\mathbf{B}^{-1}\|\leq \|\mathbf{B}^{-1}\mathbf{U}\|\|\tilde{\mathbf{B}}^{-1}\|$.
\end{lemma}

\begin{proof}[Proof of Lemma \ref{lemma_matrix_perturbation}]See Theorem 2.5 in Chapter III of \citetsupp{stewart1990matrix}.
\end{proof}
\bigskip

\begin{lemma}\label{lemma_eigen_approximation}\textbf{(Theorem 2.1 of \citetsupp{andrew1993derivatives})} For any matrix $\mathbf{B}\in\mathbb{C}^{n\times n}$, let us denote $\mathbf{b}=vec(\mathbf{B})$ as a column vector obtained by stacking the columns of $\mathbf{B}$ on top of one another. Define a matrix-valued function $L:\Omega_{\mathbf{b}}\times \Omega_{\lambda}\longmapsto\mathbb{C}^{n\times n}$ such that
\begin{align*}
L(\mathbf{b},\lambda)=\mathbf{B}-\lambda \mathbf{I}_{n},
\end{align*}
where $\mathbf{I}_n$ is a $n\times n$ identity matrix. Consider the following two conditions: (i) The elements of $L$ are analytic functions of $(\mathbf{b},\lambda)$ on $\Omega_{\mathbf{b}}\times \Omega_{\lambda}$; (ii) for each $\mathbf{b}\in\Omega_{\mathbf{b}}$, there is a $\lambda\in\Omega_{\lambda}$ such that $det (L(\mathbf{b},\lambda))\neq0$.
Under conditions (i) and (ii), if $L(\mathbf{b},\lambda)$ has a simple eigenvalue at $\mathbf{b}_0$ then there is a neighbourhood
$\Omega_{\mathbf{b}_0}$ of $\mathbf{b}_0$ on which there exists an eigenvalue function $\lambda(\mathbf{b})$ and eigenvector functions $\psi(\mathbf{b})$ that are all analytic functions of $\mathbf{b}$.
\end{lemma}

\bigskip

\begin{lemma}\label{lemma_bradley}\textbf{(Theorem 3 of \citetsupp{bradley1983approximation})} Suppose $X$ and $Y$
are random variables on a Borel space $\Gamma$ and $\mathbb{R}$, respectively.
Suppose $U\sim Uniform[0,1]$ and is independent of $(X,Y)$. Suppose $\mu$ and $\gamma$
are positive numbers such that $\mu\leq\|Y\|_\gamma<\infty$, where
$\|Y\|_\gamma=(E[|Y|^\gamma])^{1/\gamma}$. Then there exists a real-valued random variable
$Y^*=g(X,Y,U)$ where $g$ is a measurable function from $\Gamma\times\mathbb{R}\times[0,1]$ to
$\mathbb{R}$, such that
\begin{itemize}
  \item[(i)]$Y^*$ is independent of $X$;
  \item[(ii)]the probability distributions of $Y^*$ and $Y$ are identical;
  \item[(iii)]$Pr(|Y^*-Y|\geq
      \mu)\leq18(\|Y\|_\gamma/\mu)^{\gamma/(2\gamma+1)}[\alpha(\mathcal{B}(X),\mathcal{B}(Y))]^{2\gamma/(2\gamma+1)}$,
\end{itemize}
where for two $\sigma$-fields $\mathcal{B}(X),\mathcal{B}(Y)$,
$\alpha(\mathcal{B}(X),\mathcal{B}(Y))=\sup|Pr(A\bigcap B)-Pr(A)Pr(B)|$ with $A\in\mathcal{B}(X)$ and
$B\in\mathcal{B}(Y)$.
\end{lemma}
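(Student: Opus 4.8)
The plan is to treat this as a strong-approximation (coupling) statement rather than a limit theorem: conclusions (i) and (ii) should be arranged so that they hold by construction, leaving all of the genuine work in the error bound (iii). First I would build $Y^*$ through a randomized conditional quantile transform. Writing $F$ for the unconditional distribution function of $Y$ and $F_x(\cdot)$ for its conditional distribution function given $X=x$, the auxiliary uniform variable $U$, which is independent of $(X,Y)$, is used to smooth out the atoms of $Y$: I define an intermediate variable $W$ so that, conditionally on every value of $X$, $W$ is uniform on $[0,1]$ (the usual randomized probability-integral transform, interpolating linearly across the jumps of $F_x$ using $U$). Then set $Y^*=F^{-1}(W)$, with $F^{-1}$ the generalized inverse.

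With this construction, properties (i) and (ii) are essentially immediate. Since $W$ is conditionally uniform given $X$, it is unconditionally uniform and independent of $X$; as $Y^*=F^{-1}(W)$ is a measurable function of $W$ alone, we get $Y^*\perp X$, which is (i), while $W\sim\mathrm{Unif}[0,1]$ composed with the probability-integral transform gives $Y^*\overset{d}{=}Y$, which is (ii). Measurability of $Y^*$ in the form $g(X,Y,U)$ follows because $W$ is a measurable function of $(X,Y,U)$.

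The hard part is the coupling error (iii), and this is where the effort concentrates. The intuition is that $Y$ and $Y^*$ differ only to the extent that the conditional law $F_x$ departs from the unconditional law $F$, and that departure is precisely what the strong-mixing coefficient $\alpha(\mathfrak{B}(X),\mathfrak{B}(Y))$ controls. Concretely, I would discretize $\mathbb{R}$ into a grid of cells of width comparable to $\mu$, after first truncating at a level $T$ so that the contribution of $\{|Y|>T\}$ is absorbed by Markov's inequality applied to $|Y|^\gamma$, i.e.\ bounded by $(\|Y\|_\gamma/T)^{\gamma}$. On the bounded part, $Pr(|Y^*-Y|\ge\mu)$ is dominated by a sum, over the $O(T/\mu)$ grid cells, of the discrepancies between conditional and unconditional cell probabilities, each of which is at most $\alpha(\mathfrak{B}(X),\mathfrak{B}(Y))$. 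Balancing the resulting schematic bound $T/\mu\cdot\alpha+(\|Y\|_\gamma/T)^{\gamma}$ over the free level $T$ is what produces the stated exponents $\gamma/(2\gamma+1)$ and $2\gamma/(2\gamma+1)$.

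The principal obstacle is thus the quantitative optimization in (iii): recovering the exact exponents and the sharp numerical constant $18$ demands a tight version of the bin-counting-plus-truncation argument, together with the atom handling via $U$ being arranged so that the conditional uniformity of $W$ is exact rather than merely approximate. Since this result is a verbatim restatement of Theorem 3 of \citet{bradley1983approximation} and the construction above reproduces the architecture of his proof, for the purposes of this paper I would ultimately invoke \citet{bradley1983approximation} for the sharp constant, with the sketch above indicating why the statement holds and what the binding computation is.
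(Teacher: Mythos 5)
The paper does not prove this lemma at all: it is imported verbatim as Theorem 3 of \citet{bradley1983approximation} and used purely as a tool (with $\gamma=\infty$) in the proof of Theorem \ref{theorem_ker}, so your final move of invoking the citation is exactly what the paper does, and for the paper's purposes nothing more is required. Your construction of $Y^*$ via the randomized conditional probability-integral transform, with $U$ smoothing the atoms so that $W$ is exactly conditionally uniform, does deliver (i) and (ii) correctly and gives $Y^*$ the required form $g(X,Y,U)$.

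The genuine gap is in your account of (iii), in two places. First, the per-cell discrepancies are not ``at most $\alpha$'' pointwise in $x$: the strong-mixing coefficient controls, for each \emph{fixed} Borel set $B$, only the average $E\left[\left|Pr(Y\in B\,|\,X)-Pr(Y\in B)\right|\right]\leq 2\alpha$; a pointwise Kolmogorov comparison of $F_x$ with $F$, which is what your quantile coupling implicitly needs, would require the stronger $\beta$-mixing coefficient. Second, and decisively, your schematic bound does not produce the stated exponents: minimizing $c\,(T/\mu)\,\alpha+(\|Y\|_\gamma/T)^{\gamma}$ over $T$ yields $\left[(\|Y\|_\gamma/\mu)\,\alpha\right]^{\gamma/(\gamma+1)}$, i.e.\ exponents $\gamma/(\gamma+1)$ on both factors, which (since $\|Y\|_\gamma/\mu\geq1$ and $\alpha\leq1/4$) is strictly weaker than Bradley's $(\|Y\|_\gamma/\mu)^{\gamma/(2\gamma+1)}\alpha^{2\gamma/(2\gamma+1)}$. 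The stated exponents arise only if the discretization error over $N\asymp T/\mu$ cells aggregates at the rate $\sqrt{N}\,\alpha$ rather than $N\alpha$ — this rests on the Cauchy--Schwarz-type bound that the total variation between the joint law of $(X,Y)$ and the product law, computed over a finite partition into $N$ cells, is of order $N^{1/2}\alpha$ — and balancing $(T/\mu)^{1/2}\alpha$ against $(\|Y\|_\gamma/T)^{\gamma}$ then gives precisely $\gamma/(2\gamma+1)$ and $2\gamma/(2\gamma+1)$. This is not merely cosmetic for the present paper: the lemma is applied with $\gamma=\infty$, where Bradley's bound is $18(\|V_N\|_\infty/\mu)^{1/2}\alpha$, and your version would replace the square root by a first power, which would tighten the summability requirement in Assumption \ref{ass_alpha}. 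So the architecture (truncate, discretize, couple through $U$) is right, but the binding computation is the $\sqrt{N}$ aggregation, which your outline misattributes to a simple bin count.
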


\medskip

\begin{lemma}\label{lemma_sign}Denote $\mathcal{H}$ as a set of measurable functions such that $|h|\leq1$ for $\forall h\in \mathcal{H}$, and denote $sign(x)=1[x\geq0]-1[x<0]$. A solution to $\max_{h\in\mathcal{H}}|E[Xh(Z)]|$ is $h(Z)=sign(E[X|Z])$, and $\max_{h\in\mathcal{H}}|E[Xh(Z)]|=E[Xsign(X|Z)]$.
\end{lemma}
\begin{proof}[Proof of Lemma \ref{lemma_sign}]By the law of iterated expectation
$$|E[Xh(Z)]|=\Big|\int_{Z}E[X|Z]h(Z)dPr(Z)\Big|\leq\int_{Z}\left|E[X|Z]h(Z)\right|dPr(Z)\leq\int_{Z}\left|E[X|Z]\right|dPr(Z).$$
Since $\left|E[X|Z]\right|=E[X|Z]sign(E[X|Z])$, setting $h(Z)=sign(E[X|Z])$ fulfills the proof.
\end{proof}

\bigskip

The following lemmas are pioneered by \citetsupp{stein1986approximate} and utilized in \citetsupp{ross2011fundamentals}, to derive central limit theorems for dependency graphs.
\begin{lemma}[Stein's Lemma]\label{stein}Let $h$ represent any absolutely continuous function with its first derivative $h'$ satisfying $\|h'\|_{\infty}<\infty$. If a random variable $X$ satisfies $E[h'(X)-Xh(X)]=0$ for all $h$, then $X$ has the standard normal distribution.
\end{lemma}

\medskip

\begin{lemma}[\citetsupp{ross2011fundamentals} Theorem 3.1]\label{ross}Denote $d_K(X,Z)$ as the Kolmogorov distance of two random variables $X$ and $Z$, and denote $d_W(X,Z)$ as their Wasserstein distance. If $Z$ is a standard normal random variable, then
$$d_W(X,Z)\leq\sup_{\{h:\|h\|_{\infty},\|h''\|_{\infty}\leq2,~\|h'\|_{\infty}\leq\sqrt{2/\pi}\}}|E[h'(X)-Xh(X)]|,$$
where $h'$ and $h''$ are the first and second derivative of $h$.
Further, $d_K(X,Z)\leq(2/\pi)^{1/4}\sqrt{d_W(X,Z)}$.
\end{lemma}

\medskip

\begin{lemma}[CLT with Dependent Samples]\label{lemma_stein_normal}Let $\{\tilde{w}_i\}_{i=1}^N$
be any generic scalar random variable with $E[\tilde{w}_i]=0$. Denote $$\Lambda_N=\sum_{i=1}^{N}\tilde{w}_i~\text{ and
}~a_N=\sum_{k=1}^{q_N}\sum_{i,j\in \mathbb{S}_k}Cov(\tilde{w}_i,\tilde{w}_j).$$
Recall that we partition the index set of all sampled units into $q_N$ mutually exclusive clusters, $\mathbb{S}_{1},...,\mathbb{S}_{q_N}$, so that $\cup_{1\leq k\leq q_N}\mathbb{S}_k=\{1,...,N\}$. For any $i=1,...,N$, there exists a $\bar{k}\in\{1,...,q_N\}$ such that $i\in\mathbb{S}_{\tilde{k}}$ and we denote $\Lambda^c_i:=\sum_{j\not\in \mathbb{S}_{\tilde{k}}}\tilde{w}_j$. Let $\bar{\Lambda}^c_i:=a^{-1/2}_N\Lambda^c_i$. Suppose the following conditions hold: (a) $\sum\limits_{k=1}^{q_N}\sum\limits_{i,j,v\in\mathbb{S}_k}E[|\tilde{w}_{i}\tilde{w}_{j}\tilde{w}_{v}|]=o(a_N^{3/2})$;
(b) $\sum\limits_{k,k'=1}^{q_N}\sum\limits_{i,j\in\mathbb{S}_k}\sum\limits_{l,v\in\mathbb{S}_{k'}}Cov(\tilde{w}_{i}\tilde{w}_{j},\tilde{w}_{l}\tilde{w}_{v})=o(a_N^2)$;
(c) $\sum\limits_{k=1}^{q_N}\sum\limits_{i\in\mathbb{S}_k,j,v\not\in\mathbb{S}_k}E[|\tilde{w}_{i}\tilde{w}_{j}\tilde{w}_{v}|]=o(a_N^{3/2})$; (d) $E[\tilde{w}_i\bar{\Lambda}^c_i\big|\bar{\Lambda}^c_i]\geq 0$ for all $i$ and $\sum\limits_{k=1}^{q_N}\sum\limits_{i\in \mathbb{S}_k,j\not\in \mathbb{S}_k}Cov\left(\tilde{w}_i,\tilde{w}_j\right)=o(a_N)$.
Then, $a_N^{-1/2}\Lambda_N\overset{d}{\rightarrow}\mathbb{N}(0,1)$.
\end{lemma}

\begin{proof}[Proof of Lemma \ref{lemma_stein_normal}]Denote $\bar{\Lambda}_N=a_N^{-1/2}\Lambda_N$. Based on Lemmas \ref{stein} and \ref{ross}, the goal of this proof is to show that for $h$ that satisfies the conditions in Lemmas \ref{stein} and \ref{ross}, we have $$\sup_{\left\{h:\|h\|_\infty,\|h''\|_\infty\leq2,~\|h'\|_\infty\leq\sqrt{2/\pi}\right\}}\left|E\left[h'\left(\bar{\Lambda}_N\right)-\bar{\Lambda}_Nh\left(\bar{\Lambda}_N\right)\right]\right|\rightarrow0, \text{ as }N\rightarrow\infty.$$
Start from $E[\bar{\Lambda}_Nh(\bar{\Lambda}_N)]$. We can see that
\begin{align}\label{t37_0}
E[\bar{\Lambda}_Nh(\bar{\Lambda}_N)]=E\Big[a^{-1/2}_N\sum_{i=1}^N\tilde{w}_i(h(\bar{\Lambda}_N)-h(\bar{\Lambda}^c_i))\Big]+E\Big[a^{-1/2}_N\sum_{i=1}^N\tilde{w}_ih(\bar{\Lambda}^c_i)\Big].
\end{align}
We first show that the second term above is a $o(1)$.
\begin{align}\label{t37_stein}
\Big|E\Big[a^{-1/2}_N\sum_{i=1}^N\tilde{w}_ih(\bar{\Lambda}^c_i)\Big]\Big|\leq&\Big|E\Big[a^{-1/2}_N\sum_{i=1}^N\tilde{w}_i\left(h(\bar{\Lambda}^c_i)-h(0)\right)\Big]\Big|+\Big|E\Big[a^{-1/2}_Nh(0)\sum_{i=1}^N\tilde{w}_i\Big]\Big|\nonumber\\
=&\Big|E\Big[a^{-1}_N\sum_{k=1}^{q_N}\sum_{i\in \mathbb{S}_k}\sum_{j\not\in \mathbb{S}_k}\tilde{w}_i\tilde{w}_jh'(\acute{\Lambda}^c_i)\Big]\Big|,
\end{align}
where $\acute{\Lambda}^c_i$ is between $\bar{\Lambda}^c_i$ and 0, and the second term is zero because $E[\tilde{w}_i]=0$. We can further bound \eqref{t37_stein} as below
\begin{align}\label{t38}
&\Big|E\Big[a^{-1/2}_N\sum_{i=1}^N\tilde{w}_ih(\bar{\Lambda}^c_i)\Big]\Big|\nonumber\\
\leq&\Big|E\Big[a^{-1}_N\sum_{k=1}^{q_N}\sum_{i\in \mathbb{S}_k,j\not\in \mathbb{S}_k}\tilde{w}_i\tilde{w}_j\left(h'(\acute{\Lambda}^c_i)-h'(\bar{\Lambda}^c_i)\right)\Big]\Big|+\Big|E\Big[a^{-1}_N\sum_{k=1}^{q_N}\sum_{i\in \mathbb{S}_k,j\not\in \mathbb{S}_k}\tilde{w}_i\tilde{w}_jh'(\bar{\Lambda}^c_i)\Big]\Big|\nonumber\\
\leq&2a^{-1}_N\sum_{k=1}^{q_N}\sum_{i\in \mathbb{S}_k,j\not\in \mathbb{S}_k}E\left[|\tilde{w}_i\tilde{w}_j\bar{\Lambda}^c_i|\right]+\Big|E\Big[a^{-1}_N\sum_{k=1}^{q_N}\sum_{i\in \mathbb{S}_k,j\not\in \mathbb{S}_k}\tilde{w}_i\tilde{w}_jh'(\bar{\Lambda}^c_i)\Big]\Big|\nonumber\\
=&2a^{-3/2}_N\sum_{k=1}^{q_N}\sum_{i\in \mathbb{S}_k,j,l\not\in \mathbb{S}_k}E\left[|\tilde{w}_i\tilde{w}_j\tilde{w}_l|\right]+\Big|a^{-1/2}_N\sum_{i=1}^NE\left[\tilde{w}_i\bar{\Lambda}^c_ih'(\bar{\Lambda}^c_i)\right]\Big|\nonumber\\
\leq&o(1)+a^{-1/2}_N\sum_{i=1}^N\left|E[\tilde{w}_i\bar{\Lambda}^c_ih'(\bar{\Lambda}^c_i)]\right|,
\end{align}
where the second inequality is obtained by applying the mean-value theorem to $h'(\acute{\Lambda}^c_i)-h'(\bar{\Lambda}^c_i)$ and the fact that $\|h''\|_\infty\leq2$, and the $o(1)$ in the last equality is due to condition (c). In addition, based on Lemma \ref{lemma_sign}, we bound the second term on the right hand side of \eqref{t38} as below,
\begin{align}\label{t39}
a^{-1/2}_N\sum_{i=1}^N\left|E\left[\tilde{w}_i\bar{\Lambda}^c_ih'(\bar{\Lambda}^c_i)\right]\right|
\leq&\sqrt{\frac{2}{\pi}}a^{-1/2}_N\sum_{i=1}^NE\left[\tilde{w}_i\bar{\Lambda}^c_isign\left(E\left[\tilde{w}_i\big|\bar{\Lambda}^c_i\right]\bar{\Lambda}^c_i\right)\right]\nonumber\\
=&\sqrt{\frac{2}{\pi}}a^{-1}_N\sum_{k=1}^{q_N}\sum_{i\in \mathbb{S}_k,j\not\in \mathbb{S}_k}E\left[\tilde{w}_i\tilde{w}_jsign\left(E\left[\tilde{w}_i\big|\bar{\Lambda}^c_i\right]\bar{\Lambda}^c_i\right)\right]\nonumber\\
=&\sqrt{\frac{2}{\pi}}a^{-1}_N\sum_{k=1}^{q_N}\sum_{i\in \mathbb{S}_k,j\not\in \mathbb{S}_k}Cov\left(\tilde{w}_i,\tilde{w}_j\right)\nonumber\\
=&o(1),
\end{align}
where the second last line is because of $E[\tilde{w}_i\big|\bar{\Lambda}^c_i]\bar{\Lambda}^c_i\geq 0$ in condition (d) and $sign(x)=1$ for $x\geq0$, the last line comes from $\sum_{k=1}^{q_N}\sum_{i\in \mathbb{S}_k,j\not\in \mathbb{S}_k}Cov\left(\tilde{w}_i,\tilde{w}_j\right)=o(a_N)$ in condition (d). Based on \eqref{t38} and \eqref{t39}, the second term on the right-hand side of \eqref{t37_0} is a $o(1)$. Thus,
\begin{align}\label{t40_0}
E[\bar{\Lambda}_Nh(\bar{\Lambda}_N)]=E\Big[a^{-1/2}_N\sum_{i=1}^N\tilde{w}_i(h(\bar{\Lambda}_N)-h(\bar{\Lambda}^c_i))\Big]+o(1).
\end{align}
Next, from \eqref{t40_0} we can see
\begin{align}\label{t40}
\left|E\left[h'(\bar{\Lambda}_N)-\bar{\Lambda}_Nh(\bar{\Lambda}_N)\right]\right|=&\Big|E\Big[h'(\bar{\Lambda}_N)-a^{-1/2}_N\sum_{i=1}^N\tilde{w}_i(h(\bar{\Lambda}_N)-h(\bar{\Lambda}^c_i))\Big]\Big|+o(1)\nonumber\\
\leq&\Big|E\Big[a^{-1/2}_N\sum_{i=1}^N\tilde{w}_i\left[h(\bar{\Lambda}_N)-h(\bar{\Lambda}^c_i)-(\bar{\Lambda}_N-\bar{\Lambda}^c_i)h'(\bar{\Lambda}_N)\right]\Big]\Big|\nonumber\\
+&\Big|E\Big[h'(\bar{\Lambda}_N)\Big(1-a^{-1/2}_N\sum_{i=1}^N\tilde{w}_i\left(\bar{\Lambda}_N-\bar{\Lambda}^c_i\right)\Big)\Big]\Big|+o(1)\nonumber\\
\leq&a^{-1/2}_N\frac{\|h''\|_\infty}{2}\sum_{i=1}^NE[|\tilde{w}_i|(\bar{\Lambda}_N-\bar{\Lambda}^c_i)^2]\nonumber\\
+&\Big|E\Big[h'(\bar{\Lambda}_N)\Big(1-a^{-1/2}_N\sum_{i=1}^N\tilde{w}_i\left(\bar{\Lambda}_N-\bar{\Lambda}^c_i\right)\Big)\Big]\Big|+o(1),
\end{align}
where the last inequality is based on the Taylor expansion. We deal with the two terms at the right hand side of \eqref{t40} one by one. Firstly, because for any $i\in\mathbb{S}_k$, we know that $\bar{\Lambda}_N-\bar{\Lambda}^c_i=a^{-1/2}_N\sum_{j\in\mathbb{S}_k}\tilde{w}_j$. Thus, based on condition (a), the first term on the right-hand side of \eqref{t40} becomes
\begin{align*}
a^{-1/2}_N\frac{\|h''\|_\infty}{2}\sum_{i=1}^NE[|\tilde{w}_i|(\bar{\Lambda}_N-\bar{\Lambda}^c_i)^2]
\leq a^{-3/2}_N\sum_{k=1}^{q_N}\sum_{i,j,l\in\mathbb{S}_k}E[|\tilde{w}_i|\tilde{w}_j\tilde{w}_l]=o(1).
\end{align*}
In addition, recall $a_N=\sum_{k=1}^{q_N}\sum_{i,j\in\mathbb{S}_k}E[\tilde{w}_i\tilde{w}_j]$ by definition. Thus, the second term on the right-hand side of \eqref{t40} becomes
\begin{align*}
&\Big|E\Big[h'(\bar{\Lambda}_N)\Big(1-a^{-1/2}_N\sum_{i=1}^N\tilde{w}_i\left(\bar{\Lambda}_N-\bar{\Lambda}^c_i\right)\Big)\Big]\Big|\leq a_N^{-1}\|h'\|_\infty E\Big|a_N-\sum_{k=1}^{q_N}\sum_{i,j\in\mathbb{S}_k}\tilde{w}_i\tilde{w}_j\Big|\nonumber\\
=&a_N^{-1}\sqrt{\frac{2}{\pi}}E\Big|\sum_{k=1}^{q_N}\sum_{i,j\in\mathbb{S}_k}\left(\tilde{w}_i\tilde{w}_j-E\left[\tilde{w}_i\tilde{w}_j\right]\right)\Big|
\leq a_N^{-1}\sqrt{\frac{2}{\pi}}\Big[Var\Big(\sum_{k=1}^{q_N}\sum_{i,j\in\mathbb{S}_k}\tilde{w}_i\tilde{w}_j\Big)\Big]^{1/2}\nonumber\\
=&a_N^{-1}\sqrt{\frac{2}{\pi}}\Big[\sum_{k, k'=1}^{q_N}\sum_{i,j\in\mathbb{S}_k,l,v\in\mathbb{S}_{k'}}Cov\left(\tilde{w}_i\tilde{w}_j,\tilde{w}_l\tilde{w}_v\right)\Big]^{1/2}\nonumber\\
=&o(1),
\end{align*}
where the second inequality is from Cauchy-Schwarz inequality and the last line is because of condition (b). Therefore, we have that $\sup_{\{h:\|h\|_\infty,\|h''\|_\infty\leq2,~\|h'\|_\infty\leq\sqrt{2/\pi}\}}|E[h'(\bar{\Lambda}_N)-\bar{\Lambda}_Nh(\bar{\Lambda}_N)]|\rightarrow0$ as $N\rightarrow\infty$, which by Lemmas \ref{stein} and \ref{ross} indicates $\bar{\Lambda}_N\overset{d}{\rightarrow}\mathbb{N}(0,1)$.
\end{proof}

\section{Proofs in Section \ref{section_setup}}\label{appendix_section2}

\begin{proof}[Proof of Proposition \ref{lemma_unconf1}]Since Lemma \ref{lemma_unconf2} implies $\varepsilon_i\perp (D_i,S^*_i)|\mathcal{T}^*_i,Z_i$, we have
\begin{align*}
E\left[Y_i\big|D_i=d,S^*_i=s,\mathcal{T}^*_i=n,Z_i=z\right]=&E\left[r(d,s,n,z,\varepsilon_i)\big|D_i=d,S^*_i=s,\mathcal{T}^*_i=n,Z_i=z\right]\\
=&E\left[r(d,s,n,z,\varepsilon_i)\big|\mathcal{T}^*_i=n,Z_i=z\right]\\
=&m^*(d,s,n,z).
\end{align*}
\end{proof}

\section{Proofs in Section \ref{subsection_bias}}

\begin{proof}[Proof of Theorem \ref{prop_conditional_mean}]By the law of iterated expectation and the fact that $D_i$ is a randomized treatment, we have
\begin{align*}
m(d,s,n,z)=&E\left[Y_i\big|D_i=d,S_i=s,\mathcal{T}_i=n,Z_i=z\right]\\
=&\sum_{(s^*,n^*)\in\Omega_{S^*,\mathcal{T}^*}}E\left[Y_i\big|D_i=d,S_i=s,\mathcal{T}_i=n,Z_i=z,S^*_i=s^*,\mathcal{T}^*_i=n^*\right]\\
&\qquad\qquad\times p_{S^*_i,\mathcal{T}^*_i|S_i=s,\mathcal{T}_i=n, Z_i=z}(s^*,n^*)\nonumber\\
=&\sum_{(s^*,n^*)\in\Omega_{S^*,\mathcal{T}^*}}E\left[r(d,s^*,n^*,z,\varepsilon_i)\big|D_i=d,S_i=s,\mathcal{T}_i=n,Z_i=z,S^*_i=s^*,\mathcal{T}^*_i=n^*\right]\nonumber\\
&\qquad\qquad\times p_{S^*_i,\mathcal{T}^*_i|S_i=s,\mathcal{T}_i=n, Z_i=z}(s^*,n^*).
\end{align*}
Because $\varepsilon_i\perp(D_i,S^*_i,S_i,\mathcal{T}_i)|\mathcal{T}^*_i,Z_i$ as proved in Lemma \ref{lemma_unconf2}, we can obtain
\begin{align*}
m(d,s,n,z)=&\sum_{(s^*,n^*)\in\Omega_{S^*,\mathcal{T}^*}}E\left[r(d,s^*,n^*,z,\varepsilon_i)\big|\mathcal{T}^*_i=n^*,Z_i=z\right]\times p_{S^*_i,\mathcal{T}^*_i|S_i=s,\mathcal{T}_i=n, Z_i=z}(s^*,n^*)\\
=&\sum_{(s^*,n^*)\in\Omega_{S^*,\mathcal{T}^*}}m^*(d,s^*,n^*,z)\times p_{S^*_i,\mathcal{T}^*_i|S_i=s,\mathcal{T}_i=n, Z_i=z}(s^*,n^*).
\end{align*}
Next, we show that $m(d,s,n,z)$ is identical for all $i$. By Bayes' theorem, we have 
\begin{align}\label{theorem3.1_proofidentical}
 p_{S^*_i,\mathcal{T}^*_i|S_i=s,\mathcal{T}_i=n, Z_i=z}(s^*,n^*)= & p_{S^*_i|S_i=s,\mathcal{T}_i=n,\mathcal{T}^*_i=n^*, Z_i=z}(s^*,n^*)\times p_{\mathcal{T}^*_i|S_i=s,\mathcal{T}_i=n, Z_i=z}(n^*)\nonumber\\
 =&p_{S^*_i|S_i=s,\mathcal{T}_i=n,\mathcal{T}^*_i=n^*, Z_i=z}(s^*,n^*)\times p_{\mathcal{T}^*_i|\mathcal{T}_i=n, Z_i=z}(n^*),
\end{align}
where the second equality is due to $\mathcal{T}^*_i\perp S_i|\mathcal{T}_i,Z_i$ in Lemma \ref{lemma_binomial}. By Lemma \ref{lemma_identical_m}, the first term on the right-hand side of \eqref{theorem3.1_proofidentical} is identical for all i. In addition, we have
\begin{align*}
p_{\mathcal{T}^*_i|\mathcal{T}_i=n, Z_i=z}(n^*)=\frac{p_{\mathcal{T}_i|\mathcal{T}^*_i=n^*, Z_i=z}(n)\times p_{\mathcal{T}^*_i|Z_i=z}(n^*)}{p_{\mathcal{T}_i|Z_i=z}(n)}.
\end{align*}
We know that the two terms in the numerator are identical for all $i$ by Assumptions \ref{nondiff} and \ref{identical_degree}, and the term in the denominator can be written as $p_{\mathcal{T}_i|Z_i=z}(n)=\sum_{n^*\in \Omega_{\mathcal{T}^*}}p_{\mathcal{T}_i|\mathcal{T}^*_i=n^*,Z_i=z}(n)p_{\mathcal{T}^*_i|Z_i=z}(n^*)$ which is also identical for all $i$. Thus, the weight $p_{S^*_i,\mathcal{T}^*_i|S_i=s,\mathcal{T}_i=n, Z_i=z}(s^*,n^*)$ is identical, which, together with the identical $m^*$, fulfills the proof.
\end{proof}

\bigskip

\begin{proof}[Proof of Corollary \ref{coro_bias_expression}]By definition, we have
\begin{align*}
\eta_T(s,n,z)=&\sum\limits_{(s^*,n^*)\in\Omega_{S^*,\mathcal{T}^*}} [m^*(1,s^*,n^*,z)-m^*(0,s^*,n^*,z)]
 p_{S^*_i,\mathcal{T}^*_i|S_i=s,\mathcal{T}_i=n,Z_i=z}(s^*,n^*)\nonumber\\=&\sum\limits_{(s^*,n^*)\in\Omega_{S^*,\mathcal{T}^*}}  \eta^*_T(s^*,n^*,z)\times
 p_{S^*_i,\mathcal{T}^*_i|S_i=s,\mathcal{T}_i=n,Z_i=z}(s^*,n^*).
\end{align*}
In addition,
\begin{align}\label{coro_proof1}
&\eta_S(d,s,s',n,z)\nonumber\\
=&\sum\limits_{(s^*,n^*)\in\Omega_{S^*,\mathcal{T}^*}} m^*(d,s^*,n^*,z) [p_{S^*_i,\mathcal{T}^*_i|S_i=s,\mathcal{T}_i=n,Z_i=z}(s^*,n^*)-p_{S^*_i,\mathcal{T}^*_i|S_i=s',\mathcal{T}_i=n,Z_i=z}(s^*,n^*)]\nonumber\\
=&\sum\limits_{(s^*,n^*)\in\Omega_{S^*,\mathcal{T}^*}} [m^*(d,s^*,n^*,z) - m^*(d,s',n^*,z)]\nonumber\\ &~~~~~~~~~~~~\times[p_{S^*_i,\mathcal{T}^*_i|S_i=s,\mathcal{T}_i=n,Z_i=z}(s^*,n^*)-p_{S^*_i,\mathcal{T}^*_i|S_i=s',\mathcal{T}_i=n,Z_i=z}(s^*,n^*)]\nonumber\\
+&\sum\limits_{(s^*,n^*)\in\Omega_{S^*,\mathcal{T}^*}}m^*(d,s',n^*,z)[p_{S^*_i,\mathcal{T}^*_i|S_i=s,\mathcal{T}_i=n,Z_i=z}(s^*,n^*)-p_{S^*_i,\mathcal{T}^*_i|S_i=s',\mathcal{T}_i=n,Z_i=z}(s^*,n^*)].
\end{align}
Because $m^*(d,s',n^*,z)$ does not vary with $s^*$, the second term on the right-hand side of \eqref{coro_proof1} becomes
 \begin{align}\label{coro_proof2}
&\sum\limits_{n^*\in\Omega_{\mathcal{T}^*}}m^*(d,s',n^*,z)[p_{\mathcal{T}^*_i|S_i=s,\mathcal{T}_i=n,Z_i=z}(n^*)-p_{\mathcal{T}^*_i|S_i=s',\mathcal{T}_i=n,Z_i=z}(n^*)]\nonumber\\
=&\sum\limits_{n^*\in\Omega_{\mathcal{T}^*}}m^*(d,s',n^*,z)[p_{\mathcal{T}^*_i|\mathcal{T}_i=n,Z_i=z}(n^*)-p_{\mathcal{T}^*_i|\mathcal{T}_i=n,Z_i=z}(n^*)]\nonumber\\
=&0,
\end{align}
where the second line is because of $\mathcal{T}^*_i\perp S_i\big|\mathcal{T}_i,Z_i$ by Lemma \ref{lemma_binomial} (c). Therefore, plugging \eqref{coro_proof2} into \eqref{coro_proof1}, we get
\begin{align*}
\eta_S(d,s,s',n,z)
=&\sum\limits_{(s^*,n^*)\in\Omega_{S^*,\mathcal{T}^*}} \eta_S^*(d,s^*,s',n^*,z) \nonumber\\&~~~~~~~~\times[p_{S^*_i,\mathcal{T}^*_i|S_i=s,\mathcal{T}_i=n,Z_i=z}(s^*,n^*)-p_{S^*_i,\mathcal{T}^*_i|S_i=s',\mathcal{T}_i=n,Z_i=z}(s^*,n^*)].
\end{align*}
\end{proof}

\section{Proofs in Section \ref{section_multiple_proxy}}
\begin{proof}[Proof of Theorem \ref{decomposition_weight}]By Bayes' theorem, we know that $p_{S^*_i,\mathcal{T}^*_i|S_i,\mathcal{T}_i,Z_i}=p_{S^*_i|\mathcal{T}^*_i,S_i,\mathcal{T}_i,Z_i}\times p_{\mathcal{T}^*_i|S_i,\mathcal{T}_i,Z_i}$. It then yields from $\mathcal{T}^*_i\perp S_i|\mathcal{T}_i,Z_i$ (Lemma \ref{lemma_binomial}) that $p_{\mathcal{T}^*_i|S_i,\mathcal{T}_i,Z_i}=p_{\mathcal{T}^*_i|\mathcal{T}_i,Z_i}$.
\end{proof}

\bigskip

\begin{proof}[Proof of Theorem \ref{lemma_iden}]It follows directly from Lemma \ref{lemma_iden0}.
\end{proof}

\bigskip

\begin{proof}[Proof of Theorem \ref{theorem_id_N}]The proof can be divided into four steps.
\begin{itemize}
  \item Step 1. We show that $\mathbf{B}^a:=\mathbf{E}_{\mathcal{T},\tilde{\mathcal{T}},Y|Z=z}\times \mathbf{F}^{-1}_{\mathcal{T},\tilde{\mathcal{T}}|Z=z}$ approximates $\mathbf{B}^0:=\mathbf{F}_{\mathcal{T}|\mathcal{T}^*,Z=z}\times \mathbf{T}_{Y|\mathcal{T}^*,Z=z}\times \mathbf{F}^{-1}_{\mathcal{T}|\mathcal{T}^*,Z=z}$ with an  approximation error of $O(\triangle_{K})$.
    \item Step 2. We show that there is a unique set of eigenvalues and eigenvectors of $\mathbf{B}^0$ and their order can be identified.
    \item Step 3. We bound the difference between eigenvalues and eigenvectors of $\mathbf{B}^0$ and $\mathbf{B}^a$ by $O(\triangle_{K})$.
    \item Step 4. We bound the difference between $\mathbf{F}^a_{\mathcal{T}^*|\mathcal{T}, Z=z}$ and $\mathbf{F}_{\mathcal{T}^*|\mathcal{T}, Z=z}$ by $O(\triangle_{K})$.
\end{itemize}
\textbf{Step 1}. By the law of iterated expectation, for any $(\tilde{n},n,z)\in\Omega_{\tilde{\mathcal{T}},\mathcal{T},Z}$
\begin{align}\label{decomp_ynn}
&E[\varpi(Y_i)|\mathcal{T}_i=n,\tilde{\mathcal{T}}_i=\tilde{n},Z_i=z]\nonumber\\
=&\sum_{n^*\in\Omega_{\mathcal{T}^*}}E[\varpi(Y_i)|\mathcal{T}^*_i=n^*,\mathcal{T}_i=n,\tilde{\mathcal{T}}_i=\tilde{n},Z_i=z]p_{\mathcal{T}^*_i|\mathcal{T}_i=n,\tilde{\mathcal{T}}_i=\tilde{n},Z_i=z}(n^*)\nonumber\\
=&\sum_{n^*\in\Omega_{\mathcal{T}^*}}E[\varpi(Y_i)|\mathcal{T}^*_i=n^*,Z_i=z]p_{\mathcal{T}^*_i|\mathcal{T}_i=n,\tilde{\mathcal{T}}_i=\tilde{n},Z_i=z}(n^*),
\end{align}
where the last equality is due to Lemma \ref{lemma_unconf3} that $Y_i\perp (\mathcal{T}_i,\tilde{\mathcal{T}}_i)|\mathcal{T}^*_i,Z_i$. In addition, multiplying both sides of \eqref{decomp_ynn} by $p_{\mathcal{T}_i,\tilde{\mathcal{T}}_i|Z_i=z}(n,\tilde{n})$ gives us
\begin{align*}
&E[\varpi(Y_i)|\mathcal{T}_i=n,\tilde{\mathcal{T}}_i=\tilde{n},Z_i=z]p_{\mathcal{T}_i,\tilde{\mathcal{T}}_i|Z_i=z}(n,\tilde{n})\nonumber\\
=&\sum_{n^*\in\Omega_{\mathcal{T}^*}}E[\varpi(Y_i)|\mathcal{T}^*_i=n^*,Z_i=z]p_{\mathcal{T}^*_i,\mathcal{T}_i,\tilde{\mathcal{T}}_i|Z_i=z}(n^*,n,\tilde{n})\nonumber\\
=&\sum_{n^*\in\Omega_{\mathcal{T}^*}}E[\varpi(Y_i)|\mathcal{T}^*_i=n^*,Z_i=z]\times p_{\mathcal{T}_i|\mathcal{T}^*_i=n^*,Z_i=z}(n)\times p_{\tilde{\mathcal{T}}_i|\mathcal{T}^*_i=n^*,Z_i=z}(\tilde{n})\times p_{\mathcal{T}^*_i|Z_i=z}(n^*),
\end{align*}
where the last equality is due to Assumption \ref{nondiff3} that $\mathcal{T}_i\perp\tilde{\mathcal{T}}_i|\mathcal{T}^*_i,Z_i$. Partition the support $\Omega_{\mathcal{T}^*}$ into $\{0,...,K\}$ and $\{K+1,...\}$. Then, due to Assumption \ref{ass_support} that $\sum_{k>K}p_{\mathcal{T}^*_i|Z_i=z}(k)\leq \triangle_{K}$ and the boundedness of $E[\varpi(Y_i)|\mathcal{T}^*_i=n^*,Z_i]$ in Assumption \ref{ass_eigen_unique}, we have
\begin{align*}
&E[\varpi(Y_i)|\mathcal{T}_i=n,\tilde{\mathcal{T}}_i=\tilde{n},Z_i=z]p_{\mathcal{T}_i,\tilde{\mathcal{T}}_i|Z_i=z}(n,\tilde{n})\nonumber\\
=&\sum_{n^*=0}^{K}E[\varpi(Y_i)|\mathcal{T}^*_i=n^*,Z_i=z]\times p_{\tilde{\mathcal{T}}_i|\mathcal{T}^*_i=n^*,Z_i=z}(\tilde{n})\times p_{\mathcal{T}_i|\mathcal{T}^*_i=n^*,Z_i=z}(n)\times p_{\mathcal{T}^*_i|Z_i=z}(n^*)+O(\triangle_{K}).
\end{align*}
Then, we can write
\begin{align}\label{approx1}
\mathbf{E}_{\mathcal{T},\tilde{\mathcal{T}},Y|Z=z}=&\mathbf{F}_{\mathcal{T}|\mathcal{T}^*,Z=z}\times \mathbf{T}_{Y|\mathcal{T}^*,Z=z}\times \mathbf{T}_{\mathcal{T}^*|Z=z}\times \mathbf{F}_{\tilde{\mathcal{T}}|\mathcal{T}^*,Z=z}'+\mathbf{\Delta}_{1,K},
\end{align}
where $\mathbf{\Delta}_{1,K}$ is a $(K+1)\times(K+1)$ matrix with all its entries being $O(\triangle_{K})$. Similarly, again by Assumption \ref{nondiff3},
\begin{align*}
p_{\mathcal{T}_i,\tilde{\mathcal{T}}_i|Z_i=z}(n,\tilde{n})
=&\sum_{n^*\in\Omega_{\mathcal{T}^*}}p_{\tilde{\mathcal{T}}_i,\mathcal{T}_i|\mathcal{T}^*_i=n^*,Z_i=z}(\tilde{n},n)\times p_{\mathcal{T}^*_i|Z_i=z}(n^*)\nonumber\\
=&\sum_{n^*\in\Omega_{\mathcal{T}^*}}p_{\tilde{\mathcal{T}}_i|\mathcal{T}^*_i=n^*,Z_i=z}(\tilde{n})\times p_{\mathcal{T}_i|\mathcal{T}^*_i=n^*,Z_i=z}(n)\times p_{\mathcal{T}^*_i|Z_i=z}(n^*)\nonumber\\
=&\sum_{n^*=0}^{K}p_{\tilde{\mathcal{T}}_i|\mathcal{T}^*_i=n^*,Z_i=z}(\tilde{n})\times p_{\mathcal{T}_i|\mathcal{T}^*_i=n^*,Z_i=z}(n)\times p_{\mathcal{T}^*_i|Z_i=z}(n^*)+O(\triangle_{K}),
\end{align*}
which leads to
\begin{align*}
\mathbf{F}_{\mathcal{T},\tilde{\mathcal{T}}|Z=z}=&\mathbf{F}_{\mathcal{T}|\mathcal{T}^*,Z=z}\times \mathbf{T}_{\mathcal{T}^*|Z=z}\times \mathbf{F}_{\tilde{\mathcal{T}}|\mathcal{T}^*,Z=z}'+\mathbf{\Delta}_{2,K},
\end{align*}
where $\mathbf{\Delta}_{2,K}$ is a $(K+1)\times(K+1)$ matrix and all its entries are $O(\triangle_{K})$. Based on Assumption \ref{ass_eigen_inv}, because the smallest singular value of $\mathbf{F}_{\mathcal{T}|\mathcal{T}^*,Z=z}$, $\mathbf{F}_{\tilde{\mathcal{T}}|\mathcal{T}^*,Z=z}$, and $\mathbf{F}_{\mathcal{T},\tilde{\mathcal{T}}|Z=z}$ are all bounded from below, we know that those three matrices are nonsingular. By Assumption \ref{ass_support}, $\mathbf{T}_{\mathcal{T}^*|Z=z}$ is a diagonal matrix with all its diagonal elements strictly positive and bounded from below, then $\mathbf{T}_{\mathcal{T}^*|Z=z}$ is also nonsingular. For a matrix $\mathbf{B}$, we have $\|\mathbf{B}\|=[tr(\mathbf{B}'\mathbf{B})]^{1/2}$ to be its element-wise matrix norm. Then,
\begin{align}\label{approx_norm}
\|\mathbf{F}^{-1}_{\mathcal{T},\tilde{\mathcal{T}}|Z=z}\|^2=&tr\left((\mathbf{F}^{-1}_{\mathcal{T},\tilde{\mathcal{T}}|Z=z})'\mathbf{F}^{-1}_{\mathcal{T},\tilde{\mathcal{T}}|Z=z}\right)=\sum_{k=1}^{K+1}\sigma^2_k(\mathbf{F}^{-1}_{\mathcal{T},\tilde{\mathcal{T}}|Z=z})\leq(K+1)\max_k\sigma^2_k(\mathbf{F}^{-1}_{\mathcal{T},\tilde{\mathcal{T}}|Z=z}),
\end{align}
where $\{\sigma_k(\mathbf{B})\}_{k=1,2,...}$ are the singular values of a matrix $\mathbf{B}$. Since $\sigma^2_k(\mathbf{B})=\lambda_k(\mathbf{B}'\mathbf{B})$ with $\{\lambda_k(\mathbf{B}'\mathbf{B})\}_{k=1,2,...}$ being the eigenvalues of $\mathbf{B}'\mathbf{B}$, we know that if $\mathbf{B}$ is nonsingular, then $\sigma^2_k(\mathbf{B}^{-1})=\lambda_k((\mathbf{B}^{-1})'\mathbf{B}^{-1})=\lambda_k((\mathbf{B}\mathbf{B}')^{-1})=\lambda^{-1}_k(\mathbf{BB}')=\sigma^{-2}_k(\mathbf{B})$.
From \eqref{approx_norm} we have
\begin{align}\label{approx_norm1}
\|\mathbf{F}^{-1}_{\mathcal{T},\tilde{\mathcal{T}}|Z=z}\|^2\leq (K+1)\max_k\sigma^{-2}_k(\mathbf{F}_{\mathcal{T},\tilde{\mathcal{T}}|Z=z})\leq(K+1)\underline{\sigma}^{-2}(\mathbf{F}_{\mathcal{T},\tilde{\mathcal{T}}|Z=z})<(K+1)\delta^{-2},\end{align}
for $\delta>0$. Because $K$ is bounded (Assumption \ref{ass_support}), we know that $\|\mathbf{F}^{-1}_{\mathcal{T},\tilde{\mathcal{T}}|Z=z}\|=O(1)$ and  $\|\mathbf{F}^{-1}_{\mathcal{T},\tilde{\mathcal{T}}|Z=z}\times\mathbf{\Delta}_{2,K}\|\leq \|\mathbf{F}^{-1}_{\mathcal{T},\tilde{\mathcal{T}}|Z=z}\|\|\mathbf{\Delta}_{2,K}\|=O(\triangle_{K})$. Similarly, based on Assumption \ref{ass_eigen_inv} and the fact that $p_{\mathcal{T}^*_i|Z_i=z}(n^*)>\delta^*$ for any $n^*\leq K$ (Assumption \ref{ass_support}), we can show that $\|\mathbf{F}^{-1}_{\mathcal{T}|\mathcal{T}^*,Z=z}\times \mathbf{T}^{-1}_{\mathcal{T}^*|Z=z}\times (\mathbf{F}^{-1}_{\tilde{\mathcal{T}}|\mathcal{T}^*,Z=z})'\|=O(1)$. It then yields from Lemma \ref{lemma_matrix_perturbation} that
\begin{align*}
&\|\mathbf{F}^{-1}_{\mathcal{T},\tilde{\mathcal{T}}|Z=z}-\mathbf{F}^{-1}_{\mathcal{T}|\mathcal{T}^*,Z=z}\times \mathbf{T}^{-1}_{\mathcal{T}^*|Z=z}\times (\mathbf{F}^{-1}_{\tilde{\mathcal{T}}|\mathcal{T}^*,Z=z})'\|\nonumber\\
\leq & \|\mathbf{F}^{-1}_{\mathcal{T},\tilde{\mathcal{T}}|Z=z}\times\mathbf{\Delta}_{2,K}\|\|\mathbf{F}^{-1}_{\mathcal{T}|\mathcal{T}^*,Z=z}\times \mathbf{T}^{-1}_{\mathcal{T}^*|Z=z}\times (\mathbf{F}^{-1}_{\tilde{\mathcal{T}}|\mathcal{T}^*,Z=z})'\|=O(\triangle_{K}),
\end{align*}
leading to
\begin{align}\label{approx3}
&\mathbf{F}^{-1}_{\mathcal{T},\tilde{\mathcal{T}}|Z=z}=\mathbf{F}^{-1}_{\mathcal{T}|\mathcal{T}^*,Z=z}\times \mathbf{T}^{-1}_{\mathcal{T}^*|Z=z}\times (\mathbf{F}^{-1}_{\tilde{\mathcal{T}}|\mathcal{T}^*,Z=z})'+\mathbf{\Delta}_{3,K},
\end{align}
for some $(K+1)\times(K+1)$ matrix $\mathbf{\Delta}_{3,K}$ with all its entries being $O(\triangle_{K})$. Denote $\mathbf{H}_1=\mathbf{F}_{\mathcal{T}|\mathcal{T}^*,Z=z}\times \mathbf{T}_{Y|\mathcal{T}^*,Z=z}\times \mathbf{T}_{\mathcal{T}^*|Z=z}\times \mathbf{F}_{\tilde{\mathcal{T}}|\mathcal{T}^*,Z=z}'$ and $\mathbf{H}_2=\mathbf{F}_{\mathcal{T}|\mathcal{T}^*,Z=z}\times \mathbf{T}_{\mathcal{T}^*|Z=z}\times \mathbf{F}_{\tilde{\mathcal{T}}|\mathcal{T}^*,Z=z}'$. It is easy to see $\mathbf{H}_1\mathbf{H}_2^{-1}=\mathbf{F}_{\mathcal{T}|\mathcal{T}^*,Z=z}\times \mathbf{T}_{Y|\mathcal{T}^*,Z=z}\times \mathbf{F}^{-1}_{\mathcal{T}|\mathcal{T}^*,Z=z}$. In addition, because all elements in matrices $\mathbf{F}_{\mathcal{T}|\mathcal{T}^*,Z=z}$, $\mathbf{T}_{\mathcal{T}^*|Z=z}$, and $\mathbf{F}_{\tilde{\mathcal{T}}|\mathcal{T}^*,Z=z}$ are between zero and one, and all elements in $\mathbf{T}_{Y|\mathcal{T}^*,Z=z}$ is bounded in absolute value by Assumption \ref{ass_eigen_unique}, we can get the boundedness of $\|\mathbf{H}_1\|$. We already showed that $\|\mathbf{H}_2^{-1}\|$ is also bounded. Combining \eqref{approx1} and \eqref{approx3} gives us
\begin{align}\label{approx5}
&\|\mathbf{E}_{\mathcal{T},\tilde{\mathcal{T}},Y|Z=z}\times \mathbf{F}^{-1}_{\mathcal{T},\tilde{\mathcal{T}}|Z=z}-\mathbf{F}_{\mathcal{T}|\mathcal{T}^*,Z=z}\times \mathbf{T}_{Y|\mathcal{T}^*,Z=z}\times \mathbf{F}^{-1}_{\mathcal{T}|\mathcal{T}^*,Z=z}\|\nonumber\\
=&\|(\mathbf{H}_1+\mathbf{\Delta}_{1,K})\times (\mathbf{H}^{-1}_2+\mathbf{\Delta}_{3,K})-\mathbf{H}_1\mathbf{H}_2^{-1}\|\nonumber\\
=&\|\mathbf{H}_1\times \mathbf{\Delta}_{3,K}+\mathbf{\Delta}_{1,K}\times \mathbf{H}^{-1}_2+\mathbf{\Delta}_{1,K}\times\mathbf{\Delta}_{3,K}\|\nonumber\\=&O(\|\mathbf{H}_1\|\|\mathbf{\Delta}_{3,K}\|+\|\mathbf{\Delta}_{1,K}\|\|\mathbf{H}_2^{-1}\|)
\nonumber\\
=&O(\triangle_{K}).
\end{align}
Thus, $\mathbf{E}_{\mathcal{T},\tilde{\mathcal{T}},Y|Z=z}\times \mathbf{F}^{-1}_{\mathcal{T},\tilde{\mathcal{T}}|Z=z}$ approximates $\mathbf{F}_{\mathcal{T}|\mathcal{T}^*,Z=z}\times \mathbf{T}_{Y|\mathcal{T}^*,Z=z}\times \mathbf{F}^{-1}_{\mathcal{T}|\mathcal{T}^*,Z=z}$ with an approximation error of order $O(\triangle_{K})$.

\medskip

\textbf{Step 2}. Denote $\mathbf{B}^0=\mathbf{F}_{\mathcal{T}|\mathcal{T}^*,Z=z}\times \mathbf{T}_{Y|\mathcal{T}^*,Z=z}\times \mathbf{F}^{-1}_{\mathcal{T}|\mathcal{T}^*,Z=z}$ and $\mathbf{B}^a=\mathbf{E}_{\mathcal{T},\tilde{\mathcal{T}},Y|Z=z}\times \mathbf{F}^{-1}_{\mathcal{T},\tilde{\mathcal{T}}|Z=z}$.
Because $\mathbf{B}^0$ is diagonalizable, its eigenvalues are the diagonal elements of $\mathbf{T}_{Y|\mathcal{T}^*,Z=z}$ and its eigenvectors are columns of $\mathbf{F}_{\mathcal{T}|\mathcal{T}^*,Z=z}$. Moreover, Assumption \ref{ass_eigen} ensures a unique order of the eigenvectors.
In the presence of missing links, $0\leq\mathcal{T}_i\leq\mathcal{T}^*_i$, so that the sum of elements in each column of $\mathbf{F}_{\mathcal{T}|\mathcal{T}^*,Z=z}$ is one. Therefore, there is a unique set of eigenvectors of matrix $\mathbf{B}^0$.

\medskip

\textbf{Step 3}. Let $\mathbf{b}^0=vec(\mathbf{B}^0)\in\Omega_{\mathbf{b}}$ and $\mathbf{b}^a=vec(\mathbf{B}^a)\in\Omega_{\mathbf{b}}$, where $vec(\cdot)$ vectorizes a matrix. Denote $\Omega_{\lambda}$ as the space of eigenvalues of $\mathbf{B}^0$. The rest of this proof is a variant to the proof of Lemma 3.1 in \citetsupp{hu2008identification}. Define a matrix-valued function $L:\Omega_{\mathbf{b}}\times \Omega_{\lambda}\longmapsto\mathbb{C}^{(K+1)\times (K+1)}$ such that
\begin{align*}
L(\mathbf{b},\lambda)=\mathbf{B}-\lambda \mathbf{I}_{K+1},
\end{align*}
with $\mathbf{b}=vec(\mathbf{B})$. Apparently, all elements of $L$ are analytic functions of $(\mathbf{b},\lambda)$ on $\Omega_{\mathbf{b}}\times \Omega_{\lambda}$. If $L(\mathbf{b},\lambda(\mathbf{b}))$ is singular, then $\lambda(\mathbf{b})$ is an eigenvalue of $\mathbf{B}$. Choose $\Omega_{\lambda}$ large enough so that for each $\mathbf{b}\in\Omega_{\mathbf{b}}$, there is a $\lambda\in\Omega_{\lambda}$ and $det(L(\mathbf{b},\lambda))\neq0$. Then, conditions (i) and (ii) in Lemma \ref{lemma_eigen_approximation} are satisfied by $L(\mathbf{b},\lambda)$. Based on Assumption \ref{ass_eigen_unique}, we know that the eigenvalues of $\mathbf{B}^0$ are all simple eigenvalues. Then, it yields from Lemma \ref{lemma_eigen_approximation} that there is a neighborhood $\Omega_{\mathbf{b}^0}\subseteq\Omega_{\mathbf{b}}$ of $\mathbf{b}^0$, such that for $\mathbf{b}\in\Omega_{\mathbf{b}^0}$, its eigenvalue $\lambda(\mathbf{b})$ and its associated eigenvector $\psi(\mathbf{b})$ are analytic functions of $\mathbf{b}$. Therefore, $\lambda(\mathbf{b})$ and $\psi(\mathbf{b})$ are continuously differentiable on $\Omega_{\mathbf{b}^0}$. Without loss of generality, we can set $\Omega_{\mathbf{b}^0}$ to be a compact space.

To apply the mean value theorem, define a continuous path $\{\mathbf{b}(t):~t\in[0,1]\}\subseteq\Omega_{\mathbf{b}^0}$ such that $\mathbf{b}(0)=\mathbf{b}^0$ and $\mathbf{b}(1)=\mathbf{b}^a$. Because $\triangle_{K}$ is a small value, then $\mathbf{B}^0$ and $\mathbf{B}^a$ are close to each other so that $\mathbf{b}^a,\mathbf{b}(t)\in\Omega_{\mathbf{b}^0}$ for all $t\in[0,1]$. Applying the mean value theorem to the $j$-th element $\psi_j$ of the eigenvector function $\psi(\mathbf{b})$, we can get
\begin{align*}
\psi_j(\mathbf{b}^a)-\psi_j(\mathbf{b}^0)=\nabla  \psi_j(\tilde{\mathbf{b}})(\mathbf{b}^a-\mathbf{b})
\end{align*}
for some $\tilde{\mathbf{b}}=(1-\tilde{t})\mathbf{b}^0+\tilde{t}\mathbf{b}^a$ with $\tilde{t}\in[0,1]$, and denote $\nabla \psi_j(\tilde{\mathbf{b}})=\frac{\partial \psi_j(\tilde{\mathbf{b}})}{ \partial \mathbf{b}'}$.
Because $\psi$ is analytic function on $\{\mathbf{b}(t):~t\in[0,1]\}\subseteq\Omega_{\mathbf{b}^0}$, we know that $\nabla  \psi_j(\mathbf{b})$ is continuous and $\|\nabla  \psi_j(\tilde{\mathbf{b}})\|$ is bounded for all $j$ because of the compactness of $\Omega_{\mathbf{b}^0}$. Thus, from \eqref{approx5}, we have, uniformly over $\Omega_Z$,
\begin{align}\label{approx7}
|\psi_j(\mathbf{b}^a)-\psi_j(\mathbf{b}^0)|\leq\left\|\nabla  \psi_j(\tilde{\mathbf{b}})\right\|\|\mathbf{b}^a-\mathbf{b}^0\|=O(\|\mathbf{b}^a-\mathbf{b}^0\|)=O(\triangle_{K}).
\end{align}
A similar result holds for eigenvalues. The mean-value theorem with complex eigenvalues and eigenvectors can be formulated by separation of real and imaginary parts.

\medskip

\textbf{Step 4}. Because
$p_{\mathcal{T}_i|Z_i=z}(n)
=\sum_{n^*=0}^Kp_{\mathcal{T}_i|\mathcal{T}^*_i=n^*,Z_i=z}(n)\times p_{\mathcal{T}^*_i|Z_i=z}(n^*)+O(\triangle_{K}),
$
we can obtain
\begin{align*}
\mathbf{F}_{\mathcal{T}|Z=z}=\mathbf{F}_{\mathcal{T}|\mathcal{T}^*,Z=z}\times \mathbf{F}_{\mathcal{T}^*|Z=z}+\mathbf{\Delta}_{4,K},
\end{align*}
where $\mathbf{\Delta}_{4,K}$ is a $(K+1)\times 1$ vector and all its entries are $O(\triangle_{K})$ and $\mathbf{F}_{\mathcal{T}|\mathcal{T}^*,Z=z}$ is invertible by Assumption \ref{ass_eigen_inv}. Multiplying both sides by $\mathbf{F}^{-1}_{\mathcal{T}|\mathcal{T}^*,Z=z}$ gives us $\mathbf{F}^{-1}_{\mathcal{T}|\mathcal{T}^*,Z=z}\times \mathbf{F}_{\mathcal{T}|Z=z}=\mathbf{F}_{\mathcal{T}^*|Z=z}+\mathbf{F}^{-1}_{\mathcal{T}|\mathcal{T}^*,Z=z}\times\mathbf{\Delta}_{4,K}
$. Then, similar to \eqref{approx_norm1}, we can show $\|\mathbf{F}^{-1}_{\mathcal{T}|\mathcal{T}^*,Z=z}\|=O(1)$,
implying \begin{align}\label{approx8}
\|\mathbf{F}^{-1}_{\mathcal{T}|\mathcal{T}^*,Z=z}\times \mathbf{F}_{\mathcal{T}|Z=z}-\mathbf{F}_{\mathcal{T}^*|Z=z}\|=\|\mathbf{F}^{-1}_{\mathcal{T}|\mathcal{T}^*,Z=z}\times\mathbf{\Delta}_{4,K}\|=O(\triangle_{K}).\end{align} By Bayes' theorem, we have $\mathbf{F}_{\mathcal{T}^*|\mathcal{T},Z=z}=\mathbf{T}_{\mathcal{T}^*|Z=z}\times \mathbf{F}'_{\mathcal{T}|\mathcal{T}^*,Z=z}\times \mathbf{T}^{-1}_{\mathcal{T}|Z=z}$. Let $\mathbf{F}^a_{\mathcal{T}|\mathcal{T}^*,Z=z}$ and $\mathbf{F}^a_{\mathcal{T}^*|Z=z}=(\mathbf{F}^a_{\mathcal{T}|\mathcal{T}^*,Z=z})^{-1}\times \mathbf{F}_{\mathcal{T}|Z=z}$ be the approximations of $\mathbf{F}_{\mathcal{T}|\mathcal{T}^*,Z=z}$ and $\mathbf{F}_{\mathcal{T}^*|Z=z}$ obtained in \eqref{approx7} and \eqref{approx8}, respectively. Denote $\mathbf{T}^a_{\mathcal{T}^*|Z=z}=diag(\mathbf{F}^a_{\mathcal{T}^*|Z=z})$ be the approximation of $\mathbf{T}_{\mathcal{T}^*|Z=z}$. 
Because we assume $p_{\mathcal{T}_i|Z_i=z}(n)>\delta$ for some constant $\delta>0$ and all $n\in\{0,...,K\}$, then uniformly
\begin{align*}
&\|\mathbf{F}^a_{\mathcal{T}^*|\mathcal{T},Z=z}-\mathbf{F}_{\mathcal{T}^*|\mathcal{T},Z=z}\|\nonumber\\=&O(\max\{\|\mathbf{F}_{\mathcal{T}|\mathcal{T}^*,Z=z}-\mathbf{F}^a_{\mathcal{T}|\mathcal{T}^*,Z=z}\|,\|\mathbf{T}_{\mathcal{T}^*|Z=z}-\mathbf{T}^a_{\mathcal{T}^*|Z=z}\|\})=O(\triangle_{K}).
\end{align*}
Since $\triangle_{K}$ does not depend on $z$, we can conclude $\sup\limits_{z\in\Omega_Z}\left\|\mathbf{F}^a_{\mathcal{T}^*|\mathcal{T}, Z=z}-\mathbf{F}_{\mathcal{T}^*|\mathcal{T}, Z=z}\right\|=O(\triangle_K).$
\end{proof}

\bigskip

\begin{proof}[Proof of Theorem \ref{theorem_id_CASF}]
From Proposition \ref{prop_conditional_mean} and the boundedness of $m^*$, we can get
\begin{align*}
m(d,\mathfrak{g}_k,z)=&\sum_{l=0}^{K_{\mathcal{G}}}m^*(d,\mathfrak{g}_l,z)p_{\mathcal{G}^*_i|\mathcal{G}_i=\mathfrak{g}_k,Z_i=z}(\mathfrak{g}_{l})+O(\triangle_{K}),
\end{align*}
implying $\mathbf{M}_{d,z}
=\mathbf{F}'_{\mathcal{G}^*|\mathcal{G},Z=z}\times \mathbf{M}_{d,z}^*+\mathbf{\Delta}_{5,N}$ for a $(K_\mathcal{G}+1)\times1$ vector $\mathbf{\Delta}_{5,N}$ whose elements are all $O(\triangle_{K})$.
Recall that $p_{\mathcal{G}^*_i|\mathcal{G}_i,Z_i}=p_{\Delta S_i|\Delta \mathcal{T}_i,Z_i}\times p_{\mathcal{T}^*_i|\mathcal{T}_i,Z_i}$ and we define $p^a_{\mathcal{G}^*_i|\mathcal{G}_i,Z_i}=p_{\Delta S_i|\Delta \mathcal{T}_i,Z_i}\times p^a_{\mathcal{T}^*_i|\mathcal{T}_i,Z_i}$. Since Theorem \ref{theorem_id_N} shows that $p^a_{\mathcal{T}^*_i|\mathcal{T}_i,Z_i}$ approximates $p_{\mathcal{T}^*_i|\mathcal{T}_i,Z_i}$ with $O(\triangle_{K})$ error, we know that for $\mathfrak{g},\mathfrak{g}^*\in\{\mathfrak{g}_0,\mathfrak{g}_1,...,\mathfrak{g}_{K_\mathcal{G}}\}$
\begin{align}\label{t13}
|p^a_{\mathcal{G}^*_i|\mathcal{G}_i=\mathfrak{g},Z_i=z}(\mathfrak{g}^*)-p_{\mathcal{G}^*_i|\mathcal{G}_i=\mathfrak{g},Z_i=z}(\mathfrak{g}^*)|=O(\triangle_{K}).
\end{align}
Again, from the boundedness of $m^*$, there exists a $(K_\mathcal{G}+1)\times1$ vector $\mathbf{\Delta}_{6,N}$ whose elements are all $O(\triangle_{K})$ so that
\begin{align}\label{t10}
\mathbf{M}_{d,z}
=&\mathbf{F}^{a'}_{\mathcal{G}^*|\mathcal{G},Z=z}\times \mathbf{M}_{d,z}^*+\mathbf{\Delta}_{6,N}.
\end{align}

Next, we show that $\mathbf{F}_{\mathcal{G}^*|\mathcal{G},Z=z}$ is full rank. Given the lexicographical ordering of the elements in $\{\mathfrak{g}_0,\mathfrak{g}_1,...,\mathfrak{g}_{K_\mathcal{G}}\}$ and the fact that $\mathcal{N}_i\subseteq \mathcal{N}^*_i$, by definition, $\mathbf{F}_{\mathcal{G}^*|\mathcal{G},Z=z}$ is a lower triangular matrix. By Theorems \ref{prop_conditional_mean} and \ref{lemma_iden}, the diagonal elements of $\mathbf{F}_{\mathcal{G}^*|\mathcal{G},Z=z}$ are
\begin{align*}
p_{S^*_i,\mathcal{T}^*_i|S_i=s,\mathcal{T}_i=n,Z_i=z}(s,n) =&p_{\Delta S_i|\Delta \mathcal{T}_i=0,Z_i=z}(0)\times p_{\mathcal{T}^*_i|\mathcal{T}_i=n,Z_i=z}(n)
=p_{\mathcal{T}^*_i|\mathcal{T}_i=n,Z_i=z}(n).
\end{align*}
Since $\mathbf{F}_{\mathcal{T}|\mathcal{T}^*,Z=z}$ is lower triangular and its diagonal elements are its eigenvalues, Assumption \ref{ass_eigen_inv} implies $p_{\mathcal{T}_i|\mathcal{T}^*_i=n,Z_i=z}(n)>\delta>0$. For all $n=0,...,K$, it is assumed that $p_{\mathcal{T}_i|Z_i=z}(n)>\delta>0$ and we know $p_{\mathcal{T}^*_i|Z_i=z}(n)>\delta^*>0$ by Assumption \ref{ass_support}. Therefore, there exists some $\tilde{\delta}>0$ so that $p_{\mathcal{T}^*_i|\mathcal{T}_i=n,Z_i=z}(n)=\frac{p_{\mathcal{T}_i|\mathcal{T}^*_i=n,Z_i=z}(n)\times p_{\mathcal{T}^*_i|Z_i=z}(n)}{p_{\mathcal{T}_i|Z_i=z}(n)}\in[\tilde{\delta},1-\tilde{\delta}]$. Thus, $\mathbf{F}_{\mathcal{G}^*|\mathcal{G},Z=z}$ is invertible with $\|\mathbf{F}^{-1}_{\mathcal{G}^*|\mathcal{G},Z=z}\|=O(1)$.

Next, we show that $\mathbf{F}^a_{\mathcal{G}^*|\mathcal{G},Z=z}$ is also full rank. Suppose that we choose $K$ so that $\triangle_{K}$ is small enough and $\|\mathbf{F}^{-1}_{\mathcal{G}^*|\mathcal{G},Z=z}(\mathbf{F}_{\mathcal{G}^*|\mathcal{G},Z=z}-\mathbf{F}^a_{\mathcal{G}^*|\mathcal{G},Z=z})\|<1-\check{\delta}$ for some $\check{\delta}>0$. It then follows from Lemma \ref{lemma_matrix_perturbation} that $\mathbf{F}^a_{\mathcal{G}^*|\mathcal{G},Z=z}$ is invertible with $\|(\mathbf{F}^a_{\mathcal{G}^*|\mathcal{G},Z=z})^{-1}\|= O(\|\mathbf{F}^{-1}_{\mathcal{G}^*|\mathcal{G},Z=z}\|)=O(1)$.
Pre-multiplying both sides of \eqref{t10} by the inverse of $\mathbf{F}^a_{\mathcal{G}^*|\mathcal{G},Z=z}$, we can see that
\begin{align*}
&\|(\mathbf{F}^{a'}_{\mathcal{G}^*|\mathcal{G},Z=z})^{-1}\times \mathbf{M}_{d,z}- \mathbf{M}_{d,z}^*\|
=\|(\mathbf{F}^{a'}_{\mathcal{G}^*|\mathcal{G},Z=z})^{-1}\times\mathbf{\Delta}_{6,N}\|=O(\triangle_{K}).
\end{align*}
\end{proof}

\section{Proofs in Section \ref{section_estimation_multiple}}\label{app_section_estimation}
In this section, we provide further assumptions for asymptotic results in Section \ref{app_section_estimation_details}. We present all the proofs for the results in the main text and Section \ref{app_section_estimation_details} in Section \ref{app_proofs_section5}.

\subsection{Further Details of Asymptotic Properties}\label{app_section_estimation_details}

Let us first introduce some useful notations. Recall that $\mathcal{G}^*_i=(S^*_i,\mathcal{T}^*_i)'$, $\mathcal{G}_i=(S_i,\mathcal{T}_i)'$, and $x_{i,j}=(D_i,s_j,n_j,Z_i)$ with $(s_j,n_j)\in\{\mathfrak{g}_0,...,\mathfrak{g}_{K_{\mathcal{G}}}\}$.
Recall that
$$\phi=vec\left(\mathbf{F}_{\mathcal{G}^*|\mathcal{G},Z=z}\right),~\phi^a=vec\left(\mathbf{F}^a_{\mathcal{G}^*|\mathcal{G},Z=z}\right) \text{ and } \hat{\phi}_N=vec\left(\hat{\mathbf{F}}^a_{\mathcal{G}^*|\mathcal{G},Z=z}\right).$$
Recall that we have
\begin{align*}
p_{S^*_i,\mathcal{T}^*_i|S_i=s,\mathcal{T}_i=n,Z_i=z}(s^*,n^*)=&p_{\Delta S_i|\Delta\mathcal{T}_i=\Delta n,Z_i=z}(\Delta s)\times p_{\mathcal{T}^*_i|\mathcal{T}_i=n,Z_i=z}(n^*)\\
=&p_{\Delta S_i|\Delta\mathcal{T}_i=\Delta n,Z_i=z}(\Delta s)\times \frac{p_{\mathcal{T}_i|\mathcal{T}^*_i=n^*,Z_i=z}(n)\times p_{\mathcal{T}^*_i|Z_i=z}(n^*)}{p_{\mathcal{T}_i|Z_i=z}(n)}.
\end{align*}
Let $\varphi_l=(\varphi_{1l},\varphi_{2l},\varphi_{3l},\varphi_{4l})'$, where
\begin{equation*}\begin{aligned}
  \varphi_{1l}=&p_{\Delta S_i|\Delta\mathcal{T}_i=\Delta n,Z_i=z}(\Delta s),~~
  \varphi_{2l}=p_{\mathcal{T}_i|\mathcal{T}^*_i=n^*,Z_i=z}(n),\\
  \varphi_{3l}=&p_{\mathcal{T}^*_i|Z_i=z}(n^*),~
  \varphi_{4l}=p_{\mathcal{T}_i|Z_i=z}(n).
\end{aligned}\end{equation*}
Let $\phi_l$ be a map from $\varphi_{l}$ to the $l$-th element in $\phi$ such that $$\phi_l(\varphi_{l})=\varphi_{1l}\varphi_{2l}\varphi_{3l}/\varphi_{4l}.$$
Denote $\varphi^a_l=(\varphi^a_{1l},\varphi^a_{2l},\varphi^a_{3l},\varphi^a_{4l})'$, where $ \varphi^a_{1l}= \varphi_{1l}$ and $ \varphi^a_{4l}= \varphi_{4l}$ are point identified using observed data, and $ \varphi^a_{2l}$ and $ \varphi^a_{3l}$ are obtained by the matrix diagonalization method. Let $\hat{\varphi}_{l,N}=(\hat{\varphi}_{1l,N},...,\hat{\varphi}_{4l,N})'$ denote the kernel estimators.
Then,
$$\phi^a_l=\phi_l(\varphi^a_l)\text{ and }\hat{\phi}_{l,N}=\phi_l(\hat{\varphi}_{l,N})$$
are the $l$-th element in $\phi^a$ and $\hat{\phi}_N$, respectively. Define
\begin{align}\label{definition_varphia}
\phi^a=\phi^a(\varphi^a)\text{ and }\hat{\phi}_N=\phi^a(\varphi^a),
\end{align}
where $\varphi$ and $\varphi^a$ are vectors that include all $\varphi_l$ and $\varphi^a_l$, respectively. 
We omit the argument $z$ in $\varphi$ and $\varphi^a$ unless otherwise mentioned.  Denote $\phi_l^0$ and $\varphi^0_l$ as their true value. 

Recall that $\tilde{W}_i=(Y_i,X'_i)'$ and $\hat{\theta}_N$ solves the first order condition
$\frac{1}{N}\sum_{i=1}^{N}g(\tilde{W}_i;\hat{\theta}_N,\hat{\phi}_N)=0.$
Then, by the mean value theorem we can obtain
\begin{align*}
  0=\frac{1}{N}\sum_{i=1}^{N}g(\tilde{W}_i;\hat{\theta}_N,\hat{\phi}_N)= &\frac{1}{N}\sum_{i=1}^{N}g(\tilde{W}_i;\theta^a,\hat{\phi}_N)+\frac{1}{N}\sum_{i=1}^{N}\frac{\partial g(\tilde{W}_i;\tilde{\theta}_N,\hat{\phi}_N)}{\partial\theta'}(\hat{\theta}_N-\theta^a),
\end{align*}
where $\tilde{\theta}_N$ is between $\hat{\theta}_N$ and $\theta^a$. If $\frac{1}{N}\sum_{i=1}^{N}\frac{\partial g(\tilde{W}_i;\tilde{\theta}_N,\hat{\phi}_N)}{\partial\theta'}$ is invertible, then
\begin{align}\label{jacobian_1}
  \sqrt{N}(\hat{\theta}_N-\theta^a)=-\left[\frac{1}{N}\sum_{i=1}^{N}\frac{\partial g(\tilde{W}_i;\tilde{\theta}_N,\hat{\phi}_N)}{\partial\theta'}\right]^{-1} &\frac{1}{\sqrt{N}}\sum_{i=1}^{N}g(\tilde{W}_i;\theta^a,\hat{\phi}_N).
\end{align}

Assumption \ref{ass_ker} states the regularity conditions of the support of observables, the nuisance parameters, and the kernel function.
\begin{assumption}\label{ass_ker} Let $W_i$ be a subvector of $(\mathcal{T}_i,\tilde{\mathcal{T}}_i,D_i,Z_i)$.
\begin{itemize}
  \item[(a)]$\Omega_{W^c}$ is a compact set.
  \item[(b)]$p_{W_i}(w)$ is bounded in $w\in\Omega_{W}$ and is continuously differentiable in $w^c$ to order two with bounded derivatives on $\Omega_{W^c}$.
    \item[(c)]$\kappa(\cdot)$ is nonnegative, bounded, and differentiable with bounded first derivative. In addition, for some constants $K_1,K_2>0$
\begin{align*}\int\kappa(v)dv=1,~~\kappa(v)=\kappa(-v),~~\int v^2\kappa(v)dv=K_1,~~\int \kappa(v)^2dv=K_2.
\end{align*}
\item[(d)]$E[\varpi(Y_i)|W_i=w]$ is continuously differentiable in $w^c$ to order two with bounded derivatives on $\Omega_{W^c}$. Let $u_i:=\varpi(Y_i)-E[\varpi(Y_i)|W_i]$ and $\sup_{w\in\Omega_W} E[|u_i|^{2+\delta}|W_i=w]<C$ for some constants $\delta>0$ and $C>0$.
\end{itemize}
\end{assumption}

Assumptions \ref{ass_consistency_m} and \ref{ass_normality} state conditions on the parameter space, the boundedness of the objective function, the smoothness of $m^*(\cdot,\theta)$, the boundedness and invertibility of the limit of the Hessian matrix. 

\begin{assumption}\label{ass_consistency_m}~
\begin{itemize}
  \item[(a)]$\Theta\subset\mathbb{R}^{d_\theta}$ is compact, $\theta^0,\theta^a\in int(\Theta)$.
  \item[(b)]$\tau(\cdot)$ is nonnegative and $\sup_{x\in\Omega_X}|\tau(x)|<C$ for some constant $C>0$.
  \item[(c)]$m^*(x;\theta)$ is continuous in $\theta$ and $\sup_{\theta\in\Theta}E\left[\left\|\frac{\partial m^*(x_{ij};\theta)}{\partial\theta}\right\|\right]<C$.
  \item[(d)]There exists a function $h_1(x)$ such that $|m^*(x;\theta)|^2\leq h_1(x)$ for all $\theta\in\Theta$, and $E[h_1(x_{i,j})]<\infty$ for $j\in\{0,...,K_{\mathcal{G}}\}$.
  \item[(e)]Let $e(\tilde{w},\theta,\phi^a):=y-m^a(x;\theta,\phi^a)$ and $e_i(\theta,\phi^a):=e(\tilde{W}_i,\theta,\phi^a)$. 
      There exists a function $h_2(\tilde{w})$ such that $|e(\tilde{w},\theta,\phi^a)|\leq h_2(\tilde{w})$ for all $\theta\in\Theta$ and $E[h^{2+\delta}_2(\tilde{W}_i)]<C$ for $\delta>0$ and $C>0$.
\end{itemize}
\end{assumption}
\begin{assumption}\label{ass_normality}~
\begin{itemize}
  \item[(a)]$m^*(x;\theta)$ is continuously differentiable in $\theta$ up to order three with bounded third order derivative uniformly in $x$, i.e. for any $r,q=1,2,...,d_\theta$, $$\sup_{x\in\Omega_X}\left\|\frac{\partial}{\partial\theta}\left(\frac{\partial^2 m^*(x;\theta)}{\partial\theta_r\partial\theta_q}\right)\right\|<C,\text{ for all }\theta\in\Theta.$$
\item[(b)]There exist functions $H_1(x)$ and $H_2(x)$ such that $\left\|\frac{d^2m^*(x;\theta)}{d\theta d\theta'}\right\|^2\leq H_1(x)$, $\left\|\frac{dm^*(x;\theta)}{d\theta}\right\|^2\leq H_2(x)$ for all $\theta\in\Theta$ and $E[H_1(x_{i,j})]<\infty$, $E[H_2(x_{i,j})]<\infty$ for $j\in\{0,...,K_{\mathcal{G}}\}$.
\item[(c)]\sloppy $E[\frac{\partial g(\tilde{W}_i;\theta^a,\phi^a)}{\partial\theta'}]$ is nonsingular, and $E[\|\frac{\partial g(\tilde{W}_i;\theta^a,\phi^a)}{\partial\theta'}\|^2]<C$ for some constant $C>0$.
\end{itemize}
\end{assumption}

Assumption \ref{ass_jacobian} below is the ``mean-square differentiability'' assumption in the two-step semiparametric estimation method in \citetsupp{newey1994large}. It is used to show that $$\frac{1}{\sqrt{N}}\sum_{i=1}^{N}g(\tilde{W}_i;\theta^a,\hat{\phi}_N)= \frac{1}{\sqrt{N}}\sum_{i=1}^{N}[g(\tilde{W}_i;\theta^a,\phi^a)+\delta(\tilde{W}_i)],$$ 
where $\delta(\tilde{W}_i)=\delta(\tilde{W}_i;\theta^a,\phi^a)$. Recall that $\tilde{W}_i=(\tilde{W}^{c'}_i,\tilde{W}^{d'}_i)$. Let $\tilde{W}^{c}_i\in\Omega_{\tilde{W}^c}$ and $\tilde{W}^{d}_i\in\Omega_{\tilde{W}^d}$. Denote $\hat{P}_{\tilde{W}_i}$ as the kernel estimator of the cumulative distribution function $P_{\tilde{W}_i}$. 
We set $\bm{\phi}_{\bm{t}}(X_i;\varphi)=[p_{\mathcal{G}^*_i|X_i}(\mathfrak{g}_0),...,p_{\mathcal{G}^*_i|X_i}(\mathfrak{g}_{K_{\mathcal{G}}})]'$.
Denote $$G(\tilde{W}_i;\varphi)=\tau_i\left[\frac{\partial}{\partial\theta}\mathcal{R}(\tilde{W}_i;\theta^a,\phi^a)\frac{\partial\bm{\phi}_{\bm{t}}(X_i;\varphi^a)}{\partial\varphi'}\right]\varphi(Z_i),$$ where
\begin{align*}
  \mathcal{R}(\tilde{W}_i;\theta,\phi)= &\begin{bmatrix}\left[Y_i-m^a(X_i;\theta,\phi)\right]m^*(x_{i,0};\theta)\\\vdots\\\left[Y_i-m^a(X_i;\theta,\phi)\right]m^*(x_{i,K_{\mathcal{G}}};\theta)\end{bmatrix}'.
\end{align*}
\begin{assumption}\label{ass_jacobian}
\begin{itemize}
  \item[(a)]\sloppy There exists a mapping $\delta:\Omega_{\tilde{W}}\mapsto\mathbb{R}^{d_\theta}$ such that $E[\delta(\tilde{W}_i)]=0$, $\int G(\tilde{w};\hat{\varphi}-\varphi^a)dP_{\tilde{W}_i}(\tilde{w})=\int \delta(\tilde{w})d\hat{P}_{\tilde{W}_i}(\tilde{w})$ whenever $\|\hat{\varphi}-\varphi^a\|_\infty<\epsilon$ for some small $\epsilon>0$. 
  \item[(b)]$\delta(\tilde{w})$ is twice continuously differentiable in $\tilde{w}^c\in\Omega_{\tilde{W}^c}$ with bounded second order derivative, $\sum_{\tilde{w}^d\in\Omega_{\tilde{W}^d}}\int\|\delta(\tilde{w}^c,\tilde{w}^d)\|d\tilde{w}^c<\infty$.\end{itemize}
\end{assumption}
%

Recall the $d_\theta\times1$ vector $\tilde{g}_i=g(\tilde{W}_i;\theta^a,\phi^a)+\delta(W_i;\theta^a,\phi^a)=g(\tilde{W}_i;\theta^a,\phi^a)+\delta(W_i)$ with $\tilde{g}_i=(\tilde{g}_{i,1},...,\tilde{g}_{i,d_\theta})'$. Denote $\tilde{g}^0_i=\tilde{g}_i-E[\tilde{g}_i]$. Let $\|\mathbf{b}\|_1=\sum_{r=1}^{p}|b_r|$ for a vector $\mathbf{b}=(b_1,...,b_p)'$.
\begin{assumption}\label{ass_stein}~
\begin{itemize}
  \item[(a)]$\lim_{N\rightarrow\infty}\|\frac{1}{N}\Sigma^{\tilde{g}}_N-\Omega\|\rightarrow0$ for a finite, strictly positive-definite and symmetric $\Omega$.
  \item[(b)]The following conditions hold. \begin{itemize}
        \item[(i)]$\sum\limits_{k=1}^{q_N}\sum\limits_{i,j,v\in\mathbb{S}_k}E\Big[\big\|\tilde{g}^0_{i}\otimes\tilde{g}^0_{j}\otimes\tilde{g}^0_{v}\big\|_1\Big]=o(\|\Sigma^{\tilde{g}}_N\|^{3/2})$;
      \item[(ii)]$\Big\|\sum\limits_{k,k'=1}^{q_N}\sum\limits_{i,j\in\mathbb{S}_k}\sum\limits_{l,v\in\mathbb{S}_{k'}}Cov\big(\tilde{g}^0_{i}\otimes\tilde{g}^0_{j},\tilde{g}^0_{l}\otimes\tilde{g}^0_{v})\big)\Big\|_\infty=o(\|\Sigma^{\tilde{g}}_N\|^2)$;
      \item[(iii)]$\sum\limits_{k=1}^{q_N}\sum\limits_{i\in\mathbb{S}_k,j,v\not\in\mathbb{S}_k}E\Big[\big\|\tilde{g}^0_{i}\otimes\tilde{g}^0_{j}\otimes\tilde{g}^0_{v}\big\|_1\Big]=o(\|\Sigma^{\tilde{g}}_N\|^{3/2})$;
\item[(iv)]for any $i=1,...,N$, there exists a $\bar{k}\in\{1,...,q_N\}$ such that $i\in\mathbb{S}_{\tilde{k}}$, denote $\Lambda^{\tilde{g},c}_i=\sum_{j\not\in \mathbb{S}_{\tilde{k}}}\tilde{g}^0_j$ and assume $E[\tilde{g}^0_i\big|\Lambda^{\tilde{g},c}_i](\Lambda^{\tilde{g},c}_i)'$ is positive definite for all $i$.
      \end{itemize}
\end{itemize}
\end{assumption}
Assumption \ref{ass_stein} (a) guarantees the existence of a limit variance-covariance matrix. 
Condition (b) is crucial for the asymptotic normal approximation under the
data dependency in this paper. Similar assumptions are used in \citetsupp{chandrasekhar2021network}. 
In particular, conditions (i) and (ii) restrict the rate of dependency within clusters. Condition (iii) limits the rate of dependency across clusters, requiring the same or smaller order of across cluster correlation than within cluster correlation. Condition (iv) states that on average, units outside each others' clusters do not tend to interact negatively.

\subsection{Proofs of Section \ref{section_estimation_multiple}}\label{app_proofs_section5}
\subsubsection{Useful Lemmas for Results in Section \ref{section_estimation_multiple}}

%

\begin{lemma}[Uniform Law of Large Number]\label{lemma_Newey_Mac} Suppose that we partition the index set of all
sampled units into $q_N$ mutually exclusive clusters, $\mathbb{S}_{1},...,\mathbb{S}_{q_N}$. Let $\bar{r}_N=\max_{1\leq k\leq q_N}|\mathbb{S}_{k}|$. Denote $\tilde{W}_i=(Y_i,X'_i)'\in\Omega_{\tilde{W}}$. For
any function $b:\Omega_{\tilde{W}}\times\Theta\mapsto\mathbb{R}^{p}$, let $b(\tilde{W}_i;\theta)=(b_1(\tilde{W}_i;\theta),...,b_p(\tilde{W}_i;\theta))'$ and define $u(\tilde{w};\theta,\epsilon)=\sup\limits_{\theta'\in\Theta,~\|\theta'-\theta\|<\epsilon}\|b(\tilde{w};\theta')-b(\tilde{w};\theta)\|$ for some $\epsilon>0$. For $r=1,...,p$, denote \begin{align*}
&\Sigma^{b_r}_{N}(\theta)=\sum_{q=1}^{q_N}\sum_{i,j\in \mathbb{S}_q}Cov(b_r(\tilde{W}_i;\theta),b_r(\tilde{W}_j;\theta)),~
\Sigma^u_{N}(\theta,\epsilon)=\sum_{q=1}^{q_N}\sum_{i,j\in \mathbb{S}_q}Cov(u(\tilde{W}_i;\theta,\epsilon),u(\tilde{W}_j;\theta,\epsilon)).
\end{align*} Under the following conditions
\begin{itemize}
  \item[(i)]$\Theta$ is compact;
  \item[(ii)]$b(\tilde{w};\theta)$ is continuous in $\theta$ over $\Theta$;
  \item[(iii)]there exists $h(\tilde{w})$ with $\|b(\tilde{w};\theta)\|\leq h(\tilde{w})$ for all
      $\theta\in\Theta$ and $E[h^{2+\delta}(\tilde{W}_i)]<C$ for some $\delta>0$ and $C>0$;
  \item[(iv)](a) for all $\theta\in\Theta$ and any $\epsilon>0$,
\begin{align*}
&\sum_{q=1}^{q_N}\sum_{i\in \mathbb{S}_q,j\not\in \mathbb{S}_q}Cov(b_r(\tilde{W}_i;\theta),b_r(\tilde{W}_j;\theta))=o\left(\Sigma^{b_r}_{N}(\theta)\right),\\
&\sum_{q=1}^{q_N}\sum_{i\in \mathbb{S}_q,j\not\in \mathbb{S}_q}Cov(u(\tilde{W}_i;\theta,\epsilon),u(\tilde{W}_j;\theta,\epsilon))=o\Big(\Sigma^u_{N}(\theta,\epsilon)\Big);
\end{align*}
(b) $\bar{r}_N=O(1)$.
\end{itemize}
Then
$\sup_{\theta\in\Theta}\left\|\frac{1}{N}\sum_{i=1}^{N}\left\{b(\tilde{W}_i;\theta)-E[b(\tilde{W}_i;\theta)]\right\}\right\|\overset{p}{\rightarrow}0$.
\end{lemma}
\begin{proof}[Proof of Lemma \ref{lemma_Newey_Mac}]This proof is based on the proof of Lemma 1 in
\citetsupp{tauchen1985diagnostic}.
Denote $b_i(\theta) = b(\tilde{W}_i;\theta)$, $b^0_i(\theta) = b(\tilde{W}_i;\theta)-E[b(\tilde{W}_i;\theta)]$, $u_i(\theta,\epsilon)=u(\tilde{W}_i;\theta,\epsilon)$, and $u^0_i(\theta,\epsilon)=u(\tilde{W}_i;\theta,\epsilon)-E[u(\tilde{W}_i;\theta,\epsilon)]$ for simplicity. First, we show the point-wise convergence of $\frac{1}{N}\sum_{i=1}^{N}b^0_i(\theta)$ and $\frac{1}{N}\sum_{i=1}^{N}u^0_i(\theta,\epsilon)$.
Let $b_{i,r}(\theta)$ be the $r$-th element in the vector $b_i(\theta)$,
$r=1,2,...,p$.
By condition (iii), for any given $\epsilon>0$, there exist constants $C_1,C_2>0$ such that
$\sup_{\theta\in\Theta}Var[b_{i,r}(\theta)]<C_1$ for all $r=1,...,p$, and
$\sup_{\theta\in\Theta}Var[u_i(\theta,\epsilon)]<C_2$. Then, for any $\theta\in\Theta$ and $r=1,...,p$
\begin{align}\label{eq_bound_sigma}
\Sigma^{b_r}_{N}(\theta)=&\sum_{k=1}^{q_N}\sum_{i,j\in \mathbb{S}_k}Cov(b_{i,r}(\theta),b_{j,r}(\theta))
\leq C_1\sum_{k=1}^{q_N}\bar{r}_N^2
=O(N),
\end{align}
where the last line follows from $\bar{r}_N=O(1)$ in condition (iv) (b) and $q_N=O(N)$. Similarly, we can show $\Sigma^u_{N}(\theta,\epsilon)=O(N)$ for any $\theta\in\Theta$ and $\epsilon>0$.
From Markov inequality, we know that for any random variable $X$ and its nonnegative scalar function $\tilde{b}(X)$, we have that $Pr(\tilde{b}(X)>c)\leq\frac{E[\tilde{b}(X)]}{c}$ for every $c>0$. For any $\tilde{\epsilon}>0$ and $\theta\in\Theta$, applying this result by setting $\tilde{b}(X)=\|\sum_{i=1}^{N}b^0_i(\theta)\|^2$ and $c=N^2\tilde{\epsilon}^2$, we can see that
\begin{align*}
Pr\Big(\Big\|\frac{1}{N}\sum_{i=1}^{N}b^0_i(\theta)\Big\|>\tilde{\epsilon}\Big)=&Pr\Big(\Big\|\frac{1}{N}\sum_{i=1}^{N}b^0_i(\theta)\Big\|^2>\tilde{\epsilon}^2\Big)\nonumber\\
\leq&\frac{1}{\tilde{\epsilon}^2N^2}E\Big[\Big\|\sum_{i=1}^{N}b^0_i(\theta)\Big\|^2\Big]
=\frac{1}{\tilde{\epsilon}^2N^2}E\Big[\sum_{i=1}^{N}b^0_i(\theta)'\sum_{i=1}^{N}b^0_i(\theta)\Big]\nonumber\\
=&\frac{1}{\tilde{\epsilon}^2N^2}E\Big[\sum_{k=1}^{q_N}\sum_{i,j\in \mathbb{S}_k}b^0_i(\theta)'b^0_j(\theta)+\sum_{k=1}^{q_N}\sum_{i\in \mathbb{S}_k,j\not\in \mathbb{S}_k}b^0_i(\theta)'b^0_j(\theta)\Big]\nonumber\\
=&\frac{1}{\tilde{\epsilon}^2N^2}\sum_{r=1}^{p}\Big[\sum_{k=1}^{q_N}\sum_{i,j\in \mathbb{S}_k}Cov(b_{i,r}(\theta),b_{j,r}(\theta))+s.o.\Big]\nonumber\\
=&\frac{1}{\tilde{\epsilon}^2N^2}\sum_{r=1}^{p}\Big(\Sigma^{b_r}_{N}(\theta)+s.o.\Big)\nonumber\\
=&O\left(\frac{1}{\tilde{\epsilon}^2N}\right),
\end{align*}
where, recall that s.o. stands for smaller order term, and the last line is by \eqref{eq_bound_sigma} and condition (iv) (a). By choosing $\tilde{\epsilon}$ such that
$\tilde{\epsilon}\rightarrow0$ and $\tilde{\epsilon}^2N\rightarrow\infty$ as $N\rightarrow\infty$, we can get the
point-wise convergence of $\frac{1}{N}\sum_{i=1}^{N}b^0_i(\theta)$ for any given $\theta\in\Theta$:
\begin{align}\label{eq_uniform_1}
\Big\|\frac{1}{N}\sum_{i=1}^{N}b^0_i(\theta)\Big\|=o_p(1).
\end{align}
Similar arguments can be used to show that for any given $\theta\in\Theta$ and $\epsilon>0$,
\begin{align}\label{eq_uniform_2}
\Big|\frac{1}{N}\sum_{i=1}^{N}u^0_i(\theta,\epsilon)\Big|=o_p(1).
\end{align}

Next, we show the uniform convergence of $\frac{1}{N}\sum_{i=1}^{N}b^0_i(\theta)$. 
By condition (ii) the continuity of $b(\cdot;\theta)$ in $\theta$,
$\lim_{\epsilon\rightarrow0} u(\cdot;\theta,\epsilon)=0$ with any fixed $\theta$. Based on condition (iii), we can see that $|u(\tilde{w};\theta,\epsilon)|\leq 2h(\tilde{w})$ for all $\theta\in\Theta$ and $\epsilon>0$. Since we have $u(\tilde{w};\theta,\epsilon)\geq 0$ by definition, the
dominated convergence theorem implies that, for any $c>0$, there exists a $\bar{\epsilon}(\theta)>0$ such that
\begin{align}\label{t61}
0\leq E[u_i(\theta,\epsilon)]\leq c,\text{ whenever }\epsilon\leq \bar{\epsilon}(\theta).
\end{align}
Let $B(\theta)$ be an open ball of radius $\bar{\epsilon}(\theta)$ in the space of $\theta$. Due to the compactness
of $\Theta$, there exist a finite sequence of open balls $B_k:=B(\theta_k)$ with $k=1,2,...,K$ such
that $\bigcup_{k=1}^KB_k$ covers $\Theta$. Let $\epsilon_k=\bar{\epsilon}(\theta_k)$ be the radius of the open ball $B_k$ and denote
$u_k=E[u_i(\theta_k,\epsilon_k)]$. If $\theta\in B_k$,
then $\|\theta-\theta_k\|\leq \epsilon_k$ and $u_k\leq c$ by \eqref{t61}, implying that $\|E[b_i(\theta)]-E[b_i(\theta_k)]\|\leq c$. For $\forall
\theta\in\Theta$, there exists a $k\in\{1,...,K\}$, such that $\theta\in B_k$ and
\begin{align*}
\Big\|\frac{1}{N}\sum_{i=1}^{N}b^0_i(\theta)\Big\|
\leq&\frac{1}{N}\sum_{i=1}^{N}\left\|b_i(\theta)-b_i(\theta_k)\right\|
+\Big\|\frac{1}{N}\sum_{i=1}^{N}b^0_i(\theta_k)\Big\|+\|E[b_i(\theta)]-E[b_i(\theta_k)]\|\nonumber\\
\leq&\frac{1}{N}\sum_{i=1}^{N}u_i(\theta_k,\epsilon_k)+\Big\|\frac{1}{N}\sum_{i=1}^{N}b^0_i(\theta_k)\Big\|+c\nonumber\\
=&\Big(\frac{1}{N}\sum_{i=1}^{N}u_i(\theta_k,\epsilon_k)-u_k\Big)+u_k+\Big\|\frac{1}{N}\sum_{i=1}^{N}b^0_i(\theta_k)\Big\|+c\nonumber\\
\leq&4c,
\end{align*}
whenever $N>\bar{N}_k(c)$, where the last line is due to the point-wise convergence in \eqref{eq_uniform_1} and \eqref{eq_uniform_2}, and the fact that $\theta\in B_k$ which implies $\|\theta-\theta_k\|\leq \epsilon_k$ and $u_k<c$.
Thus, we can see that for any $c>0$, if $N\geq\max_{k=1,...,K}\bar{N}_k(c)$, we know that
$\sup_{\theta\in\Theta}\|\frac{1}{N}\sum_{i=1}^{N}\{b(\tilde{W}_i;\theta)-E[b(\tilde{W}_i;\theta)]\}\|\leq4c.$
\end{proof}

\bigskip

\begin{lemma}\label{lemma_uniform_rate_comb}Let $W_i$ be a subvector of $(\mathcal{T}_i,\tilde{\mathcal{T}}_i,D_i,Z_i)$. Suppose assumptions in Theorem \ref{theorem_id_CASF}, Assumptions \ref{ass_dependency_neighbor}, \ref{ass_alpha}, and Assumption \ref{ass_ker} in Appendix \ref{app_section_estimation_details} hold. Since $\Omega_{W^c}$ is compact, it can be covered by a finite number $L^c_N$ (Q-dimensional) cubes with length $l_N=[\ln(N)h^{(Q+2)}/N]^{1/2}$. Because $\Omega_{W^d}$ contains a bounded number of values, $\Omega_W=\Omega_{W^c}\times \Omega_{W^d}$ can be covered by $L_N=CL^c_N$ cubes for some constant $C>0$, denoted by $\{\Omega_{W,1},...,\Omega_{W,L_N}\}$. For any $w=(w^{c},w^{d}),\tilde{w}=(\tilde{w}^{c},\tilde{w}^{d})$ in the same cube, let $w^d=\tilde{w}^d$. Denote the centers of these cubes as $\bar{w}_{j\epsilon}$ with $j=1,2,...,L_N$ and $\epsilon=l_N$ stands for the length of the cube. Let $b_N(w) =\frac{1}{N}\sum_{i=1}^{N}b_i(w)$ with $b_i(w)=\nu_i\hat{p}^{ker}_i(w)$ for some random variable $\nu_i$ and $\hat{p}^{ker}_i(w):=h^{-Q}\prod_{q=1}^Q\kappa\left(\frac{W^c_{iq}-w^c_q}{h}\right)1\left[W^d_i=w^d\right]$ with $w=(w^c,w^d)\in\Omega_{W^c}\times \Omega_{W^d}$. Denote $\sigma^2_{\nu}(w)=E[\nu^2_i|W_i=w]$. If $\sup_{w\in\Omega_W}\sigma^2_{\nu}(w)<C$ for some $C>0$, $h\rightarrow0$, $Nh^Q\rightarrow\infty$, and $\ln(N)/(Nh^Q)\rightarrow0$, then,
$$\max_{1\leq j\leq L_N}\left|b_N(\bar{w}_{j\epsilon})-E[b_N(\bar{w}_{j\epsilon})]\right|=O_p([\ln(N)/(Nh^Q)]^{1/2}).$$
\end{lemma}

\begin{proof}[Proof of Lemma \ref{lemma_uniform_rate_comb}]Denote
\begin{align*}
Q_{N,i}:=Q_{N,i}(w)=(\nu_i\hat{p}^{ker}_i(w)-E[\nu_i\hat{p}^{ker}_i(w)])/N,
\end{align*}
where to ease the notation, we suppress the argument $w$ in $Q_{N,i}(w)$. Then, $b_N(w)-E[b_N(w)]=\sum_{i=1}^NQ_{N,i}$. Recall that $\{\mathbb{S}_{1},...,\mathbb{S}_{q_N}\}$ are mutually exclusive with $\bigcup_{k=1,...,q_N}\mathbb{S}_k=\{1,2,...,N\}$. Define
$V_N(k)=\sum_{i\in\mathbb{S}_{k}}Q_{N,i},$ for $k=1,...,q_N$ and denote
\begin{align*}
\begin{cases}\mathcal{W}'_N=\sum\limits_{k=1}^{q_N/2}V_N(2k-1),~~\mathcal{W}^{''}_N=\sum\limits_{k=1}^{q_N/2}V_N(2k),&\text{ if $q_N$ is even}\\
\mathcal{W}'_N=\sum\limits_{k=1}^{(q_N+1)/2}V_N(2k-1),~~\mathcal{W}^{''}_N=\sum\limits_{k=1}^{(q_N-1)/2}V_N(2k),&\text{ if $q_N$ is odd}
\end{cases}
\end{align*}
where $\mathcal{W}'_N$ and $\mathcal{W}^{''}_N$ are the sums of $Q_{N,i}$ over the odd-numbered sets $\{\mathbb{S}_{2k-1}\}$ and even-numbered sets $\{\mathbb{S}_{2k}\}$, respectively. Then, $b_N(w)-E[b_N(w)]=\mathcal{W}'_N+\mathcal{W}^{''}_N$. Denote $R_2=\max_{1\leq j\leq L_N}\left|b_N(\bar{w}_{j\epsilon})-E[b_N(\bar{w}_{j\epsilon})]\right|$. For any $\eta>0$,
\begin{align}\label{t_R2}
  Pr(R_2>\eta)
  \leq &Pr\left(\max_{1\leq j\leq L_N}\left|\mathcal{W}'_N(\bar{w}_{j\epsilon})\right|>\eta/2\right)+Pr\left(\max_{1\leq j\leq L_N}\left|\mathcal{W}^{''}_N(\bar{w}_{j\epsilon})\right|>\eta/2\right)\nonumber\\
  \leq&L_N\sup_{w\in\Omega_W}Pr\left(\left|\mathcal{W}'_N(w)\right|>\eta/2\right)+L_N\sup_{w\in\Omega_W}Pr\left(\left|\mathcal{W}^{''}_N(w)\right|>\eta/2\right).
\end{align}
Next, we bound $Pr\left(\left|\mathcal{W}'_N(w)\right|>\eta/2\right)$ by applying Lemma \ref{lemma_bradley} and approximating the odd-numbered $\{V_N(2k-1)\}$ series by independent random variables. A similar proof can be used to show the results for $Pr\left(\left|\mathcal{W}'_N(w)\right|>\eta/2\right)$.
Following the method of \citetsupp{masry1996multivariate}, we divide the proof into two steps:
\begin{itemize}
\item Step 1 constructs an approximation process for $\mathcal{W}'_N(w)$ using independent random variables;
\item Step 2 shows that the independent random variable approximation converges uniformly and verifies the uniform convergence for the reminder term.
\end{itemize}

\textbf{Step 1}.
Enlarging the probability space if necessary, let us introduce $\{U_1,U_2,...\}$ mutually independent uniform $[0,1]$ random variables, which are also independent of the odd-numbered sequence $\{V_N(2k-1)\}$. Define $V_N^*(0)=0$ and $V_N^*(1)=V_N(1)$. By Lemma \ref{lemma_bradley}, for each $k\geq2$, there is a random variable $V^*_N(2k-1)$ that is a measurable function of $\{V_N(1),V_N(3),...V_N(2k-1),U_k\}$ satisfying the three conditions below:
\begin{itemize}
  \item[(a)] $V^*_N(2k-1)$ is independent of $\{V_N(1),V_N(3),...,V_N(2k-3)\}$;
  \item[(b)] $V^*_N(2k-1)$ has the same distribution as $V_N(2k-1)$;
  \item[(c)] for any $\mu$ such that $0<\mu\leq\|V_N(2k-1)\|_2<\infty$,
\begin{align}\label{t_bradley}
  Pr(|V^*_N(2k-1)-&V_N(2k-1)|>\mu)\nonumber\\
  \leq &18(\|V_N(2k-1)\|_2/\mu)^{2/5}(\sup|Pr(AB)-Pr(A)Pr(B)|)^{4/5},
\end{align}
\end{itemize}
where following Lemma \ref{lemma_bradley}, we denote $\|W\|_\gamma:=E[|W|^\gamma]^{1/\gamma}$ for any random variable $W$, the inequality in (c) follows by setting $\gamma=2$, and the supremum is over all possible sets $A$ and $B$, for $A,B$ in the $\sigma$-field of events generated by $\{V_N(1),V_N(3),...,V_N(2k-3)\}$ and by $V_N(2k-1)$, respectively. The construction of $V^*_N(2k-1)$ guarantees that $V^*_N(1),V^*_N(3),...,V^*_N(2k-1)$ are mutually independent based on condition (a) and has the same distribution to $V_N(2k-1)$ based on condition (b).\\

\textbf{Step 2}. Without loss of generality, let $q_N$ be an even number. Then,
\begin{align}\label{t_W}
&Pr\left(\left|\mathcal{W}'_N(w)\right|>\eta/2\right)\nonumber\\
=&Pr\Big(\Big|\sum_{k=1}^{q_N/2}\left[V_N(2k-1)-V^*_N(2k-1)\right]+\sum_{k=1}^{q_N/2}V^*_N(2k-1)\Big|>\eta/2\Big)\nonumber\\
\leq&Pr\Big(\Big|\sum_{k=1}^{q_N/2}V^*_N(2k-1)\Big|>\eta/4\Big)+Pr\Big(\Big|\sum_{k=1}^{q_N/2}\left[V_N(2k-1)-V^*_N(2k-1)\Big]\Big|>\eta/4\right)\nonumber\\
:=&R_{21}(w)+R_{22}(w).
\end{align}
Firstly, we bound $R_{21}(w)$ as follows. Noting that $\kappa(\cdot)$ is bounded, let $|\prod_{q=1}^Q\kappa(v_q)|\leq C_1$ for some constant $C_1>0$. Let $w^{d*}$ and $w^{c*}:=(w^{c*}_1,...,w^{c*}_Q)$ be any element in $\Omega_{W^d}$ and $\Omega_{W^c}$. Then, we can see
\begin{align*}
Var\left[Q_{N,i}\right]=& N^{-2}E[\nu^2_i\hat{p}^{ker}_i(w)^2]\nonumber\\= &(Nh^{Q})^{-2}E\left\{E[ \nu^2_i|W_i]1[W^d_i=w^d]\prod_{q=1}^{Q}\kappa^2\left(\frac{W_{iq}^c-w^c_q}{h}\right)\right\}\nonumber\\
=&(Nh^{Q})^{-2}\int \sigma^2_{\nu}(w^{c*},w^d)\prod_{q=1}^{Q}\kappa^2\left(\frac{w^{c*}_q-w^c_q}{h}\right)p_{W^c_i,W^d_i}(w^{c*},w^d)dw^{c*}\nonumber\\
=&\frac{1}{N^2h^{Q}}\int \sigma^2_{\nu}(w^{c}+hv,w^d)p_{W^c_i,W^d_i}(w^{c}+hv,w^d)\prod_{q=1}^{Q}\kappa^2\left(v_q\right)dv.
\end{align*}
where we denote
$w^c+hv:=(w^c_1+hv_{1},...,w^c_{Q}+hv_{Q})$ and the last line is due to the
change of variables using $v=(v_{1},...,v_{Q})'$ with $v_{q}=(w^{c*}_{q}-w^c_q)/h$ and $q=1,...,Q$. Since the continuous variables in $W_i$ (if any), i.e., $W^c_i$, is a subvector of $Z_i$, we know that $p_{Z_i}$ is a bounded function over $\Omega_Z$ by Assumption \ref{ass_ker}. Besides, for the discrete variables in $W_i$, i.e., $W^d_i$, its probability function is always bounded from above by one. Thus, $p_{W_i}$ is bounded over $\Omega_W$. Because $\sigma^2_{\nu}(w)$ is assumed to be bounded over $w\in\Omega_W$ by some constant $C>0$ and $\int\kappa^2\left(v\right)dv=K_2$, there exists some $C>0$ such that
\begin{align*}
\sup_{w\in\Omega_W}Var\left[Q_{N,i}\right]\leq&\frac{C}{N^2h^{Q}}\int \prod_{q=1}^{Q}\kappa^2\left(v_q\right)dv
=O(\frac{1}{N^2h^{Q}}).
\end{align*}
Recall $\bar{r}_N=\sup_{1\leq k\leq q_N}|\mathbb{S}_{k}|=O(1)$ by Assumption \ref{ass_dependency_neighbor}. For some constants $A_1>0$ and $A_2>0$, we have that, for all $k=1,...,q_N$,
\begin{align}\label{t_vk}
Var[V_N(k)]=&\sum_{i,j\in\mathbb{S}_{k}}Cov(Q_{N,i},Q_{N,j})
\leq A_1|\mathbb{S}_{k}|^2\sup_{w\in\Omega_W}Var\left[Q_{N,i}\right]
\leq \frac{A_2\bar{r}_N^2}{N^2h^{Q}}.
\end{align}
Let $\lambda_N=C[Nh^{Q}\ln(N)]^{1/2}$. Since $E[V_N(k)]=0$ and $\bar{r}_N=O(1)$,
by Chebyshev's inequality and \eqref{t_vk}, we have for all $k=1,...,q_N$,
\begin{align*}
Pr(\lambda_N|V_N(k)|\geq 1/2)\leq 4Var[V_N(k)]\lambda_N^2=O(\ln(N)/N).
\end{align*}
Therefore,
$\lambda_N|V_{N}(k)|< 1/2$ with probability approaching one (w.p.a.1). By the inequality $\exp(x)\leq 1+x+x^2$ for $|x|\leq 1/2$, we can get w.p.a.1
$$\exp\left(\pm\lambda_NV_{N}(2k-1)\right)\leq 1\pm\lambda_NV_{N}(2k-1)+\lambda^2_NV^2_{N}(2k-1).$$ It yields from $E[\lambda_NV_{N}(2k-1)]=0$ (by definition) and the same distribution of $V^*_{N}(2k-1)$ and $V_{N}(2k-1)$ that
\begin{align}\label{ineq_exp_Q}
  E\left[\exp(\pm\lambda_NV^*_{N}(2k-1))\right]=&E\left[\exp(\pm\lambda_NV_{N}(2k-1))\right]
  \leq1+\lambda^2_NE[V^2_{N}(2k-1)].
\end{align}
Let $x=E\left[\lambda^2_NV^2_{N}(2k-1)\right]$. Then the right-hand side of \eqref{ineq_exp_Q} becomes $1+x$. Moreover, because $1+x\leq \exp(x)$ for $x\geq0$, we can further bound \eqref{ineq_exp_Q} by
\begin{align}\label{t51}
E\left[\exp(\pm\lambda_NV^*_{N}(2k-1))\right]\leq&\exp\left(E\left[\lambda^2_NV^2_{N}(2k-1)\right]\right)
  =\exp\left(E\left[\lambda^2_NV^{*2}_{N}(2k-1)\right]\right).
\end{align}
From the Markov inequality, for any random variable $X$, and constants $c,a>0$, we have
$Pr(X>c)\leq\frac{E[\exp(aX)]}{\exp(ac)}.$
Let $a=\lambda_N$ and $c=\eta/4$ in the Markov inequality, we can see
\begin{align}\label{t52}
  R_{21}(w)=&Pr\Big(\Big|\sum_{k=1}^{q_N/2}V^*_{N}(2k-1)\Big|>\eta/4\Big)\nonumber\\
=&Pr\Big(\sum_{k=1}^{q_N/2}V^*_{N}(2k-1)>\eta/4\Big)+Pr\Big(-\sum_{k=1}^{q_N/2}V^*_{N}(2k-1)>\eta/4\Big)\nonumber\\
\leq&\Big\{E\Big[\exp\Big(\lambda_N\sum_{k=1}^{q_N/2}V^*_{N}(2k-1)\Big)\Big]+E\Big[\exp\Big(-\lambda_N\sum_{k=1}^{q_N/2}V^*_{N}(2k-1)\Big)\Big]\Big\}/\exp(\lambda_N\eta/4)\nonumber\\
\leq&\Big\{\prod_{k=1}^{q_N/2}E[\exp(\lambda_NV^*_{N}(2k-1))]+\prod_{k=1}^{q_N/2}E\left[\exp\left(-\lambda_NV^*_{N}(2k-1)\right)\right]\Big\}/\exp(\lambda_N\eta/4)\nonumber\\
\leq&2\prod_{k=1}^{q_N/2}\exp\left(E\left[\lambda^2_NV^{*2}_{N}(2k-1)\right]\right)/\exp(\lambda_N\eta/4)\nonumber\\
\leq&2\exp\Big(-\lambda_N\eta/4+\lambda^2_N\sum_{k=1}^{q_N/2}E\left[V^{*2}_{N}(2k-1)\right]\Big),
\end{align}
where the second inequality is based on the independence of $\{V^*_N(2k-1)\}_{k=1}^{q_N/2}$ and the third is from \eqref{t51}. Recall that $\{V_{N}(2k-1)\}$ and $\{V^{*}_{N}(2k-1)\}$ have identical distribution and $E[V^{2}_{N}(2k-1)]=Var[V^{2}_{N}(2k-1)]$. Based on \eqref{t_vk} and the facts that $q_N\leq N$ and $\bar{r}_N=O(1)$, we can obtain for some $A_3>0$,
\begin{align*}
  \sum_{k=1}^{q_N/2}E\left[V^{*2}_{N}(2k-1)\right]= & \sum_{k=1}^{q_N/2}E\left[V^{2}_{N}(2k-1)\right]
  \leq\sum_{k=1}^{q_N/2}\frac{A_2\bar{r}_N^2}{N^2h^Q}\leq\frac{A_3}{Nh^Q}.
\end{align*}
It is easy to see that \eqref{t52} becomes 
\begin{align*}
  R_{21}(w)\leq&2\exp\left(-\frac{\lambda_N\eta}{4}+\lambda^2_N\frac{A_3}{Nh^Q}\right)=2\exp\left(-\frac{\lambda_N\eta}{4}+A_3\ln(N)\right).
\end{align*}
Let $\eta=4A_4[\ln(N)/(Nh^Q)]^{1/2}$ for some constant $A_4>0$. Then, $\lambda_N\eta/4=A_4\ln(N)$ and
\begin{align}\label{t80}
  \sup_{w\in\Omega_W}R_{21}(w)\leq&2\exp((A_3-A_4)\ln(N))
  =2N^{-(A_4-A_3)}.
\end{align}
We choose $A_4$ large enough such that $A_4-A_3>0$ and $\sup_{w\in\Omega_W}R_{21}(w)=o(1)$.

Next, we deal with $R_{22}(w)$. Let $B_{2k-3}\in\sigma\{V_N(1),V_N(3),...,V_N(2k-3)\}$, $B'_{2k-1}\in\sigma\{V_N(2k-1)\}$ and $\alpha_{2k-1}=\sup_{B_{2k-3},B'_{2k-1}}\left|Pr(B_{2k-3},B'_{2k-1})-Pr(B_{2k-3})Pr(B'_{2k-1})\right|.$ Making use of \eqref{t_bradley}, we can obtain that the reminder term
\begin{align*}
R_{22}(w) \leq&\sum_{k=1}^{q_N/2}Pr\left(\left|V_N(2k-1)-V^*_N(2k-1)\right|>\frac{\eta}{2q_N}\right)\nonumber\\
\leq&18\sum_{k=1}^{q_N/2}\left(\frac{2q_NE[|V_N(2k-1)|^2]^{1/2}}{\eta}\right)^{2/5}\alpha^{4/5}_{2k-1}.
\end{align*}
Since $E[V_N(2k-1)]=0$, from \eqref{t_vk} we have $E[V^2_N(2k-1)]^{1/2}=A^{1/2}_2\bar{r}_N(N^2h^Q)^{-1/2}$. Furthermore, applying $q_N\leq N$ and $\eta=4A_4[\ln(N)/(Nh^Q)]^{1/2}$ to the above inequality, we can see
\begin{align}\label{t86}
R_{22}(w)\leq C\left(\frac{q_N\bar{r}_N(N^2h^Q)^{-1/2}}{A_4[\ln(N) (Nh^Q)^{-1}]^{1/2}}\right)^{2/5}\sum_{k=1}^{q_N/2}\alpha^{4/5}_{2k-1}\leq A_5\left(\frac{N}{\ln(N)}\right)^{1/5}\sum_{k=1}^{q_N/2}\alpha^{4/5}_{2k-1}
\end{align}
uniformly in $w$ for some constant $A_5>0$. Now, substitute \eqref{t80} and \eqref{t86} into \eqref{t_W},
\begin{align*}
 \sup_{w\in\Omega_W} Pr\left(\left|W'_N(w)\right|>\eta/2\right) \leq& 2N^{-\alpha}+A_5\left(\frac{N}{\ln(N) }\right)^{1/5}\sum_{k=1}^{q_N/2}\alpha^{4/5}_{2k-1}
\end{align*}
which, together with \eqref{t_R2}, further implies that
\begin{align*}
  Pr(R_2>\eta) \leq& 4L_NN^{-\alpha}+2A_5L_N\left(\frac{N}{\ln(N)}\right)^{1/5}\sum_{k=1}^{q_N/2}\alpha^{4/5}_{2k-1}.
\end{align*}
Recall that $l_N=[\ln(N)h^{(Q+2)}/N]^{1/2}=\eta h^{Q+1}$. Because we assume $h\rightarrow0$, $Nh^Q\rightarrow\infty$, and $\ln(N)/(Nh^Q)\rightarrow0$, we know that $\eta \rightarrow0$, $l_N\rightarrow0$, and $L_N=CL^c_N=C/l_N^Q=C/[\eta h^{(Q+1)}]^Q\rightarrow\infty$ as $N\rightarrow\infty$. By properly choosing $\alpha$, we can obtain the result that $\sum_{N=1}^\infty L_NN^{-\alpha}<\infty$. In addition, by Assumption \ref{ass_alpha}, we know that $L_N\big(\frac{N}{\ln(N)}\big)^{1/5}\sum_{k=1}^{q_N/2}\alpha^{4/5}_{2k-1}$ is also summable. It then follows from the Borel-Cantelli lemma that
$R_2= O_p(\eta)=O_p\big(\big[\frac{\ln(N)}{Nh^Q}\big]^{1/2}\big) \text{ almost surely.}$
\end{proof}

\bigskip

Lemma \ref{lemma_first_stage} below establishes the uniform convergence of the first-stage estimators defined in \eqref{t_ker}.

\begin{lemma}\label{lemma_first_stage}Suppose assumptions in Theorem \ref{theorem_id_CASF}, Assumptions \ref{ass_dependency_neighbor}, \ref{ass_alpha}, and Assumption \ref{ass_ker} in Appendix \ref{app_section_estimation_details} hold. If $h\rightarrow0$, $Nh^Q\rightarrow\infty$, and $\ln(N)/(Nh^Q)\rightarrow0$, then $$\sup_{w\in\Omega_W}\left|\hat{p}_{W_i}(w)-p_{W_i}(w)\right|=O_p(\left[\ln(N)/(Nh^Q)\right]^{1/2}+h^2).$$In addition, let $\overline{\Omega}_{W}\subseteq\Omega_{W}$ be a set such that $\inf_{w\in\overline{\Omega}_{W}}p_{W_i}(w)\geq\delta>0$, then $$\sup_{w\in\overline{\Omega}_W}\left|\hat{E}[\varpi(Y_i)|W_i=w]-E[\varpi(Y_i)|W_i=w]\right|=O_p(\left[\ln(N)/(Nh^Q)\right]^{1/2}+h^2).$$
\end{lemma}

\bigskip

\begin{proof}[Proof of Lemma \ref{lemma_first_stage}]In this proof, since the elements in $\gamma$ are probabilities and conditional means of observed degrees $\mathcal{T}_i$ and $\tilde{\mathcal{T}}_i$ in the truncated degree support, the number of values in $\Omega_{W^d}$ is finite. Denote $w=(w^{c},w^{d})$ with $w^c=(w^c_1,...,w^c_Q)$. Recall $\hat{p}_{W_i}(w)=1/N\sum_{i=1}^N\hat{p}^{ker}_i(w)$ and $\hat{E}[\varpi(Y_i)|W_i=w]=\frac{\frac{1}{N}\sum_{i=1}^N\varpi(Y_i)\hat{p}^{ker}_i(w)}{\frac{1}{N}\sum_{i=1}^N\hat{p}^{ker}_i(w)}
$, where
\begin{align*}
\hat{p}^{ker}_i(w):=h^{-Q}\prod_{q=1}^Q\kappa\left(\frac{W^c_{iq}-w^c_q}{h}\right)1\left[W^d_i=w^d\right].
\end{align*}
We divide the proof into two parts. In part 1, we focus on $\hat{p}_{W_i}(w)$. In part 2, we prove the results for $\hat{E}[\varpi(Y_i)|W_i=w]$.

\textbf{Part 1}. For $\forall w\in\Omega_W$, we can bound $|\hat{p}_{W_i}(w)-p_{W_i}(w)|$ as below:
\begin{align*}
\left|\hat{p}_{W_i}(w)-p_{W_i}(w)\right|\leq&\left|\hat{p}_{W_i}(w)-E[\hat{p}_{W_i}(w)]\right|+\left|E[\hat{p}_{W_i}(w)]-p_{W_i}(w)\right|.
\end{align*}
Given the inequality, we prove the uniform convergence of $\hat{p}_{W_i}(w)$ in two steps.
\begin{itemize}
  \item \textbf{Step 1}. We show that $\sup_{w\in\Omega_W}|E[\hat{p}_{W_i}(w)]-p_{W_i}(w)|=O(h^2)$.
  \item \textbf{Step 2}. We prove the uniform convergence of $\hat{p}_{W_i}(w)$ to $E[\hat{p}_{W_i}(w)]$ and its rate.
\end{itemize}

\textbf{Step 1}. Because $W_i$ is identically distributed under assumptions in Theorem \ref{theorem_id_CASF}, we know that $E[\hat{p}^{ker}_i(w)]$ is identical for all $i$. Thus, we have $E[\hat{p}_{W_i}(w)]=E[\hat{p}^{ker}_i(w)]$. Let $w^{d*}$ and $w^{c*}:=(w^{c*}_1,...,w^{c*}_Q)$ be any element in $\Omega_{W^d}$ and $\Omega_{W^c}$. For $w=(w^{c},w^{d})$, we have
\begin{align}\label{t43}
E\left[\hat{p}^{ker}_i(w)\right]
=&\frac{1}{h^{Q}}\sum\limits_{w^{d*}\in\Omega_{W^d}}\Big[1[w^{d*}=w^d] \int\prod_{q=1}^{Q}\kappa\left(\frac{w^{c*}_{q}-w^{c}_q}{h}\right)p_{W^c_i,W^d_i}\left(w^{c*},w^{d*}\right)dw^{c*}\Big]\nonumber\\
=&\sum\limits_{w^{d*}\in\Omega_{W^d}}\Big[1[w^{d*}=w^d] \int \prod_{q=1}^{Q}\kappa(v_{q})p_{W^c_i,W^d_i}(w^c+hv,w^{d*})dv\Big]\nonumber\\
=&\int p_{W^c_i,W^d_i}(w^c+hv,w^d)\prod_{q=1}^{Q}\kappa(v_{q})dv,
\end{align}
where we denote
$w^c+hv:=(w^c_1+hv_{1},...,w^c_{Q}+hv_{Q})$, and the second line is obtained by the change of variables using $v=(v_{1},...,v_{Q})'$ with $v_{q}=(w^{c*}_{q}-w^c_q)/h$ and $q=1,...,Q$.
Let the $Q\times 1$ vector $f^{(1)}_c(w):=\partial p_{W_i}(w)/\partial w^c$ and let the $Q\times Q$ matrix $f^{(2)}_{c}(w):=\partial^2 p_{W_i}(w)/\partial w^c\partial w^{c'}$. Consider the Taylor series expansion of $p_{W^c_i,W^d_i}(w^c+hv,w^{d})$ around $w$:
\begin{align}\label{t44}
&p_{W^c_i,W^d_i}(w^c+hv,w^{d})-p_{W^c_i,W^d_i}(w^c,w^d)=hf^{(1)}_{c}(w)'v+h^2v'f^{(2)}_{c}(\tilde{w})v
\end{align}
where $\tilde{w}$ is between $(w^c+hv,w^{d})$ and $(w^c,w^d)$. Recall that $\prod_{q=1}^{Q}\kappa(v_{q})dv=1$ by Assumption \ref{ass_ker}. Plugging \eqref{t44} into \eqref{t43} gives us
\begin{align*}
E\left[\hat{p}_{W_i}(w)\right]-p_{W_i}(w)
=&hf^{(1)}_{c}(w)'\int v\prod_{q=1}^{Q}\kappa(v_{q})dv+h^2\int v'f^{(2)}_{c}(\tilde{w})v\prod_{q=1}^{Q}\kappa(v_{q})dv,
\end{align*}
where the first term on the right-hand side is zero because by Assumption \ref{ass_ker}, the symmetric kernel function $\kappa(\cdot)$ implies $\int\kappa(v_q)v_qdv_q=0$. Thus,
\begin{align}\label{t45_1}
E\left[\hat{p}_{W_i}(w)\right]-p_{W_i}(w)
=&h^2\int v'f^{(2)}_{c}(\tilde{w})v\prod_{q=1}^{Q}\kappa(v_{q})dv\leq Ch^2\sum_{q=1}^{Q}\int v_q^2\kappa(v_q)dv_q,
\end{align}
where the inequality is due that each element in $f^{(2)}_c$ is bounded in $\Omega_{W^c}$ (Assumption \ref{ass_ker}). From \eqref{t45_1} and $\int v_q^2\kappa(v_q)dv_q=K_1$ (Assumption \ref{ass_ker}), we get
\begin{align}\label{t45_2}
\sup_{w\in\Omega_W}\left|E[\hat{p}_{W_i}(w)]-p_{W_i}(w)\right|
=O(h^2).
\end{align}

\textbf{Step 2}. Consider the $L_N$ cubes $\{\Omega_{W,1},...,\Omega_{W,L_N}\}$ introduced in Lemma \ref{lemma_uniform_rate_comb} that covers $\Omega_W$. For any $w\in\Omega_W$, let $\bar{w}_{j\epsilon}$ denote the center of one cube that contains $w$. Then, let $\epsilon=l_N$ be the length of the cube, we have
\begin{align*}
\sup_{w\in\Omega_W}\left|\hat{p}_{W_i}(w)-E[\hat{p}_{W_i}(w)]\right|\leq&\max_{1\leq j\leq L_N}\sup_{w\in\Omega_{W,j}}\left|\hat{p}_{W_i}(w)-E[\hat{p}_{W_i}(w)]\right|\nonumber\\
\leq&\max_{1\leq j\leq L_N}\sup_{\|w-\bar{w}_{j\epsilon}\|<\epsilon}\left|\hat{p}_{W_i}(w)-\hat{p}_{W_i}(\bar{w}_{j\epsilon})\right|\nonumber\\
&+\max_{1\leq j\leq L_N}\left|\hat{p}_{W_i}(\bar{w}_{j\epsilon})-E[\hat{p}_{W_i}(\bar{w}_{j\epsilon})]\right|\nonumber\\
&+\max_{1\leq j\leq L_N}\sup_{\|w-\bar{w}_{j\epsilon}\|<\epsilon}\left|E[\hat{p}_{W_i}(w)]-E[\hat{p}_{W_i}(\bar{w}_{j\epsilon})]\right|\nonumber\\
:=&R_1+R_2+R_3.
\end{align*}
Recall that for any $w=(w^{c},w^{d}),\tilde{w}=(\tilde{w}^{c},\tilde{w}^{d})$ in the same cube, we have $w^d=\tilde{w}^d$. For any $w,\tilde{w}$ in the same cube, the mean value theorem implies
\begin{align}\label{t82}
\sup_{\|w-\tilde{w}\|<\epsilon}\left|\hat{p}_{W_i}(w)-\hat{p}_{W_i}(\tilde{w})\right|
\leq&\sup_{\|w-\tilde{w}\|<\epsilon}\frac{1}{Nh^Q}\sum_{i=1}^N\left|\prod_{q=1}^Q\kappa\left(\frac{W^c_{iq}-w^c_q}{h}\right)-\prod_{q=1}^Q\kappa\left(\frac{W^c_{iq}-\tilde{w}^c_q}{h}\right)\right|\nonumber\\
\leq&\sup_{\|w-\tilde{w}\|<\epsilon}\frac{1}{Nh^{Q+1}}\sum_{i=1}^N\left\|\tilde{\kappa}^{(1)}\left(w^{c*}_h\right)\right\|\|w^c-\tilde{w}^c\|\nonumber\\
\leq& Cl_N h^{-(Q+1)},
\end{align}
where $w^{c*}_h$ denotes some intermediate value between $(W^c_i-w^c)/h$ and $(W^c_i-\tilde{w}^c)/h$, and $\tilde{\kappa}^{(1)}(v)$ represents the first order derivative of $\prod_{q=1}^Q\kappa(v_q)$ to $v=(v_1,...,v_Q)$, and the last line of \eqref{t82} is because of the boundedness of $\kappa(\cdot)$ and of its first order derivative (Assumption \ref{ass_ker}), and $\sup_{\|w-\tilde{w}\|<\epsilon}\|w^c-\tilde{w}^c\|\leq Cl_N$ with $l_N=[\ln(N)h^{Q+2}/N]^{1/2}$. By \eqref{t82}, we find immediately that
\begin{align}\label{t_R1}
 R_1=O_p([\ln(N)/(Nh^Q)]^{1/2}),\text{ and }R_3=O([\ln(N)/(Nh^Q)]^{1/2}).
\end{align}
In addition, replacing $\nu_i$ in Lemma \ref{lemma_uniform_rate_comb} with one, we know that $R_2=O_p([\ln(N)/(Nh^Q)]^{1/2})$.
Together with \eqref{t45_2} and \eqref{t_R1}, we conclude that
$$\sup_{w\in\Omega_W}|\hat{p}_{W_i}(w)-p_{W_i}(w)|=O_p\left([\ln(N)/(Nh^Q)]^{1/2}+h^2\right).$$

\textbf{Part 2}. In part 2, since we aim at $\gamma_1(z)$, let us consider $W_i=(W^{c'}_i,W^{d'}_i)'=(Z'_i,\mathcal{T}_i,\tilde{\mathcal{T}}_i)'$, and $\overline{\Omega}_{W}$ can be set as $\Omega_Z\times\{0,...,K\}^2 $. Recall that $Z_i$ may contain both discrete and continuous random variable. Denote $Z_i=(W^{c'}_i,W^{d'}_{i,1})'$. Then, we can partition $W_i=(W^{c'}_i,W^{d'}_{i,1},W^{d'}_{i,2})'=(Z'_i,W^{d'}_{i,2})'$, where $W^{d}_{i,2}=(\mathcal{T}_i,\tilde{\mathcal{T}}_i)'$. Recall $\overline{\Omega}_{W}\subseteq\Omega_{W}$ is a set that satisfies $\inf_{w\in\overline{\Omega}_{W}}p_{W_i}(w)\geq\delta>0$.

In what follows, we prove the uniform convergence of $\hat{E}[\varpi(Y_i)|W_i=w]=\frac{\frac{1}{N}\sum_{i=1}^{N}\varpi(Y_i)\hat{p}^{ker}_{i}(w)}{\frac{1}{N}\sum_{i=1}^{N}\hat{p}^{ker}_{i}(w)}$ over $\overline{\Omega}_{W}$. First, denote $u_i=\varpi(Y_i)-E[\varpi(Y_i)|W_i]$ and $m_\varpi(W_i)=E[\varpi(Y_i)|W_i]$. Let us write
\begin{align*}
&\hat{E}[\varpi(Y_i)|W_i=w]-E[\varpi(Y_i)|W_i=w]\nonumber\\
=&\frac{(\hat{E}[\varpi(Y_i)|W_i=w]-E[\varpi(Y_i)|W_i=w])\hat{p}_{W_i}(w)}{\hat{p}_{W_i}(w)}
=:\frac{\hat{g}_\varpi(w)}{\hat{p}_{W_i}(w)}
\end{align*}
where $\hat{g}_\varpi(w)=\frac{1}{N}\sum_{i=1}^{N}[\varpi(Y_i)-m_\varpi(w)]\hat{p}^{ker}_{i}(w)=\hat{g}_{1,\varpi}(w)+\hat{g}_{2,\varpi}(w)$, with
\begin{align*}
\hat{g}_{1,\varpi}(w)=&\frac{1}{N}\sum_{i=1}^{N}[m_{\varpi}(W_i)-m_\varpi(w)]\hat{p}^{ker}_{i}(w),~~~~
\hat{g}_{2,\varpi}(w)=\frac{1}{N}\sum_{i=1}^{N}u_i\hat{p}^{ker}_{i}(w).
\end{align*}
It is easy to see that $E\left[\hat{g}_{2,\varpi}(w)\right]=0$. For $\forall w\in\Omega_W$, we have
\begin{align}\label{t107u1}
\left|\hat{g}_{1,\varpi}(w)\right|\leq&\left|\hat{g}_{1,\varpi}(w)-E\left[\hat{g}_{1,\varpi}(w)\right]\right|+\left|E\left[\hat{g}_{1,\varpi}(w)\right]\right|,\\
\label{t107u2}
\left|\hat{g}_{2,\varpi}(w)\right|\leq&\left|\hat{g}_{2,\varpi}(w)-E\left[\hat{g}_{2,\varpi}(w)\right]\right|.
\end{align}
We divide the proof into two steps.
\begin{itemize}
  \item \textbf{Step 1}. We prove the uniform convergence of the terms in \eqref{t107u1}.
  \item \textbf{Step 2}. We prove the uniform convergence of the terms in \eqref{t107u2}.
\end{itemize}

\textbf{Step 1}. We first bound $E\left[\hat{g}_{1,\varpi}(w)\right]$. For $w=(w^c,w^d)$,
\begin{align}\label{t103u}
E\left[\hat{g}_{1,\varpi}(w)\right]=&E\left[[m_{\varpi}(W_i)-m_\varpi(w)]\hat{p}^{ker}_{i}(w)\right]\nonumber\\
=&\frac{1}{h^{Q}}\int[m_{\varpi}(w^{c*},w^{d})-m_\varpi(w)]
\prod_{q=1}^{Q}\kappa\left(\frac{w^{c*}_{q}-w^{c}_q}{h}\right)p_{W^c_i,W^d_i}\left(w^{c*},w^{d}\right)dw^{c*}\nonumber\\
 =&\int \Big[m_{\varpi}(w^{c}+hv,w^{d})-m_\varpi(w)\Big]p_{W^c_i,W^d_i}(w^c+hv,w^d)\prod_{q=1}^{Q}\kappa(v_{q})dv,
\end{align}
where we denote $w^c+hv:=(w^c_1+hv_{1},...,w^c_{Q}+hv_{Q})$, and the last line is obtained by the change of variables $v=(v_{1},...,v_{Q})'$ with $v_{q}=(w^{c*}_{q}-w^c_q)/h$ and $q=1,...,Q$.
Denote $m_\varpi^{(1)}(w)=\partial m_\varpi(w^c,w^d)/\partial w^c$ and $m_\varpi^{(2)}(w)=\partial^2 m_\varpi(w^c,w^d)/\partial w^c\partial (w^c)'$. By Taylor expansion, we can see that for $\tilde{w}$ between $(w^{c}+hv,w^{d})$ and $w=(w^{c},w^{d})$,
\begin{align}\label{t105u}
&[m_\varpi(w^{c}+hv,w^{d})-m_\varpi(w)]p_{W^c_i,W^d_i}(w^c+hv,w^d)\nonumber\\
=&[hm_\varpi^{(1)}(\tilde{w})'v+h^2v'm_\varpi^{(2)}(\tilde{w})v][hf^{(1)}_{c}(\tilde{w})'v+h^2v'f^{(2)}_{c}(\tilde{w})v+p_{W_i}(w)]\nonumber\\
=&h^2m_\varpi^{(1)}(\tilde{w})'vf^{(1)}_{c}(\tilde{w})'v+hm_\varpi^{(1)}(\tilde{w})'vp_{W_i}(w)+h^2v'm_\varpi^{(2)}(\tilde{w})vp_{W_i}(w)+O(h^3).
\end{align}
By Assumption \ref{ass_ker}, we know that $m_\varpi^{(1)}(w)$, $m_\varpi^{(2)}(w)$, and $p_{W_i}(w)$ are bounded over $\Omega_W$, and $\int v_q\kappa(v_q)dv_q=0$ and $\int v^2_q\kappa(v_q)dv_q=K_1$. Then, plugging \eqref{t105u} into \eqref{t103u} gives us
\begin{align}\label{t106u}
\sup_{w\in\Omega_W}|E\left[\hat{g}_{1,\varpi}(w)\right]|=&O(h^2).
\end{align}
Next, consider the first term on the right-hand side of \eqref{t107u1}. Consider the $L_N$ cubes introduced in Lemma \ref{lemma_uniform_rate_comb} that covers $\Omega_W$.
Recall that for any $w=(w^{c},w^{d}),\tilde{w}=(\tilde{w}^{c},\tilde{w}^{d})$ in the same cube, we set $w^d=\tilde{w}^d$. For any $w\in\Omega_W$, let $\bar{w}_{j\epsilon}$ denote the center of one cube that contains $w$. Then, for the length of the cube $\epsilon=l_N$, we have
\begin{align}\label{t48_g1}
\sup_{w\in\Omega_W}\left|\hat{g}_{1,\varpi}(w)-E[\hat{g}_{1,\varpi}(w)]\right|
\leq&\max_{1\leq j\leq L_N}\sup_{w\in\Omega_{W,j}}\left|\hat{g}_{1,\varpi}(w)-E[\hat{g}_{1,\varpi}(w)]\right|\nonumber\\
\leq&\max_{1\leq j\leq L_N}\sup_{\|w-\bar{w}_{j\epsilon}\|<\epsilon}\left|\hat{g}_{1,\varpi}(w)-\hat{g}_{1,\varpi}(\bar{w}_{j\epsilon})\right|\nonumber\\
&+\max_{1\leq j\leq L_N}\left|\hat{g}_{1,\varpi}(\bar{w}_{j\epsilon})-E[\hat{g}_{1,\varpi}(\bar{w}_{j\epsilon})]\right|\nonumber\\
&+\max_{1\leq j\leq L_N}\sup_{\|w-\bar{w}_{j\epsilon}\|<\epsilon}\left|E[\hat{g}_{1,\varpi}(w)]-E[\hat{g}_{1,\varpi}(\bar{w}_{j\epsilon})]\right|\nonumber\\
:=&R^{g_1}_1+R^{g_1}_2+R^{g_1}_3.
\end{align}
Denote $\Delta\kappa^Q(w,\tilde{w})=\left[\prod_{q=1}^Q\kappa\left(\frac{W^c_{iq}-w^c_q}{h}\right)-\prod_{q=1}^Q\kappa\left(\frac{W^c_{iq}-\tilde{w}^c_q}{h}\right)\right]$. Then,
\begin{align}\label{t111u}
&\sup_{\|w-\tilde{w}\|<\epsilon}\left|\hat{g}_{1,\varpi}(w)-\hat{g}_{1,\varpi}(\tilde{w})\right|
\leq\sup_{\|w-\tilde{w}\|<\epsilon}\frac{1}{Nh^Q}\sum_{i=1}^N\Bigg|[m_\varpi(W_i)-m_\varpi(w)]\prod_{q=1}^Q\kappa\Big(\frac{W^c_{iq}-w^c_q}{h}\Big)\nonumber\\
&~~~~~~-[m_\varpi(W_i)-m_\varpi(\tilde{w})]\prod_{q=1}^Q\kappa\Big(\frac{W^c_{iq}-\tilde{w}^c_q}{h}\Big)\Bigg|\nonumber\\
\leq&\sup_{\|w-\tilde{w}\|<\epsilon}\frac{1}{Nh^Q}\sum_{i=1}^N\left|m_\varpi(W_i)\Delta\kappa^Q(w,\tilde{w})\right|+\sup_{\|w-\tilde{w}\|<\epsilon}\frac{1}{Nh^Q}\sum_{i=1}^N\left|m_\varpi(\tilde{w})\Delta\kappa^Q(w,\tilde{w})\right|\nonumber\\
&~~~~~~+\sup_{\|w-\tilde{w}\|<\epsilon}\frac{1}{Nh^Q}\sum_{i=1}^N\left|[m_\varpi(w)-m_\varpi(\tilde{w})]\prod_{q=1}^Q\kappa\Big(\frac{W^c_{iq}-w^c_q}{h}\Big)\right|.
\end{align}
Recall that we partition $W_i$ as $W_i=(W^{c'}_i,W^{d'}_{i,1},W^{d'}_{i,2})'=(Z'_i,W^{d'}_{i,2})'$, where $W^{d}_{i,2}=(\mathcal{T}_i,\tilde{\mathcal{T}_i})'$, and $Y_i\perp W^d_{i,2}|\mathcal{T}^*_i,Z_i$ by Lemma \ref{lemma_unconf3}. Thus,
\begin{align}\label{m_wbar}
m_{\varpi}(w)=E[\varpi(Y_i)|W_i=w]=&\sum_{n^*\in\Omega_{\mathcal{T}^*}}E[\varpi(Y_i)|\mathcal{T}^*_i=n^*,W_i=w]p_{\mathcal{T}^*_i|W_i=w}(n^*)\nonumber\\
=&\sum_{n^*\in\Omega_{\mathcal{T}^*}}E[\varpi(Y_i)|\mathcal{T}^*_i=n^*,Z_i=z]p_{\mathcal{T}^*_i|W_i=w}(n^*).
\end{align}
Because $|E[\varpi(Y_i)|\mathcal{T}^*_i=n^*,Z_i=z]|<C$ for all $n^*\in\Omega_{\mathcal{T}^*}$ and $z\in\Omega_Z$ by Assumption \ref{ass_eigen_unique}, we know that
\begin{align}\label{m_wbar_bound}
\sup_{w\in\Omega_W}|m_\varpi(w)|\leq C\sum_{n^*\in\Omega_{\mathcal{T}^*}}p_{\mathcal{T}^*_i|W_i=w}(n^*)=C.
\end{align}
In addition, by mean value theorem and Assumption \ref{ass_ker}, $|\Delta\kappa^Q(w,\tilde{w})|<Ch^{-1}\|w-\tilde{w}\|=O_p(l_Nh^{-1})$. Then, the first and second terms on the right-hand side of \eqref{t111u} are $O_p(l_Nh^{-(Q+1)})$. For the third term, since $m_\varpi(w)$ is differentiable in $w^c$ with bounded first derivative (Assumption \ref{ass_ker}), we know that $|m_\varpi(w)-m_\varpi(\tilde{w})|<C\|w-\tilde{w}\|=O(l_N)$. Hence, due that $\kappa(\cdot)$ is bounded, the third term is a $O_p(l_Nh^{-Q})$. The above results and the fact that $l_N=[\ln(N)h^{Q+2}/N]^{1/2}$ lead to
\begin{align}\label{t112u}
\sup_{\|w-\tilde{w}\|<\epsilon}\left|\hat{g}_{1,\varpi}(w)-\hat{g}_{1,\varpi}(\tilde{w})\right|\leq& Cl_N h^{-(Q+1)}=O_p([\ln(N)/(Nh^Q)]^{1/2}).
\end{align}
By \eqref{t112u}, we find immediately that $R^{g_1}_1$ and $R^{g_1}_3$ are $O_p([\ln(N)/(Nh^Q)]^{1/2})$. In addition, if we replace $\nu_i$ in Lemma \ref{lemma_uniform_rate_comb} with $m_{\varpi}(W_i)-m_\varpi(w)$, we get from \eqref{m_wbar_bound} that $\sup_{\tilde{w}\in\Omega_W}\sigma^2_{\nu}(\tilde{w})=\sup_{\tilde{w}\in\Omega_W}E[\nu_i^2|W_i=\tilde{w}]\leq 4\sup_{\tilde{w}\in\Omega_W}m^2_\varpi(\tilde{w})\leq C'$ for some constant $C'>0$ (by Assumption \ref{ass_ker}). Thus, Lemma \ref{lemma_uniform_rate_comb} implies that $R^{g_1}_2$ is also a $O_p([\ln(N)/(Nh^Q)]^{1/2})$, which, together with \eqref{t106u} and \eqref{t112u}, leads to
\begin{align}\label{uniform_g1_final}
\sup_{w\in\Omega_W}\left|\hat{g}_{1,\varpi}(w)\right|=O_p\left([\ln(N)/(Nh^Q)]^{1/2}+h^2\right).\end{align}

\textbf{Step 2}.
Next, consider $\hat{g}_{2,\varpi}(w)=\frac{1}{N}\sum_{i=1}^{N}u_i\hat{p}^{ker}_i$ with $u_i=\varpi(Y_i)-E[\varpi(Y_i)|W_i]$. Consider the $L_N$ cubes introduced in Lemma \ref{lemma_uniform_rate_comb} that covers $\Omega_W$.
Recall that for any $w=(w^{c},w^{d}),\tilde{w}=(\tilde{w}^{c},\tilde{w}^{d})$ in the same cube, we let $w^d=\tilde{w}^d$. For any $w\in\Omega_W$, let $\bar{w}_{j\epsilon}$ denote the center of one cube that contains $w$. For the length of the cube $\epsilon=l_N$, we have
\begin{align}\label{t48_g2}
\sup_{w\in\Omega_W}\left|\hat{g}_{2,\varpi}(w)-E[\hat{g}_{2,\varpi}(w)]\right|
\leq&\max_{1\leq j\leq L_N}\sup_{w\in\Omega_{W,j}}\left|\hat{g}_{2,\varpi}(w)-E[\hat{g}_{2,\varpi}(w)]\right|\nonumber\\
\leq&\max_{1\leq j\leq L_N}\sup_{\|w-\bar{w}_{j\epsilon}\|<\epsilon}\left|\hat{g}_{2,\varpi}(w)-\hat{g}_{2,\varpi}(\bar{w}_{j\epsilon})\right|\nonumber\\
&+\max_{1\leq j\leq L_N}\left|\hat{g}_{2,\varpi}(\bar{w}_{j\epsilon})-E[\hat{g}_{2,\varpi}(\bar{w}_{j\epsilon})]\right|\nonumber\\
&+\max_{1\leq j\leq L_N}\sup_{\|w-\bar{w}_{j\epsilon}\|<\epsilon}\left|E[\hat{g}_{2,\varpi}(w)]-E[\hat{g}_{2,\varpi}(\bar{w}_{j\epsilon})]\right|\nonumber\\
:=&R^{g_2}_1+R^{g_2}_2+R^{g_2}_3.
\end{align}
By definition of $\hat{g}_{2,\varpi}(w)$ and the result $\sup_{\|w-\tilde{w}\|<\epsilon}|\Delta\kappa^Q(w,\tilde{w})|=O_p(l_Nh^{-1})$ proved in the previous step, we have
\begin{align}\label{t113u}
\sup_{\|w-\tilde{w}\|<\epsilon}\left|\hat{g}_{2,\varpi}(w)-\hat{g}_{2,\varpi}(\tilde{w})\right|
\leq&\sup_{\|w-\tilde{w}\|<\epsilon}\frac{1}{Nh^Q}\sum_{i=1}^N|u_i\Delta\kappa^Q(w,\tilde{w})|
\leq O_p(l_Nh^{-(Q+1)})\frac{1}{N}\sum_{i=1}^N|u_i|.\end{align}
Define $\Sigma^{|u|}_N=\sum_{k=1}^{q_N}\sum_{i,j\in\mathbb{S}_k}Cov(|u_i|,|u_j|)$ and we can see that
\begin{align}\label{t114u}
|\Sigma^{|u|}_N|\leq&\sum_{k=1}^{q_N}\sum_{i,j\in\mathbb{S}_k}Var(|u_i|)^{1/2}Var(|u_j|)^{1/2}\leq\sum_{k=1}^{q_N}\sum_{i,j\in\mathbb{S}_k}E[|u_i|^2]^{1/2}E[|u_j|^2]^{1/2}\nonumber\\
\leq&C\sum_{k=1}^{q_N}|\mathbb{S}_k|^2=O(N),
\end{align}
where the last line is because of the boundedness of $\sup_{w\in\Omega_W}E[|u_i|^2|W_i=w]$ and $p_{W_i}(w)$ over $\Omega_W$ in Assumption \ref{ass_ker}, the assumption that $|\mathbb{S}_k|\leq \bar{r}_N=O(1)$, and the fact that $q_N=O(N)$.
Applying Chebyshev's inequality, for any $\epsilon>0$ we have
\begin{align}\label{t115u}
&Pr\Big(\Big|\frac{1}{N}\sum_{i=1}^N\{|u_i|-E[|u_i|]\}\Big|>\epsilon\Big)\leq\frac{1}{N^2\epsilon^2}E\Big[\Big|\sum_{i=1}^N\{|u_i|-E[|u_i|]\}\Big|^2\Big]\nonumber\\
=&\frac{1}{N^2\epsilon^2}\Big(\sum_{k=1}^{q_N}\sum_{i,j\in\mathbb{S}_k}Cov(|u_i|,|u_j|)+\sum_{k=1}^{q_N}\sum_{i\in\mathbb{S}_k,j\not\in\mathbb{S}_k}Cov(|u_i|,|u_j|)\Big)\nonumber\\
=&\frac{1}{N^2\epsilon^2}O(\Sigma^{|u|}_N)=O(\frac{1}{N\epsilon^2}),
\end{align}
where the last line is due to Assumption \ref{ass_dependency_neighbor} and $\Sigma^{|u|}_N=O(N)$ from \eqref{t114u}. By choosing $\epsilon$ such that $\epsilon\rightarrow0$ and $N\epsilon^2\rightarrow\infty$, we get from \eqref{t115u} that $\frac{1}{N}\sum_{i=1}^N\{|u_i|-E[|u_i|]\}=o_p(1)$.
Moreover, since $E[|u_i|^{2+\delta}]<C$ is implied by Assumption \ref{ass_ker}, we know that $\frac{1}{N}\sum_{i=1}^NE[|u_i|]$ is bounded from above by Cauchy–Schwarz inequality. It further implies that $\frac{1}{N}\sum_{i=1}^N|u_i|=O_p(1)$. Then, \eqref{t113u} and $l_N=[\ln(N)h^{Q+2}/N]^{1/2}$ imply
\begin{align}\label{t116u}
&\sup_{\|w-\tilde{w}\|<\epsilon}\left|\hat{g}_{2,\varpi}(w)-\hat{g}_{2,\varpi}(\tilde{w})\right|
=O_p(l_Nh^{-(Q+1)})=O_p([\ln(N)/(Nh^Q)]^{1/2}),\end{align}
which further leads to the result that $R^{g_2}_1$ and $R^{g_2}_3$ in \eqref{t48_g2} are $O_p([\ln(N)/(Nh^Q)]^{1/2})$. Moreover, Lemma \ref{lemma_uniform_rate_comb} results in that $R^{g_2}_2=O_p([\ln(N)/(Nh^Q)]^{1/2})$. Hence, we can conclude that
\begin{align}\label{uniform_g2_final}
\sup_{w\in\Omega_W}|\hat{g}_{2,\varpi}(w)|=O_p([\ln(N)/(Nh^Q)]^{1/2}).
\end{align}
Then, it follows from \eqref{uniform_g1_final} and \eqref{uniform_g2_final} that $\sup_{w\in\Omega_W}|\hat{g}_{\varpi}(w)|=O_p(h^2+[\ln(N)/(Nh^Q)]^{1/2})$. Since we assume $\inf_{w\in\overline{\Omega}_W}p_{W_i}(w)\geq \delta>0$ for $\overline{\Omega}_W\subseteq\Omega_W$, and we have shown that $\hat{p}_{W_i}(w)$ converges to $p_{W_i}(w)$ over $\Omega_W$ in Part 1, it is easy to see $\inf_{w\in\overline{\Omega}_W}\hat{p}_{W_i}(w)\geq \delta>0$ almost surely. Hence,
\begin{align*}
\sup_{w\in\overline{\Omega}_W}&|\hat{E}[\varpi(Y_i)|W_i=w]-E[\varpi(Y_i)|W_i=w]|=\sup_{w\in\overline{\Omega}_W}\left|\frac{\hat{g}_\varpi(w)}{\hat{p}_{W_i}(w)}\right|\nonumber\\
\leq&\frac{\sup_{w\in\overline{\Omega}_W}\left|\hat{g}_\varpi(w)\right|}{\inf_{w\in\overline{\Omega}_W}|\hat{p}_{W_i}(w)|}=O_p([\ln(N)/(Nh^Q)]^{1/2}+h^2)\text{ almost surely}.
\end{align*}
\end{proof}

\bigskip

\begin{lemma}\label{lemma_Op}
Under Assumptions \ref{ass_consistency_m}, \ref{ass_normality} and the assumption that $x_{i,j}=(D_i,s_j,n_j,Z_i)$ is i.i.d. across $i$ for any given $(s_j,n_j)\in\{\mathfrak{g}_0,...,\mathfrak{g}_{K_{\mathcal{G}}}\}$, we have that
\begin{align*}
&\sup_{\theta\in\Theta}\frac{1}{N}\sum_{i=1}^{N}\left\|\frac{d^2m^*(x_{i,j};\theta)}{d\theta d\theta'}\right\|^2= O_p(1);~~~~\sup_{\theta\in\Theta}\frac{1}{N}\sum_{i=1}^{N}\left\|\frac{dm^*(x_{i,j};\theta)}{d\theta}\right\|^2= O_p(1).
\end{align*}
In addition, for any $\tilde{\theta}_N\overset{p}{\rightarrow}\theta^a$, we can show
\begin{align*}
&\frac{1}{N}\sum_{i=1}^{N}\left|m^*(x_{i,j};\tilde{\theta}_N)-m^*(x_{i,j};\theta^a)\right|^2=o_p(1);\\
&\frac{1}{N}\sum_{i=1}^{N}\Big\|\frac{dm^*(x_{i,j};\tilde{\theta}_N)}{d\theta}-\frac{dm^*(x_{i,j};\theta^a)}{d\theta}\Big\|^2=o_p(1);\\
&\frac{1}{N}\sum_{i=1}^{N}\Big\|\frac{d^2m^*(x_{i,j};\tilde{\theta}_N)}{d\theta d\theta'}-\frac{d^2m^*(x_{i,j};\theta^a)}{d\theta d\theta'}\Big\|^2=o_p(1).
\end{align*}
\end{lemma}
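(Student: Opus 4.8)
The plan is to reduce every one of the five claims to an ordinary weak law of large numbers applied to the i.i.d. sequence $\{x_{i,j}\}_{i=1}^N$ (for each fixed $j$), by invoking the domination conditions of Assumption \ref{ass_normality}. For the two supremum statements I would use the envelope functions directly; for the three increment statements I would first linearize in $\theta$ via the mean value theorem, peeling off a factor $\|\tilde\theta_N-\theta^0\|$ that is $o_p(1)$ by consistency, and then control the remaining average by the domination functions.

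First, for the two $O_p(1)$ claims. By Assumption \ref{ass_normality} (ii) the bounds $\|\frac{d^2m^*(x;\theta)}{d\theta d\theta'}\|^2\le H_1(x)$ and $\|\frac{dm^*(x;\theta)}{d\theta}\|^2\le H_2(x)$ hold uniformly over $\theta\in\Theta$, so the supremum over $\theta$ can be moved inside the average:
\begin{align*}
\sup_{\theta\in\Theta}\frac{1}{N}\sum_{i=1}^{N}\left\|\frac{d^2m^*(x_{i,j};\theta)}{d\theta d\theta'}\right\|^2\le \frac{1}{N}\sum_{i=1}^{N}H_1(x_{i,j}),
\end{align*}
and analogously with $H_2$. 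Since $x_{i,j}$ is i.i.d. across $i$ with $E[H_1(x_{i,j})]<\infty$ and $E[H_2(x_{i,j})]<\infty$, the weak law of large numbers yields $\frac{1}{N}\sum_{i=1}^N H_1(x_{i,j})\overset{p}{\to}E[H_1(x_{i,j})]$ and likewise for $H_2$, both finite constants; this gives the two $O_p(1)$ statements.

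Next, for the three $o_p(1)$ statements, I would apply the mean value theorem in $\theta$, evaluate at an intermediate point $\bar\theta$ between $\tilde\theta_N$ and $\theta^0$, and dominate the derivative by the appropriate envelope. For the first,
\begin{align*}
\left|m^*(x_{i,j};\tilde\theta_N)-m^*(x_{i,j};\theta^0)\right|^2\le \left\|\frac{\partial m^*(x_{i,j};\bar\theta)}{\partial\theta}\right\|^2\|\tilde\theta_N-\theta^0\|^2\le H_2(x_{i,j})\|\tilde\theta_N-\theta^0\|^2,
\end{align*}
so the average is bounded by $\|\tilde\theta_N-\theta^0\|^2\cdot\frac1N\sum_i H_2(x_{i,j})=o_p(1)\cdot O_p(1)=o_p(1)$, using $\tilde\theta_N\overset{p}{\to}\theta^0$ and the weak law just established. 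The second statement is identical with $\frac{dm^*}{d\theta}$ in place of $m^*$, since differentiating once more replaces the gradient envelope by the Hessian envelope $H_1$. For the third statement I would instead use Assumption \ref{ass_normality} (i), which supplies a uniform nonrandom bound on the third-order derivative; applying the mean value theorem to each entry of the Hessian gives
\begin{align*}
\left\|\frac{d^2m^*(x_{i,j};\tilde\theta_N)}{d\theta d\theta'}-\frac{d^2m^*(x_{i,j};\theta^0)}{d\theta d\theta'}\right\|^2\le C\|\tilde\theta_N-\theta^0\|^2,
\end{align*}
for a constant $C$ depending only on $d_\theta$ and the uniform third-derivative bound, so the average is $O(1)\cdot o_p(1)=o_p(1)$ with no further probabilistic input.

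The argument is largely routine; the only care needed is to check that the intermediate point $\bar\theta$ produced by the mean value theorem lies in $\Theta$ (which holds for $N$ large, since $\theta^0\in\mathrm{int}(\Theta)$ and $\tilde\theta_N\overset{p}{\to}\theta^0$, so the envelope bounds stated for all $\theta\in\Theta$ indeed apply at $\bar\theta$), and to separate cleanly the $o_p(1)$ factor $\|\tilde\theta_N-\theta^0\|^2$ from the $O_p(1)$ average of the envelope, which is exactly where the first two parts of the lemma feed into the last three. The main, and only mild, obstacle is organizing these mean-value arguments so that the single pair of domination functions $H_1,H_2$ together with the uniform third-derivative bound covers every case without re-deriving the weak law each time.
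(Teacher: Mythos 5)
Your proof is correct, and its overall architecture (envelope domination plus mean-value-theorem linearization, with the $o_p(1)$ factor $\|\tilde\theta_N-\theta^0\|^2$ peeled off by consistency) matches the paper's. The one genuine difference is in the two $O_p(1)$ statements: the paper invokes a uniform law of large numbers for i.i.d. data (Lemma 2.4 of Newey and McFadden) to show $\sup_{\theta\in\Theta}\bigl|\frac1N\sum_i\|\cdot\|^2-E[\|\cdot\|^2]\bigr|=o_p(1)$ and then bounds $\sup_\theta E[\|\cdot\|^2]$ by $E[H_1(x_{i,j})]$, whereas you bypass the ULLN entirely by moving the supremum inside through the envelope, $\sup_\theta\frac1N\sum_i\|\cdot\|^2\le\frac1N\sum_iH_1(x_{i,j})$, and applying the ordinary WLLN to the i.i.d. envelope average. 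Your route is more elementary and needs strictly fewer ingredients (no continuity-in-$\theta$ argument for uniform convergence), at the cost of yielding only the $O_p(1)$ bound rather than uniform convergence of the criterion itself, which is all the lemma asserts, so nothing is lost. Two further small points in your favor: for the gradient-difference term you correctly dominate the mean-value Hessian by $H_1$, whereas the paper's displayed bound at that step mistakenly reuses the first-derivative average $\sup_\theta\frac1N\sum_i\|\partial m^*(x_{i,j};\theta)/\partial\theta\|^2$ (a slip, since the MVT there produces the second derivative); and your explicit check that the intermediate point $\bar\theta$ lies in $\Theta$ with probability tending to one (using $\theta^0\in\mathrm{int}(\Theta)$ and $\tilde\theta_N\overset{p}{\to}\theta^0$) is a detail the paper passes over silently. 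Just be careful that the mean value theorem for the vector-valued gradient must be applied coordinatewise, with possibly different intermediate points per coordinate, as you do for the Hessian; since $H_1$ dominates uniformly in $\theta$, this only changes constants.
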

\begin{proof}[Proof of Lemma \ref{lemma_Op}]Because $(D_i,Z_i)$ is i.i.d., by Assumptions \ref{ass_consistency_m} and \ref{ass_normality} and the uniform convergence of i.i.d. samples (Lemma 2.4 of \citet{newey1994large}), we have, for any $j=0,...,K_{\mathcal{G}}$,
\begin{align}\label{t37}
&\sup_{\theta\in\Theta}\frac{1}{N}\sum_{i=1}^N\Big\|\frac{d^2m^*(x_{i,j};\theta)}{d\theta d\theta'}\Big\|^2\nonumber\\
\leq&\sup_{\theta\in\Theta}\Big|\frac{1}{N}\sum_{i=1}^N\Big\|\frac{d^2m^*(x_{i,j};\theta)}{d\theta d\theta'}\Big\|^2-E\Big[\Big\|\frac{d^2m^*(x_{i,j};\theta)}{d\theta d\theta'}\Big\|^2\Big]\Big|+\sup_{\theta\in\Theta}\Big|E\Big[\Big\|\frac{d^2m^*(x_{i,j};\theta)}{d\theta d\theta'}\Big\|^2\Big]\Big|\nonumber\\
=&o_p(1)+\sup_{\theta\in\Theta}\Big|E\Big[\Big\|\frac{d^2m^*(x_{i,j};\theta)}{d\theta d\theta'}\Big\|^2\Big]\Big|\nonumber\\
=&O_p(1),
\end{align}
where the last line is because $E[\|\frac{d^2m^*(x_{i,j};\theta)}{d\theta d\theta'}\|^2]\leq E[H_1(x_{i,j})]<\infty$ for all $\theta\in\Theta$ by Assumption \ref{ass_normality}.
Similar arguments can be used to show that $\sup_{\theta\in\Theta}\frac{1}{N}\sum_{i=1}^{N}\|\frac{dm^*(x_{i,j};\theta)}{d\theta}\|^2= O_p(1)$ for any $j=0,...,K_{\mathcal{G}}$. Besides, the mean value theorem gives
\begin{align*}
\frac{1}{N}\sum_{i=1}^{N}\left|m^*(x_{i,j};\tilde{\theta}_N)-m^*(x_{i,j};\theta^a)\right|^2=&\frac{1}{N}\sum_{i=1}^{N}\Big|\frac{d m^*(x_{i,j};\bar{\theta}_N)}{d\theta'}(\tilde{\theta}_N-\theta^a)\Big|^2\nonumber\\
\leq&\Big(\sup_{\theta\in\Theta}\frac{1}{N}\sum_{i=1}^{N}\Big\|\frac{d m^*(x_{i,j};\theta)}{d\theta}\Big\|^2\Big)\left\|\tilde{\theta}_N-\theta^a\right\|^2\nonumber\\
=&o_p(1),
\end{align*}
for $\bar{\theta}_N$ between $\tilde{\theta}_N$ and $\theta^a$, where the last line is because of $\sup_{\theta\in\Theta}\frac{1}{N}\sum_{i=1}^{N}\|\frac{dm^*(x_{i,j};\theta)}{d\theta}\|^2= O_p(1)$ and $\tilde{\theta}_N\overset{p}{\rightarrow}\theta^a$. Similarly, from \eqref{t37} we can obtain that
\begin{align*}
\frac{1}{N}\sum_{i=1}^{N}\Big\|\frac{d m^*(x_{i,j};\tilde{\theta}_N)}{d\theta}-\frac{d m^*(x_{i,j};\theta^a)}{d\theta}\Big\|^2
\leq&\Big(\sup_{\theta\in\Theta}\frac{1}{N}\sum_{i=1}^{N}\Big\|\frac{d^2 m^*(x_{i,j};\theta)}{d\theta d\theta'}\Big\|^2\Big)\left\|\tilde{\theta}_N-\theta^a\right\|^2
=o_p(1).
\end{align*}
Moreover, since for any $r,q=1,...,d_\theta$ we can obtain
\begin{align*}
\frac{d^2m^*(x_{i,j};\tilde{\theta}_N)}{d\theta_r d\theta_q}-\frac{d^2m^*(x_{i,j};\theta^a)}{d\theta_rd\theta_q}=&\frac{d}{d\theta'}\left(\frac{d^2m^*(x_{i,j};\bar{\theta}_N)}{d\theta_r d\theta_q}\right)(\tilde{\theta}_N-\theta^a),
\end{align*}
the following result holds by the uniformly bounded third derivative of $m^*(x;\theta)$ in Assumption \ref{ass_normality},
\begin{align*}
\frac{1}{N}\sum_{i=1}^{N}\Big\|\frac{d^2m^*(x_{i,j};\tilde{\theta}_N)}{d\theta d\theta'}-\frac{d^2m^*(x_{i,j};\theta^a)}{d\theta d\theta'}\Big\|^2\leq&\frac{1}{N}\sum_{r,q=1}^{d_\theta}\sum_{i=1}^{N}\Big|\frac{d^2m^*(x_{i,j};\tilde{\theta}_N)}{d\theta_r d\theta_q}-\frac{d^2m^*(x_{i,j};\theta^a)}{d\theta_rd\theta_q}\Big|^2\nonumber\\
\leq&\frac{1}{N}\sum_{r,q=1}^{d_\theta}\sum_{i=1}^{N}\Big\|\frac{d}{d\theta'}\left(\frac{d^2m^*(x_{i,j};\bar{\theta}_N)}{d\theta_r d\theta_q}\right)\Big\|^2\left\|\tilde{\theta}_N-\theta^a\right\|^2\nonumber\\
\leq&\sum_{r,q=1}^{d_\theta}\Big(\sup_{x\in \Omega_X}\sup_{\theta\in\Theta}\Big\|\frac{d}{d\theta'}\left(\frac{d^2m^*(x;\bar{\theta}_N)}{d\theta_r d\theta_q}\right)\Big\|^2\Big)\left\|\tilde{\theta}_N-\theta^a\right\|^2\nonumber\\
=&o_p(1).\end{align*}
\end{proof}

\subsubsection{Proofs of Results in Section 5}

\begin{proof}[Proof of Theorem \ref{theorem_ker}](a) Theorem \ref{theorem_ker} (a) follows directly from Lemma \ref{lemma_first_stage}. 

(b) We prove the desired result in two steps. Step 1 proves the uniform convergence of $\hat{p}_{\mathcal{T}_i|\mathcal{T}^*_i,Z_i}$. Step 2 establishes the uniform convergence of $\hat{\phi}_N$.\\

\textbf{Step 1}. Recall that we denote
$$\mathbf{B}^a:=\mathbf{E}_{\mathcal{T},\tilde{\mathcal{T}},Y|Z=z}\times \mathbf{F}^{-1}_{\mathcal{T},\tilde{\mathcal{T}}|Z=z}\text{ and }\mathbf{B}^0:=\mathbf{F}_{\mathcal{T}|\mathcal{T}^*,Z=z}\times \mathbf{T}_{Y|\mathcal{T}^*,Z=z}\times \mathbf{F}^{-1}_{\mathcal{T}|\mathcal{T}^*,Z=z}.$$
Let $\hat{\mathbf{B}}^a=\hat{\mathbf{E}}_{\mathcal{T},\tilde{\mathcal{T}},Y|Z=z}\times \hat{\mathbf{F}}^{-1}_{\mathcal{T},\tilde{\mathcal{T}}|Z=z}$, where $\hat{\mathbf{E}}_{\mathcal{T},\tilde{\mathcal{T}},Y|Z=z}$ and $ \hat{\mathbf{F}}_{\mathcal{T},\tilde{\mathcal{T}}|Z=z}$ are the estimated matrices constructed using $\hat{\gamma}_N$. Denote $$\zeta^0=vec(\mathbf{B}^0),~\zeta^a=vec(\mathbf{B}^a),\text{ and }\hat{\zeta}_N=vec(\hat{\mathbf{B}}^a)\text{ as an estimator for }\zeta^0.$$
Given the uniform convergence of $\hat{\zeta}_N$ to  $\zeta^a$ (i.e., the uniform convergence of $\hat{\gamma}_N$ to $\gamma^0$) proved in Theorem \ref{theorem_ker} (a), we only need to consider a small neighborhood of $\gamma^0$ such that $\|\gamma-\gamma^0\|_\infty\leq\epsilon$ with $\epsilon=o(1)$. Denote $\psi$ to be the map such that $\psi(\hat{\zeta}_N)$ and $\psi(\zeta^a)$ are the eigenvectors of $\hat{\mathbf{B}}^a$ and $\mathbf{B}^a$, respectively.   Using the same proof for Lemma 3 in \citet{hu2008identification}, we can show that
\begin{align}\label{uniform_eigenvector}
  \sup_{\|\hat{\gamma}_N-\gamma^0\|_\infty\leq\epsilon}\left\|\psi(\hat{\zeta}_N)-\psi(\zeta^a)\right\|_\infty&=\sup_{\|\hat{\gamma}_N-\gamma^0\|_\infty\leq\epsilon}O_p\left(\|\hat{\zeta}_N-\zeta^a\|_\infty\right)
=O_p\left(\|\hat{\gamma}_N-\gamma^0\|_\infty\right).
\end{align}

\textbf{Step 2}.
Given Theorem \ref{decomposition_weight}, we have each element in $\phi$ can be written as $$p_{S^*_i,\mathcal{T}^*_i|S_i=s,\mathcal{T}_i=n,Z_i=z}(s^*,n^*)=\frac{p_{\Delta S_i|\Delta\mathcal{T}_i=\Delta n,Z_i=z}(\Delta s)p_{\mathcal{T}_i|\mathcal{T}^*_i=n^*,Z_i=z}(n)p_{\mathcal{T}^*_i|Z_i=z}(n^*)}{p_{\mathcal{T}_i|Z_i=z}(n)}.$$ 
By the definition of $\phi_l(\varphi_l)$ and $\varphi_l$ in \eqref{definition_varphia}, we have
\begin{align}\label{t57}
&\frac{d\phi_l(\varphi_l)}{d\varphi_l'}
  =\left(\frac{\varphi_{2l}\varphi_{3l}}{\varphi_{4l}},\frac{\varphi_{1l}\varphi_{3l}}{\varphi_{4l}},\frac{\varphi_{1l}\varphi_{2l}}{\varphi_{4l}},-\frac{\varphi_{1l}\varphi_{2l}\varphi_{3l}}{\varphi^2_{4l}}\right).
\end{align}
Recall we assume that $\varphi_{4l}^0>\epsilon>0$ in Assumption \ref{ass_support}. By the uniform convergence of $\hat{\gamma}_N$ to $\gamma^0$ and the fact that $\varphi^a_{4l}=\varphi^0_{4l}$, we know that $\varphi^a_{4l}$ is uniformly bounded away from zero and the same holds for $\hat{\varphi}_{4l,N}$ for large enough $N$. Therefore, for any intermediate value $\tilde{\varphi}_l$ between $\varphi^a_l$ and $\hat{\varphi}_{l,N}$, there exists some constant $C>0$ such that
$\sup_{\|\hat{\gamma}_N-\gamma^0\|_\infty\leq\epsilon}\left\|d\phi_l(\tilde{\varphi_l})/d\varphi_l'\right\|_\infty\leq C$ as $\tilde{\varphi}_{1l}$, $\tilde{\varphi}_{2l}$, $\tilde{\varphi}_{3l}$, and $\tilde{\varphi}_{4l}$ are all probabilities of discrete variables so that are bounded from above.
By the mean value theorem, for some $\epsilon\rightarrow0$,
\begin{align}\label{t36}
\sup_{\|\hat{\gamma}_{N}-\gamma^0\|_\infty<\epsilon}\|\hat{\phi}_{l,N}-\phi^a_l\|_\infty=&\sup_{\|\hat{\gamma}_N-\gamma^0\|_\infty<\epsilon}\|\phi_l(\hat{\varphi}_{l,N})-\phi_l(\varphi^a_l)\|_\infty\nonumber\\
 \leq&\sup_{\|\hat{\gamma}_N-\gamma^0\|_\infty<\epsilon}\left\|\frac{d\phi_l(\tilde{\varphi_l})}{d\varphi_l'}\right\|_\infty\|\hat{\varphi}_{l,N}-\varphi^a_l\|_\infty\nonumber\\
 \leq&C\sup_{\|\hat{\gamma}_N-\gamma^0\|_\infty<\epsilon}\|\hat{\varphi}_{l,N}-\varphi^a_l\|_\infty\nonumber\\
 =&O_p\left(\|\hat{\gamma}_N-\gamma^0\|_\infty\right).
\end{align}
In addition, by Theorems \ref{lemma_iden} and \ref{theorem_id_N}, we can get
\begin{align*}
\sup_{\|\hat{\gamma}_N-\gamma^0\|_\infty<\epsilon}\|\hat{\phi}_{l,N}-\phi^0_l\|_\infty\leq& \sup_{\|\hat{\gamma}_N-\gamma^0\|_\infty<\epsilon}\|\hat{\phi}_{l,N}-\phi^a_l\|_\infty+\|\phi^a_l-\phi^0_l\|_\infty\\
=&O_p\left(\|\hat{\gamma}_N-\gamma^0\|_\infty+\triangle_{K}\right).
\end{align*}
Furthermore, since $\phi_l=\phi_l(\varphi_l)$ is a function of $\varphi_l$, applying Taylor expansion to order two leads to
\begin{align}\label{t_phi2nd}
\sup_{\|\hat{\gamma}_N-\gamma^0\|_\infty<\epsilon}\Big\|\hat{\phi}_{l,N}-\phi^a_l-\frac{\partial\phi_l(\tilde{\varphi}_l)}{\partial\varphi_l}(\hat{\varphi}_{l,N}-\varphi^a_l)\Big\|_\infty=&O_p\left(\|\hat{\varphi}_{l,N}-\varphi^a_l\|^2_\infty\right)\nonumber\\
 =&O_p\left(\|\hat{\gamma}_N-\gamma^0\|^2_\infty\right).
\end{align}
Since $\phi$ has a finite dimension, repeating the above process for all the elements in $\phi$ gives us the desirable result.
\end{proof}

\bigskip

\begin{proof}[Proof of Theorem \ref{theorem_consistency}]Recall, we denote $x=(d,s,n,z)\in\Omega_X$, $x_j=(d,\mathfrak{g}_j,z)$, and $\mathfrak{g}_j=(s_j,n_j)$ with $j=0,1,...,K_\mathcal{G}$. Let 
\begin{align*}
m(x;\theta)=&\sum_{\mathfrak{g}^*\in\Omega_{\mathcal{G}^*}}m^*(d,\mathfrak{g}^*,z;\theta)p^0_{\mathcal{G}^*_i|X_i=x}(\mathfrak{g}^*),\\
m^a(x;\theta,\phi)=&\sum_{j=0}^{K_\mathcal{G}}m^*(x_j;\theta)p_{\mathcal{G}^*_i|X_i=x}(\mathfrak{g}_j).
\end{align*}
Then, for $\phi^0=vec\left(\mathbf{F}_{\mathcal{G}^*|\mathcal{G},Z=z}\right)$, $\phi^a=vec\left(\mathbf{F}^a_{\mathcal{G}^*|\mathcal{G},Z=z}\right)$, and $\hat{\phi}_N=vec\left(\hat{\mathbf{F}}^a_{\mathcal{G}^*|\mathcal{G},Z=z}\right)$, we have
\begin{align*}
m^a(x;\theta,\phi^0)=&\sum_{j=0}^{K_\mathcal{G}}m^*(x_j;\theta)p^0_{\mathcal{G}^*_i|X_i=x}(\mathfrak{g}_j)\\
m^a(x;\theta,\phi^a)=&\sum_{j=0}^{K_\mathcal{G}}m^*(x_j;\theta)p^a_{\mathcal{G}^*_i|X_i=x}(\mathfrak{g}_j)\\
m^a(x;\theta,\hat{\phi}_N)=&\sum_{j=0}^{K_\mathcal{G}}m^*(x_j;\theta)\hat{p}^a_{\mathcal{G}^*_i|X_i=x}(\mathfrak{g}_j).
\end{align*}
Recall that we define
\begin{align*}
\mathcal{L}^0_{\mathbb{P}}(\theta)=&E\left\{\tau_i\left[Y_i-m\left(X_i;\theta\right)\right]^2\right\},~~
\mathcal{L}^a_{\mathbb{P}}(\theta,\phi)=E\left\{\tau_i\left[Y_i-m^a\left(X_i;\theta,\phi\right)\right]^2\right\},\\
\mathcal{L}^a_N(\theta,\phi)=&\frac{1}{N}\sum_{i=1}^N\tau_i\left[Y_i-m^a\left(X_i;\theta,\phi\right)\right]^2.
\end{align*}
Below, we proceed in two steps.  In step 1, we show that $\|\hat{\theta}_N-\theta^a\|=o_p(1)$. In step 2, we show that $\|\theta^0-\theta^a\|=O(\triangle_K)$.

\textbf{Step 1}. First of all, given the proofs in Theorem \ref{theorem_ker}, we can see that \begin{align*}
\sup_{\|\hat{\gamma}_N-\gamma^0\|_\infty<\epsilon}\|\hat{\phi}_{N}-\phi^a\|_\infty=&O_p\left(\|\hat{\gamma}_N-\gamma^0\|_\infty\right)=o_p(1).
\end{align*}
Then,
\begin{align}\label{t73_start}
 \sup_{\theta\in\Theta}\left|\mathcal{L}^a_{\mathbb{P}}(\theta,\phi^a)-\mathcal{L}^a_N(\theta,\hat{\phi}_N)\right|
 \leq&\sup_{\theta\in\Theta}\left|\mathcal{L}^a_{\mathbb{P}}(\theta,\phi^a)-\mathcal{L}^a_N(\theta,\phi^a)\right|
+\sup_{\theta\in\Theta}\left|\mathcal{L}^a_N(\theta,\phi^a)-\mathcal{L}^a_N(\theta,\hat{\phi}_N)\right|.
\end{align}
We start from the second term on the right-hand side of the above equation. For notation simplicity, we introduce the following notations:
\begin{equation}
\begin{aligned}\label{eq_notation1}
&\Delta \hat{p}^*_{i,j}=\hat{p}^a_{\mathcal{G}^*_i|X_i}(\mathfrak{g}_j)-p^a_{\mathcal{G}^*_i|X_i}(\mathfrak{g}_j),~~~e_i(\theta,\phi)=Y_i-m^a(X_i;\theta,\phi).
\end{aligned}
\end{equation} 
We can get 
\begin{align*}
\mathcal{L}^a_N(\theta,\hat{\phi}_N)-\mathcal{L}^a_N(\theta,\phi^a)
=&\frac{1}{N}\sum_{i=1}^N\tau_i\left[e_i\left(\theta,\hat{\phi}_N\right)^2-e_i\left(\theta,\phi^a\right)^2\right]\nonumber\\
=&\frac{1}{N}\sum_{i=1}^N\tau_i\left[m^a\left(X_i;\theta,\hat{\phi}_N\right)-m^a\left(X_i;\theta,\phi^a\right)\right]^2\nonumber\\
&~~~~-\frac{2}{N}\sum_{i=1}^N\tau_ie_i\left(\theta,\phi^a\right)\left[m^a\left(X_i;\theta,\hat{\phi}_N\right)-m^a\left(X_i;\theta,\phi^a\right)\right]\nonumber\\
=&\frac{1}{N}\sum_{i=1}^N\tau_i\Big[\sum_{j=0}^{K_\mathcal{G}}m^*(x_{i,j};\theta)\Delta \hat{p}^*_{i,j}\Big]^2-\frac{2}{N}\sum_{i=1}^N\tau_ie_i\left(\theta,\phi^a\right)\Big[\sum_{j=0}^{K_\mathcal{G}}m^*(x_{i,j};\theta)\Delta \hat{p}^*_{i,j}\Big],
\end{align*}
where $x_{i,j}=(D_i,s_j,n_j,Z_i)$. Because of the uniform convergence of $\hat{\gamma}_N$, we focus on a small neighborhood of $\gamma^0$. Due to the boundedness of $\tau(x)$ and the Cauchy–Schwarz inequality, we have
\begin{align*}
&\left|\mathcal{L}^a_N(\theta,\hat{\phi}_N)-\mathcal{L}_N(\theta,\phi^a)\right|\nonumber\\
\leq&\frac{C}{N}\sum_{i=1}^N\sum_{j=0}^{K_\mathcal{G}}m^*(x_{i,j};\theta)^2\sum_{j=0}^{K_\mathcal{G}}(\Delta \hat{p}^*_{i,j})^2+\frac{2C}{N}\sum_{i=1}^N\sum_{j=0}^{K_\mathcal{G}}\left|e_i\left(\theta,\phi^a\right)\right|\left|m^*(x_{i,j};\theta)\right|\left|\Delta \hat{p}^*_{i,j}\right|\nonumber\\
\leq&C(\sup_{\|\hat{\gamma}_N-\gamma^0\|_\infty\leq\epsilon}\left\|\hat{\phi}_N-\phi^a\right\|_\infty)^2\frac{1}{N}\sum_{j=0}^{K_\mathcal{G}}\sum_{i=1}^Nm^*(x_{i,j};\theta)^2\nonumber\\
&+2C\sup_{\|\hat{\gamma}_N-\gamma^0\|_\infty\leq\epsilon}\left\|\hat{\phi}_N-\phi^a\right\|_\infty\sum_{j=0}^{K_\mathcal{G}}\Big[\frac{1}{N}\sum_{i=1}^Ne_i\left(\theta,\phi^a\right)^2\Big]^{1/2}\Big[\frac{1}{N}\sum_{i=1}^{N}m^*(x_{i,j};\theta)^2\Big]^{1/2}.
\end{align*}
Because $(D_i,Z_i)$ is i.i.d., then $x_{i,j}=(D_i,s_j,n_j,Z_i)$ is also i.i.d. Under Assumption \ref{ass_consistency_m}, we have
\begin{align}\label{t65}
\sup_{\theta\in\Theta}\frac{1}{N}\sum_{i=1}^Nm^*(x_{i,j};\theta)^2
\leq&\sup_{\theta\in\Theta}\Big|\frac{1}{N}\sum_{i=1}^Nm^*(x_{i,j};\theta)^2-E\left[m^*(x_{i,j};\theta)^2\right]\Big|\nonumber\\
&~~~~+\sup_{\theta\in\Theta}\left|E\left[m^*(x_{i,j};\theta)^2\right]\right|\nonumber\\
=&O_p(1),
\end{align}
where the first term is a $o_p(1)$ because of the uniform convergence of i.i.d. samples (Lemma 2.4 of \citet{newey1994large}), and the second term is a $O(1)$ because we have that $\sup_{\theta\in\Theta}E\left[m^*(x_{i,j};\theta)^2\right]\leq E[h_1(x_{i,j})]<\infty$ by Assumption \ref{ass_consistency_m}. In addition,
\begin{align}\label{t102}
 \sup_{\theta\in\Theta}\frac{1}{N}\sum_{i=1}^Ne_i\left(\theta,\phi^a\right)^2\leq&\sup_{\theta\in\Theta}\Big|\frac{1}{N}\sum_{i=1}^N\left(e_i(\theta,\phi^a)^2-E\left[e_i(\theta,\phi^a)^2\right]\right)\Big|
+\sup_{\theta\in\Theta}\left|E\left[e_i(\theta,\phi^a)^2\right]\right|,
\end{align}
where the first term on the right-hand side of \eqref{t102} is a $o_p(1)$ based on Assumption \ref{ass_dependency_neighbor}, Assumptions \ref{ass_ker} and \ref{ass_consistency_m}, and the uniform convergence in Lemma \ref{lemma_Newey_Mac}. In addition, due to Assumption \ref{ass_consistency_m}, we know that the second term on the right-hand side of \eqref{t102} is $O_p(1)$. Hence, $\sup_{\theta\in\Theta}\frac{1}{N}\sum_{i=1}^Ne_i(\theta,\phi^a)^2=O_p(1)$ and we can conclude that
\begin{align}\label{t74}
\sup_{\theta\in\Theta}\left|\mathcal{L}^a_N(\theta,\hat{\phi}_N)-\mathcal{L}^a_N(\theta,\phi^a)\right|
=&O_p\Big(\sup_{\|\hat{\gamma}_N-\gamma^0\|_\infty\leq\epsilon}\|\hat{\phi}_N-\phi^a\|_\infty\Big)
=o_p\left(1\right).
\end{align}
In addition, by definition, we can write the first term on the right-hand side of \eqref{t73_start} as
$$\mathcal{L}^a_N(\theta,\phi^a)-\mathcal{L}^a_{\mathbb{P}}(\theta,\phi^a)=\frac{1}{N}\sum_{i=1}^N(\tau_ie_i(\theta,\phi^a)^2-E[\tau_ie_i(\theta,\phi^a)^2]).$$ We can show the uniform convergence of $\mathcal{L}^a_N(\theta,\phi^a)-\mathcal{L}^a_{\mathbb{P}}(\theta,\phi^a)$ by verifying all conditions in Lemma \ref{lemma_Newey_Mac}. First, conditions (i), (ii), (iii) and (iv)-(c) of Lemma \ref{lemma_Newey_Mac} are trivially satisfied by Assumption \ref{ass_consistency_m}. Second, condition (iv) (a) of Lemma \ref{lemma_Newey_Mac} holds because of Assumption \ref{ass_dependency_neighbor}. In addition, since it is assumed that $r_N=O(1)$, we have verified that all required conditions of Lemma \ref{lemma_Newey_Mac}. It then implies
\begin{align}\label{t77}
  \sup_{\theta\in\Theta}\left|\mathcal{L}^a_N(\theta,\phi^a)-\mathcal{L}^a_{\mathbb{P}}(\theta,\phi^a)\right|
  =o_p(1).
\end{align}
Then, plugging \eqref{t74} and \eqref{t77} into \eqref{t73_start}, we can obtain
\begin{align}\label{t71_1}
 \sup_{\theta\in\Theta}\left|\mathcal{L}^a_{\mathbb{P}}(\theta,\phi^a)-\mathcal{L}^a_N(\theta,\hat{\phi}_N)\right|
 =o_p(1).\end{align}

Recall that we assume the Hessian matrix of $\mathcal{L}^a_{\mathbb{P}}(\theta,\phi^a)$ w.r.t. $\theta$ has full rank over $\Theta$. Thus, $\theta^a$ uniquely minimizes $\mathcal{L}^a_{\mathbb{P}}(\theta,\phi^a)$. For any $\delta>0$, there exists a $\epsilon>0$, such that $\|\hat{\theta}_N-\theta^a\|>\delta$ implies $\mathcal{L}^a_{\mathbb{P}}(\hat{\theta}_N,\phi^a)-\mathcal{L}^a_{\mathbb{P}}(\theta^a,\phi^a)>\epsilon$.
Therefore, 
\begin{align*}
Pr(\|\hat{\theta}_N-\theta^a\|>\delta)
  \leq&Pr(\mathcal{L}^a_{\mathbb{P}}(\hat{\theta}_N,\phi^a)-\mathcal{L}^a_{\mathbb{P}}(\theta^a,\phi^a)>\epsilon)\nonumber\\
  =&Pr(\mathcal{L}^a_{\mathbb{P}}(\hat{\theta}_N,\phi^a)-\mathcal{L}^a_N(\hat{\theta}_N,\hat{\phi}_N)+\mathcal{L}^a_N(\hat{\theta}_N,\hat{\phi}_N)-\mathcal{L}^a_{\mathbb{P}}(\theta^a,\phi^a)>\epsilon)\nonumber\\
  \leq&Pr(\mathcal{L}^a_{\mathbb{P}}(\hat{\theta}_N,\phi^a)-\mathcal{L}^a_N(\hat{\theta}_N,\hat{\phi}_N)+\mathcal{L}^a_N(\theta^a,\hat{\phi}_N)-\mathcal{L}^a_{\mathbb{P}}(\theta^a,\phi^a)>\epsilon)\nonumber\\
  \leq&Pr(\sup_{\theta\in\Theta}\left|\mathcal{L}^a_{\mathbb{P}}(\theta,\phi^a)-\mathcal{L}^a_N(\theta,\hat{\phi}_N)\right|>\epsilon)\nonumber\\
  \rightarrow&0,
\end{align*}
where the third line is by the definition of $\hat{\theta}_N$ and the last line is due to \eqref{t71_1}.

\textbf{Step 2}. Because we assume that the Hessian matrices of $\mathcal{L}^0_{\mathbb{P}}(\theta)$ and $\mathcal{L}^a_{\mathbb{P}}(\theta,\phi^a)$ w.r.t. $\theta$ both have full rank over $\Theta$, we know that $\theta^0$ minimizes $\mathcal{L}^0_{\mathbb{P}}(\theta)$ and $\theta^a$ minimizes $\mathcal{L}^a_{\mathbb{P}}(\theta,\phi^a)$. Hence, we have
$$0=\frac{\partial \mathcal{L}^0_{\mathbb{P}}(\theta^0)}{\partial\theta}=\frac{\partial \mathcal{L}^a_{\mathbb{P}}(\theta^a,\phi^a)}{\partial\theta}.$$
Then, it is easy to see that
\begin{align}\label{consistency_pf1}
\frac{\partial \mathcal{L}^0_{\mathbb{P}}(\theta^0)}{\partial\theta}-\frac{\partial \mathcal{L}^0_{\mathbb{P}}(\theta^a)}{\partial\theta}=&\frac{\partial \mathcal{L}^a_{\mathbb{P}}(\theta^a,\phi^a)}{\partial\theta}-\frac{\partial \mathcal{L}^0_{\mathbb{P}}(\theta^a)}{\partial\theta}\nonumber\\
=&\left(\frac{\partial \mathcal{L}^a_{\mathbb{P}}(\theta^a,\phi^a)}{\partial\theta}-\frac{\partial \mathcal{L}^a_{\mathbb{P}}(\theta^a,\phi^0)}{\partial\theta}\right)+\left(\frac{\partial \mathcal{L}^a_{\mathbb{P}}(\theta^a,\phi^0)}{\partial\theta}-\frac{\partial \mathcal{L}^0_{\mathbb{P}}(\theta^a)}{\partial\theta}\right).
\end{align}
We start from the second term on the right-hand side of \eqref{consistency_pf1}. Note that
\begin{align}\label{obj_P}
\mathcal{L}^0_{\mathbb{P}}(\theta)
=&E\left\{\tau_i\left[Y_i-m\left(X_i;\theta\right)\right]^2\right\}\nonumber\\
=&E\left\{\tau_i\left[Y_i-m^a\left(X_i;\theta,\phi^0\right)+m^a\left(X_i;\theta,\phi^0\right)-m\left(X_i;\theta\right)\right]^2\right\}\nonumber\\
=&E\Big\{\tau_ie_i(\theta,\phi^0)^2+2\tau_ie_i(\theta,\phi^0)\left[m^a\left(X_i;\theta,\phi^0\right)-m\left(X_i;\theta\right)\right]\nonumber\\
&~~~~\qquad\qquad\qquad\qquad\qquad\qquad+\tau_i\left[m^a\left(X_i;\theta,\phi^0\right)-m\left(X_i;\theta\right)\right]^2\Big\}.
\end{align}
Then, because $\mathcal{L}^a_{\mathbb{P}}(\theta,\phi^0)=E[\tau_ie_i(\theta,\phi^0)^2]$, we can get from \eqref{obj_P} that
\begin{align*}
&\mathcal{L}^0_{\mathbb{P}}(\theta)-\mathcal{L}^a_{\mathbb{P}}(\theta,\phi^0)\\
=&2E\left\{\tau_ie_i(\theta,\phi^0)\left[m^a\left(X_i;\theta,\phi^0\right)-m\left(X_i;\theta\right)\right]\right\}+E\left\{\tau_i\left[m^a\left(X_i;\theta,\phi^0\right)-m\left(X_i;\theta\right)\right]^2\right\}.
\end{align*}
The difference between $m\left(X_i;\theta\right)$ and $m^a\left(X_i;\theta,\phi^0\right)$ is
\begin{align*}
m\left(X_i;\theta\right)- m^a\left(X_i;\theta,\phi^0\right)= \sum_{j\in \Omega_{\mathcal{G}^*}/ \{ 0,...,K_\mathcal{G}\}}m^*(x_{ij};\theta)p^0_{\mathcal{G}^*_i|X_i}(\mathfrak{g}_j),
\end{align*}
where $x_{i,j}=(D_i,s_j,n_j,Z_i)$ with $(s_j,n_j)\in\Omega_{\mathcal{S}^*,\mathcal{T}^*}$. By Assumption \ref{ass_consistency_m} and the dominated convergence theorem, we can interchange the integral and derivative so that
\begin{align*}
\frac{\partial\mathcal{L}^0_{\mathbb{P}}(\theta)}{\partial\theta}-\frac{\mathcal{L}^a_{\mathbb{P}}(\theta,\phi^0)}{\partial\theta}
=&2E\left\{\tau_i\frac{\partial e_i(\theta,\phi^0)}{\partial\theta}\left[m^a\left(X_i;\theta,\phi^0\right)-m\left(X_i;\theta\right)\right]\right\}\\
-&2E\left\{\tau_ie_i(\theta,\phi^0)\sum_{j\in \Omega_{\mathcal{G}^*}/ \{ 0,...,K_\mathcal{G}\}}\frac{\partial m^*(x_{ij};\theta)}{\partial\theta}p^0_{\mathcal{G}^*_i|X_i}(\mathfrak{g}_j)\right\}\\
-&2E\left\{\tau_i\left[m^a\left(X_i;\theta,\phi^0\right)-m\left(X_i;\theta\right)\right]\sum_{j\in \Omega_{\mathcal{G}^*}/ \{ 0,...,K_\mathcal{G}\}}\frac{\partial m^*(x_{ij};\theta)}{\partial\theta}p^0_{\mathcal{G}^*_i|X_i}(\mathfrak{g}_j)\right\},
\end{align*}
where $\frac{\partial e_i(\theta,\phi^0)}{\partial\theta}=-\sum_{j=0}^{K_\mathcal{G}}\frac{\partial m^*(x_{ij};\theta)}{\partial\theta}p^0_{\mathcal{G}^*_i|X_i}(\mathfrak{g}_j)$.
Given Assumption \ref{ass_support} and the assumption that $m^*\left(x;\theta\right)$ is uniformly bounded, we have
$\sup_{\theta\in\Theta}E\left|m\left(X_i;\theta\right)- m^a\left(X_i;\theta,\phi^0\right)\right|^2= O( \triangle^2_K).$ In addition, Assumption \ref{ass_consistency_m}, together with $\|\phi^0-\phi^a\|=O(\triangle_K)$, implies that $\sup_{\theta\in\Theta}\left|E\left[e_i(\theta,\phi^0)^2\right]\right|<C$. Since we assume  $\sup_{\theta\in\Theta}E\left[\left\|\frac{\partial m^*(x_{ij};\theta)}{\partial\theta}\right\|\right]<C$ (Assumption \ref{ass_consistency_m}), by Cauchy-Schwarz inequality, we can obtain
\begin{align}\label{consistency_pf2}
\left\|\frac{\partial\mathcal{L}^0_{\mathbb{P}}(\theta^a)}{\partial\theta}-\frac{\mathcal{L}^a_{\mathbb{P}}(\theta^a,\phi^0)}{\partial\theta}\right\|=O( \triangle_K).
\end{align}
Next, we move on to the first term on the right-hand side of \eqref{consistency_pf1}. Note that
\begin{align*}
&\mathcal{L}^a_{\mathbb{P}}(\theta,\phi^a)-\mathcal{L}^a_{\mathbb{P}}(\theta,\phi^0)\\
=&E\left\{\tau_i\left[Y_i-m^a\left(X_i;\theta,\phi^a\right)\right]^2\right\}-E\left\{\tau_i\left[Y_i-m^a\left(X_i;\theta,\phi^0\right)\right]^2\right\}\\
=&E\left\{\tau_i\left[Y_i-m^a\left(X_i;\theta,\phi^0\right)+m^a\left(X_i;\theta,\phi^0\right)-m^a\left(X_i;\theta,\phi^a\right)\right]^2\right\}-E\left\{\tau_i e_i(\theta,\phi^0)^2\right\}\\
=&E\left\{\tau_i\left[e_i(\theta,\phi^0)+m^a\left(X_i;\theta,\phi^0\right)-m^a\left(X_i;\theta,\phi^a\right)\right]^2\right\}-E\left\{\tau_i e_i(\theta,\phi^0)^2\right\}\\
=&2E\left\{\tau_ie_i(\theta,\phi^0)\left[m^a\left(X_i;\theta,\phi^0\right)-m^a\left(X_i;\theta,\phi^a\right)\right]\right\}\\
&~~~~+E\left\{\tau_i\left[m^a\left(X_i;\theta,\phi^0\right)-m^a\left(X_i;\theta,\phi^a\right)\right]^2\right\},
\end{align*}
where
$$m^a\left(X_i;\theta,\phi^0\right)-m^a\left(X_i;\theta,\phi^a\right)=\sum_{j=0}^{K_\mathcal{G}}m^*(x_{ij};\theta)\left[p^0_{\mathcal{G}^*_i|X_i}(\mathfrak{g}_j)-p^a_{\mathcal{G}^*_i|X_i}(\mathfrak{g}_j)\right].$$
Again, by Assumption \ref{ass_consistency_m} and the dominated convergence theorem, we can interchange the integral and derivative and obtain
\begin{align*}
&\frac{\partial\mathcal{L}^a_{\mathbb{P}}(\theta,\phi^a)}{\partial\theta}-\frac{\partial\mathcal{L}^a_{\mathbb{P}}(\theta,\phi^0)}{\partial\theta}\\
=&2E\left\{\tau_i\frac{\partial e_i(\theta,\phi^0)}{\partial\theta}\left[m^a\left(X_i;\theta,\phi^0\right)-m^a\left(X_i;\theta,\phi^a\right)\right]\right\}\\
&~~~~+E\left\{\tau_ie_i(\theta,\phi^0)\sum_{j=0}^{K_\mathcal{G}}\frac{\partial m^*(x_{ij};\theta)}{\partial\theta}\left[p^0_{\mathcal{G}^*_i|X_i}(\mathfrak{g}_j)-p^a_{\mathcal{G}^*_i|X_i}(\mathfrak{g}_j)\right]\right\}\\
&~~~~+2E\left\{\tau_i\left[m^a\left(X_i;\theta,\phi^0\right)-m^a\left(X_i;\theta,\phi^a\right)\right]\sum_{j=0}^{K_\mathcal{G}}\frac{\partial m^*(x_{ij};\theta)}{\partial\theta}\left[p^0_{\mathcal{G}^*_i|X_i}(\mathfrak{g}_j)-p^a_{\mathcal{G}^*_i|X_i}(\mathfrak{g}_j)\right]\right\}.
\end{align*}
Given the results in Theorems \ref{lemma_iden} and \ref{theorem_id_N} and the assumption that $m^*\left(x;\theta\right)$ is uniformly bounded, we have
$\sup_{\theta\in\Theta}E\left|m^a\left(X_i;\theta,\phi^0\right)-m^a\left(X_i;\theta,\phi^a\right)\right|^2= O( \triangle^2_K).$ By Cauchy-Schwarz inequality, we can obtain
\begin{align}\label{consistency_pf3}
\left\|\frac{\partial\mathcal{L}^a_{\mathbb{P}}(\theta,\phi^a)}{\partial\theta}-\frac{\partial\mathcal{L}^a_{\mathbb{P}}(\theta,\phi^0)}{\partial\theta}\right\|=O( \triangle_K).
\end{align}
Plugging \eqref{consistency_pf2} and \eqref{consistency_pf3} into \eqref{consistency_pf1}, we get
\begin{align}\label{consistency_pf4}
\left\|\frac{\partial \mathcal{L}^0_{\mathbb{P}}(\theta^0)}{\partial\theta}-\frac{\partial \mathcal{L}^0_{\mathbb{P}}(\theta^a)}{\partial\theta}\right\|=O( \triangle_K).
\end{align}
Applying the mean-value theorem, we get $\frac{\partial \mathcal{L}^0_{\mathbb{P}}(\theta^0)}{\partial\theta}-\frac{\partial \mathcal{L}^0_{\mathbb{P}}(\theta^a)}{\partial\theta}=\frac{\partial^2 \mathcal{L}^0_{\mathbb{P}}(\tilde{\theta})}{\partial\theta\partial\theta'}(\theta^a-\theta^0)$ for $\tilde{\theta}$ between $\theta^a$ and $\theta^0$. Given that $\frac{\partial^2\mathcal{L}^0_{\mathbb{P}}(\theta)}{\partial\theta\partial\theta'}$ is assumed to have full rank for all $\theta\in\Theta$, 
we can see that $\theta^a-\theta^0=(\frac{\partial^2\mathcal{L}^0_{\mathbb{P}}(\tilde{\theta})}{\partial\theta\partial\theta'})^{-1}(\frac{\partial \mathcal{L}^0_{\mathbb{P}}(\theta^0)}{\partial\theta}-\frac{\partial \mathcal{L}^0_{\mathbb{P}}(\theta^a)}{\partial\theta})$. Therefore, $\|\theta^a-\theta^0\|=O(\triangle_K)$ and it implies that $\|\hat{\theta}_N-\theta^0\|\leq \|\hat{\theta}_N-\theta^a\|+\|\theta^a-\theta^0\|=O_p(\triangle_K)$.
\end{proof}

\bigskip
In what follows, we present Lemmas \ref{lemma_jacobian_linearity} to \ref{lemma_jacobian_mean_square} that show the key steps for
establishing asymptotic properties of the jacobian and hessian matrix of the objective function.  The proofs are variants to those in
Section 8 of \citetsupp{newey1994large}.

\begin{lemma}[Linearization]\label{lemma_jacobian_linearity}Suppose that  $h\rightarrow0$, $\ln(N)/(N^{1/2}h^Q)\rightarrow0$, and $Nh^4\rightarrow0$ as $N\rightarrow\infty$. Under assumptions in Theorem \ref{theorem_consistency} and Assumption \ref{ass_normality}, there exists a function
$G(\cdot;\varphi):\Omega_{\tilde{W}}\mapsto\mathbb{R}^{d_\theta}$ which is linear in $\varphi$ and satisfies
\begin{align*}
  \Big\|\frac{1}{\sqrt{N}}\sum_{i=1}^{N}\left[g(\tilde{W}_i;\theta^a,\hat{\phi}_N)-g(\tilde{W}_i;\theta^a,\phi^a)-G(\tilde{W}_i;\hat{\varphi}_N-\varphi^a)
  \right] \Big\|=o_p(1),
\end{align*}
where $\phi^a=\phi^a(\varphi^a)$ and $\varphi^a$ is defined in \eqref{definition_varphia}.
\end{lemma}
\begin{proof}[Proof of Lemma \ref{lemma_jacobian_linearity}]
Recall that $g(\tilde{W}_i;\theta,\phi)=\tau_i[Y_i-m^a(X_i;\theta,\phi)]\frac{\partial
m^a(X_i;\theta,\phi)}{\partial\theta}$, $\Delta
\hat{p}^*_{i,j}=\hat{p}^a_{\mathcal{G}^*_i|X_i}(\mathfrak{g}_j)-p^a_{\mathcal{G}^*_i|X_i}(\mathfrak{g}_j)$ and
$e_i(\theta,\phi)=Y_i-m^a(X_i;\theta,\phi)$. We also introduce the following simplified notations. For $x_{ij}=(D_i,s^*_j,n^*_j,Z_i)$ with $j=0,...,K_{\mathcal{G}}$, denote
\begin{equation*}
\begin{aligned}
&m^*(x_{i,j})=m^*(x_{i,j};\theta^a),~~m^*_{\theta}(x_{i,j})=\frac{dm^*(x_{i,j};\theta^a)}{d\theta},~~m^*_{\theta\theta'}(x_{i,j})=\frac{d^2m^*(x_{i,j};\theta^a)}{d\theta d\theta'},\\
&\tilde{m}^*(x_{i,j})=m^*(x_{i,j};\tilde{\theta}_N),~~\tilde{m}^*_{\theta}(x_{i,j})=\frac{dm^*(x_{i,j};\tilde{\theta}_N)}{d\theta},~~\tilde{m}^*_{\theta\theta'}(x_{i,j})=\frac{d^2m^*(x_{i,j};\tilde{\theta}_N)}{d\theta d\theta'}.
\end{aligned}
\end{equation*}
Then,
\begin{align*}
  &\frac{1}{\sqrt{N}}\sum_{i=1}^{N}g(\tilde{W}_i;\theta^a,\hat{\phi}_N)-
  \frac{1}{\sqrt{N}}\sum_{i=1}^{N}g(\tilde{W}_i;\theta^a,\phi^a)\nonumber\\
  =&\frac{1}{\sqrt{N}}\sum_{i=1}^{N}\tau_i\Big[e_i(\theta^a,\hat{\phi}_N)\frac{\partial
  m^a(X_i;\theta^a,\hat{\phi}_N)}{\partial\theta}-e_i(\theta^a,\phi^a)\frac{\partial
  m^a(X_i;\theta^a,\phi^a)}{\partial\theta}\Big].
\end{align*}
Making use of $\hat{a}\hat{b}-ab=(\hat{a}-a)b+a(\hat{b}-b)+(\hat{a}-a)(\hat{b}-b)$, we get
\begin{align*}
  &\frac{1}{\sqrt{N}}\sum_{i=1}^{N}g(\tilde{W}_i;\theta^a,\hat{\phi}_N)-
  \frac{1}{\sqrt{N}}\sum_{i=1}^{N}g(\tilde{W}_i;\theta^a,\phi^a)\nonumber\\
  =&-\frac{1}{\sqrt{N}}\sum_{i=1}^{N}\tau_i\left[m^a(X_i;\theta^a,\hat{\phi}_N)-m^a(X_i;\theta^a,\phi^a)\right]\frac{\partial
  m^a(X_i;\theta^a,\phi^a)}{\partial\theta}\nonumber\\
  &+\frac{1}{\sqrt{N}}\sum_{i=1}^{N}\tau_ie_i(\theta^a,\phi^a)\Big[\frac{\partial
  m^a(X_i;\theta^a,\hat{\phi}_N)}{\partial\theta}-\frac{\partial
  m^a(X_i;\theta^a,\phi^a)}{\partial\theta}\Big]\nonumber\\
  &-\frac{1}{\sqrt{N}}\sum_{i=1}^{N}\tau_i\left[m^a(X_i;\theta^a,\hat{\phi}_N)-m^a(X_i;\theta^a,\phi^a)\right]\Big[\frac{\partial
  m^a(X_i;\theta^a,\hat{\phi}_N)}{\partial\theta}-\frac{\partial
  m^a(X_i;\theta^a,\phi^a)}{\partial\theta}\Big]\nonumber\\
  =&-\frac{1}{\sqrt{N}}\sum_{i=1}^{N}\tau_i\sum_{j=0}^{K_{\mathcal{G}}}m^*(x_{i,j})\Delta
  \hat{p}^*_{i,j}\frac{\partial
  m^a(X_i;\theta^a,\phi^a)}{\partial\theta}\nonumber\\
  &+\frac{1}{\sqrt{N}}\sum_{i=1}^{N}\tau_ie_i(\theta^a,\phi^a)\sum_{j=0}^{K_{\mathcal{G}}}m^*_{\theta}(x_{i,j})\Delta \hat{p}^*_{i,j}-\frac{1}{\sqrt{N}}\sum_{i=1}^{N}\tau_i\sum_{j=0}^{K_{\mathcal{G}}}m^*(x_{i,j})\Delta
  \hat{p}^*_{i,j}\sum_{j=0}^{K_{\mathcal{G}}}m^*_{\theta}(x_{i,j})\Delta
  \hat{p}^*_{i,j}\nonumber\\
  :=&\mathcal{G}_1+\mathcal{G}_2+\mathcal{G}_3.
\end{align*}
First, we consider $\mathcal{G}_3$.
From the proof for Theorem \ref{theorem_ker}, we know that $\|\hat{\phi}_N-\phi^a\|_\infty=O_p([\ln(N)/(Nh^Q)]^{1/2}+h^2)$. Since the assumptions $h\rightarrow0$ and $Nh^4\rightarrow0$ imply that $N^{1/2}h^4=(Nh^4)^{1/2}h^2=o(1)$, these results, together with the assumption $\ln(N)/(N^{1/2}h^Q)=o(1)$, lead to
\begin{align}\label{t64_0}
N^{1/2}\|\hat{\phi}_N-\phi^a\|_\infty^2=O_p(\left[\ln(N)/(N^{1/2}h^Q)\right]+N^{1/2}h^4)=o_p(1).
\end{align}
By the Cauchy–Schwarz inequality, \eqref{t65} and Lemma
\ref{lemma_Op}, we can see
\begin{align}\label{t64}
\|\mathcal{G}_3\|
\leq &\frac{C}{\sqrt{N}}\sum_{i=1}^{N}\sum_{j,l=0}^{K_{\mathcal{G}}}\left\|m^*(x_{i,j})\Delta
\hat{p}^*_{i,j}m^*_{\theta}(x_{i,l})\Delta
\hat{p}^*_{i,l}\right\|\nonumber\\
\leq&CN^{1/2}\|\hat{\phi}_N-\phi^a\|_\infty^2\frac{1}{N}\sum_{j,l=0}^{K_{\mathcal{G}}}\sum_{i=1}^{N}\left|m^*(x_{i,j})\right|\left\|m^*_{\theta}(x_{i,l})\right\|\nonumber\\
\leq&CN^{1/2}\|\hat{\phi}_N-\phi^a\|_\infty^2\sum_{j,l=0}^{K_{\mathcal{G}}}\Big[\frac{1}{N}\sum_{i=1}^{N}m^*(x_{i,j})^2\Big]^{1/2}\Big[\frac{1}{N}\sum_{i=1}^{N}\left\|m^*_{\theta}(x_{i,l})\right\|^2\Big]^{1/2}\nonumber\\
=&O_p(N^{1/2}\|\hat{\phi}_N-\phi^a\|_\infty^2)\nonumber\\=&o_p(1),
\end{align}
where the last line comes from \eqref{t64_0}. Next, let us consider $\mathcal{G}_1+\mathcal{G}_2$. Recall that we define
\begin{align*}
  \mathcal{R}(\tilde{W}_i;\theta,\phi)=
  &\begin{bmatrix}e_i(\theta,\phi)m^*(x_{i,0};\theta)&\cdots&e_i(\theta,\phi)m^*(x_{i,K_{\mathcal{G}}};\theta)\end{bmatrix},
\end{align*}
and we denote
\begin{align*}
\bm{\phi}_{\mathbf{t}}(X_i;\hat{\varphi}_N)&=[\hat{p}^a_{\mathcal{G}^*_i|X_i}(\mathfrak{g}_0),...,\hat{p}^a_{\mathcal{G}^*_i|X_i}(\mathfrak{g}_{K_{\mathcal{G}}})]',\\
\bm{\phi}_{\mathbf{t}}(X_i;\varphi^a)&=[p^a_{\mathcal{G}^*_i|X_i}(\mathfrak{g}_0),...,p^a_{\mathcal{G}^*_i|X_i}(\mathfrak{g}_{K_{\mathcal{G}}})]'.\end{align*}
Then, simple calculations yield that
\begin{align}
  \mathcal{G}_1+\mathcal{G}_2=&\frac{1}{\sqrt{N}}\sum_{i=1}^{N}\tau_i\Big[\sum_{j=0}^{K_{\mathcal{G}}}\Big(e_i(\theta^a,\phi^a)
  m^*_{\theta}(x_{i,j})-m^*(x_{i,j})\frac{\partial
  m^a(X_i;\theta^a,\phi^a)}{\partial\theta}\Big)\Delta \hat{p}^*_{i,j}\Big]\nonumber\\
  =&\frac{1}{\sqrt{N}}\sum_{i=1}^{N}\tau_i\Big[\frac{\partial}{\partial\theta}\mathcal{R}(\tilde{W}_i;\theta^a,\phi^a)\left(\bm{\phi}_{\mathbf{t}}(X_i;\hat{\varphi}_N)-\bm{\phi}_{\mathbf{t}}(X_i;\varphi^a)\right)\Big]\nonumber\\
  =&\frac{1}{\sqrt{N}}\sum_{i=1}^{N}\tau_i\Big[\frac{\partial}{\partial\theta}\mathcal{R}(\tilde{W}_i;\theta^a,\phi^a)\frac{\partial\bm{\phi}_{\mathbf{t}}(X_i;\varphi^a)}{\partial\varphi'}\left(\hat{\varphi}_N(Z_i)-\varphi^a(Z_i)\right)\Big]+\mathcal{G}_R.
\end{align}
Recall that $\phi$ enters the function $m^a(x;\theta,\phi)$ through $\phi(z)$ and the reminder term
\begin{align*}
\mathcal{G}_R:=&\frac{1}{\sqrt{N}}\sum_{i=1}^{N}\tau_i\frac{\partial}{\partial\theta}\mathcal{R}(\tilde{W}_i;\theta^a,\phi^a)\left[\bm{\phi}_{\mathbf{t}}(X_i;\hat{\varphi}_N)-\bm{\phi}_{\mathbf{t}}(X_i;\varphi^a)-\frac{\partial\bm{\phi}_{\mathbf{t}}(X_i;\varphi^a)}{\partial\varphi'}\left(\hat{\varphi}_N(Z_i)-\varphi^a(Z_i)\right)\right]\nonumber\\
=&\frac{1}{\sqrt{N}}\sum_{i=1}^{N}\tau_i[\mathcal{r}\mathcal{R}_1+\mathcal{r}\mathcal{R}_2]\left[\bm{\phi}_{\mathbf{t}}(X_i;\hat{\varphi}_N)-\bm{\phi}_{\mathbf{t}}(X_i;\varphi^a)-\frac{\partial\bm{\phi}_{\mathbf{t}}(X_i;\varphi^a)}{\partial\varphi'}\left(\hat{\varphi}_N(Z_i)-\varphi^a(Z_i)\right)\right],
\end{align*}
with
$\frac{\partial}{\partial\theta}\mathcal{R}(W_i;\theta^a,\phi^a):=\mathcal{r}\mathcal{R}_1+\mathcal{r}\mathcal{R}_2$
and
\begin{align*}
  &\mathcal{r}\mathcal{R}_1=e_i(\theta^a,\phi^a)\begin{bmatrix}
  m^*_\theta(x_{i,0})&\cdots&
  m^*_\theta(x_{i,K_{\mathcal{G}}})\end{bmatrix},\\&\mathcal{r}\mathcal{R}_2=-\frac{\partial
  m^a(X_i;\theta^a,\phi^a)}{\partial\theta}\begin{bmatrix}m^*(x_{i,0})&\cdots&m^*(x_{i,K_{\mathcal{G}}})\end{bmatrix}.
\end{align*}
Next, we show that $\mathcal{G}_R=o_p(1)$. Denote $\overline{\Omega}_X=\{0,1\}\times \{\mathfrak{g}_0,...,\mathfrak{g}_{K_{\mathcal{G}}}\}\times\Omega_Z $. Due to Theorem \ref{theorem_ker}, we can focus on a small
neighborhood of $\gamma^0$ and bound the reminder term as follows:
\begin{align*}
  \|\mathcal{G}_R\|\leq &
  \sup_{\|\hat{\gamma}_N-\gamma^0\|_\infty<\epsilon}\sup_{x\in \overline{\Omega}_X,z\in\Omega_Z}\left\|\bm{\phi}_{\mathbf{t}}(x;\hat{\varphi}_N)-\bm{\phi}_{\mathbf{t}}(x;\varphi^a)-\frac{\partial\bm{\phi}_{\mathbf{t}}(x;\varphi^a)}{\partial\varphi'}\left(\hat{\varphi}_N(z)-\varphi^a(z)\right)\right\|_\infty\\
  &~~~~\times\frac{1}{\sqrt{N}}\sum_{i=1}^{N}\tau_i\left\|\mathcal{r}\mathcal{R}_1+\mathcal{r}\mathcal{R}_2\right\|\nonumber\\
  \leq&O_p\left(N^{1/2}\sup_{\|\hat{\gamma}_N-\gamma^0\|_\infty<\epsilon}\|\hat{\varphi}_N-\varphi^a\|_\infty^2\right)\Big[\frac{1}{N}\sum_{i=1}^{N}\tau_i\left\|\mathcal{r}\mathcal{R}_1\right\|+\frac{1}{N}\sum_{i=1}^{N}\tau_i\left\|\mathcal{r}\mathcal{R}_2\right\|\Big],
\end{align*}
where the appearance of term $O_p(\|\hat{\varphi}_N-\varphi^a\|_\infty^2)$ in the last line is due to \eqref{t_phi2nd}. Applying the
Cauchy–Schwarz inequality, we have
\begin{align*}
 \frac{1}{N}\sum_{i=1}^{N}\tau_i\left\|\mathcal{r}\mathcal{R}_1\right\|\leq&\frac{1}{N}\sum_{i=1}^{N}\tau_i\left|e_i(\theta^a,\phi^a)\right|\left\|\begin{bmatrix}
  m^*_\theta(x_{i,0})&\cdots&
  m^*_\theta(x_{i,K_{\mathcal{G}}})\end{bmatrix}\right\|\nonumber\\
 \leq&C\Big[\frac{1}{N}\sum_{i=1}^{N}e_i(\theta^a,\phi^a)^2\Big]^{1/2}\Big[\frac{1}{N}\sum_{j=0}^{K_{\mathcal{G}}}\sum_{i=1}^{N}\left\|
 m^*_\theta(x_{i,j})\right\|^2\Big]^{1/2}\nonumber\\
 =&O_p(1),
\end{align*}
where the last line follows from \eqref{t102} and Lemma \ref{lemma_Op}. Similarly, we can show that
$\frac{1}{N}\sum_{i=1}^{N}\tau_i\left\|\mathcal{r}\mathcal{R}_2\right\|=O_p(1)$.
It yields from the above results and the fact that $N^{1/2}\sup_{\|\hat{\gamma}_N-\gamma^0\|_\infty<\epsilon}\|\hat{\varphi}_N-\varphi^a\|_\infty^2=N^{1/2}(\|\hat{\gamma}_N-\gamma^0\|^2_\infty)=o_p(1)$ as proved in \eqref{t64_0} that
\begin{align}\label{t62}
  \|\mathcal{G}_R\|= & o_p(1).
\end{align}
Let
$\tilde{\nu}(\tilde{W}_i):=\tau_i[\frac{\partial}{\partial\theta}\mathcal{R}(\tilde{W}_i;\theta^a,\phi^a)\frac{\partial\bm{\phi}_{\mathbf{t}}(X_i;\varphi^a)}{\partial\varphi'}]$
and define
$G(\tilde{W}_i;\varphi)=\tilde{\nu}(\tilde{W}_i)\varphi(Z_i)$,
then $G(\tilde{W}_i;\varphi)$ is linear in $\varphi$ and $\mathcal{G}_1+\mathcal{G}_2-\mathcal{G}_R=\frac{1}{\sqrt{N}}\sum_{i=1}^{N}G(\tilde{W}_i;\hat{\varphi}_N-\varphi^a)$. Then,
\eqref{t64} and \eqref{t62} lead to
\begin{align*}
  \Big\|\frac{1}{\sqrt{N}}\sum_{i=1}^{N}\left[g(\tilde{W}_i;\theta^a,\hat{\phi}_N)-g(\tilde{W}_i;\theta^a,\phi^a)-G(\tilde{W}_i;\hat{\varphi}_N-\varphi^a)
  \right] \Big\|\leq &\|\mathcal{G}_3\|+\|\mathcal{G}_R\|=o_p(1).
\end{align*}
\end{proof}

\bigskip

\begin{lemma}\label{lemma_jacobian_equicontinuity}Let $P_{\tilde{W}_i}$ be the
cumulative distribution function of $\tilde{W}_i=(Y_i,X'_i)'$. Suppose that $h\rightarrow0$, $\ln(N)/(N^{1/2}h^Q)\rightarrow0$, and $Nh^4\rightarrow0$ as $N\rightarrow\infty$. Under assumptions in Theorem \ref{theorem_consistency} and Assumption \ref{ass_normality}, $G(\cdot;\varphi):\Omega_{\tilde{W}}\mapsto\mathbb{R}^{d_\theta}$ in Lemma \ref{lemma_jacobian_linearity} satisfies
\begin{align*}
 &\frac{1}{\sqrt{N}}\sum_{i=1}^{N}\left[G(\tilde{W}_i;\hat{\varphi}_N-\varphi^a)-\int
 G(\tilde{w};\hat{\varphi}_N-\varphi^a)dP_{\tilde{W}_i}(\tilde{w})\right]=o_p(1).
 \end{align*}
\end{lemma}
\begin{proof}[Proof of Lemma \ref{lemma_jacobian_equicontinuity}]
By the proof of Lemma \ref{lemma_jacobian_linearity}, we know that $G(\tilde{w};\varphi)=\tilde{\nu}(\tilde{w})\varphi(z)$, where $\tilde{\nu}(\tilde{W}_i):=\tau_i[\frac{\partial}{\partial\theta}\mathcal{R}(\tilde{W}_i;\theta^a,\phi^a)\frac{\partial\bm{\phi}_{\mathbf{t}}(X_i;\varphi^a)}{\partial\varphi'}]$ is a $d_\theta\times d_\varphi$ matrix and $d_\varphi$ is the dimension of $\varphi$. Then,
\begin{align}\label{t112}
 &\Big\|\frac{1}{\sqrt{N}}\sum_{i=1}^{N}\left[G(\tilde{W}_i;\hat{\varphi}_N-\varphi^a)-\int
 G(\tilde{w};\hat{\varphi}_N-\varphi^a)dP_{\tilde{W}_i}(\tilde{w})\right]\Big\| \nonumber\\
 \leq&C\Big\|\frac{1}{\sqrt{N}}\sum_{i=1}^{N}\Big[\tilde{\nu}(\tilde{W}_i)-E[\tilde{\nu}(\tilde{W}_i)]\Big]\Big\|\left\|\hat{\varphi}_N-\varphi^a\right\|_\infty.
\end{align}
Denote $\tilde{\nu}_{rq}(\tilde{W}_i)$ as the $(r,q)$-th entry of $\tilde{\nu}(\tilde{W}_i)$ with $r=1,...,d_\theta$ and $q=1,...,d_\varphi$. Then,
\begin{align}\label{t92}
&E\Big[\Big\|\frac{1}{\sqrt{N}}\sum_{i=1}^{N}\Big[\tilde{\nu}(\tilde{W}_i)-E[\tilde{\nu}(\tilde{W}_i)]\Big]\Big\|^2\Big]\nonumber\\
=&\frac{1}{N}\sum_{r=1}^{d_\theta}\sum_{q=1}^{d_\varphi}E\Big[\sum_{i=1}^{N}\Big(\tilde{\nu}_{rq}(\tilde{W}_i)-E[\tilde{\nu}_{rq}(\tilde{W}_i)]\Big)\sum_{i=1}^{N}\Big(\tilde{\nu}_{rq}(\tilde{W}_i)-E[\tilde{\nu}_{rq}(\tilde{W}_i)]\Big)\Big]\nonumber\\
=&\frac{1}{N}\sum_{r=1}^{d_\theta}\sum_{q=1}^{d_\varphi}\sum_{k=1}^{q_N}\sum_{i,j\in
\mathbb{S}_k}Cov\Big(\tilde{\nu}_{rq}(\tilde{W}_i),\tilde{\nu}_{rq}(\tilde{W}_j)\Big)+s.o.,
\end{align}
where the last line comes from Assumption \ref{ass_dependency_neighbor}. From the proof in Step 2 of Theorem \ref{theorem_ker} (b), we know that $\frac{\partial\bm{\phi}_{\mathbf{t}}(x;\varphi^a)}{\partial\varphi'}$ is bounded in the bounded support $x\in\{0,1\}\times \{\mathfrak{g}_0,...,\mathfrak{g}_{K_{\mathcal{G}}}\}\times \Omega_Z$. Recall $$\frac{\partial}{\partial\theta}\mathcal{R}(\tilde{W}_i;\theta^a,\phi^a)=e_i(\theta^a,\phi^a)[
  m^*_\theta(x_{i,0}),...,
  m^*_\theta(x_{i,K_{\mathcal{G}}})]-\frac{\partial
  m^a(X_i;\theta^a,\phi^a)}{\partial\theta}[m^*(x_{i,0}),...,m^*(x_{i,K_{\mathcal{G}}})].
$$
Due to Assumptions \ref{ass_consistency_m} and \ref{ass_normality}, we know that
$\sup_{\tilde{w}\in\Omega_{\tilde{W}}}E[\|\frac{\partial}{\partial\theta}\mathcal{R}(\tilde{w};\theta^a,\phi^a)\|^{2}]<C$, implying $Var[\tilde{\nu}_{rq}(\tilde{W}_i)]<\infty$ for all $r=1,...,d_\theta$ and $q=1,...,d_\varphi$.
Thus, based on the assumption $\bar{r}_N=O(1)$, \eqref{t92} becomes
\begin{align}\label{t97}
E\Big[\Big\|\frac{1}{\sqrt{N}}\sum_{i=1}^{N}\Big[\tilde{\nu}(\tilde{W}_i)-E[\tilde{\nu}(\tilde{W}_i)]\Big]\Big\|^2\Big]
=O(1).
\end{align}
Given \eqref{t97}, together with
$\left\|\hat{\varphi}_N-\varphi^a\right\|_\infty=o_p(1)$, we know that the term on the right-hand side of \eqref{t112} is $o_p(1)$.
Thus, we can conclude that
\begin{align}\label{t103}
\Big\|\frac{1}{\sqrt{N}}\sum_{i=1}^{N}\left[G(\tilde{W}_i;\hat{\varphi}_N-\varphi^a)-\int
G(\tilde{w};\hat{\varphi}_N-\varphi^a)dP_{\tilde{W}_i}(\tilde{w})\right]\Big\| =o_p(1).
\end{align}
\end{proof}

\bigskip

\begin{lemma}\label{lemma_jacobian_mean_square}Suppose $h\rightarrow0$ and $Nh^4\rightarrow0$. Under assumptions in Theorem \ref{theorem_consistency} and Assumptions \ref{ass_normality} and \ref{ass_jacobian}, for $\delta:\Omega_{\tilde{W}}\mapsto\mathbb{R}^{d_\theta}$ defined in Assumption \ref{ass_jacobian}, we have
\begin{align*}
&\sqrt{N}E\Big[\Big\|\int\delta(\tilde{w})d\hat{P}_{\tilde{W}_i}(\tilde{w})-\int\delta(\tilde{w})d\tilde{P}_{\tilde{W}}(\tilde{w})\Big\|\Big]=o(1),
\end{align*}
where $\hat{P}_{\tilde{W}_i}$ is the kernel estimator of the cumulative distribution function $P_{\tilde{W}_i}$, and
$\tilde{P}_{\tilde{W}}(\tilde{w}):=1/N\sum_{i=1}^{N}1[\tilde{W}_i\leq \tilde{w}]$ is the empirical distribution of $\tilde{W}_i$.
\end{lemma}
\begin{proof}[Proof of Lemma \ref{lemma_jacobian_mean_square}]
Recall that $\tilde{W}_i=(\tilde{W}_i^{d'},\tilde{W}_i^{c'})'$, where $\tilde{W}_i^{d}\in\Omega_{\tilde{W}^d}$ contains all the discrete random variables in $\tilde{W}_i$, and $\tilde{W}_i^{c}\in\Omega_{\tilde{W}^c}$ contains all the continuous random variables in $\tilde{W}_i$. Consider the
difference of two integrals defined below,
\begin{align}\label{t105_0}
 \delta(P):=&\int\delta(\tilde{w})d\hat{P}_{\tilde{W}_i}(\tilde{w})-\int\delta(\tilde{w})d\tilde{P}_{\tilde{W}}(\tilde{w}),\end{align}
where
\begin{align}\label{t105}
\int\delta(\tilde{w})d\hat{P}_{\tilde{W}_i}(\tilde{w})
=&\frac{1}{N}\sum_{i=1}^{N}\sum_{\tilde{w}^d\in\Omega_{\tilde{W}^d}}\int\delta(\tilde{w})\frac{1}{h^Q}1[\tilde{W}^d_i=\tilde{w}^d]\prod_{q=1}^Q\kappa\left(\frac{\tilde{w}^c_q-\tilde{W}^c_{iq}}{h}\right)d\tilde{w}^c\nonumber\\
=&\frac{1}{N}\sum_{i=1}^{N}\int\delta(\tilde{w}^c,\tilde{W}^d_i)\frac{1}{h^Q}\prod_{q=1}^Q\kappa\left(\frac{\tilde{w}^c_q-\tilde{W}^c_{iq}}{h}\right)d\tilde{w}^c\nonumber\\
=&\frac{1}{N}\sum_{i=1}^{N}\int\delta(\tilde{W}^c_i+hv,\tilde{W}^d_i)\prod_{q=1}^Q\kappa\left(v_q\right)dv,\end{align}
where the last line follows from change of variables as used in previous proofs with $v=(v_1,...,v_Q)$ and $v_q=(\tilde{w}^c_q-\tilde{W}^c_{iq})/h$. Because  $\int\delta(\tilde{w})d\tilde{P}_{\tilde{W}}(\tilde{w})=\frac{1}{N}\sum_{i=1}^{N}\delta(\tilde{W}_i)$, it follows from \eqref{t105} that
\begin{align}\label{t106}
&E[\delta(P)]
=\frac{1}{N}\sum_{i=1}^{N}E\Big[\int\delta(\tilde{W}^c_i+hv,\tilde{W}^d_i)\prod_{q=1}^Q\kappa\left(v_q\right)dv-\delta(\tilde{W}_i)\Big]\nonumber\\
=&\frac{1}{N}\sum_{i=1}^{N}\Big[\iint
\delta(\tilde{w}^c+hv,\tilde{w}^d)\prod_{q=1}^Q\kappa\left(v_q\right)dvdF_{\tilde{W}_i}(\tilde{w})-\int\delta(\tilde{w})dF_{\tilde{W}_i}(\tilde{w})\Big]\nonumber\\
=&\frac{1}{N}\sum_{i=1}^{N}\Big\{\iint
\delta(\tilde{w}^c,\tilde{w}^d)\prod_{q=1}^Q\kappa\left(v_q\right)dvdF_{\tilde{W}_i}(\tilde{w}^c-hv,\tilde{w}^d)-\iint\delta(\tilde{w})\prod_{q=1}^Q\kappa(v_q)dvdF_{\tilde{W}_i}(\tilde{w})\Big\}\nonumber\\
=&\frac{1}{N}\sum_{i=1}^{N}\Big\{\sum_{\tilde{w}^d\in\Omega_{\tilde{W}^d}}\iint
\delta(\tilde{w})\left[p_{\tilde{W}^c_i,\tilde{W}^d_i}(\tilde{w}^c-hv,\tilde{w}^d)-p_{\tilde{W}^c_i,\tilde{W}^d_i}(\tilde{w}^c,\tilde{w}^d)\right]\prod_{q=1}^Q\kappa\left(v_q\right)dvd\tilde{w}^c\Big\}.
\end{align}
By \eqref{t44}, Assumptions \ref{ass_ker}, \ref{ass_jacobian}, and the fact that $Nh^4=o(1)$, we have
\begin{align}\label{t107}
\sqrt{N}\left\|E[\delta(P)]\right\|
\leq&C\sqrt{N}h^2\sum_{\tilde{w}^d\in\Omega_{\tilde{W}^d}}\int\|\delta(\tilde{w})\|d\tilde{w}^c
=o(1).
\end{align}
Let $\delta(P)=(\delta_1(P),...,\delta_{d_\theta}(P))'$ with
$\delta_r(P)=1/N\sum_{i=1}^{N}\delta_{r,i}(P)$. Since $\int\delta(\tilde{w})d\tilde{P}_{\tilde{W}}(\tilde{w})=\frac{1}{N}\sum_{i=1}^{N}\delta(\tilde{W}_i)$ and $\int \prod_{q=1}^Q\kappa\left(v_q\right)dv=1$, from \eqref{t105} we can obtain $$\delta_{r,i}(P)=\int\left[\delta_r(\tilde{W}^c_i+hv,\tilde{W}^d_i)-\delta_r(\tilde{W}_i)\right]\prod_{q=1}^Q\kappa\left(v_q\right)dv.$$ Consider the variance of $\sqrt{N}\delta(P)$. We have
\begin{align}\label{t108}
E\left[\left\|\sqrt{N}\delta(P)-\sqrt{N}E[\delta(P)]\right\|^2\right]
=&N\sum_{r=1}^{d_\theta}E\Big[\Big|\frac{1}{N}\sum_{i=1}^{N}\left(\delta_{r,i}(P)-E[\delta_{r,i}(P)]\right)\Big|^2\Big]\nonumber\\
=&\frac{1}{N}\sum_{r=1}^{d_\theta}\sum_{q=1}^{q_N}\sum_{i,j\in
\mathbb{S}_q}Cov\left(\delta_{r,i}(P),\delta_{r,j}(P)\right)+s.o.,
\end{align}
where the last line follows from Assumption \ref{ass_dependency_neighbor}. We bound the covariance in \eqref{t108} by
\begin{align}\label{t109}
\left|Cov\left(\delta_{r,i}(P),\delta_{r,j}(P)\right)\right|\leq&Var\left[\delta_{r,i}(P)\right]\leq
E\left[\left|\delta_{r,i}(P)\right|^2\right]\nonumber\\
=&E\Big[\Big(\int\left[\delta_r(\tilde{W}^c_i+hv,\tilde{W}^d_i)-\delta_r(\tilde{W}_i)\right]\prod_{q=1}^Q\kappa\left(v_q\right)dv\Big)^2\Big].
\end{align}
Expanding
$\delta_r(\tilde{W}^c_i+hv,\tilde{W}^d_i)$ around $\tilde{W}^c_i$, then there exists a constant $C>0$ such that
\begin{align}\label{t110}
\left|Cov\left(\delta_{r,i}(P),\delta_{r,j}(P)\right)\right|\leq&h^4E\Big[\Big(\int
v'\frac{\partial\delta_r(\tilde{W}^c_i+\tilde{w}^{c*},\tilde{W}^d_i)}{\partial \tilde{w}^c\partial
(\tilde{w}^c)'}v\prod_{q=1}^Q\kappa\left(v_q\right)dv\Big)^2\Big]
\leq Ch^4,
\end{align}
where the inequalities are obtained based on Assumption \ref{ass_ker} that $\int x\kappa(x)dx=0$ and $\int x^2\kappa(x)dx=K_2$, and
Assumption \ref{ass_jacobian} that $\delta(\tilde{w})$ is twice continuously differentiable in $w^c$ with bounded second derivative. Substituting \eqref{t110} into \eqref{t108}, since $\bar{r}_N=O(1)$ and $h=o(1)$, we have
\begin{align}\label{t111}
  E\Big[\Big\|\sqrt{N}\delta(P)-\sqrt{N}E[\delta(P)]\Big\|^2\Big]= & O(h^4)=o(1).
\end{align}
Based on \eqref{t107} and \eqref{t111}, since both the mean and variance of $\sqrt{N}\delta(P)$ are
$o(1)$, by Chebyshev's inequality, it follows directly that
$E[\|\sqrt{N}\delta(P)\|]\rightarrow0$.
\end{proof}

\bigskip

\begin{lemma}\label{lemma_hessian}Let assumptions in Theorem \ref{theorem_consistency}, Assumptions \ref{ass_normality} and \ref{ass_jacobian} hold.
\begin{itemize}
  \item[(a)]For some $\epsilon\rightarrow0$ and $\tilde{\theta}_N$ between $\hat{\theta}_N$ and $\theta^a$, we have
\begin{align*}
  &\sup_{\|\hat{\gamma}_N-\gamma^0\|_\infty<\epsilon}\Big\|\frac{1}{N}\sum_{i=1}^{N}\Big(\frac{\partial g(\tilde{W}_i;\tilde{\theta}_N,\hat{\phi}_N)}{\partial\theta'}-E\Big[\frac{\partial g(\tilde{W}_i;\theta^a,\phi^a)}{\partial\theta'}\Big]\Big)\Big\|=o_p(1).
\end{align*}
  \item[(b)]\label{lemma_jacobian}If $h\rightarrow0$, $\ln(N)/(N^{1/2}h^Q)\rightarrow0$, and $Nh^4\rightarrow0$ as $N\rightarrow\infty$, then we can get
$$\frac{1}{\sqrt{N}}\sum_{i=1}^{N}g(\tilde{W}_i;\theta^a,\hat{\phi}_N)= \frac{1}{\sqrt{N}}\sum_{i=1}^{N}\left[g(\tilde{W}_i;\theta^a,\phi^a)+\delta(\tilde{W}_i)\right]+o_p(1).$$
\end{itemize}
\end{lemma}

\begin{proof}[Proof of Lemma \ref{lemma_jacobian}]
(a) Denote $g_\theta(\tilde{W}_i;\theta,\phi)=\frac{\partial g(\tilde{W}_i;\theta,\phi)}{\partial\theta'}$ to be a $d_\theta\times d_\theta$ matrix. For a small constant $\epsilon>0$, by triangular inequality, we can obtain
\begin{align*}
  &\sup_{\|\hat{\gamma}_N-\gamma^0\|_\infty<\varepsilon}\Big\|\frac{1}{N}\sum_{i=1}^{N} g_\theta(\tilde{W}_i;\tilde{\theta}_N,\hat{\phi}_N)-E\left[ g_\theta(\tilde{W}_i;\theta^a,\phi^a)\right]\Big\|\nonumber\\
  \leq&\sup_{\|\hat{\gamma}_N-\gamma^0\|_\infty<\epsilon}\Big\|\frac{1}{N}\sum_{i=1}^{N} [g_\theta(\tilde{W}_i;\tilde{\theta}_N,\hat{\phi}_N)- g_\theta(\tilde{W}_i;\tilde{\theta}_N,\phi^a)]\Big\|\nonumber\\
  &~~~~~~~~~~~~~+\Big\|\frac{1}{N}\sum_{i=1}^{N} [g_\theta(\tilde{W}_i;\tilde{\theta}_N,\phi^a)- g_\theta(\tilde{W}_i;\theta^a,\phi^a)]\Big\|\nonumber\\
  &~~~~~~~~~~~~~+\Big\|\frac{1}{N}\sum_{i=1}^{N}\left( g_\theta(\tilde{W}_i;\theta^a,\phi^a)-E\left[ g_\theta(\tilde{W}_i;\theta^a,\phi^a)\right]\right)\Big\|\nonumber\\
  :=&\mathcal{H}_1+\mathcal{H}_2+\mathcal{H}_3.
\end{align*}
It suffices to show that $\mathcal{H}_1$ to $\mathcal{H}_3$ are all $o_p(1)$. We divide the proof into four steps.\\

\textbf{Step 1}. First, consider $\mathcal{H}_1$. By definition of $g(\tilde{W}_i;\theta,\phi)$ and notations in \eqref{eq_notation1}, we have
\begin{align}\label{t79}
&\frac{1}{N}\sum_{i=1}^{N} [g_\theta(\tilde{W}_i;\tilde{\theta}_N,\hat{\phi}_N)- g_\theta(\tilde{W}_i;\tilde{\theta}_N,\phi^a)]\nonumber\\
=&\frac{1}{N}\sum_{i=1}^{N}\tau_i\Big\{e_i(\tilde{\theta}_N,\hat{\phi}_N)\frac{d^2m^a(X_i;\tilde{\theta}_N,\hat{\phi}_N)}{d\theta d\theta'}-e_i(\tilde{\theta}_N,\phi^a)\frac{d^2m^a(X_i;\tilde{\theta}_N,\phi^a)}{d\theta d\theta'}\Big\}\nonumber\\
-&\frac{1}{N}\sum_{i=1}^{N}\tau_i\Big[\frac{dm^a(X_i;\tilde{\theta}_N,\hat{\phi}_N)}{d\theta}\frac{dm^a(X_i;\tilde{\theta}_N,\hat{\phi}_N)}{d\theta'}-\frac{dm^a(X_i;\tilde{\theta}_N,\phi^a)}{d\theta}\frac{dm^a(X_i;\tilde{\theta}_N,\phi^a)}{d\theta'}\Big].
\end{align}
Making use of the identity $\hat{a}\hat{b}-ab=(\hat{a}-a)b+a(\hat{b}-b)+(\hat{a}-a)(\hat{b}-b)$ and applying it to both terms on the right-hand side of \eqref{t79} give us
\begin{align}\label{t87}
&\frac{1}{N}\sum_{i=1}^{N}[g_\theta(\tilde{W}_i;\tilde{\theta}_N,\hat{\phi}_N)- g_\theta(\tilde{W}_i;\tilde{\theta}_N,\phi^a)]\nonumber\\
=&-\frac{1}{N}\sum_{i=1}^{N}\tau_i\left[m^a(X_i;\tilde{\theta}_N,\hat{\phi}_N)-m^a(X_i;\tilde{\theta}_N,\phi^a)\right]\frac{d^2m^a(X_i;\tilde{\theta}_N,\phi^a)}{d\theta d\theta'}\nonumber\\
&+\frac{1}{N}\sum_{i=1}^{N}\tau_ie_i(\tilde{\theta}_N,\phi^a)\Big[\frac{d^2m^a(X_i;\tilde{\theta}_N,\hat{\phi}_N)}{d\theta d\theta'}-\frac{d^2m^a(X_i;\tilde{\theta}_N,\phi^a)}{d\theta d\theta'}\Big]\nonumber\\
&-\frac{1}{N}\sum_{i=1}^{N}\tau_i\left[m^a(X_i;\tilde{\theta}_N,\hat{\phi}_N)-m^a(X_i;\tilde{\theta}_N,\phi^a)\right]\nonumber\\
&~~~~~~~~~~~~~~~~~~~~~~~~\times\Big[\frac{d^2m^a(X_i;\tilde{\theta}_N,\hat{\phi}_N)}{d\theta d\theta'}-\frac{d^2m^a(X_i;\tilde{\theta}_N,\phi^a)}{d\theta d\theta'}\Big]\nonumber\\
&-\frac{2}{N}\sum_{i=1}^{N}\tau_i\Big[\frac{dm^a(X_i;\tilde{\theta}_N,\hat{\phi}_N)}{d\theta}-\frac{dm^a(X_i;\tilde{\theta}_N,\phi^a)}{d\theta}\Big]\frac{dm^a(X_i;\tilde{\theta}_N,\phi^a)}{d\theta'}\nonumber\\
&-\frac{1}{N}\sum_{i=1}^{N}\tau_i\Big[\frac{dm^a(X_i;\tilde{\theta}_N,\hat{\phi}_N)}{d\theta}-\frac{dm^a(X_i;\tilde{\theta}_N,\phi^a)}{d\theta}\Big]\nonumber\\
&~~~~~~~~~~~~~~~~~~~~~~~~\times\Big[\frac{dm^a(X_i;\tilde{\theta}_N,\hat{\phi}_N)}{d\theta'}-\frac{dm^a(X_i;\tilde{\theta}_N,\phi^a)}{d\theta'}\Big].
\end{align}
Recall that $m^a(X_i;\theta,\phi)=\sum_{j=0}^{K_\mathcal{G}}m^*(x_{i,j};\theta)p_{\mathcal{G}^*_i|X_i}(\mathfrak{g}_j)$ and $x_{i,j}=(D_i,s_j,n_j,Z_i)$. Recall $\Delta \hat{p}^*_{i,j}=\hat{p}^a_{\mathcal{G}^*_i|X_i}(\mathfrak{g}_j)-p^a_{\mathcal{G}^*_i|X_i}(\mathfrak{g}_j)$ for $j=0,...,K_\mathcal{G}$. For notation simplicity, denote
\begin{equation}
\begin{aligned}\label{eq_notation2}
&m^*(x_{i,j})=m^*(x_{i,j};\theta^a),~~m^*_{\theta}(x_{i,j})=\frac{dm^*(x_{i,j};\theta^a)}{d\theta},~~m^*_{\theta\theta'}(x_{i,j})=\frac{d^2m^*(x_{i,j};\theta^a)}{d\theta d\theta'},\\
&\tilde{m}^*(x_{i,j})=m^*(x_{i,j};\tilde{\theta}_N),~~\tilde{m}^*_{\theta}(x_{i,j})=\frac{dm^*(x_{i,j};\tilde{\theta}_N)}{d\theta},~~\tilde{m}^*_{\theta\theta'}(x_{i,j})=\frac{d^2m^*(x_{i,j};\tilde{\theta}_N)}{d\theta d\theta'}.
\end{aligned}
\end{equation}
We can further rewrite \eqref{t87} as
\begin{align}\label{t88}
&\frac{1}{N}\sum_{i=1}^{N}[g_\theta(\tilde{W}_i;\tilde{\theta}_N,\hat{\phi}_N)- g_\theta(\tilde{W}_i;\tilde{\theta}_N,\phi^a)]\nonumber\\
=&-\frac{1}{N}\sum_{i=1}^{N}\tau_i\Big[\sum_{j=0}^{K_\mathcal{G}}\tilde{m}^*(x_{i,j})\Delta \hat{p}^*_{i,j}\Big]\Big[\sum_{j=0}^{K_\mathcal{G}}\tilde{m}^*_{\theta\theta'}(x_{i,j})p^a_{\mathcal{G}^*_i|X_i}(\mathfrak{g}_j)\Big]\nonumber\\
&+\frac{1}{N}\sum_{i=1}^{N}\tau_ie_i(\tilde{\theta}_N,\phi^a)\Big[\sum_{j=0}^{K_\mathcal{G}}\tilde{m}^*_{\theta\theta'}(x_{i,j})\Delta \hat{p}^*_{i,j}\Big]-\frac{1}{N}\sum_{i=1}^{N}\tau_i\Big[\sum_{j=0}^{K_\mathcal{G}}\tilde{m}^*(x_{i,j})\Delta \hat{p}^*_{i,j}\Big]\Big[\sum_{j=0}^{K_\mathcal{G}}\tilde{m}^*_{\theta\theta'}(x_{i,j})\Delta \hat{p}^*_{i,j}\Big]\nonumber\\
&-\frac{2}{N}\sum_{i=1}^{N}\tau_i\Big[\sum_{j=0}^{K_\mathcal{G}} \tilde{m}^*_{\theta}(x_{i,j})\Delta \hat{p}^*_{i,j}\Big]\Big[\sum_{j=0}^{K_\mathcal{G}}\tilde{m}^*_{\theta}(x_{i,j})'p^a_{\mathcal{G}^*_i|X_i}(\mathfrak{g}_j)\Big]\nonumber\\
&-\frac{1}{N}\sum_{i=1}^{N}\tau_i\Big[\sum_{j=0}^{K_\mathcal{G}} \tilde{m}^*_{\theta}(x_{i,j})\Delta \hat{p}^*_{i,j}\Big]\Big[\sum_{j=0}^{K_\mathcal{G}} \tilde{m}^*_{\theta}(x_{i,j})'\Delta \hat{p}^*_{i,j}\Big].
\end{align}
Because for a $k\times k$ matrix $A=ab'$ where $a,b\in\mathbb{R}^k$, we have $\|A\|=\|a\|\|b\|$. Recall $s.o.$ stands for a term of a smaller order. Then, $\|\Delta \hat{p}^*_{i,j}\|\leq\sup_{\|\hat{\gamma}_N-\gamma^0\|_\infty<\epsilon}\|\hat{\phi}_N-\phi^a\|_\infty=o_p(1)$, the boundedness of $p^a_{\mathcal{G}^*_i|X_i}$ and \eqref{t88} lead to,
\begin{align*}
\mathcal{H}_1\leq&C\sup_{\|\hat{\gamma}_N-\gamma^0\|_\infty<\epsilon}\left\|\hat{\phi}_N-\phi^a\right\|_\infty\bigg\{\frac{1}{N}\sum_{j,l=0}^{K_\mathcal{G}}\sum_{i=1}^{N}\left|\tilde{m}^*(x_{i,j})\right|\left\|\tilde{m}^*_{\theta\theta'}(x_{i,l})\right\|\nonumber\\
&+\frac{1}{N}\sum_{j=0}^{K_\mathcal{G}}\sum_{i=1}^{N}\left|e_i(\tilde{\theta}_N,\phi^a)\right|\left\|\tilde{m}^*_{\theta\theta'}(x_{i,j})\right\|+\frac{2}{N}\sum_{j,l=0}^{K_\mathcal{G}}\sum_{i=1}^{N}\left\| \tilde{m}^*_{\theta}(x_{i,j})\right\|\left\|\tilde{m}^*_{\theta}(x_{i,l})'\right\|\bigg\}+s.o.\nonumber\\
:=&\mathcal{H}_{11}+\mathcal{H}_{12}+\mathcal{H}_{13}.
\end{align*}
By the Cauchy–Schwarz inequality, \eqref{t65}, \eqref{t102} and Lemma \ref{lemma_Op}, we get
\begin{align*}
  \mathcal{H}_{11}\leq& o_p(1)\sum_{j,l=0}^{K_\mathcal{G}}\Big[\frac{1}{N}\sum_{i=1}^{N}\tilde{m}^*(x_{i,j})^2\Big]^{1/2}\Big[\frac{1}{N}\sum_{i=1}^{N}\left\|\tilde{m}^*_{\theta\theta'}(x_{i,l})\right\|^2\Big]^{1/2}=o_p(1),\\
\mathcal{H}_{12}\leq&o_p(1)\sum_{j=0}^{K_\mathcal{G}}\Big[\frac{1}{N}\sum_{i=1}^{N}e_i(\tilde{\theta}_N,\phi^a)^2\Big]^{1/2}\Big[\frac{1}{N}\sum_{i=1}^{N}\left\|\tilde{m}^*_{\theta\theta'}(x_{i,j})\right\|^2\Big]^{1/2}=o_p(1),\\
  \mathcal{H}_{13}\leq& o_p(1)\sum_{j,l=0}^{K_\mathcal{G}}\Big[\frac{1}{N}\sum_{i=1}^{N}\left\| \tilde{m}^*_{\theta}(x_{i,j})\right\|^2\Big]^{1/2}\Big[\frac{1}{N}\sum_{i=1}^{N}\left\|\tilde{m}^*_{\theta}(x_{i,l})\right\|^2\Big]^{1/2}
  =o_p(1),
\end{align*}
Thus, we can conclude that $\mathcal{H}_1=o_p(1)$.\\

\textbf{Step 2}. Consider the term inside the absolute value in $\mathcal{H}_2$
\begin{align*}
&\frac{1}{N}\sum_{i=1}^{N} [g_\theta(\tilde{W}_i;\tilde{\theta}_N,\phi^a)- g_\theta(\tilde{W}_i;\theta^a,\phi^a)]\nonumber\\
=&\frac{1}{N}\sum_{i=1}^{N}\tau_i\Big[e_i(\tilde{\theta}_N,\phi^a)\frac{\partial^2m^a(X_i;\tilde{\theta}_N,\phi^a)}{\partial\theta \partial\theta'}-e_i(\theta^a,\phi^a)\frac{\partial^2m^a(X_i;\theta^a,\phi^a)}{\partial\theta \partial\theta'}\Big]\nonumber\\
&+\frac{1}{N}\sum_{i=1}^{N}\tau_i\Big[\frac{\partial m^a(X_i;\tilde{\theta}_N,\phi^a)}{\partial\theta}\frac{\partial m^a(X_i;\tilde{\theta}_N,\phi^a)}{\partial\theta'}\nonumber\\
&~~~~~~~~~~~~~~~~~~~~~~~~-\frac{\partial m^a(X_i;\theta^a,\phi^a)}{\partial\theta}\frac{\partial m^a(X_i;\theta^a,\phi^a)}{\partial\theta'}\Big].
\end{align*}
Applying $\hat{a}\hat{b}-ab=(\hat{a}-a)b+a(\hat{b}-b)+(\hat{a}-a)(\hat{b}-b)$ and substituting $m^a(X_i;\theta,\phi^a)=\sum_{j=0}^{K_\mathcal{G}}m^*(x_{i,j};\theta)p^a_{\mathcal{G}^*_i|X_i}(\mathfrak{g}_j)$, by the boundedness of $p^a_{\mathcal{G}^*_i|X_i}$ and notations in \eqref{eq_notation1} and \eqref{eq_notation2}, we can see
\begin{align*}
\mathcal{H}_2\leq &\frac{C}{N}\sum_{i=1}^{N}\sum_{j,l=0}^{K_\mathcal{G}}\left|\tilde{m}^*(x_{i,j})-m^*(x_{i,j})\right|\left\|m^*_{\theta\theta'}(x_{i,l})\right\|\nonumber\\
&+\frac{C}{N}\sum_{i=1}^{N}\sum_{j=0}^{K_\mathcal{G}}\left|e_i(\theta^a,\phi^a)\right|\left\|\tilde{m}^*_{\theta\theta'}(x_{i,j})-m^*_{\theta\theta'}(x_{i,j})\right\|\nonumber\\
&+\frac{C}{N}\sum_{i=1}^{N}\sum_{j,l=0}^{K_\mathcal{G}}\left|\tilde{m}^*(x_{i,j})-m^*(x_{i,j})\right|\left\|\tilde{m}^*_{\theta\theta'}(x_{i,l})-m^*_{\theta\theta'}(x_{i,l})\right\|\nonumber\\
&+\frac{2C}{N}\sum_{i=1}^{N}\sum_{j,l=0}^{K_\mathcal{G}}\left\|\tilde{m}^*_{\theta}(x_{i,j})-m^*_{\theta}(x_{i,j})\right\|\left\|m^*_{\theta}(x_{i,l})\right\|\nonumber\\
&+\frac{C}{N}\sum_{i=1}^{N}\sum_{j,l=0}^{K_\mathcal{G}}\left\|\tilde{m}^*_{\theta}(x_{i,j})-m^*_{\theta}(x_{i,j})\right\|\left\|\tilde{m}^*_{\theta}(x_{i,l})-m^*_{\theta}(x_{i,l})\right\|.
\end{align*}
By the Cauchy–Schwarz inequality and Lemma \ref{lemma_Op}, it is easy to show $\mathcal{H}_{2}=o_p(1)$.
\\

\textbf{Step 3}. Next, consider $\mathcal{H}_3=\|\frac{1}{N}\sum_{i=1}^{N}( g_\theta(\tilde{W}_i;\theta^a,\phi^a)-E[ g_\theta(\tilde{W}_i;\theta^a,\phi^a)])\|$. Let $g^*_{r,\theta}(\tilde{W}_i)$ and $g^*_{rq,\theta}(\tilde{W}_i)$ be the $r$-th column and $(r,q)$-th entry of the $d_\theta\times d_\theta$ matrix $g_\theta(\tilde{W}_i;\theta^a,\phi^a)$, respectively. Then, we can write $\mathcal{H}^2_3$ as
\begin{align*}
  \mathcal{H}^2_3=& \sum_{r=1}^{d_\theta}\Big\|\frac{1}{N}\sum_{i=1}^{N}(g^*_{r,\theta}(\tilde{W}_i)-E[ g^*_{r,\theta}(\tilde{W}_i)])\Big\|^2.
\end{align*}
Because $E[\|\partial g(\tilde{W}_i;\theta^a,\phi^a)/\partial\theta'\|^2]<\infty$ as in Assumption \ref{ass_normality}, we know that $Var[ g^*_{rq,\theta}(\tilde{W}_i)]<C$ for all $r,q=1,...,d_\theta$. By Markov inequality, we can get
\begin{align*}
&Pr\Big[\Big\|\frac{1}{N}\sum_{i=1}^{N}(g^*_{r,\theta}(\tilde{W}_i)-E[ g^*_{r,\theta}(\tilde{W}_i;)])\Big\|>\epsilon\Big]\nonumber\\
\leq&\frac{1}{\epsilon^2N^2}E\Big[\Big\|\sum_{i=1}^{N} (g^*_{r,\theta}(\tilde{W}_i)-E[ g^*_{r,\theta}(\tilde{W}_i)])\Big\|^2\Big]\nonumber\\
=&\frac{1}{\epsilon^2N^2}E\Big[\sum_{i=1}^{N} (g^*_{r,\theta}(\tilde{W}_i)-E[ g^*_{r,\theta}(\tilde{W}_i)])'\sum_{i=1}^{N} (g^*_{r,\theta}(\tilde{W}_i)-E[ g^*_{r,\theta}(\tilde{W}_i)])\Big]\nonumber\\
=&\frac{1}{\epsilon^2N^2}\sum_{q=1}^{d_\theta}\sum_{k=1}^{q_N}\sum_{i,j\in\mathbb{S}_k}Cov\Big( g^*_{rq,\theta}(\tilde{W}_i), g^*_{rq,\theta}(\tilde{W}_j)\Big)+s.o.\nonumber\\
\leq&\frac{C}{\epsilon^2N^2}\sum_{k=1}^{q_N}r^2_N+s.o.
=O\Big(\frac{1}{\epsilon^2N}\Big),
\end{align*}
by Assumption \ref{ass_dependency_neighbor} and $\bar{r}_N=O(1)$. Set $\epsilon$ such that $\epsilon\rightarrow0$ and $\epsilon^2N\rightarrow\infty$ as $N\rightarrow\infty$. Then, $\|\frac{1}{N}\sum_{i=1}^{N}(g^*_{r,\theta}(\tilde{W}_i)-E[ g^*_{r,\theta}(\tilde{W}_i;)])\|=o_p(1)$, leading to
$\mathcal{H}_3=o_p(1)$. \\

(b) This proof is similar to the proof of Theorem 8.1 in \citet{newey1994large}. All the sufficient conditions are verified in the Lemma \ref{lemma_jacobian_linearity}, \ref{lemma_jacobian_equicontinuity} and \ref{lemma_jacobian_mean_square}.
Recall that $\tilde{P}_{\tilde{W}}(\tilde{w})=1/N\sum_{i=1}^{N}1[\tilde{W}_i\leq \tilde{w}]$ represents the empirical distribution and $\int\delta(\tilde{w})d\tilde{P}_{\tilde{W}}(\tilde{w})=1/N\sum_{i=1}^{N}\delta(\tilde{W}_i)$. By triangular inequality, we have
\begin{align*}
  &\Big\|\frac{1}{\sqrt{N}}\sum_{i=1}^{N}\left[g(\tilde{W}_i;\theta^a,\hat{\phi}_N)- g(\tilde{W}_i;\theta^a,\phi^a)-\delta(\tilde{W}_i)\right]\Big\|\nonumber\\
  \leq&\Big\|\frac{1}{\sqrt{N}}\sum_{i=1}^{N}\left[g(\tilde{W}_i;\theta^a,\hat{\phi}_N)- g(\tilde{W}_i;\theta^a,\phi^a)-G(\tilde{W}_i;\tilde{\varphi}_N-\varphi^a)\right]\Big\|\nonumber\\
  &~~~~~~~~~~~~~~~~+\Big\|\frac{1}{\sqrt{N}}\sum_{i=1}^{N}\Big[G(\tilde{W}_i;\tilde{\varphi}_N-\varphi^a)-\int G(\tilde{w};\hat{\varphi}_N-\varphi^a)dP_{\tilde{W}_i}(\tilde{w})\Big]\Big\|\nonumber\\
  &~~~~~~~~~~~~~~~~+\Big\|\frac{1}{\sqrt{N}}\sum_{i=1}^{N}\Big[\int G(\tilde{w};\hat{\varphi}_N-\varphi^a)dP_{\tilde{W}_i}(\tilde{w})-\int\delta(\tilde{w})d\hat{P}_{\tilde{W}_i}(\tilde{w})\Big]\Big\|\nonumber\\
  &~~~~~~~~~~~~~~~~+\Big\|\sqrt{N}\Big[ \int\delta(\tilde{w})d\hat{P}_{\tilde{W}_i}(\tilde{w})-\int\delta(\tilde{w})d\tilde{P}_{\tilde{W}}(\tilde{w})\Big]\Big\|\nonumber\\
= &o_p(1),
\end{align*}
where the last line follows from Assumption \ref{ass_jacobian}, Lemmas \ref{lemma_jacobian_linearity}, \ref{lemma_jacobian_equicontinuity} and \ref{lemma_jacobian_mean_square}.
\end{proof}
\bigskip

\begin{proof}[Proof of Theorem \ref{theorem_normality}]Recall $\tilde{g}_i=g(\tilde{W}_i;\theta^a,\phi^a)+\delta(\tilde{W}_i)$ with $\delta(\tilde{W}_i)=\delta(\tilde{W}_i;\theta^a,\phi^a)$ and $\tilde{g}_i=(\tilde{g}_{i,1},...,\tilde{g}_{i,d_\theta})'$. Let $\hat{H}_N=E[\frac{\partial g(\tilde{W}_i;\tilde{\theta}_N,\hat{\phi}_N)}{\partial\theta'}]$ and $H=E[\frac{\partial g(\tilde{W}_i;\theta^a,\phi^a)}{\partial\theta'}]$. By Lemma \ref{lemma_hessian} (a), we have $\hat{H}_N\overset{p}{\rightarrow}H$, and we know that $H$ is nonsingular by Assumption \ref{ass_normality}. Then, we know that $\hat{H}_N^{-1}$ exists for large enough $N$. It then yields from \eqref{jacobian_1} and Lemma \ref{lemma_jacobian} (b) that
\begin{align}\label{t120}
\sqrt{N}(\hat{\theta}_N-\theta^a)=&-\hat{H}_N^{-1}\Big[\frac{1}{\sqrt{N}}\sum_{i=1}^{N}\tilde{g}_i+o_p(1)\Big].
\end{align}
First, we show that $\frac{1}{\sqrt{N}}\sum_{i=1}^{N}(\tilde{g}_i-E[\tilde{g}_i])\overset{d}{\rightarrow}\mathbb{N}(0,\Omega)$, where $\Sigma^{\tilde{g}}_N/N\rightarrow\Omega$ and $\Omega$ is a positive definite and nonsingular square matrix (Assumption \ref{ass_stein}). According to the Cramér–Wold theorem, the joint normality holds if and only if $\frac{1}{\sqrt{N}}\sum_{i=1}^{N}(\mathbf{t}'\tilde{g}_i-E[\mathbf{t}'\tilde{g}_i])\overset{d}{\rightarrow}\mathbb{N}(0,\mathbf{t}'\Omega\mathbf{t})$ for all $\mathbf{t}\in\mathbb{R}^{d_\theta}$ and $\|\mathbf{t}\|=1$. We prove the desired result by verifying the conditions for the CLT in Lemma \ref{lemma_stein_normal}.

Let us first introduce some useful notations. Let $\tilde{w}_{i}=\mathbf{t}'\tilde{g}^0_i$ with $\tilde{g}^0_i=\tilde{g}_i-E[\tilde{g}_i]$ and denote $a_N=\sum_{k=1}^{q_N}\sum_{i,j\in\mathbb{S}_k}Cov(\tilde{w}_{i},\tilde{w}_{j})$. We have
\begin{align*}
\Sigma^{\tilde{g}}_N=&\sum_{k=1}^{q_N}\sum_{i,j\in\mathbb{S}_k}Cov(\tilde{g}_i,\tilde{g}_j)=\sum_{k=1}^{q_N}\sum_{i,j\in\mathbb{S}_k}Cov(\tilde{g}_j,\tilde{g}_i)=[\Sigma^{\tilde{g}}_N]',
\end{align*}
where the second equality is obtained by replacing the index $i$ and $j$ with $j$ and $i$, respectively. Therefore, $\Sigma^{\tilde{g}}_N$ is a symmetric matrix. Let $\lambda_{max}(\mathbf{B})$ and $\lambda_{min}(\mathbf{B})$ denote the largest and the smallest eigenvalues of a matrix $\mathbf{B}$. Since $\|N^{-1}\Sigma^{\tilde{g}}_N-\Omega\|\rightarrow0$, it implies that there exist
$\underline{\epsilon},\overline{\epsilon}$ such that
$0<\underline{\epsilon}\leq\frac{1}{N}\lambda_{min}(\Sigma^{\tilde{g}}_N)\leq\frac{1}{N}\lambda_{max}(\Sigma^{\tilde{g}}_N)<\overline{\epsilon}<\infty$ for large enough sample size. Hence, $\lambda_{min}(\Sigma^{\tilde{g}}_N)=\lambda_{max}(\Sigma^{\tilde{g}}_N)=O(N)$.
In addition, since $a_N=\sum_{k=1}^{q_N}\sum_{i,j\in\mathbb{S}_k}Cov(\mathbf{t}'\tilde{g}_{i},\mathbf{t}'\tilde{g}_{j})=\mathbf{t}'\sum_{k=1}^{q_N}\sum_{i,j\in\mathbb{S}_k}Cov(\tilde{g}_{i},\tilde{g}_{j})\mathbf{t}=\mathbf{t}'\Sigma^{\tilde{g}}_N\mathbf{t},$
it is easy to see that
\begin{align*}
\lambda_{min}(\Sigma^{\tilde{g}}_N)\leq a_N\leq \lambda_{max}(\Sigma^{\tilde{g}}_N),~~\Rightarrow~~a_N=O(N).
\end{align*}
By the symmetry of $\Sigma^{\tilde{g}}_N$, we have that $\|\Sigma^{\tilde{g}}_N\|=\sqrt{tr([\Sigma^{\tilde{g}}_N]'\Sigma^{\tilde{g}}_N)}=\sqrt{tr([\Sigma^{\tilde{g}}_N]^2)}$. Because the eigenvalues of $[\Sigma^{\tilde{g}}_N]^2$ are squared eigenvalues of $\Sigma^{\tilde{g}}_N$, we can obtain
\begin{align*}
\sqrt{d_\theta}\lambda_{min}(\Sigma^{\tilde{g}}_N)\leq\big\|\Sigma^{\tilde{g}}_N\big\|\leq
\sqrt{d_\theta}\lambda_{max}(\Sigma^{\tilde{g}}_N),~~\Rightarrow~~\big\|\Sigma^{\tilde{g}}_N\big\|=O(N).
\end{align*}
Therefore, $a_N=O(\|\Sigma^{\tilde{g}}_N\|)=O(N)$.

Let $\|\mathbf{b}\|_1=\sum_{r=1}^p|b_r|$ for a vector $\mathbf{b}=(b_1,...,b_p)'$. We can see that
\begin{align}\label{verify_normal0}
&|\tilde{w}_{i}\tilde{w}_{j}\tilde{w}_{v}|=|\mathbf{t}'\tilde{g}^0_{i}\mathbf{t}'\tilde{g}^0_{j}\mathbf{t}'\tilde{g}^0_{v}|
\leq\|\tilde{g}^0_{i}\otimes\tilde{g}^0_{j}\otimes\tilde{g}^0_{v}\|_1.
\end{align}
Given \eqref{verify_normal0}, Assumption \ref{ass_stein} (b)(i) and (iii) lead to
\begin{align*}
&\sum\limits_{k=1}^{q_N}\sum\limits_{i,j,v\in\mathbb{S}_k}E[|\tilde{w}_{i}\tilde{w}_{j}\tilde{w}_{v}|]\leq\sum\limits_{k=1}^{q_N}\sum\limits_{i,j,v\in\mathbb{S}_k}E[\|\tilde{g}^0_{i}\otimes\tilde{g}^0_{j}\otimes\tilde{g}^0_{v}\|_1]=o(\|\Sigma^{\tilde{g}}_N\|^{3/2}),\nonumber\\
&\sum\limits_{k=1}^{q_N}\sum\limits_{i\in\mathbb{S}_k,j,v\not\in\mathbb{S}_k}E[|\tilde{w}_{i}\tilde{w}_{j}\tilde{w}_{v}|]\leq\sum\limits_{k=1}^{q_N}\sum\limits_{i\in\mathbb{S}_k,j,v\not\in\mathbb{S}_k}E[\|\tilde{g}^0_{i}\otimes\tilde{g}^0_{j}\otimes\tilde{g}^0_{v}\|_1]=o(\|\Sigma^{\tilde{g}}_N\|^{3/2}).
\end{align*}
Because $a_N=O(\|\Sigma^{\tilde{g}}_N\|)$ implies $o(\|\Sigma^{\tilde{g}}_N\|^{3/2})=o(a_N^{3/2})$, we can see that conditions (a) and (c) in Lemma \ref{lemma_stein_normal} hold.
Next, consider condition (b) in Lemma \ref{lemma_stein_normal}. We can see
\begin{align*}
\Big|\sum\limits_{k,k'=1}^{q_N}\sum\limits_{i,j\in\mathbb{S}_k}\sum\limits_{l,v\in\mathbb{S}_{k'}}Cov(\tilde{w}_{i}\tilde{w}_{j},\tilde{w}_{l}\tilde{w}_{v})\Big|=&\Big|\sum\limits_{k,k'=1}^{q_N}\sum\limits_{i,j\in\mathbb{S}_k}\sum\limits_{l,v\in\mathbb{S}_{k'}}Cov(\mathbf{t}'\tilde{g}^0_{i}\mathbf{t}'\tilde{g}^0_{j},\mathbf{t}'\tilde{g}^0_{l}\mathbf{t}'\tilde{g}^0_{v})\Big|\nonumber\\
\leq&C\Big\|\sum\limits_{k,k'=1}^{q_N}\sum\limits_{i,j\in\mathbb{S}_k}\sum\limits_{l,v\in\mathbb{S}_{k'}}Cov\big(\tilde{g}^0_{i}\otimes\tilde{g}^0_{j},\tilde{g}^0_{l}\otimes\tilde{g}^0_{v}\big)\Big\|_\infty\nonumber\\
=&o(\|\Sigma^{\tilde{g}}_N\|^{2})=o(a_N^{2}),
\end{align*}
where the last line comes from Assumption \ref{ass_stein} (b)(ii) and $a_N=O(\|\Sigma^{\tilde{g}}_N\|)$.

For any $i=1,...,N$ there is a $\tilde{k}\in\{1,...,q_N\}$ such that $i\in\mathbb{S}_{\tilde{k}}$. Denote $\Lambda^c_i=\sum_{j\not\in \mathbb{S}_{\tilde{k}}}\tilde{w}_j=\sum_{j\not\in \mathbb{S}_{\tilde{k}}}\mathbf{t}'\tilde{g}^0_j$. For condition (d) in Lemma \ref{lemma_stein_normal}, since $E[\tilde{g}^0_i\big|\sum_{j\not\in\mathbb{S}_{\tilde{k}}}\tilde{g}^0_j]\sum_{j\not\in\mathbb{S}_{\tilde{k}}}(\tilde{g}^0_j)'$ is positive definite based on Assumption \ref{ass_stein} (b) (iv), we can get
\begin{align*}
E[\tilde{w}_i\Lambda^c_i\big|\Lambda^c_i]=\mathbf{t}'E[\tilde{g}^0_i\big|\Lambda^c_i]\sum_{j\not\in\mathbb{S}_{\tilde{k}}}(\tilde{g}^0_j)'\mathbf{t}\geq 0.
\end{align*}
In addition, because of Assumption \ref{ass_dependency_neighbor} we have
\begin{align*}
\Big|\sum_{k=1}^{q_N}\sum\limits_{i\in \mathbb{S}_k,j\not\in \mathbb{S}_k}Cov\left(\tilde{w}_i,\tilde{w}_j\right)\Big|=&\Big|\mathbf{t}'\sum_{k=1}^{q_N}\sum\limits_{i\in \mathbb{S}_k,j\not\in \mathbb{S}_k}Cov\left(\tilde{g}_i,\tilde{g}_j\right)\mathbf{t}\Big|\nonumber\\
\leq&\|\mathbf{t}\|\Big\|\sum_{k=1}^{q_N}\sum\limits_{i\in \mathbb{S}_k,j\not\in \mathbb{S}_k}Cov\left(\tilde{g}_i,\tilde{g}_j\right)\Big\|\|\mathbf{t}\|=o(\|\Sigma^{\tilde{g}}_N\|).
\end{align*}
Thus, all the conditions in Lemma \ref{lemma_stein_normal} hold under Assumption \ref{ass_stein}. Thus the CLT in Lemma \ref{lemma_stein_normal} leads to $a^{-1/2}_N\sum_{i=1}^{N}(\mathbf{t}'\tilde{g}_i-E[\mathbf{t}'\tilde{g}_i])\overset{d}{\rightarrow}\mathbb{N}(0,1).$ Since $a_N/N=\mathbf{t}'\Sigma^{\tilde{g}}_N/N\mathbf{t}\rightarrow\mathbf{t}'\Omega\mathbf{t},$ Slutsky's theorem implies $\frac{1}{\sqrt{N}}\sum_{i=1}^{N}(\mathbf{t}'\tilde{g}_i-E[\mathbf{t}'\tilde{g}_i])\overset{d}{\rightarrow}\mathbb{N}(0,\mathbf{t}'\Omega\mathbf{t})$ for any $\mathbf{t}\in\mathbb{R}^{d_\theta}$ and $\|\mathbf{t}\|=1$. It further leads to $\frac{1}{\sqrt{N}}\sum_{i=1}^{N}(\tilde{g}_i-E[\tilde{g}_i])\overset{d}{\rightarrow}\mathbb{N}(0,\Omega)$ by the Cramér–Wold theorem.

Next, we show that $\sqrt{N}(\hat{\theta}_N-\theta^a)\overset{d}{\rightarrow}\mathbb{N}(0,H^{-1}\Omega H^{-1}).$ Recall that $\theta^a=\arg\min_{\theta\in\Theta}\mathcal{L}^a_{\mathbb{P}}(\theta,\phi^a)$. Then, the Leibniz integral rule implies that $E[g(\tilde{W}_i;\theta^a,\phi^a)]=0$.  
In addition, because Assumption \ref{ass_jacobian} assumes that $E[\delta(\tilde{W}_i)]=0$, we know that $E[\tilde{g}_i]=E[g(\tilde{W}_i;\theta^a,\phi^a)]+E[\delta(\tilde{W}_i)]=0$.
Then, according to \eqref{t120} and the fact that $\hat{H}_N\overset{p}{\rightarrow}H$ with bounded inverse, we have
\begin{align*}
\sqrt{N}(\hat{\theta}_N-\theta^a)=&-\hat{H}_N^{-1}\frac{1}{\sqrt{N}}\sum_{i=1}^{N}(\tilde{g}_i-E[\tilde{g}_i])+o_p(1).
\end{align*}
Again, by $\hat{H}_N\overset{p}{\rightarrow}H$, we know that
$\sqrt{N}(\hat{\theta}_N-\theta^a)\overset{d}{\rightarrow}\mathbb{N}(0,H^{-1}\Omega H^{-1})$, implying $\sqrt{N}(\hat{\theta}_N-\theta^0+\theta^0-\theta^a)\overset{d}{\rightarrow}\mathbb{N}(0,H^{-1}\Omega H^{-1})$.
\end{proof}

\section{Extensions: Unconfounded Treatment}\label{appendix_unconfounded}
In this section, we discuss the extension of randomized treatment to unconfounded treatment. Recall that in Assumption \ref{unconf1} (a) and Assumption \ref{nondiff}, we assume randomized treatment, where $D_i$ is i.i.d. across $i$, and $D_i\perp(\varepsilon_j,Z_j,\mathcal{N}^*_j,\mathcal{N}_j)$ for $\forall i,j\in\mathcal{P}$. These assumptions are used to establish the results in Lemma \ref{lemma_binomial}. 
Below, we present a relaxation to the randomized treatment assumption by allowing for $D_i|\tilde{Z}_{i}$ to be i.i.d., where $\tilde{Z}_i\subseteq Z_i$ is a subvector of unit $i$'s characteristics that do not enter network formation.

\begin{theorem}\label{relaxation_unconfounded}
Assume that $Z_i$ is i.i.d. across $i$, and there exists $\tilde{Z}_i\subseteq Z_i$ such that $D_i=h(\tilde{Z}_{i},e_i)$ for some unknown binary function $h$. Under the conditions
 \begin{itemize}
   \item[(i)]$e_i$ is i.i.d. idiosyncratic error so that $e_i\perp (\varepsilon_j,Z_j, \mathcal{N}^*_j,\mathcal{N}_j)$ for any $i$ and $j$,
   \item[(ii)]$\tilde{Z}_{i} \perp (Z_j,\mathcal{N}^*_j,\mathcal{N}_j)$ for any $j\neq i$,
 \end{itemize}
and other assumptions in Assumptions \ref{unconf1} and \ref{nondiff}, we have that results in Lemma \ref{lemma_binomial} hold.
\end{theorem}

By construction, we know that $D_i$ in Theorem \ref{relaxation_unconfounded} satisfies that $D_i|\tilde{Z}_{i}$ is i.i.d. across $i$. Condition (ii) rules out the case where $\tilde{Z}_i$ enters the network formation process of unit $i$, for example, because of the homophily
effects that individuals are more likely to establish a network link if they are
similar. This is because, if $\tilde{Z}_i$ enters the function of $A^*_{ij}$, then $\tilde{Z}_i$, along with its network neighbors' identity $\mathcal{N}^*_i$, will reveal relevant information about
the characteristics of its network neighbors $\{\tilde{Z}_{j}\}_{j\in\mathcal{N}^*_i}$. Consequently, we have $\tilde{Z}_j\not\perp(\mathcal{N}^*_i,Z_{i})$, and condition (ii) fails to hold. \\

\begin{proof}[Proof of Theorem \ref{relaxation_unconfounded}]
Because of $D_i=h(\tilde{Z}_{i},e_i)$ and conditions (i) and (ii), we know that for any given $j$, $D_j\perp (Z_i,\mathcal{N}^*_i,\mathcal{N}_i)$ for all $i\neq j$, and $D_j$ given $\mathcal{T}^*_i,Z_i$ is i.i.d. across $j$. Given these two results, the same proofs used to show Lemma \ref{lemma_binomial} can be applied to show that these results still hold under the conditions given in Theorem \ref{relaxation_unconfounded}.
\end{proof}

\section{Additional Results}\label{appendix_simulation_add} 
In this section, we present additional results from Monte Carlo simulations, where the true network data is generated by a strategic network formation model, and all other designs are the same as described in Section \ref{section_simulation}. Following the design in \citetsupp{leung2020treatment}, we simulate the true network data using a myopic best-response model that starts from an initial network
and repeatedly updates by forming or severing a randomly picked pair, according to the best response to the current state of the network. This network formation model extends that used in Section \ref{section_simulation} to accommodate strategic network interactions:
\begin{align}\label{DGP_strategic_network}
  A^*_{ij}= & 1[\beta_1+\beta_2(\alpha_i+\alpha_j)-d(\rho_i,\rho_j)+\beta_3\max_{k}A^*_{ik}A^*_{kj}+\zeta_{ij}>0],
\end{align}
where the DGPs of $\alpha_i$, $\rho_i=(\rho_{i1},\rho_{i2})$, and $\zeta_{ij}$ are the same as those described in Section \ref{section_simulation}. In addition, the term $\max_{k}A^*_{ik}A^*_{kj}$ is an indicator for units $i$ and $j$ sharing common links. Recall that $d(\rho_i,\rho_j)$ is the distance between two units, where
$d(\rho_i,\rho_j)=0$ if $\|\rho_i-\rho_j\|\leq r$ and $d(\rho_i,\rho_j)=\infty$ otherwise.
In this section, we set $r=(r_{deg}/N)^{1/2}$ with $r_{deg}=3.28$, where the values of $r$ and $r_{deg}$ are chosen according to \citet{leung2020treatment} in order to maintain network sparsity. The initial network is based on geographic locations $A^{*}_{ij}=  1[\|\rho_i-\rho_j\|\leq r]$. To closely mimic the average network degree and density of the friendship network in the empirical study in Section \ref{section_application}, we set $(\beta_1,\beta_2,\beta_3)'=(-0.25,0.25,0.3)$.

The DGPs for the outcome, missing links, and observed network data, as well as the estimation methods (Infeasible OLS, Naive OLS, and SPE), are the same as those introduced in Section \ref{section_simulation}. Estimation results are reported in Table \ref{tab:MC_model1_strategic} (for Model 1) and Table \ref{tab:MC_model2_strategic} (for Model 2). We can observe similar patterns in these results as those presented in Section \ref{section_simulation}. Importantly,  the Naive
OLS produces the most biased estimates, while the bias of the SPE method is substantially lower than that of the Naive
OLS. This verifies the bias reduction property of our proposed method in the presence of strategic network interactions.

\begin{table}[htbp]\footnotesize
\begin{center}
\caption{(\textbf{Model 1}) Estimated Spillover Effects in Monte Carlo Simulations}\label{tab:MC_model1_strategic}
\subcaption{DGP1 (random missing)}
\begin{adjustbox}{max width=\textwidth}
\begin{tabular}{cc|cccc|cccc|cccc}
\hline
\hline
\noalign{\smallskip}
\multicolumn{2}{c}{}& \multicolumn{4}{c}{{\small Infeasible OLS}} &\multicolumn{4}{c}{{\small Naive OLS}} & \multicolumn{4}{c}{{\small SPE}} \\
\noalign{\smallskip}
$p_{U}$& N&bias&\%&sd&rmse&bias&\%&sd&rmse&bias&\%&sd&rmse\\
\hline
\noalign{\smallskip}
\multirow{3}[0]{*}{0.1}
& 1k & -0.001&0.4\% &0.017 & 0.017 &-0.014&11.2\% & 0.026 & 0.030 & 0.002&1.2\%  & 0.033 &  0.033 \\
& 2k & 0.000 & 0.1\% & 0.012 & 0.012& -0.014&11.1\% & 0.018 & 0.023& 0.002&1.7\%   & 0.024 & 0.024\\
& 5k & 0.000 & 0.1\% & 0.008 & 0.008 & -0.014& 11.6\% & 0.011 & 0.018& 0.001 &0.8\%  & 0.014 & 0.014\\\noalign{\smallskip}
\hline\noalign{\smallskip}
\multirow{3}[0]{*}{0.2}
& 1k & 0.001& 0.5\%& 0.018 &0.018 & -0.026 & 20.6\%& 0.031 & 0.040& 0.004 &2.8\% & 0.053 &  0.053\\
& 2k& 0.000& 0.3\%& 0.012 &0.012  & -0.027 &22.0\%& 0.021 & 0.034& 0.002 &1.7\% & 0.037 &0.037 \\
& 5k & 0.000&0.2\%& 0.008 & 0.008 & -0.028 &22.2\%& 0.014 & 0.031 & 0.000& 0.2\%&0.026& 0.026\\\noalign{\smallskip}
\hline\noalign{\smallskip}
\multirow{3}[0]{*}{0.3}
& 1k& 0.000&0.1\% & 0.017 &0.017 & -0.039&31.0\% & 0.035 & 0.052 & -0.012 &9.7\%& 0.070 &0.071 \\
& 2k & 0.000 &  0.1\%&0.012 &0.012& -0.040 &31.7\%& 0.025 & 0.047& -0.008 & 6.0\%&0.048 & 0.049\\
& 5k & 0.000& 0.4\%& 0.008 & 0.008 & -0.040& 31.7\% & 0.016 & 0.043& -0.005&3.8\%  & 0.033 & 0.033\\
\noalign{\smallskip}
\hline
\hline
    \end{tabular}%
\end{adjustbox}
\smallskip
\subcaption{DGP2 (heterogeneous missing)}
\begin{adjustbox}{max width=\textwidth}
\begin{tabular}{cc|cccc|cccc|cccc}
\hline
\hline
\noalign{\smallskip}
\multicolumn{2}{c}{}& \multicolumn{4}{c}{{\small Infeasible OLS}} &\multicolumn{4}{c}{{\small Naive OLS}} & \multicolumn{4}{c}{{\small SPE}} \\
\noalign{\smallskip}
$p_{U}$& N&bias&\%&sd&rmse&bias&\%&sd&rmse&bias&\%&sd&rmse\\
\hline
\noalign{\smallskip}
\multirow{3}[0]{*}{0.1} & 1k & 0.000 & 0.2\% & 0.017 & 0.017 & -0.017 & 13.6\% & 0.027 & 0.032 & -0.004 & 3.2\% & 0.057 & 0.057 \\
         & 2k & 0.000 & 0.4\% & 0.012 & 0.012 & -0.018 & 14.3\% & 0.020 & 0.027 & -0.003 & 2.1\% & 0.037 & 0.037 \\
         & 5k & 0.000 & 0.2\% & 0.008 & 0.008 & -0.018 & 14.2\% & 0.013 & 0.022 & -0.002 & 1.3\% & 0.024 & 0.024 \\

\noalign{\smallskip}
\hline\noalign{\smallskip}
\multirow{3}[0]{*}{0.2}  & 1k & 0.000 & 0.0\% & 0.017 & 0.017 & -0.031 & 24.9\% & 0.033 & 0.045 & -0.017 & 13.6\% & 0.090 & 0.092 \\
         & 2k & 0.000 & 0.1\% & 0.012 & 0.012 & -0.031 & 24.4\% & 0.024 & 0.039 & -0.017 & 13.3\% & 0.062 & 0.064 \\
         & 5k & 0.000 & 0.1\% & 0.008 & 0.008 & -0.033 & 26.3\% & 0.015 & 0.036 & -0.015 & 11.8\% & 0.046 & 0.049 \\
\noalign{\smallskip}
\hline\noalign{\smallskip}
\multirow{3}[0]{*}{0.3}
& 1k & -0.001 & 0.7\% & 0.017 & 0.017 & -0.042 & 33.9\% & 0.037 & 0.056 & -0.027 & 21.8\% & 0.115 & 0.118 \\
         & 2k & 0.000 & 0.1\% & 0.012 & 0.012 & -0.043 & 34.2\% & 0.027 & 0.051 & -0.026 & 21.0\% & 0.093 & 0.096 \\
         & 5k & 0.000 & 0.2\% & 0.008 & 0.008 & -0.044 & 35.3\% & 0.017 & 0.047 & -0.027 & 21.3\% & 0.059 & 0.066 \\
\noalign{\smallskip}
\hline
\hline
    \end{tabular}%
\end{adjustbox}
\smallskip

\subcaption{DGP3 (dependent missing)}
\begin{adjustbox}{max width=\textwidth}
\begin{tabular}{cc|cccc|cccc|cccc}
\hline
\hline
\noalign{\smallskip}
\multicolumn{2}{c}{}& \multicolumn{4}{c}{{\small Infeasible OLS}} &\multicolumn{4}{c}{{\small Naive OLS}} & \multicolumn{4}{c}{{\small SPE}} \\
\noalign{\smallskip}
$p_{U}$& N&bias&\%&sd&rmse&bias&\%&sd&rmse&bias&\%&sd&rmse\\
\hline
\noalign{\smallskip}
\multirow{3}[0]{*}{0.1}& 1k & 0.001 & 0.8\% & 0.018 & 0.018 & -0.014 & 11.2\% & 0.026 & 0.029 & 0.000 & 0.3\% & 0.041 & 0.041 \\
         & 2k & 0.000 & 0.1\% & 0.012 & 0.012 & -0.013 & 10.7\% & 0.019 & 0.023 & -0.001 & 0.7\% & 0.029 & 0.029 \\
         & 5k & 0.000 & 0.0\% & 0.008 & 0.008 & -0.015 & 12.2\% & 0.011 & 0.019 & 0.000 & 0.4\% & 0.016 & 0.016 \\
\noalign{\smallskip}
\hline\noalign{\smallskip}
\multirow{3}[0]{*}{0.2}
& 1k & 0.000 & 0.0\% & 0.017 & 0.017 & -0.028 & 22.1\% & 0.032 & 0.042 & -0.016 & 13.0\% & 0.083 & 0.084 \\
         & 2k & 0.000 & 0.2\% & 0.012 & 0.012 & -0.027 & 21.9\% & 0.023 & 0.035 & -0.012 & 9.7\% & 0.054 & 0.055 \\
         & 5k & 0.000 & 0.1\% & 0.007 & 0.007 & -0.029 & 22.8\% & 0.014 & 0.032 & -0.011 & 8.7\% & 0.037 & 0.039 \\
\noalign{\smallskip}
\hline\noalign{\smallskip}
\multirow{3}[0]{*}{0.3}
& 1k & 0.000 & 0.3\% & 0.017 & 0.017 & -0.039 & 31.5\% & 0.036 & 0.053 & -0.027 & 21.3\% & 0.135 & 0.137 \\
         & 2k & 0.000 & 0.3\% & 0.012 & 0.012 & -0.040 & 31.9\% & 0.025 & 0.047 & -0.029 & 23.0\% & 0.077 & 0.082 \\
         & 5k & 0.000 & 0.2\% & 0.008 & 0.008 & -0.041 & 32.7\% & 0.017 & 0.044 & -0.024 & 19.0\% & 0.051 & 0.056 \\
\noalign{\smallskip}
\hline
\hline
    \end{tabular}%
\end{adjustbox}
\end{center}
\footnotesize Note: Panels (a) to (c) display the estimation results under Model 1 and the strategic network formation model in \eqref{DGP_strategic_network}, when the missing indicator $U_{ij}$ is generated according to DGP1 to DGP3 considered in Section \ref{section_simulation}, respectively. The target spillover effect is $\eta^0=m^*(d,s,n,z)-m^*(d,0,n,z)$ at $d=0$, $s=1$, $s'=0$,  $n=4,$ and no covariate $z$. True value of $\eta^0$ is 0.125 in Model 1. The column ``\%'' lists the relative bias to $\eta^0$, and the column ``bias'' lists the magnitude of the bias with respect to $\eta^0$.
\end{table}

\begin{table}[htbp]\footnotesize
\begin{center}
\caption{(\textbf{Model 2}) Estimated Spillover Effects in Monte Carlo Simulations}\label{tab:MC_model2_strategic}
\subcaption{DGP1 (random missing)}
\begin{adjustbox}{max width=\textwidth}
\begin{tabular}{cc|cccc|cccc|cccc}
\hline
\hline
\noalign{\smallskip}
\multicolumn{2}{c}{}& \multicolumn{4}{c}{{\small Infeasible OLS}} &\multicolumn{4}{c}{{\small Naive OLS}} & \multicolumn{4}{c}{{\small SPE}}  \\
\noalign{\smallskip}
$p_{U}$& N&bias&\%&sd&rmse&bias&\%&sd&rmse&bias&\%&sd&rmse\\

\hline
\noalign{\smallskip}
\multirow{3}[0]{*}{0.1}
& 1k & 0.000 &0.1\%& 0.029 & 0.029& -0.020&4.9\% & 0.049 & 0.053 & -0.006&1.4\% & 0.082 & 0.082 \\
& 2k & 0.000 & 0.0\%& 0.021 & 0.021& -0.021&5.3\% & 0.036 & 0.042& -0.006 &1.6\% & 0.056 & 0.056\\
& 5k & 0.000&0.1\%& 0.013 &0.013 & -0.022&5.4\% & 0.022 & 0.031& -0.005 &1.3\% & 0.036 & 0.036\\
        \noalign{\smallskip}
\hline\noalign{\smallskip}
\multirow{3}[0]{*}{0.2}
& 1k & 0.001& 0.3\% &0.029 &0.029 & -0.039 &9.7\%& 0.066 &0.077 & 0.018 &4.6\% & 0.137 & 0.138 \\
& 2k & 0.000 & 0.0\%& 0.021 & 0.021& -0.040 &10.1\%& 0.047 & 0.062 & 0.013 & 3.2\%  & 0.128 &0.129\\
& 5k & 0.000 &  0.0\%&0.013 &0.013& -0.044 & 11.0\%&0.029 & 0.053& -0.012 &  3.1\% &0.089 & 0.090\\
        \noalign{\smallskip}
\hline\noalign{\smallskip}
\multirow{3}[0]{*}{0.3}
& 1k& -0.001 &0.2\%& 0.030 &0.030 & -0.063&15.7\% & 0.081 & 0.103& 0.014 &3.4\% & 0.168 & 0.169 \\
& 2k & 0.001&0.2\%& 0.020 &  0.020& -0.063&15.7\% & 0.059 & 0.086& 0.016 &4.1\% & 0.138 & 0.139\\
& 5k & -0.001 &0.2\%& 0.013 & 0.013& -0.065&16.3\% & 0.036 & 0.074& 0.016 &3.9\% & 0.117 & 0.118\\
\noalign{\smallskip}
\hline
\hline
    \end{tabular}%
\end{adjustbox}
\smallskip
\subcaption{DGP2 (heterogeneous missing)}
\begin{adjustbox}{max width=\textwidth}
\begin{tabular}{cc|cccc|cccc|cccc}
\hline
\hline
\noalign{\smallskip}
\multicolumn{2}{c}{}& \multicolumn{4}{c}{{\small Infeasible OLS}} &\multicolumn{4}{c}{{\small Naive OLS}} & \multicolumn{4}{c}{{\small SPE}}  \\
\noalign{\smallskip}
$p_{U}$& N&bias&\%&sd&rmse&bias&\%&sd&rmse&bias&\%&sd&rmse\\

\hline
\noalign{\smallskip}
\multirow{3}[0]{*}{0.1}
& 1k & 0.002 & 0.4\% & 0.029 & 0.029 & -0.022 & 5.6\% & 0.056 & 0.060 & -0.011 & 2.7\% & 0.118 & 0.118\\
        & 2k & -0.001 & 0.2\% & 0.020 & 0.020 & -0.026 & 6.4\% & 0.040 & 0.047 & -0.013 & 3.1\% & 0.078 &0.079\\
        & 5k & 0.000 & 0.0\% & 0.013 & 0.013 & -0.026 & 6.6\% & 0.025 & 0.036 & -0.017 & 4.1\% & 0.046& 0.049\\
        \noalign{\smallskip}
\hline\noalign{\smallskip}
\multirow{3}[0]{*}{0.2}& 1k & 0.001 & 0.3\% & 0.029 & 0.029 & -0.045 & 11.3\% & 0.072 & 0.085 & 0.028 & 7.1\% & 0.156 &0.158\\
        & 2k & 0.000 & 0.0\% & 0.020 & 0.020 & -0.046 & 11.5\% & 0.051 & 0.068 & 0.015 & 3.7\% & 0.139 &0.139\\
        & 5k & 0.000 & 0.0\% & 0.013 & 0.013 & -0.049 & 12.2\% & 0.032 & 0.058 & -0.006 & 1.6\% & 0.104 &0.104\\
        \noalign{\smallskip}
\hline\noalign{\smallskip}
\multirow{3}[0]{*}{0.3}
& 1k & 0.001 & 0.2\% & 0.029 & 0.029 & -0.063 & 15.8\% & 0.085 & 0.106 & 0.012 & 3.1\% & 0.179 &0.179\\
        & 2k &0.000& 0.1\% & 0.020 & 0.020 & -0.071 & 17.8\% & 0.063 & 0.095 & 0.008 & 2.0\% & 0.151 &0.151\\
        & 5k & 0.000 & 0.1\% & 0.013 & 0.013 & -0.072 & 18.0\% & 0.039 & 0.082 & 0.011 & 2.7\% & 0.133 &0.133\\
\noalign{\smallskip}
\hline
\hline
    \end{tabular}%
\end{adjustbox}
\smallskip
\subcaption{DGP3 (dependent missing)}
\begin{adjustbox}{max width=\textwidth}
\begin{tabular}{cc|cccc|cccc|cccc}
\hline
\hline
\noalign{\smallskip}
\multicolumn{2}{c}{}& \multicolumn{4}{c}{{\small Infeasible OLS}} &\multicolumn{4}{c}{{\small Naive OLS}} & \multicolumn{4}{c}{{\small SPE}}  \\
\noalign{\smallskip}
$p_{U}$& N&bias&\%&sd&rmse&bias&\%&sd&rmse&bias&\%&sd&rmse\\

\hline
\noalign{\smallskip}
\multirow{3}[0]{*}{0.1}&
 1k & 0.000 & 0.0\% & 0.029 & 0.029 & -0.020 & 5.1\% & 0.049 & 0.053 & -0.003 & 0.8\% & 0.085 & 0.085 \\
 &       2k & -0.001 & 0.2\% & 0.020 & 0.020 & -0.021 & 5.4\% & 0.034 & 0.040 & -0.004 & 0.9\% & 0.055 & 0.055 \\
  &      5k & 0.000 & 0.1\% & 0.013 & 0.013 & -0.021 & 5.4\% & 0.022 & 0.031 & -0.005 & 1.4\% & 0.030& 0.031 \\
        \noalign{\smallskip}
\hline\noalign{\smallskip}
\multirow{3}[0]{*}{0.2}&
1k & 0.001 & 0.2\% & 0.029 & 0.029 & -0.036 & 9.0\% & 0.066 & 0.075 & 0.019 & 4.7\% & 0.138& 0.139 \\
 &       2k & 0.001 & 0.3\% & 0.020 & 0.020 & -0.041 & 10.2\% & 0.046 & 0.062 & 0.009 & 2.2\% & 0.114& 0.115 \\
 &       5k & 0.000 & 0.0\% & 0.013 & 0.013 & -0.043 & 10.9\% & 0.029 & 0.052 & -0.006 & 1.6\% & 0.089 & 0.089\\
        \noalign{\smallskip}
\hline\noalign{\smallskip}
\multirow{3}[0]{*}{0.3}&
1k & 0.000 & 0.0\% & 0.030 & 0.030 & -0.065 & 16.2\% & 0.084 & 0.106 & 0.015 & 3.9\% & 0.179 & 0.180\\
 &       2k & 0.000 & 0.1\% & 0.020 & 0.020 & -0.065 & 16.1\% & 0.056 & 0.085 & 0.019 & 4.8\% & 0.139& 0.140 \\
 &       5k & 0.000 & 0.0\% & 0.013 & 0.013 & -0.064 & 16.0\% & 0.037 & 0.074 & 0.013 & 3.4\% & 0.115& 0.116 \\
        \noalign{\smallskip}
\hline
\hline
    \end{tabular}%
\end{adjustbox}
\end{center}
\footnotesize Note: Panels (a) to (c) display the estimation results under Model 2 and the strategic network formation model in \eqref{DGP_strategic_network}, when the missing indicator $U_{ij}$ is generated according to DGP1 to DGP3 considered in Section \ref{section_simulation}, respectively. The target spillover effect is $\eta^0=m^*(d,s,n,z)-m^*(d,0,n,z)$ at $d=0$, $s=1$, $s'=0$, $n=4,$ and no covariate $z$. True value of $\eta^0$ is 0.4 in Model 2. The column ``\%'' lists the relative bias to $\eta^0$, and the column ``bias'' lists the magnitude of the bias with respect to $\eta^0$.
\end{table} 


\end{appendices}

\newpage
{
\setstretch{1}
\bibliographysupp{reference_SP_supp}
\bibliographystylesupp{apalike}
}

\end{document}